\providecommand{\U}[1]{\protect\rule{.1in}{.1in}}
\newtheorem{theorem}{Theorem}
\newtheorem{conjecture}[theorem]{Conjecture}
\newtheorem{corollary}[theorem]{Corollary}
\newtheorem{definition}[theorem]{Definition}
\newtheorem{lemma}[theorem]{Lemma}
\newtheorem{problem}[theorem]{Problem}
\newtheorem{proposition}[theorem]{Proposition}
\newenvironment{proof}[1][Proof]{\noindent\textbf{#1.} }{\ \rule{0.5em}{0.5em}}
\begin{document}

\title{AM with Multiple Merlins}
\author{Scott Aaronson\thanks{MIT. \ Email: aaronson@csail.mit.edu.\ \ Supported by
the National Science Foundation under Grant No. 0844626,\ a TIBCO Chair, and
an Alan T. Waterman award.}
\and Russell Impagliazzo\thanks{UCSD. \ Email: russell@cs.ucsd.edu. \ Supported by
the Simons Foundation, the Ellentuck Fund, the Friends of the Institute for
Advanced Study, and NSF grants DMS-0835373, CCF-121351, and CCF-0832797
subcontract no. 00001583.}
\and Dana Moshkovitz\thanks{MIT. \ Email: dmoshkov@mit.edu. \ Supported by the
National Science Foundation under Grant No. 1218547.}}
\date{}
\maketitle

\begin{abstract}
We introduce and study a new model of interactive proofs: $\mathsf{AM}\left(
k\right)  $, or Arthur-Merlin with $k$ non-communicating Merlins. \ Unlike
with the better-known $\mathsf{MIP}$, here the assumption is that each Merlin
receives an \textit{independent} random challenge from Arthur. \ One
motivation for this model (which we explore in detail) comes from the close
analogies between it and the quantum complexity class $\mathsf{QMA}\left(
k\right)  $, but the $\mathsf{AM}\left(  k\right)  $\ model is also natural in
its own right.

We illustrate the power of multiple Merlins by giving an $\mathsf{AM}\left(
2\right)  $\ protocol for \textsc{3Sat}, in which the Merlins' challenges and
responses consist of only $n^{1/2+o\left(  1\right)  }$ bits each. \ Our
protocol has the consequence that, assuming the Exponential Time Hypothesis
(ETH), any algorithm for approximating a dense CSP with a polynomial-size
alphabet must take $n^{\left(  \log n\right)  ^{1-o\left(  1\right)  }}%
$\ time. \ Algorithms nearly matching this lower bound are known, but their
running times had never been previously explained. \ Brandao and Harrow have
also recently used our \textsc{3Sat} protocol to show quasipolynomial hardness
for approximating the values of certain entangled games.

In the other direction, we give a simple quasipolynomial-time approximation
algorithm for free games, and use it to prove that, assuming the ETH, our
\textsc{3Sat}\ protocol is essentially optimal. \ More generally, we show that
multiple Merlins never provide more than a polynomial advantage over one: that
is, $\mathsf{AM}\left(  k\right)  =\mathsf{AM}$ for all
$k=\operatorname*{poly}\left(  n\right)  $.\ \ The key to this result is a
\textit{subsampling theorem for free games}, which follows from powerful
results by Alon et al.\ and Barak et al.\ on subsampling dense CSPs, and which
says that the value of any free game can be closely approximated by the value
of a logarithmic-sized random subgame.

\end{abstract}
\tableofcontents

\section{Introduction\label{INTRO}}

The $\mathsf{PCP}$ characterization of $\mathsf{NP}$ \cite{almss,arorasafra}, with the
resulting hardness of approximation results, is one of the great achievements
of computational complexity. \ Leading up to this work was another landmark
result, the 1991\ theorem of Babai, Fortnow, and Lund \cite{bfl} that
$\mathsf{MIP}=\mathsf{NEXP}$, where $\mathsf{MIP}$ is Multi-Prover Interactive
Proofs and $\mathsf{NEXP}$\ is Nondeterministic Exponential Time. \ Both of
these results can be paraphrased as characterizing the hardness of a certain
computational problem from game theory: estimating the value of a two-player
cooperative game with simultaneous moves. \ Such games are known in the
complexity community as \emph{two-prover}\textit{ games}, and in the quantum
information community as \emph{nonlocal games}. \ From now on, we will use the
term two-prover games.

\begin{definition}
[Two-Prover Games]A two-prover game $G$ consists of:

\begin{enumerate}
\item[(1)] finite question sets\ $X,Y$ and answer sets\ $A,B$,

\item[(2)] a probability distribution $\mathcal{D}$\ over question pairs
$\left(  x,y\right)  \in X\times Y$, and

\item[(3)] a verification function\ $V:X\times Y\times A\times B\rightarrow
\left[  0,1\right]  $.\footnote{In most of the actual games we will consider,
$V$\ will take values in $\left\{  0,1\right\}  $ only. \ However, the
possibility of real $V$ is needed for full generality.}
\end{enumerate}

The \textit{value} of the game, denoted $\omega\left(  G\right)  $, is the
maximum, over all pairs of response functions\ $a:X\rightarrow A$ and
$b:X\rightarrow B$, of%
\begin{equation}
\operatorname*{E}_{\left(  x,y\right)  \sim\mathcal{D}}\left[  V\left(
x,y,a\left(  x\right)  ,b\left(  y\right)  \right)  \right]  .
\end{equation}

\end{definition}

The interpretation is this: the game $G$\ involves a verifier/referee Arthur,
as well as two cooperating provers Merlin$_{1}$\ and Merlin$_{2}$, who can
agree on a strategy in advance but cannot communicate once the game starts.
\ First Arthur chooses a pair of questions $\left(  x,y\right)  $ from
$\mathcal{D}$, and sends $x$\ to Merlin$_{1}$ and $y$ to Merlin$_{2}$. \ The
Merlins then send back responses $a=a\left(  x\right)  $\ and $b=b\left(
y\right)  $\ respectively.\footnote{Because of convexity, we can assume
without loss of generality that both Merlins use deterministic strategies.}
\ Finally, Arthur declares the Merlins to have \textquotedblleft
won\textquotedblright\ with probability equal to $V\left(  x,y,a,b\right)  $.
\ Then $\omega\left(  G\right)  $\ is just the probability that the Merlins
win if they use an optimal strategy.

It is not hard to show that computing the exact value of a two-prover game is
$\mathsf{NP}$-hard. \ The PCP Theorem can be interpreted as saying that even
to \emph{approximate} the value to within an additive constant is also
$\mathsf{NP}$-hard. \ To make this precise, we can define the classes
$\mathsf{PCP}$ and $\mathsf{MIP}$ as those decision problems
\emph{polynomial-time reducible} to approximating the value of a two-prover
game. \ The difference between the classes is that for PCP's, the reduction
computes an \emph{explicit} description of the game, whereas for
$\mathsf{MIP}$, the description is \emph{implicit}.

To be more precise, we start with a decision problem $L$. \ Given an instance
$I$ of $L$, a reduction constructs a two-prover game $G_{I}$ with the
following properties:

\begin{itemize}
\item \textbf{(Completeness)} If $I \in L$ then $\omega\left(  G_{P}\right)
\geq2/3$.

\item \textbf{(Soundness)} If $I\not \in L$ then $\omega\left(  G_{P}\right)
\leq1/3$.

\item \textbf{(Efficiency)} In the \textquotedblleft
explicit\textquotedblright\ case, the sets $X,Y,A,B$ can be generated in time
polynomial in $n=\left\vert I\right\vert $, the distribution $\mathcal{D}%
$\ can be described in polynomial time as the uniform distribution over some
subset of $X\times Y$, and and the verification procedure $V(x,y,a,b)$ can be
generated in polynomial time as a table of size $\left\vert X\right\vert
\times\left\vert Y\right\vert \times\left\vert A\right\vert \times\left\vert
B\right\vert $. \ In the \textquotedblleft implicit\textquotedblright\ case,
$X,Y,A,B$ are sets of $\operatorname*{poly}\left(  n\right)  $-bit\ strings,
$\mathcal{D}$ can be described as a probabilistic polynomial-time sampling
procedure that returns a pair $(x,y)\in X\times Y$, and the verification
function $V(x,y,a,b)$ can be computed in polynomial time.
\end{itemize}

The class $\mathsf{PCP}$ then consists of all decision problems that can be
reduced explicitly to two-prover games, while $\mathsf{MIP}$ consists of all
decision problems that can be reduced implicitly to two-prover games. \ As
frequently happens, switching from explicit to implicit representations causes
us to \textquotedblleft jump up\textquotedblright\ in complexity by an
exponential. \ The dual theorems $\mathsf{PCP}=\mathsf{NP}$ and $\mathsf{MIP}%
=\mathsf{NEXP}$\ bear out this general pattern.

The hardness of approximating two-prover games can in turn be used to show
hardness of approximation for many constraint-satisfaction problems. \ Better
trade-offs in the parameters of the reduction and specific kinds of
verification procedure give tighter hardness of approximation results for a
wide variety of particular combinatorial optimization problems. \ So the study
of two-prover games did not end with the PCP Theorem.

\subsection{Restricting to Independent Questions\label{INDEP}}

In this paper, we consider the following restriction of two-prover games:

\begin{quotation}
\noindent \textit{What if we demand that Arthur's challenges to Merlin}$_{1}%
$\textit{\ and Merlin}$_{2}$\textit{\ be independent? \ In other words, what
if the distribution }$\mathcal{D}$\textit{\ is simply the uniform distribution
over }$X\times Y$\textit{?}\footnote{We could also let $\mathcal{D}$ be an
arbitrary product distribution, but we don't gain any interesting generality
that way: Arthur might as well just send Merlin$_{1}$ and Merlin$_{2}$\ the
uniform random bits he would've used to generate $x\in X$ and $y\in Y$
respectively.}
\end{quotation}

In the PCP literature, two-prover games where $\mathcal{D}$\ is uniform over
$X\times Y$\ are called \textquotedblleft free\textquotedblright\ games, and
have sometimes been studied as an easier-to-analyze special case of general
games \cite{brrrs,shaltiel}. \ Free games are also tightly connected to dense
instances of constraint satisfaction problems. \ In this paper, we consider
approximating the values of free games as an interesting computational problem
in its own right, and one that has not received explicit attention. \ As far
as we know, we are the first to study the complexity of this problem directly,
and to formulate complexity classes of problems reducible to free games.

In more detail, the restriction to free games gives us an analogue of
\textquotedblleft public-coin\textquotedblright\ protocols in the
single-prover interactive proof setting. \ This corresponds to the original
definition of the class $\mathsf{AM}$, so we use a version of $\mathsf{AM}$
notation. \ We consider $\mathsf{AM}\left(  2\right)  $, or two-prover
Arthur-Merlin: the class of all languages that admit two-prover, two-round
interactive proof systems, in which Arthur's challenges to the Merlins are
independent, uniformly-random $\operatorname*{poly}\left(  n\right)  $-bit
strings. \ In other words, $\mathsf{AM}(2)$ is the class of problems
\textit{implicitly} reducible to approximating the value of a free game.
\ Clearly
\begin{equation}
\mathsf{AM}\subseteq\mathsf{AM}\left(  2\right)  \subseteq\mathsf{MIP}%
=\mathsf{NEXP}.
\end{equation}
We want to know: \textit{what is the true power of }$\mathsf{AM}\left(
2\right)  $\textit{?} \ Is it more powerful than single-prover $\mathsf{AM}$?
\ Is it less powerful than $\mathsf{MIP}$? \ We will also be interested in the
complexity of approximating the values of \textit{explicit} free games.

As we'll discuss in Section \ref{QUANTUM}, an additional motivation to study
$\mathsf{AM}\left(  2\right)  $\ comes from difficult analogous questions
about \textit{quantum} multi-prover proof systems, and specifically about the
quantum complexity class $\mathsf{QMA}\left(  2\right)  $. \ Our results could
shed light on $\mathsf{QMA}\left(  2\right)  $, by showing how many questions
about it get resolved in a simpler \textquotedblleft classical model
situation.\textquotedblright

\section{Our Results\label{RESULTS}}

We have two main sets of results: upper bounds, showing that the value of a
free game can be approximated in quasipolynomial time and translating that
into complexity class containments; and hardness results, giving almost
matching lower bounds for this problem under the Exponential Time Hypothesis
(ETH). \ Thus, assuming only the ETH, we show both that free games are
exponentially easier than general two-prover games, and \textit{also} they
still remain nontrivial, out of reach for polynomial-time algorithms.

\subsection{Upper Bounds\label{UB}}

Let \textsc{FreeGame}$_{\varepsilon}$\ be the problem of approximating the
value of a free game to error $\pm\varepsilon$:

\begin{problem}
[\textsc{FreeGame}$_{\varepsilon}$]Given as input a description of a free game
$G=\left(  X,Y,A,B,V\right)  $, estimate $\omega\left(  G\right)  $\ to within
additive error $\pm\varepsilon$. \ (Here $n$, the input size, is $\left\vert
X\right\vert \left\vert Y\right\vert \left\vert A\right\vert \left\vert
B\right\vert $, and\ $\varepsilon$\ is an arbitrarily small constant if not
specified explicitly.)
\end{problem}

We give a quasipolynomial-time algorithm for \textsc{FreeGame}$_{\varepsilon}$:

\begin{theorem}
\label{freegame0}\textsc{FreeGame}$_{\varepsilon}$ is solvable in
deterministic time $n^{O(\varepsilon^{-2}\log n)}$.
\end{theorem}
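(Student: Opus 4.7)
The plan is to exploit the asymmetric structure of a free game: since $\mathcal{D}$ is the product of uniform distributions on $X$ and $Y$, once Merlin$_{1}$'s strategy $a:X\to A$ is fixed, Merlin$_{2}$'s best response decouples across her questions as $b_{a}(y)=\arg\max_{b\in B}\operatorname*{E}_{x}[V(x,y,a(x),b)]$, which is computable in $\operatorname*{poly}(n)$ time. So if I can enumerate quasipolynomially many candidates for $a$, one of which induces a near-optimal pair, I am done.

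The key observation is that $b_{a}(y)$ depends on $a$ only through the $|Y|\cdot|B|\le n$ averages $\operatorname*{E}_{x}[V(x,y,a(x),b)]$. By a Chernoff bound plus a union bound over those $(y,b)$ pairs, for a uniformly random subset $S\subseteq X$ of size $k=\Theta(\varepsilon^{-2}\log n)$, with probability at least $1/2$ each such average is approximated within $\pm\varepsilon$ by the empirical average $\frac{1}{|S|}\sum_{x\in S}V(x,y,a(x),b)$. In particular, a \emph{good} $S$ exists, and for any such $S$ the function
\[
\tilde{b}(y)\;=\;\arg\max_{b\in B}\frac{1}{|S|}\sum_{x\in S}V(x,y,a(x),b)
\]
is within $2\varepsilon$ of optimal as a response to $a$, \emph{and} depends on $a$ only through its restriction $a|_{S}:S\to A$.

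The algorithm is then brute force. Enumerate all pairs $(S,\alpha)$ with $|S|=k$ and $\alpha:S\to A$; the number of such pairs is at most $\binom{|X|}{k}|A|^{k}\le n^{O(\varepsilon^{-2}\log n)}$. For each pair, compute the induced $\tilde{b}$ as above; then, since Merlin$_{1}$'s best response to any fixed $\tilde{b}$ also factorizes across $x$, compute that best response $\tilde{a}$ in $\operatorname*{poly}(n)$ time, evaluate $\operatorname*{E}_{(x,y)}[V(x,y,\tilde{a}(x),\tilde{b}(y))]$, and keep the maximum across all pairs. Each iteration costs $\operatorname*{poly}(n)$, giving total deterministic time $n^{O(\varepsilon^{-2}\log n)}$. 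Correctness is immediate in both directions: every output is realized by an explicit strategy pair so is $\le\omega(G)$; conversely, plugging $\alpha=a^{\ast}|_{S}$ for an optimal $a^{\ast}$ and any good $S$ (which must appear in the enumeration) yields value $\ge\omega(G)-O(\varepsilon)$, after which $\varepsilon$ is rescaled by a constant.

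The only substantive step is the sampling/union-bound argument, and its leverage comes from union-bounding over the $|Y|\cdot|B|$ conditional expectations that actually determine $\tilde{b}$, rather than over the $|A|^{|X|}$ possible strategies $a$; this is what reduces the "certificate" for Merlin$_{1}$ from $|X|\log|A|$ bits down to $k\log|A|=O(\varepsilon^{-2}\log^{2}n)$ bits. I do not expect any serious obstacle beyond setting up the union bound carefully; no PCP or ETH-style machinery is needed.
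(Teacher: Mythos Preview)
Your proposal is correct and is essentially identical to the paper's own proof (Theorem~\ref{freegamealg}): the paper likewise samples $S\subseteq X$ of size $\Theta(\varepsilon^{-2}\log(|Y||B|))$, enumerates all $\alpha:S\to A$, computes Merlin$_2$'s best response $b_\alpha$ from the empirical averages over $S$, then Merlin$_1$'s best response $a_\alpha$ to $b_\alpha$, and outputs the maximum value, with the deterministic version looping over all $S$. Your correctness argument---Hoeffding plus a union bound over the $|Y|\cdot|B|$ pairs $(y,b)$ applied to the optimal $a^\ast$---matches the paper's chain of inequalities exactly.
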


While this is the first algorithm explicitly for \textsc{FreeGame}%
$_{\varepsilon}$, there is some directly-related algorithmic work. \ After
learning of our results (but before this paper was written), Brandao and
Harrow \cite{brandaoharrow:freegame} gave an algorithm for \textsc{FreeGame}%
$_{\varepsilon}$ with the same running time as ours, but using interestingly
different techniques. \ (Our algorithm is purely combinatorial, whereas theirs
uses linear programming relaxation.) \ Also, Barak et al.\ \cite{bhhs} gave a
quasipolynomial-time approximation algorithm for the related problem of
approximating the values of dense CSPs with polynomial-sized alphabets.

In the implicit setting, Theorem \ref{freegame0}\ implies that $\mathsf{AM}%
\left(  2\right)  \subseteq\mathsf{EXP}$, which improves on the trivial upper
bound of $\mathsf{NEXP}$. \ However, by building on the result of Barak et al.
\cite{bhhs} mentioned above, we are able to prove a stronger result, which
completely characterizes $\mathsf{AM}\left(  2\right)  $:

\begin{theorem}
\label{am2am0}$\mathsf{AM}\left(  2\right)  =\mathsf{AM}$.
\end{theorem}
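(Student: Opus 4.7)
The inclusion $\mathsf{AM} \subseteq \mathsf{AM}(2)$ is trivial, since Arthur may simply ignore the second Merlin. For the reverse inclusion $\mathsf{AM}(2) \subseteq \mathsf{AM}$, the plan is to use a \emph{subsampling theorem for free games}: I claim there exists $s = \mathrm{poly}(n)$ such that, for any free game $G = (X, Y, A, B, V)$ implicitly defined by an $n$-bit instance, if $X' \subseteq X$ and $Y' \subseteq Y$ are uniformly random subsets of size $s$, then with high probability the induced subgame $G' = (X', Y', A, B, V|_{X' \times Y'})$ satisfies $|\omega(G') - \omega(G)| \leq 1/10$. I would establish this by translating the free game into the natural dense constraint-satisfaction problem over alphabet $A \cup B$ and invoking the subsampling theorems of Alon et al.\ and Barak et al.\ \cite{bhhs}; since $\log(|A||B|) = \mathrm{poly}(n)$, their required sample size is $\mathrm{poly}(n)$.

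Given this lemma, the $\mathsf{AM}$ protocol for $L \in \mathsf{AM}(2)$ is straightforward. Let $G_I$ be the free game defined by instance $I$, so $\omega(G_I) \geq 2/3$ if $I \in L$ and $\omega(G_I) \leq 1/3$ otherwise. Arthur uses his public random coins to sample $X', Y'$ of size $s$ and sends them to Merlin. Merlin responds with functions $a' : X' \to A$ and $b' : Y' \to B$, a string of polynomial length; Arthur then accepts iff
\begin{equation}
\frac{1}{|X'|\,|Y'|} \sum_{(x,y) \in X' \times Y'} V(x, y, a'(x), b'(y)) \geq 1/2.
\end{equation}
Completeness holds by a Chernoff bound applied to a fixed optimal strategy of $G_I$ restricted to the sample; soundness holds since $\omega(G'_I) \leq \omega(G_I) + 1/10 < 1/2$ with high probability, so no response by Merlin causes Arthur to accept with probability $\geq 1/2$. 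Note that even though the single Merlin sees both $X'$ and $Y'$ jointly, his maximum achievable score on $G'_I$ is still $\omega(G'_I)$, because $V$ is deterministic and any deterministic joint strategy decomposes into a pair of single-variable functions $a', b'$.

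The main obstacle is the upper-bound direction $\omega(G') \leq \omega(G) + \varepsilon$ of the subsampling lemma. Because Merlin's choice of $a', b'$ may depend adaptively on the sampled $X', Y'$, a Chernoff bound over a single fixed strategy does not suffice; we need a uniform bound over all $|A|^{|X'|} \cdot |B|^{|Y'|}$ possible subsample strategies. This is precisely the delicate content of the dense-CSP subsampling theorems of \cite{bhhs}, whose proofs combine concentration with structural arguments about dense instances; once imported and recast in free-game language, the rest of the argument here is essentially a translation.
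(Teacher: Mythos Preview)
Your approach matches the paper's: reduce the free game to a dense $2$-CSP over alphabet $A \cup B$, invoke the Barak et al.\ subsampling theorem, and have Arthur send polynomial-size random subsets $X', Y'$ to a single Merlin who returns a strategy pair to be checked on $X' \times Y'$.

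One point needs tightening. The subsampling theorem (as the paper states it, following \cite{bhhs}) yields only the \emph{expectation} bound $\operatorname{E}_{X',Y'}[\omega(G')] \leq \omega(G) + \varepsilon$, not a high-probability guarantee. The paper therefore uses Markov's inequality in the sound case to get $\Pr[\omega(G') \geq 1/2] \leq 5/6$, obtaining only constant rejection probability, and then amplifies by repetition; your sentence ``$\omega(G'_I) \leq \omega(G_I) + 1/10$ with high probability'' glosses over this step and, as written, claims more than the cited theorem delivers. A second minor elision: since the verifier $V$ in the definition of $\mathsf{AM}(2)$ may itself be randomized, Arthur cannot in general compute the average in your acceptance criterion exactly and must instead estimate it by sampling, as the paper does.
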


We can even generalize Theorem \ref{am2am0} to handle any polynomial number of Merlins:

\begin{theorem}
$\mathsf{AM}\left(  k\right)  =\mathsf{AM}$ for all $k=\operatorname*{poly}%
\left(  n\right)  $.
\end{theorem}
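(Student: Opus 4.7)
Since $\mathsf{AM} \subseteq \mathsf{AM}(k)$ is immediate (Arthur can simply ignore the extra Merlins), the real work is to show $\mathsf{AM}(k) \subseteq \mathsf{AM}$ for $k = \mathrm{poly}(n)$. My plan is to follow the template of Theorem \ref{am2am0}, whose core ingredient is a subsampling theorem saying that the value of a two-player free game is approximated within $\varepsilon$ by the value of a random logarithmic-sized sub-game. The key new ingredient is lifting this subsampling theorem to $k$-player free games.

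Concretely, I would first prove a $k$-player subsampling theorem: for any $k$-player free game $G = (X_1, \ldots, X_k, A_1, \ldots, A_k, V)$ and any $\varepsilon > 0$, sampling random subsets $X_i' \subseteq X_i$ of size $s = \mathrm{poly}(\log N, k, 1/\varepsilon)$, where $N = |X_1| \cdots |X_k|$, yields a sub-game $G'$ satisfying $|\omega(G') - \omega(G)| \leq \varepsilon$ with high probability. This should follow by applying the Alon et al./Barak et al.\ dense-$k$-CSP subsampling results (the same tools that drive Theorem \ref{am2am0}) to the full $k$-way product structure of $G$. Given such a theorem, the $\mathsf{AM}$ simulation of any $\mathsf{AM}(k)$ protocol is: Arthur's public random string encodes the subsamples $X_1', \ldots, X_k'$ together with $T = \mathrm{poly}(n)$ uniform test tuples from $X_1' \times \cdots \times X_k'$; the single Merlin replies with $k$ strategy tables $a_i \colon X_i' \to A_i$, of total bit-length $k \cdot s \cdot \log|A_i| = \mathrm{poly}(n)$; Arthur accepts iff the empirical average of $V(x_1^{(t)}, \ldots, x_k^{(t)}, a_1(x_1^{(t)}), \ldots, a_k(x_k^{(t)}))$ over the test tuples exceeds $1/2$. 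Completeness uses honest strategies restricted to the sample; soundness combines the subsampling bound (so $\omega(G') \leq 1/3 + \varepsilon$ w.h.p.) with a Chernoff plus union bound over the at most $2^{\mathrm{poly}(n)}$ candidate Merlin messages.

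The main obstacle I expect is keeping the subsample size $s$ polynomial in $n$ when $k = \mathrm{poly}(n)$: many dense-CSP subsampling bounds in the literature carry exponential dependence on the arity $k$, which would blow the Merlin's message past polynomial. If the $k$-dependence in the subsampling theorem cannot be tamed in one shot, a natural fallback is to iterate: apply the two-player subsampling from Theorem \ref{am2am0} to consecutive pairs of Merlins in parallel to obtain $\mathsf{AM}(k) \subseteq \mathsf{AM}(\lceil k/2 \rceil)$, and recurse for $O(\log k)$ rounds. Choosing $\varepsilon$ at each round small enough, and reamplifying the $2/3$-vs-$1/3$ completeness-soundness gap between rounds, absorbs the error accumulation and still delivers a $\mathrm{poly}(n)$ simulation.
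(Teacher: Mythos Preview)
Your primary plan coincides with the paper's: prove a $k$-player free-game subsampling theorem and read off the $\mathsf{AM}$ simulation. You have also correctly located the real difficulty---the arity dependence in the Barak et al.\ bound. What you are missing is the \emph{encoding trick} that actually defeats this obstacle. The na\"{\i}ve encoding (one variable $b_i(y_i)$ per Merlin per question, one $k$-ary constraint per question-tuple) has density only $\approx k!/k^k\sim e^{-k}$ in the sense required by Theorem~\ref{bhhsthm}, so the subsample size it yields is exponential in $k$. The paper instead takes as variables the full $k$-tuples $\mathbf{y}\in Y_1\times\cdots\times Y_k$ over alphabet $B_1\times\cdots\times B_k$, and puts a constraint on \emph{every} $k$-subset $\{\mathbf{y}_1,\ldots,\mathbf{y}_k\}$ by averaging $V$ over all permutations that assign role $\sigma(i)$ to $\mathbf{y}_i$. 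This CSP has density $\alpha=1$ regardless of $k$, and one checks $\operatorname{SAT}(\varphi)=\omega(G)$ (the $\leq$ direction uses that a general assignment induces, via random completion of the other coordinates, a \emph{distribution} over product strategies). The subsample size becomes $\kappa=\varepsilon^{-O(1)}\log\bigl(|B_1|\cdots|B_k|\bigr)=\operatorname{poly}(n)$, and the rest of your simulation goes through exactly as you wrote it.

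Your fallback, however, does not work. ``Merging consecutive pairs of Merlins'' to pass from $\mathsf{AM}(k)$ to $\mathsf{AM}(\lceil k/2\rceil)$ lets the merged super-Merlin see both questions $(y_{2i-1},y_{2i})$ and correlate the two answers; this can only \emph{raise} the game value. So soundness is lost at the very first merge: a $k$-player game with $\omega(G)\le 1/3$ may become a $\lceil k/2\rceil$-player game with value near $1$, and no amount of reamplification repairs this, since you are now amplifying the wrong quantity. (The same objection blocks the variant ``view Merlin~$1$ versus the coalition of Merlins $2,\ldots,k$ as a two-player game and subsample'': two-player subsampling controls the \emph{coalition} value, which can far exceed the $k$-player value.) The iterative route can be salvaged, but only by re-proving subsampling for games in which one side is constrained to product strategies---and that again reduces to finding a dense CSP encoding, which is exactly the step above.
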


Thus, in the complexity class setting, it is really the correlation between
queries that makes multiple provers more powerful than a single prover.

\subsection{Hardness Results\label{LB}}

Seeing just the above, one might conjecture that the values of free games are
approximable in polynomial time. \ But surprisingly, we give strong evidence
that this is \textit{not} the case.

To show the power of free games, we give a nontrivial reduction from
\textsc{3Sat} to \textsc{FreeGame}. \ Equivalently, we show that \textit{there
exists a nontrivial }$\mathsf{AM}\left(  2\right)  $\textit{\ protocol}: even
if Arthur's challenges are completely independent, two Merlins can be more
helpful to him than one Merlin. \ In particular, given a \textsc{3Sat}
instance $\varphi$, let the \textit{size} of $\varphi$ be the number of
variables plus the number of clauses. \ Then:

\begin{theorem}
\label{mainthm1}For some constant $\varepsilon>0$, there exists a reduction
running in time $2^{\widetilde{O}(\sqrt{n})}$\ that maps \textsc{3Sat}
instances of size $n$ to \textsc{FreeGame}$_{\varepsilon}$ instances of size
$2^{\widetilde{O}(\sqrt{n})}$ (where the $\widetilde{O}$ hides polylogarithmic factors).
\end{theorem}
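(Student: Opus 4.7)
The plan is to give a ``birthday repetition'' reduction that turns any 3Sat instance into a free 2-prover game with $n^{1/2+o(1)}$-bit questions and answers. First I would apply a nearly-linear-size PCP reduction (such as Dinur's theorem) to replace the input 3Sat instance $\varphi$ on $n$ variables and clauses by an equivalent constant-arity CSP $\varphi'$ on $N = n \cdot \operatorname{polylog}(n)$ variables and $N$ constraints, with a constant soundness gap $\delta > 0$: if $\varphi$ is unsatisfiable, then every assignment violates at least a $\delta$ fraction of the constraints of $\varphi'$.

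Next, with $k = \widetilde{\Theta}(\sqrt{n}) = \Theta(\sqrt{N})$, I would define the free game $G_{\varphi}$ as follows. Arthur draws two independent uniform samples: a set $S_1 \subseteq [N]$ of $k$ variables, sent to Merlin$_1$, and a set $S_2$ of $k$ constraints, sent to Merlin$_2$. Merlin$_1$ responds with an assignment to $S_1$; Merlin$_2$ responds with an assignment to all variables appearing in the constraints of $S_2$. Arthur accepts iff (a) every constraint in $S_2$ is satisfied by Merlin$_2$'s response, and (b) the two Merlins agree on $S_1 \cap \mathrm{vars}(S_2)$. Each question is $O(k \log N) = \widetilde{O}(\sqrt{n})$ bits and each answer is $O(k) = \widetilde{O}(\sqrt{n})$ bits, so the explicit $|X| \cdot |Y| \cdot |A| \cdot |B|$ table has size $2^{\widetilde{O}(\sqrt{n})}$, as required. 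Completeness is immediate: if $\varphi'$ is satisfied by some $\sigma$, both Merlins return the restriction of $\sigma$ to their respective queries, and Arthur always accepts.

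The main work is soundness: showing that if $\varphi$ is unsatisfiable then $\omega(G_{\varphi}) \leq 1 - \varepsilon$ for some absolute constant $\varepsilon > 0$. The driving intuition is the birthday paradox: for $k = \Theta(\sqrt{N})$, one has $\mathbb{E}[|S_1 \cap \mathrm{vars}(S_2)|] = \Theta(1)$, so any given variable queried by Merlin$_1$ lands inside Merlin$_2$'s view with constant probability. Given any strategy pair $(\alpha, \beta)$ winning with probability $1 - \varepsilon$, I would extract a global assignment $\sigma^* \colon [N] \to \{0,1\}$ from Merlin$_1$'s strategy by letting $\sigma^*(v)$ be the most likely value that $\alpha$ assigns to $v$, conditioned on $v \in S_1$. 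The heart of the argument is to show that (i) for a random constraint $c$, Merlin$_2$'s assignment $\beta$ on $\mathrm{vars}(c)$ agrees with $\sigma^*|_{\mathrm{vars}(c)}$ with high probability, since otherwise the consistency check (b) would trigger disagreements too often, and (ii) because Merlin$_2$ satisfies a random constraint in $S_2$ with probability close to $1$, the assignment $\sigma^*$ must then satisfy more than a $1-\delta$ fraction of the constraints of $\varphi'$, contradicting PCP soundness.

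The main technical obstacle is step (i), the passage from local to global consistency. One must calibrate $k$, $\delta$, and $\varepsilon$ jointly so that the averaging used to decode $\sigma^*$ loses only a constant factor, and one must argue that ``on most of Merlin$_2$'s views, $\beta$ agrees with $\sigma^*$'' really does follow from the $\Theta(1)$-expected overlap with $S_1$. This is essentially a birthday-repetition analogue of the classical consistency-testing arguments in parallel repetition and low-degree testing: a strategy that passes the overlap check with probability $1 - \varepsilon$ must be $O(\varepsilon)$-close, in an appropriate average sense, to some global assignment. Producing that lemma with honest constants, and combining it cleanly with the PCP gap, is the delicate step on which the whole reduction hinges.
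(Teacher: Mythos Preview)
Your high-level construction---apply a near-linear PCP, then play the birthday-repeated clause/variable game with block size $k=\Theta(\sqrt{N})$---matches the paper, and your parameter accounting is right. Where you diverge is the soundness argument. You propose to \emph{decode} a global assignment $\sigma^*$ from Merlin$_1$'s strategy by majority vote and then argue, via the $\Theta(1)$ expected overlap, that Merlin$_2$ must track $\sigma^*$; you yourself flag this local-to-global step as the delicate part left open. The paper avoids it entirely with an \emph{embedding} argument: given strategies $(a,b)$ for the birthday game $G_\phi^{k\times\ell}$, build strategies for the \emph{original} one-clause/one-variable game $G_\phi$ by having Merlin$_1$, on input a single clause index $i$, pad $i$ with $k-1$ uniformly random other clause indices to form a set $I$, apply $a(I)$, and return only the restriction to $i$ (and analogously for Merlin$_2$ with his variable index $j$). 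The only lemma needed is that the distribution on pairs $(I,J)$ induced this way---when $(i,j)$ is drawn from $G_\phi$'s correlated distribution---is $O(\sqrt{N/k\ell})$-close in total variation to the uniform distribution on $\binom{[M]}{k}\times\binom{[N]}{\ell}$; this is a direct second-moment calculation that relies crucially on the \emph{balance} property of Dinur's PCP (every variable appears in exactly $d$ clauses), which you did not invoke. One immediately gets $\omega(G_\phi^{k\times\ell})\le\omega(G_\phi)+O(\sqrt{N/k\ell})$ and finishes via the already-known soundness of $G_\phi$. So the paper replaces your consistency-testing lemma---which is genuinely nontrivial to carry out, since majority decoding can be unstable when Merlin$_1$'s answers are context-dependent and the per-variable overlap test fires only with probability $\Theta(1/\sqrt{N})$---with a short variance bound.
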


In other words, there is a protocol whereby Arthur can check that a
\textsc{3Sat} instance of size $n$ is satisfiable, by exchanging only
$\widetilde{O}(\sqrt{n})$ bits with the Merlins---i.e., sending $\widetilde{O}%
(\sqrt{n})$-bit challenges and receiving $\widetilde{O}(\sqrt{n})$-bit
responses. \ The protocol has perfect completeness and a\ $1$
vs.\ $1-\varepsilon$\ completeness/soundness gap, for some fixed constant
$\varepsilon>0$. \ Since the first step we use is the PCP Theorem, by
composing our main protocol with various PCP constructions, we can get
reductions with different quantitative tradeoffs between reduction time,
completeness, soundness, and alphabet size.

One corollary of Theorem \ref{mainthm1}\ is that, if \textsc{FreeGame} is in
$\mathsf{P}$, then\ \textsc{3Sat} is in $\mathsf{TIME}(2^{\widetilde{O}%
(\sqrt{n})})$. \ Since \textsc{3Sat} is complete under quasilinear-time
reductions for $\mathsf{NTIME}(n)$, the same holds for any problem in
nondeterministic linear time. \ As a second corollary, we get a lower bound on
the time to approximate \textsc{FreeGame} assuming the ETH. \ This lower bound
almost matches the upper bounds described in Section \ref{UB}. \ To be more
precise, recall the \textit{Exponential Time Hypothesis} (ETH) of Impagliazzo
and Paturi \cite{ip:eth}:

\begin{conjecture}
[Exponential Time Hypothesis \cite{ip:eth}]Any deterministic algorithm for
\textsc{3Sat}\ requires $2^{\Omega\left(  n\right)  }$\ time. \ (There is also
the Randomized ETH, which says the same for bounded-error randomized algorithms.)
\end{conjecture}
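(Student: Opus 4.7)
The final displayed statement is in fact a \emph{conjecture}, not a theorem or claim: Impagliazzo and Paturi introduced ETH precisely because it appears out of reach, and in this paper it is used only as a working hypothesis (to extract conditional hardness from Theorem~\ref{mainthm1}). So any proof proposal I can offer is necessarily hypothetical. Concretely, a proof would have to exhibit a $2^{\Omega(n)}$ lower bound on the running time of \emph{every} deterministic algorithm for \textsc{3Sat}. This would immediately imply $\mathsf{P}\neq\mathsf{NP}$---and in fact the strictly stronger separation $\mathsf{NTIME}(n)\not\subseteq\mathsf{DTIME}(2^{o(n)})$---so any plan must be at least as ambitious as a plan for $\mathsf{P}$ vs.\ $\mathsf{NP}$.

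The most natural route goes via non-uniform lower bounds: convert a hypothetical $2^{o(n)}$-time deterministic algorithm into a circuit family of size $2^{o(n)}$ and refute it combinatorially. But essentially every known barrier blocks this. Relativization (Baker--Gill--Solovay) rules out black-box diagonalization; the natural proofs barrier of Razborov--Rudich rules out ``large and constructive'' combinatorial arguments in the presence of strong pseudorandom functions; and algebrization (Aaronson--Wigderson) blocks standard arithmetization-based techniques, including those underlying $\mathsf{IP}=\mathsf{PSPACE}$ and $\mathsf{MIP}=\mathsf{NEXP}$. The state of the art cannot even prove a super-linear circuit lower bound for an explicit function in $\mathsf{NP}$, let alone a $2^{\Omega(n)}$ lower bound on \textsc{3Sat}.

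The main obstacle, then, is not a missing step in some particular strategy but the wholesale absence of techniques capable of producing unconditional exponential lower bounds against uniform computation. Short of a conceptual breakthrough that simultaneously sidesteps all three of the barriers above, I have no honest plan to prove the statement. In the context of this paper the intended reading is the straightforward one: ETH is a hypothesis invoked in the corollaries of Theorem~\ref{mainthm1}, not a target of direct proof.
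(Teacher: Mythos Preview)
Your assessment is correct: the statement is explicitly labeled a \emph{conjecture} in the paper and is never proved there; it is invoked solely as a hypothesis from which conditional hardness results (Corollary~\ref{freegamehard}, Theorems~\ref{freegamecor2} and~\ref{freegamecor3}, etc.) are derived. Your discussion of why ETH is out of reach is accurate and appropriate, and matches the paper's treatment exactly.
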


Then we show the following:

\begin{corollary}
[Hardness of Free Games]\label{freegamehard}Assuming the (randomized) ETH, any
(randomized) algorithm for \textsc{FreeGame}$_{\varepsilon}$ requires
$n^{\widetilde{\Omega}(\varepsilon^{-1}\log n)}$\ time, for all $\varepsilon
\geq1/n$ bounded below some constant.
\end{corollary}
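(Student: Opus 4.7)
The plan is to derive the lower bound by combining Theorem~\ref{mainthm1} with the ETH in a direct padding argument. Let $\varepsilon_{0}>0$ be the constant gap supplied by Theorem~\ref{mainthm1}, and suppose for contradiction that \textsc{FreeGame}$_{\varepsilon}$ admits an algorithm running in time $T(n,\varepsilon)$. Given a \textsc{3Sat} instance $\varphi$ of size $m$, apply Theorem~\ref{mainthm1} to produce, in time $2^{\widetilde{O}(\sqrt{m})}$, a \textsc{FreeGame}$_{\varepsilon_{0}}$ instance $G_{\varphi}$ of size $N=2^{\widetilde{O}(\sqrt{m})}$ whose value is $1$ if $\varphi$ is satisfiable and at most $1-\varepsilon_{0}$ otherwise. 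Running the hypothetical algorithm on $G_{\varphi}$ decides $\varphi$ in time $T(N,\varepsilon_{0})+2^{\widetilde{O}(\sqrt{m})}$, which by the ETH must be at least $2^{\Omega(m)}$. Inverting $N=2^{\widetilde{O}(\sqrt{m})}$ to $m=\widetilde{\Theta}((\log N)^{2})$ rearranges this into $T(N,\varepsilon_{0})\geq N^{\widetilde{\Omega}(\log N)}$, giving the corollary in the constant-$\varepsilon$ regime.

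For subconstant $\varepsilon\geq 1/n$, the extra $\varepsilon^{-1}$ factor in the exponent comes from tuning the PCP that sits inside Theorem~\ref{mainthm1}, as alluded to in the remark following its statement. Using PCPs with soundness error $\varepsilon$ rather than a fixed constant, one obtains reductions mapping \textsc{3Sat} of size $m$ to \textsc{FreeGame}$_{\varepsilon}$ of size $N=2^{\widetilde{O}(\sqrt{\varepsilon m})}$. Plugging this into the ETH-based contradiction gives $T(N,\varepsilon)\geq 2^{\Omega(m)}$ with $m=\widetilde{\Theta}(\varepsilon^{-1}(\log N)^{2})$, whence $T(N,\varepsilon)\geq N^{\widetilde{\Omega}(\varepsilon^{-1}\log N)}$. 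The assumption $\varepsilon\geq 1/n$ is exactly what keeps the reduction's output size polynomial in $n$, so the reduction time is swamped by the $T$ term.

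The randomized ETH case is handled identically: since the reduction of Theorem~\ref{mainthm1} is deterministic, any $\mathsf{BPP}$ algorithm for \textsc{FreeGame}$_{\varepsilon}$ composes with it to yield a randomized \textsc{3Sat} algorithm of essentially the same running time, contradicting the randomized ETH. The main technical obstacle I expect is justifying the second paragraph: one must open up the PCP composition underlying Theorem~\ref{mainthm1} and verify that the soundness-vs.-output-size tradeoff really does scale as $\log N=\widetilde{O}(\sqrt{\varepsilon m})$ across the whole range $1/n\leq\varepsilon\leq\varepsilon_{0}$, rather than something weaker. Once that scaling is in hand, everything else is routine bookkeeping with $\log$'s and $\widetilde{\Theta}$'s.
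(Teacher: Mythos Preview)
Your high-level structure is right, and the constant-$\varepsilon$ paragraph is essentially the paper's argument. The gap is in your second paragraph: the mechanism you propose for getting the $\varepsilon$-dependence is not the one that works, and the one you propose does not give the claimed scaling.

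You suggest weakening the PCP's soundness to $1-\varepsilon$ and then re-running the birthday construction. But if the underlying PCP (of size $N'\approx m$) has only an $\varepsilon$-gap, then the basic birthday analysis (Theorem~\ref{gbdthm}) says you need $k\ell\gtrsim N'/\varepsilon^{2}$ just to preserve that gap in the free game, giving communication $\widetilde{O}(\sqrt{m}/\varepsilon)$ rather than $\widetilde{O}(\sqrt{\varepsilon m})$. Plugging that into the ETH contradiction yields only $N^{\widetilde{\Omega}(\varepsilon^{2}\log N)}$, which goes the wrong way in $\varepsilon$.

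What the paper actually does (Section~\ref{HIGHERROR}, culminating in Theorem~\ref{freegamecor2}) is the opposite: keep Dinur's PCP with its \emph{constant} gap, but \emph{shrink} the birthday-repetition parameter to $k=\sqrt{\varepsilon N}$. This makes the free game smaller, of size $M=N^{k}=2^{\widetilde{O}(\sqrt{\varepsilon m})}$ as you want, but now the basic variation-distance bound of Theorem~\ref{gbdthm} is useless (it gives nothing when $k\ll\sqrt{N}$). The missing ingredient is a new soundness analysis, Lemma~\ref{smallkl}, which shows directly that $\omega(G_{\phi}^{k\times k})\le 1-\Omega(k^{2}/N)=1-\Omega(\varepsilon)$ in this small-$k$ regime; its proof goes through a bipartite birthday-paradox estimate (Lemma~\ref{bipartitelem}) rather than the second-moment computation behind Theorem~\ref{gbdthm}. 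That lemma is the real technical content you are missing, and it is not a PCP statement at all---it is a refined analysis of the birthday game itself. Your acknowledged ``main technical obstacle'' is therefore located in the wrong place: the PCP is untouched; it is the birthday repetition that needs the sharper bound.
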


Again, by considering various PCP constructions, we get a variety of hardness
results for many interesting versions and ranges of parameters for the
\textsc{FreeGame} problem.

We can further reduce \textsc{FreeGame} to the problem of approximating dense
CSPs, where an arity $k$ CSP is considered dense if it contains constraints
for a constant fraction of all $k$-tuples of variables. \ We thus get the
following hardness result for dense CSPs.

\begin{corollary}
\label{ethcor}Assuming the ETH, the problem of approximating a \textit{dense
}$k$\textit{-CSP} (constraint satisfaction problem) with a polynomial-size
alphabet, to constant additive error, requires $n^{\widetilde{\Omega}\left(
\log n\right)  }$\ time, for any $k\geq2$.
\end{corollary}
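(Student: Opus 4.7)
The plan is to give a near-tautological reduction from \textsc{FreeGame} to approximating dense $k$-CSPs, and then apply Corollary~\ref{freegamehard}.

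First, I would observe that a free game $G = (X, Y, A, B, V)$ is essentially already a dense 2-CSP. Take the variable set to be the disjoint union $X \sqcup Y$ and the alphabet to be $\Sigma = A \cup B$ (restricting each $x \in X$ to values in $A$ and each $y \in Y$ to values in $B$ via trivial unary predicates, or equivalently by extending $V$ to $\Sigma \times \Sigma$ with value $0$ off its intended domain). For every pair $(x,y) \in X \times Y$, install the constraint $V(x,y,\cdot,\cdot)$ on the variable pair $(x,y)$. Because the uniform distribution over $X \times Y$ coincides with the uniform distribution over the constraints, the max average constraint value equals $\omega(G)$ exactly. After balancing so $|X|=|Y|$, a $\Theta(1)$ fraction of all variable pairs carries a constraint, so the 2-CSP is dense.

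Second, I would track parameters. Without loss of generality (by padding) I may assume the hard \textsc{FreeGame}$_{\varepsilon}$ instances of size $n$ satisfy $|X| = |Y| = |A| = |B| = n^{1/4}$. Then the 2-CSP has $N = 2 n^{1/4}$ variables, alphabet size $|\Sigma| = \Theta(N)$ (so the alphabet is linear in the number of variables), and input size $M = \Theta(N^2 |\Sigma|^2) = \Theta(n)$, whence $\log M = \Theta(\log n)$. Any $T(M)$-time algorithm approximating the dense 2-CSP to additive $\varepsilon$ then yields a \textsc{FreeGame}$_\varepsilon$ algorithm running in time $T(\Theta(n))$; Corollary~\ref{freegamehard} forces $T(M) \geq M^{\widetilde{\Omega}(\log M)}$ under ETH, which settles the $k=2$ case.

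For $k \geq 3$, I would pad each 2-constraint on $(u,v)$ to the family of $k$-ary constraints on all tuples $(u,v,w_1,\ldots,w_{k-2})$, with the $k$-ary predicate depending only on the assignments to $u$ and $v$. The value is unchanged; each constrained pair extends to $\binom{N-2}{k-2}$ tuples, so a $\Theta(1)$ fraction of all $k$-tuples carries a constraint; the alphabet remains $\Theta(N)$; and the input size grows only polynomially in $N$, so the $M^{\widetilde{\Omega}(\log M)}$ lower bound is preserved. The main ``obstacle'' is purely bookkeeping: ensuring simultaneously that the alphabet is polynomial in the \emph{number of CSP variables} (not merely in the FreeGame description) and that the CSP input size $M$ is polynomially related to the FreeGame size $n$, so that the $\log n$ in Corollary~\ref{freegamehard} translates faithfully into a $\log M$ in the CSP lower bound. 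The balancing of $|X|,|Y|,|A|,|B|$ above handles both; there is no conceptual difficulty beyond this.
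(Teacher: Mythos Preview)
Your reduction is correct and is exactly what the paper intends: the encoding of a free game as a dense $2$-CSP that you sketch is spelled out in the paper's proof of Theorem~\ref{subfg} (variables $X'\sqcup Y'$, alphabet $A\cup B$, one constraint per cross pair), and Corollary~\ref{ethcor} is then immediate from Corollary~\ref{freegamehard}. One small point of care: for the \emph{question} sets, ``padding'' must mean duplicating existing questions (the $X'=X\times R_1$, $Y'=Y\times R_2$ trick the paper uses to equalize $|X'|=|Y'|$), not adding fresh dummy questions---the latter would shift $\omega(G)$ and shrink the completeness/soundness gap. With that reading, the balanced instance has size $n^{\Theta(1)}$, so $\log M=\Theta(\log n)$ and your bookkeeping goes through; your extension to $k\ge 3$ by tacking on irrelevant coordinates is likewise standard and correct.
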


Corollary \ref{ethcor} almost matches the upper bound of Barak et al.
\cite{bhhs}, explaining for the first time why Barak et al.\ were able to give
a quasipolynomial-time algorithm for approximating dense CSPs, but not a
polynomial-time one.

As another application of our hardness result\ for \textsc{FreeGame}, Brandao
and Harrow \cite{brandaoharrow:freegame}\ were recently able to use it to
prove that approximating the values of certain entangled games requires
$n^{\widetilde{\Omega}\left(  \log n\right)  }$\ time, assuming the
ETH.\footnote{See \cite{brandaoharrow:freegame}\ for the precise definition of
the entangled games they consider. \ Briefly, though, the games involve a
large number of provers, of whom two are selected at random to receive
challenges (the other provers are ignored).}

\section{Detailed Overview of Results\label{DETAILED}}

We now proceed to more detailed overview of our results and the techniques
used to prove them. \ Here, as in the technical part of the paper, we first
describe our hardness results for \textsc{FreeGame} (or equivalently,
$\mathsf{AM}\left(  2\right)  $\ protocols for \textsc{3Sat}), and then our
approximation algorithms (or equivalently, limitations of $\mathsf{AM}\left(
k\right)  $\ protocols).

\subsection{\textsc{3Sat} Protocol\label{3SATINT}}

The idea of our \textsc{3Sat} protocol is simple. \ First Arthur transforms
the \textsc{3Sat}\ instance $\varphi$\ into a PCP, so that it's either
satisfiable or far from satisfiable. \ For this to work, we need a
highly-efficient PCP theorem, which produces instances of near-linear size.
\ Fortunately, such PCP theorems are now known. \ Depending on the desired
parameters, we will use either the theorem of Dinur \cite{dinur} (which
produces \textsc{3Sat}\ instances of size $n\operatorname*{polylog}n$\ with a
small constant completeness/soundness gap), or that of Moshkovitz and Raz
\cite{mr} (which produces $2$-CSP\ instances of size $n\cdot2^{\left(  \log
n\right)  ^{1-\Omega(1)}}$\ with completeness/soundness gap arbitrarily close
to $1$).

Suppose for now that we use the PCP theorem of Dinur \cite{dinur}. \ Then
next, Arthur runs a variant of the so-called \textit{clause/variable game},
which we define below.

\begin{definition}
[Clause/Variable Game]Given a \textsc{3Sat} instance $\varphi$, consisting of
$n$ variables $x_{1},\ldots,x_{n}$\ and $m$\ clauses $C_{1},\ldots,C_{m}$, the
clause/variable game $G_{\varphi}$\ is defined as follows. \ Arthur chooses an
index $i\in\left[  m\right]  $\ uniformly at random, then chooses $j\in\left[
n\right]  $\ uniformly at random conditioned on $x_{j}$\ or $\urcorner x_{j}%
$\ appearing in $C_{i}$\ as a literal. \ He sends $i$ to Merlin$_{1}$\ and $j$
to Merlin$_{2}$. \ Arthur accepts if and only if

\begin{enumerate}
\item[(i)] Merlin$_{1}$ sends back a satisfying assignment to the variables in
$C_{i}$, and

\item[(ii)] Merlin$_{2}$ sends back a value for $x_{j}$\ that agrees with the
value sent by Merlin$_{1}$.
\end{enumerate}
\end{definition}

Let $\operatorname*{SAT}\left(  \varphi\right)  \in\left[  0,1\right]  $\ be
the maximum fraction of clauses of $\varphi$\ that can be simultaneously
satisfied. \ Then clearly the clause/variable game has \textit{perfect
completeness}: that is, if $\operatorname*{SAT}\left(  \varphi\right)
=1$\ then $\omega\left(  G_{\varphi}\right)  =1$. \ The following well-known
proposition shows that the game also has \textit{constant soundness}.

\begin{proposition}
\label{cvsound}If $\operatorname*{SAT}\left(  \varphi\right)  \leq
1-\varepsilon$, then $\omega\left(  G_{\varphi}\right)  \leq1-\varepsilon/3$.
\end{proposition}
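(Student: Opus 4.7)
The plan is to prove the contrapositive via a standard ``global decoding'' argument: extract an assignment from Merlin$_2$'s strategy and show that if this assignment satisfies too few clauses, then Arthur's consistency check catches the Merlins too often.

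First I would observe that, by convexity, we may assume both Merlins use deterministic strategies. So Merlin$_2$'s strategy is a function $b : [n] \to \{0,1\}$, which I will directly regard as a candidate assignment $\alpha := b$ to $\varphi$. Let $\delta$ be the fraction of clauses of $\varphi$ that $\alpha$ fails to satisfy; by hypothesis the best possible $\alpha$ achieves $\delta \geq \varepsilon$, but I will keep $\delta$ arbitrary and bound $\omega(G_\varphi)$ in terms of it.

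Next I would analyze what happens when Arthur samples a clause $C_i$ that $\alpha$ does \emph{not} satisfy. Let $a(i)$ be Merlin$_1$'s response on input $i$. Either $a(i)$ itself fails to satisfy $C_i$, in which case Arthur rejects with probability $1$, or $a(i)$ satisfies $C_i$; but then, since $\alpha$ does not satisfy $C_i$, the assignments $a(i)$ and $\alpha$ restricted to the (at most three) variables of $C_i$ must disagree on at least one variable. Arthur's conditional choice of $j$ is uniform over the variables appearing in $C_i$, so with probability at least $1/3$ he picks such a disagreement variable, at which point Merlin$_2$'s answer $\alpha(j)$ contradicts $a(i)$'s value and Arthur rejects. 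In either case, conditioned on $C_i$ being unsatisfied by $\alpha$, Arthur rejects with probability at least $1/3$.

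Combining, the overall rejection probability is at least $\delta/3$, hence $\omega(G_\varphi) \leq 1 - \delta/3$. Taking $\alpha$ to be the assignment induced by an optimal Merlin$_2$ strategy and using $\delta \geq \varepsilon$ (which follows from $\operatorname{SAT}(\varphi) \leq 1-\varepsilon$ applied to \emph{every} assignment, in particular $\alpha$) yields $\omega(G_\varphi) \leq 1-\varepsilon/3$. There is no real obstacle here; the only subtlety is remembering that the $1/3$ factor comes from the clause width (three variables per clause) and that we must handle both failure modes of Merlin$_1$ uniformly so that the bound holds regardless of whether Merlin$_1$ ``lies'' to pass the satisfaction check or simply returns a non-satisfying assignment.
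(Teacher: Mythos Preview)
Your proof is correct and follows essentially the same approach as the paper's: fix Merlin$_2$'s (deterministic) strategy as a global assignment, observe that at least an $\varepsilon$ fraction of clauses are violated by it, and argue that on any such clause Merlin$_1$ must either fail the satisfaction check outright or disagree with Merlin$_2$ on some variable, which Arthur detects with probability at least $1/3$. The paper's argument is just a terser version of yours.
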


\begin{proof}
Assume without loss of generality that Merlin$_{2}$\ answers according to a
particular assignment $x=\left(  x_{1},\ldots,x_{n}\right)  $. \ By
hypothesis, $x$\ violates the clause $C_{i}$\ with probability at least
$\varepsilon$\ over $i$. \ And if $x$ violates $C_{i}$, then regardless of
what Merlin$_{1}$\ does,\ Arthur rejects with probability at least
$1/3$---since Merlin$_{1}$'s assignment to $C_{i}$\ either violates $C_{i}$,
or else disagrees with $x$\ (which Arthur detects with probability at least
$1/3$ over the variable sent to Merlin$_{2}$).
\end{proof}

Also, given any two-prover game $G=\left(  X,Y,A,B,\mathcal{D},V\right)  $,
let $G^{k}$\ be the $k$\textit{-fold parallel repetition} of $G$: that is, the
game where Arthur

\begin{enumerate}
\item[(1)] draws $\left(  x_{1},y_{1}\right)  ,\ldots,\left(  x_{k}%
,y_{k}\right)  $ independently from $\mathcal{D}$,

\item[(2)] sends $x_{1},\ldots,x_{k}$\ to Merlin$_{1}$ and $y_{1},\ldots
,y_{k}$\ to Merlin$_{2}$,

\item[(3)] receives responses $a_{1},\ldots,a_{k}\in A$\ from Merlin$_{1}%
$\ and $b_{1},\ldots,b_{k}\in B$\ from Merlin$_{2}$, and then

\item[(4)] accepts with probability equal to $\prod_{i=1}^{k}V\left(
x_{i},y_{i},a_{i},b_{i}\right)  $.
\end{enumerate}

Then the famous Parallel Repetition Theorem asserts that $\omega(G^{k}%
)$\ decreases exponentially with $k$:

\begin{theorem}
[Parallel Repetition Theorem \cite{raz:prt,holenstein}]\label{prt}If
$\omega\left(  G\right)  \leq1-\varepsilon$, then%
\begin{equation}
\omega(G^{k})\leq\left(  1-\varepsilon^{3}\right)  ^{\Omega\left(
k/\log\left\vert A\right\vert \left\vert B\right\vert \right)  }.
\end{equation}

\end{theorem}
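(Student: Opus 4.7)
The plan is to follow the information-theoretic proof due to Holenstein, which streamlines Raz's original argument. Assume for contradiction that there is a strategy for $G^{k}$ winning with probability at least $\delta := (1-\varepsilon^{3})^{ck/\log|A||B|}$ for a sufficiently small constant $c>0$ to be chosen. The goal is to extract from it a single-game strategy for $G$ that wins with probability strictly greater than $1-\varepsilon$, contradicting the hypothesis $\omega(G)\leq 1-\varepsilon$.

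Let $X_i, Y_i, A_i, B_i$ denote the questions and answers in round $i$, and let $W_i$ be the event that round $i$ is won. The key step is to pick a random subset $T \subseteq [k]$ and condition on $W_T := \bigwedge_{i \in T} W_i$; then on average over an unconditioned coordinate $i \notin T$, one shows that the conditional joint distribution of $(X_i, Y_i, A_i, B_i)$ given $W_T$ is close in total variation to an ideal distribution in which $(X_i, Y_i) \sim \mathcal{D}$ as a genuine product and the answers come from some strategy for the single game. A chain-rule calculation bounds the sum $\sum_{i \notin T} I(X_i Y_i ; \text{conditioning}\mid W_T)$ by roughly $-\log \Pr[W_T] \leq -\log \delta$, and Pinsker's inequality converts this into a small average statistical distance. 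From here I would build a single-game strategy via correlated sampling: on input $(x,y)$, the two Merlins use shared randomness to pick an index $i \notin T$, embed $(x,y)$ into the $i$-th coordinate of $G^k$, and then jointly sample the remaining coordinates together with the conditioning evidence so that with high probability they produce a consistent transcript. Each Merlin then runs its side of the parallel strategy and outputs the $i$-th answer; the success probability of this single-game strategy is at least the average of $\Pr[W_i \mid W_T]$ over $i \notin T$, which by the information bound exceeds $1-\varepsilon$.

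The hardest part is the correlated sampling step, because each Merlin must sample from a distribution that implicitly depends on the other Merlin's input, without any communication, and still produce consistent samples with overwhelming probability. This is handled by the correlated-sampling lemma, which uses shared randomness to couple two distributions that are close in statistical distance; its applicability here hinges on the product structure of $\mathcal{D}$ across the $k$ coordinates, which is what keeps the conditional distributions on Merlin$_1$'s and Merlin$_2$'s sides nearly independent given the conditioning. The factor $\log|A||B|$ in the exponent enters because each $W_i$ on which one conditions can carry up to $\log|A||B|$ bits of information about the questions, and the cubic loss $\varepsilon^3$ comes from one squaring in Pinsker's inequality combined with the slack incurred when one trades a conditional probability bound back into a bound on $\omega(G)$.
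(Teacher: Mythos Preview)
The paper does not contain a proof of this theorem: it is simply quoted as a known result, with citations to Raz and Holenstein, and then used as a black box (in particular to establish Proposition~\ref{ampcor} on amplification for $\mathsf{AM}(2)$). So there is no ``paper's own proof'' to compare against.

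That said, your sketch is a fair high-level outline of Holenstein's information-theoretic proof. A few remarks on accuracy: the conditioning is not on a random subset $T$ chosen in advance but is built up greedily, one coordinate at a time, always adding a coordinate whose conditional winning probability is still large; this is what ensures $\Pr[W_T]$ stays above the target $\delta$ while $|T|$ grows to order $k$. Also, your phrase ``the product structure of $\mathcal{D}$ across the $k$ coordinates'' is correct as a statement about the $k$-fold repetition (the coordinates are independent), but be careful not to conflate it with the single-game distribution $\mathcal{D}$ being a product over the two players---Raz's theorem holds for arbitrary correlated $\mathcal{D}$, and the correlated-sampling step must cope with that. Finally, the $\varepsilon^{3}$ exponent in the Holenstein version does not arise quite as you describe; one $\varepsilon$ is the target slack in $\Pr[W_i\mid W_T]>1-\varepsilon$, and the remaining $\varepsilon^{2}$ comes from Pinsker (squaring the allowed total-variation error to bound the KL budget per coordinate). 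These are refinements rather than gaps; the skeleton you give is the right one.
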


Unfortunately, neither the original clause/variable game $G_{\varphi}$, nor
its parallel repetition $G_{\varphi}^{k}$, work in the setting of
$\mathsf{AM}\left[  2\right]  $. \ For both games rely essentially on
\textit{correlation} between the clause(s) sent to Merlin$_{1}$\ and the
variable(s) sent to Merlin$_{2}$. \ To eliminate the need for correlation, we
use a new form of repetition that we call \textit{birthday repetition}.

\begin{definition}
[Birthday Repetition]\label{bdr}Let $G=\left(  X,Y,A,B,\mathcal{D},V\right)
$\ be a two-prover game with $V\in\left\{  0,1\right\}  $ (not necessarily
free). \ Assume $\mathcal{D}$\ is just the uniform distribution over some
subset $Z\subseteq X\times Y$. \ Then given positive integers $k\leq\left\vert
X\right\vert $\ and $\ell\leq\left\vert Y\right\vert $, the birthday
repetition $G^{k\times\ell}$\ is the free game defined as follows. \ Arthur
chooses subsets $S\subseteq X$\ and $T\subseteq Y$\ uniformly at random,
subject to $\left\vert S\right\vert =k$\ and $\left\vert T\right\vert =\ell$.
\ He sends $S$\ to Merlin$_{1}$ and asks for an assignment $a:S\rightarrow A$,
and sends $T$ to Merlin$_{2}$ and asks for an assignment $b:T\rightarrow B$.
\ Arthur accepts if and only if $V\left(  x,y,a\left(  x\right)  ,b\left(
y\right)  \right)  =1$\ for all $\left(  x,y\right)  \in S\times T$\ that
happen to lie in $Z$. \ (So in particular, if $\left(  S\times T\right)  \cap
Z$\ is empty, then Arthur always accepts.)
\end{definition}

Now consider the birthday repetition $G_{\varphi}^{k\times\ell}$\ of the
clause/variable game $G_{\varphi}$. \ In this game, Arthur chooses $k$\ random
clause indices $i_{1},\ldots,i_{k}$\ and sends them to Merlin$_{1}$, and
chooses $\ell$\ random variable indices $j_{1},\ldots,j_{\ell}$\ and sends
them to Merlin$_{2}$. \ He then sends $i_{1},\ldots,i_{k}$ to Merlin$_{1}$ and
asks for assignments to $C_{i_{1}},\ldots,C_{i_{k}}$, and sends $j_{1}%
,\ldots,j_{\ell}$\ to Merlin$_{2}$ and asks for assignments to $x_{j_{1}%
},\ldots,x_{j_{\ell}}$. \ Finally, Arthur accepts if and only if the
assignments to $C_{i_{1}},\ldots,C_{i_{k}}$\ satisfy those clauses,
\textit{and} are consistent with $x_{j_{1}},\ldots,x_{j_{\ell}}$\ on any
variables where they happen to intersect.

If $\varphi$ is satisfiable, then clearly $\omega\left(  G_{\varphi}%
^{k\times\ell}\right)  =1$. \ Our main result says that, if $\varphi$\ is far
from satisfiable and $k,\ell=\Omega\left(  \sqrt{n}\right)  $, then
$\omega\left(  G_{\varphi}^{k\times\ell}\right)  \leq1-\Omega\left(  1\right)
$. \ This result is \textquotedblleft intuitively plausible,\textquotedblright%
\ since if $k\ell=\Omega\left(  n\right)  $, then by the Birthday Paradox,
there's a constant probability that some $x_{j_{t}}$\ will occur as a literal
in some $C_{i_{s}}$, giving Arthur a chance to catch the Merlins in an
inconsistency if $\varphi$\ is far from satisfiable. \ But of course, any
soundness proof needs to account for the fact that Merlin$_{1}$\ sees the
entire list $C_{i_{1}},\ldots,C_{i_{k}}$, while Merlin$_{2}$\ sees the entire
list $x_{j_{1}},\ldots,x_{j_{\ell}}$! \ So it's conceivable that the Merlins
could cheat using some clever correlated strategy. \ We will rule that
possibility out, by showing that any cheating strategy for $G_{\varphi
}^{k\times\ell}$\ can be converted (with help from some combinatorial counting
arguments) into a cheating strategy for the original clause/variable game
$G_{\varphi}$.

One might worry that any proof of a \textquotedblleft Birthday Repetition
Theorem\textquotedblright\ would need to be at least as complicated as the
proof of the original Parallel Repetition Theorem. \ Fortunately, though, we
can get by with a relatively simple proof, for two reasons. \ First, we will
not prove that birthday repetition works for \textit{every} game $G$ or for
\textit{every} $k$ and $\ell$, for the simple reason that this is
false!\footnote{As a silly counterexample, let $G$ be the free game with
$X=Y=A=B=\left[  n\right]  $, where the Merlins lose if and only if $x=1$.
\ Then clearly $\omega\left(  G\right)  =1-1/n$\ and $\omega\left(
G^{k\times\ell}\right)  =1-k/n$, with no dependence on $\ell$. \ More
generally, it is not hard to see that $\omega\left(  G^{k\times\ell}\right)
\geq\max\left\{  \left\vert A\right\vert ^{-k},\left\vert B\right\vert
^{-\ell}\right\}  $\ for every game $G$\ with $\omega(G)>0$, since this is
achieved if one Merlin responds randomly, while the other Merlin
\textit{guesses} the first Merlin's responses and then responds optimally.
\ This implies the following result, for any game $G$. \ Let $\omega
(G^{1\times1})=1-\varepsilon$ (note that if $G$ is free, then $G^{1\times1}%
=G$, while otherwise $G^{1\times1}$\ is a\ \textquotedblleft
promise-free\textquotedblright\ version of $G$). \ Then the value
$\omega\left(  G^{k\times\ell}\right)  $\ can only decrease like
$\omega(G^{1\times1})^{\Omega(k\ell)}$\ so long as $k=O(\frac{1}{\varepsilon
}\log\left\vert B\right\vert )$ and $\ell=O(\frac{1}{\varepsilon}%
\log\left\vert A\right\vert )$.} \ Instead, our proof will use a special
property of the clause/variable game $G_{\varphi}$: namely, the fact that it
arises from a uniform constraint graph. \ Second, we are happy if we can
\textquotedblleft merely\textquotedblright\ construct a free game that
preserves\ the soundness of the original game $G$: amplifying $G$'s soundness
even further would be a bonus, but is not necessary. \ We leave it to future
work to determine the power of birthday repetition\ more generally.

\subsection{Approximation Algorithms for Free Games\label{ALGINT}}

Our second set of results aims at showing that a square-root savings in
communication, as achieved by our $\mathsf{AM}\left(  2\right)  $ protocol for
\textsc{3Sat}, is the \textit{best} that any such protocol can provide. \ More
formally, we prove the following set of four interrelated results:

\begin{enumerate}
\item[(1)] The \textsc{FreeGame}$_{\varepsilon}$ problem is solvable
deterministically in $\left(  \left\vert X\right\vert \left\vert A\right\vert
\right)  ^{O(\varepsilon^{-2}\log\left\vert Y\right\vert \left\vert
B\right\vert )}=n^{O(\varepsilon^{-2}\log n)}$ time. \ (There is also a
randomized algorithm that uses $\left\vert X\right\vert \cdot\left\vert
A\right\vert ^{O(\varepsilon^{-2}\log\left\vert Y\right\vert \left\vert
B\right\vert )}$\ time.)

\item[(2)] Any $\mathsf{AM}\left(  2\right)  $\ protocol involving $p\left(
n\right)  $\ bits of communication can be simulated in $2^{O(p\left(
n\right)  ^{2})}\operatorname*{poly}\left(  n\right)  $\ time
(deterministically, if Arthur's verification procedure is deterministic, and
probabilistically otherwise). \ So in particular, $\mathsf{AM}\left(
2\right)  \subseteq\mathsf{EXP}$, improving the trivial upper bound of
$\mathsf{NEXP}$. \ (As we point out, a closer analysis improves the upper
bound to $\mathsf{AM}\left(  2\right)  \subseteq\mathsf{AM}^{\mathsf{NP}}$.)

\item[(3)] Assuming the Randomized ETH, any constant-soundness $\mathsf{AM}%
\left(  2\right)  $\ protocol for \textsc{3Sat}\ must use $\Omega(\sqrt{n}%
)$\ communication. \ (In more detail, such a protocol must use $\Omega
(\sqrt{\varepsilon n})$\ communication if its completeness/soundness gap is
$1$\ vs.\ $1-\varepsilon$, and $\Omega(\sqrt{n\log1/\delta})$\ communication
if its gap is $1$\ vs.\ $\delta$. \ Also, if Arthur's verification procedure
is deterministic, then it suffices to assume the standard ETH.)

\item[(4)] $\mathsf{AM}\left(  2\right)  =\mathsf{AM}$. \ (Of course, this
supersedes our $\mathsf{AM}\left(  2\right)  \subseteq\mathsf{EXP}$ and
$\mathsf{AM}\left(  2\right)  \subseteq\mathsf{AM}^{\mathsf{NP}}$\ results.)
\end{enumerate}

In Section \ref{ALGSEC}, we provide a self-contained proof for result (1), and
then use (1) to deduce (2) and (3). \ The idea of our approximation algorithm
is to sample a small random subset $S\subset X$\ of the questions to
Merlin$_{1}$. \ We then brute-force search over all possible strategies
$\alpha:S\rightarrow A$\ for the questions in $S$. \ For each such strategy
$\alpha$, we find the optimal response $b_{\alpha}:Y\rightarrow B$\ of
Merlin$_{2}$ to that $\alpha$, and then the optimal response $a_{\alpha
}:X\rightarrow A$\ of Merlin$_{1}$\ to $b_{\alpha}$\ on his full question set
$X$. \ A simple probabilistic analysis then shows that, provided we take
$\left\vert S\right\vert =\Omega\left(  \varepsilon^{-2}\log\left\vert
Y\right\vert \left\vert B\right\vert \right)  $,\ at least one of these
\textquotedblleft induced\textquotedblright\ strategy pairs $\left(
a_{\alpha},b_{\alpha}\right)  $\ must achieve value within $\varepsilon$\ of
the optimal value $\omega\left(  G\right)  $. \ Similar ideas have been used
before in other approximation algorithms: for example, in that of Lipton,
Markakis, and Mehta \cite{lmm} for finding approximate Nash equilibria.

Once we have an $n^{O\left(  \varepsilon^{-2}\log n\right)  }$-time
approximation algorithm for \textsc{FreeGame}$_{\varepsilon}$, the containment
$\mathsf{AM}\left(  2\right)  \subseteq\mathsf{EXP}$\ follows almost
immediately. \ We also sketch an improvement to $\mathsf{AM}\left(  2\right)
\subseteq\mathsf{AM}^{\mathsf{NP}}$, which is obtained by modifying our
approximation algorithm so that it fits into the property-testing framework of
Goldreich, Goldwasser, and Ron \cite{ggr}. \ As for the optimality of our
\textsc{3Sat}\ protocol, we simply need to observe that, if we had
a\ protocol\ that used $o(\sqrt{n})$\ communication, then it would give rise
to a free game $G$\ of size $2^{o(\sqrt{n})}$, whose value $\omega\left(
G\right)  $\ we could estimate in $2^{o\left(  n\right)  }$\ time by using our
quasipolynomial-time approximation algorithm. \ But that would let us decide
\textsc{3Sat}\ in $2^{o\left(  n\right)  }$\ time, contradicting the
Exponential Time Hypothesis.

For result (4), we wish to go further, and show that any two-Merlin protocol
can be simulated using \textit{one} Merlin:\ that is, $\mathsf{AM}\left(
2\right)  =\mathsf{AM}$. \ Here we appeal to a powerful line of earlier work
on \textit{subsampling for dense CSPs}. \ Specifically, Alon et
al.\ \cite{avkk} showed in 2002 that, given any $k$-ary constraint
satisfaction problem $\varphi$\ over $n$ Boolean variables, one can estimate
the maximum number of constraints in $\varphi$ that can be simultaneously
satisfied, to within additive error $\pm\varepsilon\binom{n}{k}$, by simply
throwing away all the variables except for a random set $I$ of size
$\operatorname*{poly}\left(  1/\varepsilon\right)  $, and then using
brute-force search to find an optimal assignment to $\varphi_{I}$, the
restriction of $\varphi$\ to $I$.

To build intuition, it is easy to satisfy $\varphi_{I}$\ \textit{at least as
well} as we can satisfy $\varphi$, with high probability over $I$. \ To do so,
simply start with an optimal global assignment $x$\ for $\varphi$; then
restrict $x$ to the variables in $I$ and apply a Chernoff bound. \ The hard
part is to show that $\varphi_{I}$\ cannot be satisfied much \textit{better}
than the full instance $\varphi$\ was. \ Conversely, one needs to show that,
given a collection of \textquotedblleft local assignments,\textquotedblright%
\ involving just $\operatorname*{poly}\left(  1/\varepsilon\right)
$\ variables at a time, one can \textquotedblleft patch them
together\textquotedblright\ into a global assignment that is almost as good as
the local ones.

In later work, Barak et al.\ \cite{bhhs}\ proved a more general result, which
removed Alon et al.'s assumption that the alphabet is Boolean. \ Their result
lets us approximate the value of any dense $k$-CSP $\varphi$ over the finite
alphabet $\Sigma$ to within additive error $\pm\varepsilon\binom{n}{k}$, by
solving a random sub-instance\ on\ $\operatorname*{poly}\left(  1/\varepsilon
\right)  \cdot\log\left\vert \Sigma\right\vert $ variables.

To see the relevance of this work to free games, we simply need to observe
that \textsc{FreeGame}\ can be directly encoded as a dense CSP. \ Given a free
game $G=\left(  X,Y,A,B,V\right)  $, we can create variables $\left(  a\left(
x\right)  \right)  _{x\in X}$\ and $\left(  b\left(  y\right)  \right)  _{y\in
Y}$\ over the alphabets $A$\ and $B$ respectively, and then for all $\left(
x,y,a,b\right)  \in X\times Y\times A\times B$, add a number of constraints
setting $a\left(  x\right)  =a$\ and $b\left(  y\right)  =b$\ that is
proportional to $V\left(  x,y,a,b\right)  $. \ Once we do this, the result of
Barak et al.\ \cite{bhhs}\ implies a \textit{subsampling theorem for free
games}---saying that the value of any free game $G$ can be well-approximated
by the value of a logarithmic-sized random subgame. \ And this, in turn,
readily implies that $\mathsf{AM}\left(  2\right)  =\mathsf{AM}$. \ For given
any $\mathsf{AM}\left(  2\right)  $\ protocol, we can simulate the protocol in
$\mathsf{AM}$\ by having Arthur execute the following steps:

\begin{enumerate}
\item[(i)] Choose random subsets $S,T$\ of $\operatorname*{poly}\left(
n\right)  $\ questions to Merlin$_{1}$ and Merlin$_{2}$ respectively.

\item[(ii)] Ask a \textit{single} Merlin to send him responses to all
questions in $S$\ and $T$.

\item[(iii)] Check the responses, for all possible question pairs $\left(
x,y\right)  \in S\times T$.
\end{enumerate}

The soundness of this approach follows from the subsampling theorem, which
says that if Merlins had no winning strategy in the original $\mathsf{AM}%
\left(  2\right)  $ protocol, then with high probability, they have no winning
strategy even when restricted to the tiny subset of questions $S\times T$.

One might ask: if existing results on dense CSPs can be used to show that
$\mathsf{AM}\left(  2\right)  =\mathsf{AM}$, then why do we \textquotedblleft
reinvent the wheel,\textquotedblright\ and provide self-contained proofs for
weaker results such as $\mathsf{AM}\left(  2\right)  \subseteq\mathsf{EXP}$?
\ One answer is that the dense CSP results do not give good dependence on the
error. \ For example, those results imply that \textsc{FreeGame}%
$_{\varepsilon}$\ can be solved in $n^{O(\varepsilon^{-\Lambda}\log n)}%
$\ time\ for some large and unspecified constant $\Lambda$, but not that it
can be solved in $n^{O(\varepsilon^{-2}\log n)}$ time. \ And we actually care
about the dependence on $\varepsilon$, for at least two reasons. \ First, we
wish to make an analogy with a recent $n^{O(\varepsilon^{-2}\log n)}%
$\ algorithm for a problem in quantum information theory, due to Brandao,
Christandl, and Yard \cite{bcy} (for details see Section \ref{QUANTUM}). \ And
second, we wish to show that, assuming the ETH, the \textquotedblleft
obvious\textquotedblright\ $\mathsf{AM}\left(  2\right)  $\ protocol
for\ \textsc{3Sat}\ is optimal even in the very low-error and high-error
cases. \ The dense CSP results do not get us close to such a statement, but
our algorithm does.

More broadly, appealing to the dense CSP literature\ feels like overkill if we
just want to show (for example) that our \textsc{3Sat}\ protocol is optimal,
or that the values of free games can be approximated in quasipolynomial time.
\ If we \textit{can} prove those results in an elementary, self-contained way,
then it seems like we should---particularly because our proofs might help to
make certain striking techniques from the dense CSP world more accessible than
they would be otherwise.

\ Their algorithm also implies that $\mathsf{AM}\left(  2\right)
\subseteq\mathsf{EXP}$, and that our \textsc{3Sat}\ protocol is essentially
optimal assuming the ETH. \ On the other hand, it seems unlikely that their
algorithm can be used to get the containment $\mathsf{AM}\left(  2\right)
\subseteq\mathsf{AM}^{\mathsf{NP}}$, let alone $\mathsf{AM}\left(  2\right)
=\mathsf{AM}$.

\subsection{Generalizing to $k$ Merlins\label{GENERALK}}

One might wonder whether our limitation theorems for $\mathsf{AM}\left(
2\right)  $\ protocols could be evaded by simply adding more Merlins. \ So for
example, even if $\mathsf{AM}\left(  2\right)  $\ protocols for \textsc{3Sat}%
\ require $\Omega(\sqrt{n})$\ communication (assuming the ETH), could there be
an $\mathsf{AM}\left(  3\right)  $\ protocol that used $O(n^{1/3}%
)$\ communication, an $\mathsf{AM}\left(  10\right)  $\ protocol that used
$O(n^{1/10})$\ communication, and so forth? \ In Sections \ref{KMERLIN} and
\ref{KSUBSAMP}, we generalize our limitation theorems to the case of $k$
Merlins, in order to rule out that possibility. \ In particular, we give the
following extensions of our results from Section \ref{ALGINT}:

\begin{enumerate}
\item[(1')] There is a deterministic algorithm that, given as input a
$k$-player free game $G$ with question sets $Y_{1},\ldots,Y_{k}$\ and answer
sets $B_{1},\ldots,B_{k}$, approximates $\omega\left(  G\right)  $ to within
$\pm\varepsilon$\ in time
\begin{equation}
\exp\left(  \frac{k^{2}}{\varepsilon^{2}}\sum_{i<j}\log\left(  \left\vert
Y_{i}\right\vert \left\vert B_{i}\right\vert \right)  \cdot\log\left(
\left\vert Y_{j}\right\vert \left\vert B_{j}\right\vert \right)  \right)
=n^{O(\varepsilon^{-2}k^{2}\log n)},
\end{equation}
where $n=\left\vert Y_{1}\right\vert \left\vert B_{1}\right\vert
\cdots\left\vert Y_{k}\right\vert \left\vert B_{k}\right\vert $\ is the input
size. \ (There is also an alternative algorithm that runs in time
$n^{\varepsilon^{-O\left(  1\right)  }\log n}$, independently of $k$.)

\item[(2')] $\mathsf{AM}\left(  k\right)  \subseteq\mathsf{EXP}$ for all
$k=\operatorname*{poly}\left(  n\right)  $. \ (Indeed, any constant-soundness
$\mathsf{AM}\left(  k\right)  $\ protocol involving $p\left(  n\right)
$\ total bits of communication can be simulated in $2^{O(p\left(  n\right)
^{2})}\operatorname*{poly}\left(  n\right)  $\ randomized time, or
$2^{O(p\left(  n\right)  ^{2})}\operatorname*{poly}\left(  n\right)
$\ deterministic time if Arthur's verification procedure is deterministic.)

\item[(3')] Assuming the Randomized ETH, any constant-soundness $\mathsf{AM}%
\left(  k\right)  $\ protocol for \textsc{3Sat}\ must use $\Omega(\sqrt{n}%
)$\ total bits of communication, regardless of how large $k$ is. (If,
moreover, Arthur's verification procedure is deterministic, then it suffices
to assume the ordinary ETH.)

\item[(4')] $\mathsf{AM}\left(  k\right)  =\mathsf{AM}$ for all
$k=\operatorname*{poly}\left(  n\right)  $.
\end{enumerate}

We first prove (1'), and then derive (2') and (3') as consequences. \ For
(1'), the basic idea is to generalize our approximation algorithm for
$2$-player free games to $k$-player games, by calling the algorithm
recursively to \textquotedblleft peel off players one at a
time.\textquotedblright\ \ In other words, we reduce the approximation of a
$k$-player game to the approximation of a quasipolynomial number of $\left(
k-1\right)  $-player games, and continue recursing until we get down to $1$
player. \ When we do this, we need to control the buildup of error across all
$k$ levels of the recursion, and that is why we get a factor of $k^{2}$\ in
the exponent of the running time. \ Later, by using the subsampling machinery,
we will be able to go back and give an alternative algorithm whose running
time depends only on $n$, not on $k$. \ And that, in turn, will let us show
that assuming the ETH, any $\mathsf{AM}\left(  k\right)  $\ protocol for
\textsc{3Sat}\ must use\ $\Omega(\sqrt{n})$\ total bits of communication,
regardless of $k$. \ (Our first algorithm only implies a lower bound of
$k+\Omega(\sqrt{n}/k)=\Omega(n^{1/4})$\ on the total communication, assuming
the ETH.) \ The tradeoff is that the running time of the alternative algorithm
depends exponentially on $\varepsilon^{-\Lambda}$\ for some large constant
$\Lambda$, rather than on $\varepsilon^{-2}$.

For (4'), we need to show that the subsampling theorem of Barak et
al.\ \cite{bhhs}\ continues to give us what we want, so long as
$k=\operatorname*{poly}\left(  n\right)  $. \ This boils down to proving a
good \textit{subsampling theorem for }$k$\textit{-player free games}. \ That
is, given any $k$-player free game $G=\left(  Y_{1},\ldots,Y_{k},B_{1}%
,\ldots,B_{k},V\right)  $ of total size $n=\left\vert Y_{1}\right\vert
\left\vert B_{1}\right\vert \cdots\left\vert Y_{k}\right\vert \left\vert
B_{k}\right\vert $, we need to show that its value $\omega\left(  G\right)
$\ can be approximated to within additive error $\pm\varepsilon$, by
restricting attention to random subsets of questions $\left(  S_{i}\subset
Y_{i}\right)  _{i\in\left[  k\right]  }$, where each $S_{i}$ has size
$\varepsilon^{-O\left(  1\right)  }\log n$. \ A direct adaptation of our
argument from the $k=2$\ case turns out not to work here (it breaks down when
$k$ is greater than $O\left(  \log n\right)  $), but we give an alternative
encoding of $k$-player free games by $k$-CSPs that works for all
$k=\operatorname*{poly}\left(  n\right)  $.

\section{Quantum Motivation\label{QUANTUM}}

In studying $\mathsf{AM}\left(  2\right)  $, our original motivation was to
understand the quantum complexity class $\mathsf{QMA}\left(  2\right)  $
(i.e., two-prover Quantum Merlin-Arthur). \ So in this section, we provide
some background about $\mathsf{QMA}\left(  2\right)  $, and explain the
tantalizingly close analogy between it and $\mathsf{AM}\left(  2\right)  $.
\ Readers who don't care about quantum complexity theory can skip this section.

Recall that \textquotedblleft ordinary\textquotedblright\ $\mathsf{QMA}$ is
just the quantum analogue of $\mathsf{MA}$:

\begin{definition}
[Quantum Merlin-Arthur]$\mathsf{QMA}$\ is the class of languages
$L\subseteq\left\{  0,1\right\}  ^{\ast}$\ for which there exists a
polynomial-time quantum algorithm $Q$ such that, for all inputs $x\in\left\{
0,1\right\}  ^{n}$:

\begin{itemize}
\item If $x\in L$\ then there exists a quantum witness state $\left\vert
\phi\right\rangle $, on $\operatorname*{poly}\left(  n\right)  $\ qubits, such
that $Q\left(  x,\left\vert \phi\right\rangle \right)  $\ accepts with
probability at least $2/3$.

\item If $x\notin L$\ then $Q\left(  x,\left\vert \phi\right\rangle \right)
$\ accepts with probability at most $1/3$, for all purported witness states
$\left\vert \phi\right\rangle $.
\end{itemize}
\end{definition}

A lot is known about $\mathsf{QMA}$: for example, it\ has natural complete
promise problems, admits amplification, and is contained in $\mathsf{PP}$ (see
Aharonov and Naveh \cite{an} for a survey).

Now, $\mathsf{QMA}\left(  k\right)  $\ (introduced by Kobayashi et
al.\ \cite{kmy}) is just like $\mathsf{QMA}$, but with $k$ Merlins who are
assumed to be unentangled. \ Note that, if the Merlins were entangled, then
the joint state they sent to Arthur could be arbitrary---so from Arthur's
perspective, there might as well be only one Merlin.\footnote{For precisely
this reason, in the classical case we trivially have $\mathsf{MA}\left(
k\right)  =\mathsf{MA}$\ for all $k=\operatorname*{poly}\left(  n\right)  $.}
\ With $\mathsf{QMA}\left(  k\right)  $, the hope is that, ironically, Arthur
can exploit his knowledge that the messages are \textit{un}entangled to verify
statements that he otherwise could not. \ More formally:

\begin{definition}
[$k$-Prover Quantum Merlin-Arthur]$\mathsf{QMA}\left(  k\right)  $ is the
class of languages $L\subseteq\left\{  0,1\right\}  ^{\ast}$\ for which there
exists a polynomial-time quantum algorithm $Q$ such that, for all inputs
$x\in\left\{  0,1\right\}  ^{n}$:

\begin{itemize}
\item If $x\in L$, then there exist quantum witness states $\left\vert
\phi_{1}\right\rangle ,\ldots,\left\vert \phi_{k}\right\rangle $, each on
$\operatorname*{poly}\left(  n\right)  $\ qubits, such that $Q\left(
x,\left\vert \phi_{1}\right\rangle \otimes\cdots\otimes\left\vert \phi
_{k}\right\rangle \right)  $\ accepts with probability at least $2/3$.

\item If $x\notin L$\ then $Q\left(  x,\left\vert \phi_{1}\right\rangle
\otimes\cdots\otimes\left\vert \phi_{k}\right\rangle \right)  $\ accepts with
probability at most $1/3$ for all purported witness states $\left\vert
\phi_{1}\right\rangle ,\ldots,\left\vert \phi_{k}\right\rangle $.
\end{itemize}
\end{definition}

Compared to $\mathsf{QMA}$, strikingly little is known about $\mathsf{QMA}%
\left(  2\right)  $. \ Clearly%
\begin{equation}
\mathsf{QMA}\subseteq\mathsf{QMA}\left(  2\right)  \subseteq\mathsf{NEXP},
\end{equation}
but we do not know any better containments. \ We do not even have strong
evidence that $\mathsf{QMA}\left(  2\right)  \neq\mathsf{QMA}$, or at the
other extreme that $\mathsf{QMA}\left(  2\right)  \neq\mathsf{NEXP}$. \ Harrow
and Montanaro \cite{harrowmontanaro}\ showed that $\mathsf{QMA}\left(
2\right)  $\ allows exponential amplification of success probabilities, and
that $\mathsf{QMA}\left(  2\right)  =\mathsf{QMA}\left(  k\right)  $ for all
$k\geq3$; even these were surprisingly nontrivial results.

Of course, $\mathsf{QMA}\left(  2\right)  $\ would be of limited interest, if
we could never actually \textit{exploit} the promise of unentanglement to do
anything new. \ In 2007, however, Blier and Tapp \cite{bliertapp} gave a
$\mathsf{QMA}\left(  2\right)  $\ protocol for the $\mathsf{NP}$-complete
\textsc{3Coloring} problem, using two quantum witnesses with only $\log n$
qubits each. \ The catch was that Arthur has only a $1/\operatorname*{poly}%
\left(  n\right)  $\ probability of catching the Merlins if they cheat. \ Even
then, however, any one-prover $\mathsf{QMA}$\ protocol with the same
parameters would imply $\mathsf{NP}\subseteq\mathsf{BQP}$.

Independently, Aaronson et al.\ \cite{abdfs} gave a protocol to convince
Arthur that a \textsc{3Sat}\ instance of size $n$ is satisfiable, using
$\widetilde{O}(\sqrt{n})$ quantum witnesses with $\log n$\ qubits each.
\ Unlike Blier and Tapp's protocol, Aaronson et al.'s achieved
\textit{constant} soundness, and that is why it required more\ communication
($\widetilde{O}(\sqrt{n})$ rather than $\log n$). \ Shortly afterward,
Aaronson et al.'s protocol was improved by Harrow and Montanaro
\cite{harrowmontanaro}, who showed how to prove \textsc{3Sat}\ using two
quantum witnesses with $\widetilde{O}(\sqrt{n})$\ qubits each; and in a
different direction by Chen and Drucker \cite{chendrucker}, who showed how to
measure each of the $\widetilde{O}(\sqrt{n})$\ witnesses separately from the
others.\footnote{It is still not known whether one can combine the
Harrow-Montanaro and Chen-Drucker improvements, to get a \textsc{3Sat}%
\ protocol using two witnesses of $\widetilde{O}(\sqrt{n})$\ qubits\ each that
are measured separately from each other.}

Without going into too much detail, all of these $\widetilde{O}(\sqrt{n}%
)$-qubit protocols for \textsc{3Sat}\ ultimately rely on the Birthday Paradox.
\ In particular, they all involve Arthur measuring $k$ quantum registers with
$\log n$\ qubits each---and if we want constant soundness, then (roughly
speaking) we need a constant probability that two or more of Arthur's
measurements will reveal information about the same \textsc{3Sat} variable
$x_{j}$. \ And that is why we need $k=\Omega(\sqrt{n})$.

It is tempting to speculate that $\sqrt{n}$\ qubits represents some sort of
fundamental barrier for multi-prover $\mathsf{QMA}$ protocols: i.e., that
assuming we want constant soundness, we can save a quadratic factor in the
number of qubits needed to prove \textsc{3Sat}, but no more than that.
\ Certainly it would be astonishing if \textsc{3Sat}\ could be proved (with
constant soundness) using two unentangled witnesses with only
$\operatorname*{polylog}n$\ qubits each. \ In that case, \textquotedblleft
scaling up\textquotedblright\ by an exponential, we would presumably get that
$\mathsf{QMA}\left(  2\right)  =\mathsf{NEXP}$.

When one thinks about the above questions---or for that matter, almost
\textit{any} questions about $\mathsf{QMA}\left(  2\right)  $---one is
inevitably led to a computational problem that Harrow and Montanaro
\cite{harrowmontanaro} called the \textsc{Best Separable State} or
\textsc{BSS} problem.

\begin{problem}
[\textsc{BSS}$_{\varepsilon}$]Given as input a Hermitian matrix $A\in
\mathbb{C}^{n^{2}\times n^{2}}$, with eigenvalues in $\left[  0,1\right]  $,
approximate%
\begin{equation}
\lambda_{\operatorname*{sep}}\left(  A\right)  :=\max_{v,w\in\mathbb{C}%
^{n}:\left\Vert v\right\Vert =\left\Vert w\right\Vert =1}(v^{\dag}\otimes
w^{\dag})A\left(  v\otimes w\right)
\end{equation}
to additive error $\pm\varepsilon$. \ (Here $\varepsilon$\ is assumed to be an
arbitrarily small constant if not specified otherwise.)
\end{problem}

To build intuition, note that%
\begin{equation}
\lambda\left(  A\right)  :=\max_{u\in\mathbb{C}^{n^{2}}:\left\Vert
v\right\Vert =1}u^{\dag}Au
\end{equation}
is just the largest eigenvalue of $A$, which is easy to compute. \ Indeed, the
proof of $\mathsf{QMA}\subseteq\mathsf{PP}$\ works by reducing the simulation
of a\ $\mathsf{QMA}$\ protocol to the computation of $\lambda\left(  A\right)
$, for some exponentially-large Hermitian matrix $A$.

By contrast, \textsc{BSS}\ asks us to maximize $u^{\dag}Au$\ \textit{only over
unit vectors of the form} $u=v\otimes w$. \ That is why \textsc{BSS}\ models
the problem of maximizing the verifier's acceptance probability in a
$\mathsf{QMA}\left(  2\right)  $\ protocol, where the maximum is taken over
all separable witnesses, of the form $\left\vert \phi_{1}\right\rangle
\otimes\left\vert \phi_{2}\right\rangle $. \ From this standpoint, the reason
why $\mathsf{QMA}\left(  2\right)  $\ is so much harder to understand than
$\mathsf{QMA}$---but also why $\mathsf{QMA}\left(  2\right)  $ is\ potentially
more powerful---is that (as one can check) \textsc{BSS} is a non-convex
optimization problem, which lacks the clean linear-algebraic structure of
computing $\lambda\left(  A\right)  $.

Indeed, from the protocol of Blier and Tapp \cite{bliertapp} mentioned
earlier, it follows immediately that we can reduce \textsc{3Coloring}\ to the
problem of approximating $\lambda_{\operatorname*{sep}}\left(  A\right)  $ up
to additive error\ $\pm1/\operatorname*{poly}\left(  n\right)  $.
\ Furthermore, since the quantum witnesses in the Blier-Tapp protocol have
only $\log n$ qubits, the resulting matrix $A$ will have size $2^{O\left(
\log n\right)  }=\operatorname*{poly}\left(  n\right)  $. \ Thus:

\begin{theorem}
[Blier and Tapp \cite{bliertapp}]\label{bliertappthm}\textsc{BSS}%
$_{1/\operatorname*{poly}\left(  n\right)  }$ is $\mathsf{NP}$-hard.
\end{theorem}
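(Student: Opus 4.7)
The plan is to read off the reduction directly from the Blier--Tapp protocol, treating the acceptance-probability calculation of any $\mathsf{QMA}(2)$ verifier with logarithmic-size witnesses as exactly a \textsc{BSS} instance of polynomial size. The only actual content is the identification between ``acceptance probability maximized over separable witnesses'' and $\lambda_{\operatorname{sep}}(A)$; everything else is bookkeeping about the parameters of the cited protocol.

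First I would recall the standard way to associate a matrix to a $\mathsf{QMA}(2)$ verification procedure. Given the Blier--Tapp verifier $Q$, which on input $\varphi$ (a \textsc{3Coloring} instance of size $n$) takes two witnesses $|\phi_1\rangle,|\phi_2\rangle$ on $m = O(\log n)$ qubits each plus $p = \operatorname{poly}(n)$ ancillas initialized to $|0\rangle$, let $U_\varphi$ be the unitary implemented by $Q$ and let $\Pi$ be the projector onto the ``accept'' subspace of the output qubit. Define the $2^{2m} \times 2^{2m}$ Hermitian matrix
\begin{equation}
A_\varphi \;:=\; \bigl(I \otimes \langle 0^p|\bigr)\, U_\varphi^\dagger\, \Pi\, U_\varphi \,\bigl(I \otimes |0^p\rangle\bigr),
\end{equation}
acting on the witness registers after tracing out the ancillas. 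By construction $A_\varphi$ has eigenvalues in $[0,1]$, and for any product witness $|\phi_1\rangle = v$, $|\phi_2\rangle = w$ the verifier's acceptance probability equals $(v^\dagger\otimes w^\dagger) A_\varphi (v\otimes w)$. Hence the maximum acceptance probability over all separable witnesses is exactly $\lambda_{\operatorname{sep}}(A_\varphi)$.

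Next I would note the size and gap. Since each witness has $m = O(\log n)$ qubits, $A_\varphi$ has dimension $N^2$ with $N = 2^m = \operatorname{poly}(n)$. The matrix $A_\varphi$ is explicitly constructible in $\operatorname{poly}(n)$ time from $\varphi$ by simulating $U_\varphi$ on a basis. Now invoke the Blier--Tapp guarantee: there exist polynomials $c(n) > s(n)$ with $c(n) - s(n) \geq 1/\operatorname{poly}(n)$ such that if $\varphi$ is $3$-colorable then $\lambda_{\operatorname{sep}}(A_\varphi) \geq c(n)$, while if $\varphi$ is not $3$-colorable then $\lambda_{\operatorname{sep}}(A_\varphi) \leq s(n)$.

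Putting these together, any algorithm that, on input the poly-size matrix $A_\varphi$, estimates $\lambda_{\operatorname{sep}}(A_\varphi)$ to within additive error $\varepsilon < (c(n)-s(n))/2 = 1/\operatorname{poly}(n)$ immediately distinguishes $3$-colorable from non-$3$-colorable $\varphi$, yielding the desired reduction from \textsc{3Coloring}. There is no real obstacle here beyond confirming that the Blier--Tapp verifier can be written in the ancilla-plus-unitary form above (any polynomial-time quantum verifier can, by the standard purification), and that their completeness/soundness gap is indeed $1/\operatorname{poly}(n)$ rather than, say, exponentially small, which is exactly what they establish.
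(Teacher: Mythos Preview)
Your proposal is correct and matches the paper's approach exactly. The paper does not give a detailed proof of this theorem; it simply observes (in the sentence preceding the theorem statement) that since the Blier--Tapp witnesses have $O(\log n)$ qubits, the resulting matrix $A$ has size $2^{O(\log n)} = \operatorname{poly}(n)$, so the reduction from \textsc{3Coloring} to approximating $\lambda_{\operatorname{sep}}(A)$ to additive error $\pm 1/\operatorname{poly}(n)$ is immediate---which is precisely what you spell out with the explicit construction of $A_\varphi$.
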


One wants to know: is \textsc{BSS}$_{\varepsilon}$\ still a hard problem even
for \textit{constant} $\varepsilon$? \ Because it has constant soundness, the
protocol of Harrow and Montanaro \cite{harrowmontanaro}\ (building on Aaronson
et al.\ \cite{abdfs}) lets us reduce \textsc{3Sat}\ to the problem of
approximating $\lambda_{\operatorname*{sep}}\left(  A\right)  $\ up to
constant additive error. \ Now, since the quantum witnesses in the
Harrow-Montanaro protocol have $\widetilde{O}(\sqrt{n})$\ qubits, the
resulting matrix $A$ has size $2^{\widetilde{O}(\sqrt{n})}$, so we do not get
a polynomial-time reduction. \ We do, however, get something:

\begin{theorem}
\label{bsshardthm}If \textsc{BSS}\ is solvable in $t\left(  n\right)  $\ time,
then \textsc{3Sat}\ is solvable in $t(2^{\widetilde{O}(\sqrt{n})})$\ time.
\ So in particular, assuming the Exponential Time Hypothesis, \textsc{BSS}%
\ requires $n^{\Omega\left(  \log n\right)  }$\ deterministic time.
\ (Likewise, assuming the Randomized ETH, \textsc{BSS}\ requires
$n^{\Omega\left(  \log n\right)  }$\ randomized time.)
\end{theorem}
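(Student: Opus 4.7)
The plan is to carry out a direct reduction from \textsc{3Sat} to \textsc{BSS} using the Harrow--Montanaro $\mathsf{QMA}(2)$ protocol for \textsc{3Sat} (which itself builds on the protocol of Aaronson et al.). Given a \textsc{3Sat} instance $\varphi$ of size $n$, that protocol furnishes a polynomial-time quantum verifier $Q_\varphi$ acting on two unentangled witnesses $|\phi_1\rangle,|\phi_2\rangle$ of $q=\widetilde{O}(\sqrt n)$ qubits each, with perfect completeness and a positive constant soundness gap $\Delta$. The first step is to realize the maximum acceptance probability as a \textsc{BSS} instance: writing $U_\varphi$ for the verifier's unitary and $\Pi_{\mathrm{acc}}$ for the accept projector, and contracting with the ancilla register initialized to $|0\rangle$, one obtains a Hermitian matrix $A_\varphi$ of dimension $N^2\times N^2$ with $N=2^q=2^{\widetilde{O}(\sqrt n)}$ and spectrum in $[0,1]$ satisfying
\[
(\phi_1^\dagger\otimes\phi_2^\dagger)\,A_\varphi\,(\phi_1\otimes\phi_2)\;=\;\Pr\bigl[Q_\varphi\text{ accepts }|\phi_1\rangle\otimes|\phi_2\rangle\bigr].
\]
By definition, $\lambda_{\operatorname{sep}}(A_\varphi)$ is precisely the maximum acceptance probability over product witnesses, so estimating it to any additive error $\varepsilon<\Delta/2$ decides \textsc{3Sat}. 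The matrix $A_\varphi$ is writable explicitly in time $\operatorname{poly}(N)=2^{\widetilde{O}(\sqrt n)}$ from the description of $Q_\varphi$.

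Therefore any $t(n)$-time algorithm for \textsc{BSS}$_\varepsilon$ yields a \textsc{3Sat} algorithm of running time $\operatorname{poly}(N)+t(N)=t(2^{\widetilde{O}(\sqrt n)})$, establishing the first statement of the theorem. For the ETH consequence I would contrapose: if $t(n)=n^{o(\log n)}$, then the resulting \textsc{3Sat} algorithm runs in time $2^{o(n)}$, after using a sufficiently sharp near-linear-size PCP (e.g.\ Dinur's or Moshkovitz--Raz's) in the preprocessing step of the protocol to control the polylogarithmic factors hidden inside $\widetilde{O}(\sqrt n)$. This contradicts the ETH and yields the claimed $n^{\Omega(\log n)}$ deterministic lower bound; the randomized version is identical under the Randomized ETH.

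The argument is essentially a dictionary between $\mathsf{QMA}(2)$ acceptance probabilities and the separable eigenvalue $\lambda_{\operatorname{sep}}(\cdot)$, so the main obstacle is the careful parameter accounting: one must verify that the composed pipeline (PCP reduction followed by Harrow--Montanaro, amplified to a fixed constant gap) simultaneously keeps the witness size at $\widetilde{O}(\sqrt n)$ qubits and the completeness/soundness gap bounded below by a positive constant, since both quantities enter the exponent when translating the ETH through the reduction. No new ideas about $\mathsf{QMA}(2)$ protocols themselves are required beyond the observation that the protocol's acceptance probability is literally a \textsc{BSS} quantity.
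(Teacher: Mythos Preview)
Your proposal is correct and follows essentially the same approach as the paper: the paper states this theorem as an immediate consequence of the Harrow--Montanaro $\mathsf{QMA}(2)$ protocol for \textsc{3Sat} with $\widetilde{O}(\sqrt{n})$-qubit witnesses, noting that the verifier's acceptance probability on product states is exactly a \textsc{BSS} instance of size $2^{\widetilde{O}(\sqrt{n})}$. Your write-up simply spells out this reduction in more detail than the paper does, including the explicit construction of $A_\varphi$ from the verifier circuit and the contrapositive ETH calculation.
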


Could we go further than Theorems \ref{bliertappthm}\ and \ref{bsshardthm},
and prove that \textsc{BSS}$_{\varepsilon}$\ is $\mathsf{NP}$-hard\ even for
constant $\varepsilon$? \ Notice that if we could, then \textquotedblleft
scaling up by an exponential,\textquotedblright\ we could presumably also show
$\mathsf{QMA}\left(  2\right)  =\mathsf{NEXP}$! \ If, on the other hand, we
believe (as seems plausible) that $\mathsf{QMA}\left(  2\right)
\subseteq\mathsf{EXP}$, then we seem forced to believe that \textsc{BSS}\ is
solvable in $n^{\operatorname*{polylog}n}$ time, even if we have no idea what
the algorithm is.\footnote{Strictly speaking, neither of these implications is
a theorem. \ For example, even if \textsc{BSS}$_{\varepsilon}$\ turned out to
be $\mathsf{NP}$-hard\ for constant $\varepsilon$, it's possible that one
could exploit the special structure of the matrices arising from
polynomial-size quantum circuits to show that $\mathsf{QMA}\left(  2\right)
\subseteq\mathsf{EXP}$. \ In practice, however, a \textquotedblleft
reasonable\textquotedblright\ proof that \textsc{BSS}$_{\varepsilon}$\ is
$\mathsf{NP}$-hard\ would probably also imply $\mathsf{QMA}\left(  2\right)
=\mathsf{NEXP}$, and a \textquotedblleft reasonable\textquotedblright\ proof
of $\mathsf{QMA}\left(  2\right)  \subseteq\mathsf{EXP}$\ would probably
proceed by solving \textsc{BSS}$_{\varepsilon}$\ in quasipolynomial time.}

Raising the stakes even further, Barak et al.\ \cite{bbhksz} showed that
\textsc{BSS}\ is intimately related to other problems of great current
interest in complexity theory: namely, the \textsc{Unique Games},
\textsc{Small Set Expansion}, and \textsc{2-to-4 Norm} problems.\footnote{We
refer the reader to \cite{bbhksz}\ for the definitions of these problems, but
very briefly: \textsc{Unique Games}\ (\textsc{UG}) is the problem of deciding
whether $\omega(G)$\ is close to $1$ or close to $0$, given as input a
description of a two-prover game $G=\left(  X,Y,A,B,\mathcal{D},V\right)  $
with the special properties that $\left\vert A\right\vert =\left\vert
B\right\vert $, and that for every $\left(  x,y\right)  \in X\times Y$\ there
exists a permutation $\pi_{x,y}$\ such that $V(x,y,a,b)=1$\ if and only if
$b=\pi_{x,y}(a)$. \ \textsc{Small Set Expansion}\ (\textsc{SSE}) is the
problem of deciding whether a given graph $G$ is close to or far from an
expander graph, if we consider $G$'s expansion on \textquotedblleft
small\textquotedblright\ subsets of vertices only.\ \ \textsc{2-to-4
Norm}\ (\textsc{2-to-4}) is the problem, given as input an $n\times n$
matrix\ $A$, of approximating the maximum of $\left\Vert Av\right\Vert _{4}%
$\ over all vectors $v$\ such that $\left\Vert v\right\Vert _{2}=1$.} \ The
lattice of known reductions among these problems is as follows:%
\begin{equation}%
\begin{array}
[c]{ccccccc}
&  & \text{\textsc{2-to-4}} &  &  &  & \text{\textsc{UG}}\\
& \nearrow &  & \nwarrow &  & \nearrow & \\
\text{\textsc{BSS}} &  &  &  & \text{\textsc{SSE}} &  &
\end{array}
\end{equation}

Now, assuming the ETH, Theorem \ref{bsshardthm} gives us $n^{\Omega\left(
\log n\right)  }$\ hardness for \textsc{BSS}---and as a direct consequence,
for \textsc{2-to-4 Norm}\ as well. \ That might not sound like much, but it's
a lot more than we currently know for either \textsc{Unique Games}\ or
\textsc{Small Set Expansion}! \ So the speculation arises that, if we fully
understood \textsc{BSS}, we might be able to apply some of the insights to
\textsc{UG}\ or \textsc{SSE}.

To lay our cards on the table, here is our conjecture about \textsc{BSS}:

\begin{conjecture}
\label{bssconj}\textsc{BSS}$_{\varepsilon}$\ is solvable in deterministic
$n^{O(\varepsilon^{-2}\log n)}$ time.
\end{conjecture}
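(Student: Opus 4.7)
The plan is to imitate the approximation algorithm for \textsc{FreeGame} (Theorem \ref{freegame0}) in the continuous, matrix-valued setting, treating $v \in \mathbb{C}^n$ as Merlin$_1$'s strategy and $w \in \mathbb{C}^n$ as Merlin$_2$'s. The critical structural parallel is that, once $v$ is fixed, the optimal $w$ is the top eigenvector of the matrix $(v^{\dag} \otimes I_n)\, A\, (v \otimes I_n) \in \mathbb{C}^{n \times n}$, which is computable by a polynomial-time eigenvalue solve---directly analogous to the ``best response'' step in the classical free-game algorithm.

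Concretely, I would try the following procedure: (i) sample a $d$-dimensional subspace $V \subseteq \mathbb{C}^n$ with $d = O(\varepsilon^{-2} \log n)$ according to some well-chosen distribution (Haar-random, or a random coordinate subspace in a suitably rotated basis); (ii) enumerate an $\varepsilon$-net $\mathcal{N}_V$ of unit vectors in $V$, which has size $(O(1/\varepsilon))^{O(d)} = n^{O(\varepsilon^{-2} \log(1/\varepsilon))}$; (iii) for each $v \in \mathcal{N}_V$, compute the optimal $w(v)$ by an eigenvalue solve on $(v^{\dag} \otimes I_n)\, A\, (v \otimes I_n)$; (iv) output $\max_{v \in \mathcal{N}_V} (v^{\dag} \otimes w(v)^{\dag})\, A\, (v \otimes w(v))$. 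Correctness would hinge on a \emph{quantum subsampling theorem} asserting that with high probability over $V$ there exists a unit vector $\tilde{v} \in V$ whose induced best-response pair comes within additive $\varepsilon$ of $\lambda_{\operatorname{sep}}(A)$.

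The main obstacle is proving this subsampling theorem. In the combinatorial free-game case, Merlin$_2$'s best response depends only on the empirical distribution of Merlin$_1$'s answers on the sample, and a Chernoff-type argument suffices. In the quantum case, the map $v \mapsto w(v)$ proceeds through a matrix-valued functional that is exquisitely sensitive to small components of $v$ in directions orthogonal to the true optimum. A standard Johnson--Lindenstrauss projection controls $\|v - P_V v\|_2$ for a single fixed $v$, but gives no uniform guarantee across the continuous family of candidate responses on the $w$ side, so the direct analog of the free-game analysis does not go through.

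A more promising route, inspired by Brandao, Christandl, and Yard \cite{bcy}, is to attack the conjecture via the symmetric-extension / Sum-of-Squares hierarchy: from any optimal separable state, extract a $k$-extension with $k = O(\varepsilon^{-2} \log n)$ and show that its two-party marginals approximate a convex combination of product states, when tested against the functional induced by $A$. BCY prove such a bound in the LOCC norm; upgrading their analysis to the operator norm---i.e., showing that the symmetric-extension SDP at level $O(\varepsilon^{-2} \log n)$ approximates $\lambda_{\operatorname{sep}}(A)$ within $\varepsilon$ for arbitrary Hermitian $A$ with spectrum in $[0,1]$---would yield Conjecture \ref{bssconj} immediately. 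This upgrade is essentially a quantum de Finetti theorem with operator-norm error scaling as $\sqrt{(\log n)/k}$, and in my view constitutes the central technical barrier; I expect any proof of the conjecture to either establish it directly or circumvent it via a substantially different algorithmic paradigm.
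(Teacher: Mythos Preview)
This statement is a \emph{conjecture} in the paper, not a theorem: the authors explicitly present it as open (Conjecture~\ref{bssconj}) and do not provide a proof. What the paper does is explain what the conjecture would imply (that $\mathsf{QMA}(2) \subseteq \mathsf{EXP}$, and that the $\widetilde{O}(\sqrt{n})$-qubit \textsc{3Sat} protocols are optimal under the ETH) and record the partial progress of Brandao, Christandl, and Yard~\cite{bcy}, who achieve the desired running time under the additional hypothesis that $\|A\|_2 = O(1)$ or that $A$ is implementable by LOCC.

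Your proposal is not a proof either, and to your credit you essentially say so: you correctly identify that the subsampling argument behind Theorem~\ref{freegame0} does not transfer, because the best-response map $v \mapsto w(v)$ is not controlled by any Chernoff-style concentration over a random restriction, and you correctly identify that the missing ingredient is an operator-norm (rather than LOCC-norm) de Finetti theorem with error scaling like $\sqrt{(\log n)/k}$. That upgrade is precisely what separates the BCY result from a full resolution of the conjecture, and it remains open. So your diagnosis of the obstacles is sound, but what you have written is a research outline for an open problem, not a proof of the statement; there is no proof in the paper to compare it against.
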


If true, Conjecture \ref{bssconj}\ readily implies that $\mathsf{QMA}\left(
2\right)  \subseteq\mathsf{EXP}$. \ Since a $t\left(  n\right)  $%
-time\ algorithm for \textsc{BSS}\ can be combined with a $q\left(  n\right)
$-qubit $\mathsf{QMA}\left(  2\right)  $\ protocol for \textsc{3Sat}\ to get a
$t(2^{O\left(  q\left(  n\right)  \right)  })$-time\ algorithm for
\textsc{3Sat}, Conjecture \ref{bssconj} also implies that, assuming the ETH,
any $\mathsf{QMA}\left(  2\right)  $\ protocol for \textsc{3Sat}\ must use
$\Omega(\sqrt{n})$\ qubits.

There has been some progress toward a proof of Conjecture \ref{bssconj}. \ In
particular, Brandao, Christandl, and Yard \cite{bcy}\ gave an algorithm that
solves \textsc{BSS}$_{\varepsilon}$\ in $n^{O(\varepsilon^{-2}\log n)}$ time
if $\left\Vert A\right\Vert _{2}=O\left(  1\right)  $, or alternatively, if
$A$ represents a quantum measurement that can be implemented using LOCC (Local
Operations and Classical Communication). \ This implied, among other things,
that $\mathsf{QMA}_{\mathsf{LOCC}}\left(  k\right)  =\mathsf{QMA}$ for
$k=O\left(  1\right)  $, where $\mathsf{QMA}_{\mathsf{LOCC}}\left(  k\right)
$\ is the subclass of $\mathsf{QMA}\left(  k\right)  $\ in which Arthur is
restricted to LOCC measurements. \ Brandao et al.'s algorithm uses a technique
that quantum information researchers know as \textit{symmetric extension}, and
that theoretical computer scientists know as the \textit{Lasserre hierarchy}.
\ It is not known whether similar techniques could work for arbitrary
$\mathsf{QMA}\left(  k\right)  $\ protocols.

More recently, Brandao and\ Harrow \cite{brandaoharrow:freegame}\ showed that,
assuming the ETH, any so-called $\mathsf{BellQMA}\left(  k\right)  $ protocol
for \textsc{3Sat}---that is, any $\mathsf{QMA}\left(  k\right)  $\ protocol
where each of the $k$ witnesses are measured separately---must use
$n^{1/2-o\left(  1\right)  }$\ qubits. \ This lower bound is known to be
essentially tight, due to the protocol of Chen and Drucker \cite{chendrucker}.
\ The requirement that each witness be measured separately (with the
measurement outcomes then combined with classical postprocessing) is even more
stringent than the requirement of LOCC. \ Despite this, the result of Brandao
and\ Harrow \cite{brandaoharrow:freegame} did not follow from the earlier
result of Brandao, Christandl, and Yard \cite{bcy} that $\mathsf{QMA}%
_{\mathsf{LOCC}}\left(  k\right)  =\mathsf{QMA}$, because the latter works
only for constant $k$.

\subsection{Connection to Our Results\label{CONNECT}}

But what does any of the above have to do with $\mathsf{AM}\left(  2\right)
$? \ One way to view this paper's contribution is as follows: \textit{we prove
that a \textquotedblleft classical analogue\textquotedblright\ of Conjecture
\ref{bssconj}\ holds}. \ In more detail, we can think of $\mathsf{AM}\left(
2\right)  $\ as closely analogous in many ways to $\mathsf{QMA}\left(
2\right)  $. \ For both classes, the only obvious lower bound comes from
restricting to a single Merlin,\ while the only obvious\ upper bound is
$\mathsf{NEXP}$. \ For both classes, the difficulty with proving an
$\mathsf{EXP}$\ upper bound\ is the requirement that the Merlins can't
communicate, which gives rise to a non-convex optimization problem. \ For both
classes, there exists a protocol for \textsc{3Sat}\ that uses $\log
n$\ communication, but that has only a\ $1/\operatorname*{poly}\left(
n\right)  $\ probability of catching cheating Merlins. \ For both classes, we
can improve the \textsc{3Sat}\ protocol to have constant soundness, by using a
strong PCP theorem together with the Birthday Paradox---but if we do so, then
the communication cost increases from $\log n$\ to $\widetilde{O}(\sqrt{n})$.

Because the analogy runs so deep, it seems of interest to $\mathsf{QMA}\left(
2\right)  $\ researchers to know that:

\begin{enumerate}
\item[(1)] \textsc{FreeGame}$_{\varepsilon}$ \textit{is} solvable in
$n^{O(\varepsilon^{-2}\log n)}$\ time, as we conjecture that \textsc{BSS}%
$_{\varepsilon}$\ is.

\item[(2)] $\mathsf{AM}\left(  2\right)  $ \textit{is} contained in
$\mathsf{EXP}$, as we conjecture that $\mathsf{QMA}\left(  2\right)  $\ is.

\item[(3)] The $\widetilde{O}(\sqrt{n})$-communication $\mathsf{AM}\left(
2\right)  $\ protocol for \textsc{3Sat}\ \textit{is} essentially optimal
assuming the ETH, as we conjecture that the corresponding $\mathsf{QMA}\left(
2\right)  $\ protocol is.
\end{enumerate}

Of course, we also show in this paper that $\mathsf{AM}\left(  2\right)
=\mathsf{AM}$. \ So pushing the analogy between $\mathsf{AM}\left(  2\right)
$\ and $\mathsf{QMA}\left(  2\right)  $ all the way to the end\ would lead to
the conjecture that $\mathsf{QMA}\left(  2\right)  =\mathsf{QMA}$. \ We remain
agnostic about whether the analogy extends \textit{that} far!

\section{Preliminaries\label{PRELIM}}

Some notation: we use $\operatorname{E}$\ for expectation, $\left[  n\right]
$\ for $\left\{  1,\ldots,n\right\}  $,\ and\ $\binom{\left[  n\right]  }{k}$
for the set of subsets of $\left[  n\right]  $\ of size $k$. \ In addition to
the notation $\widetilde{O}(f\left(  n\right)  )$\ for $O(f\left(  n\right)
\operatorname*{polylog}f\left(  n\right)  )$, we also use $\widetilde{\Omega
}(f\left(  n\right)  )$\ for $\Omega(f\left(  n\right)
/\operatorname*{polylog}f\left(  n\right)  )$. \ All logs are base $2$ unless
specified otherwise.

Sections \ref{INTRO}\ and \ref{RESULTS} have already defined many of the
concepts we will need, including two-prover games, free games, the
clause/variable game, the birthday repetition, and the \textsc{FreeGame}%
\ problem. \ For completeness, though, we now give the general definition of
$k$-player free games.

\begin{definition}
[$k$-Player Free Games]A $k$-player free game $G$ consists of:

\begin{enumerate}
\item[(1)] finite question sets\ $Y_{1},\ldots,Y_{k}$ and answer
sets\ $B_{1},\ldots,B_{k}$, and

\item[(2)] a verification function\ $V:Y_{1}\times\cdots\times Y_{k}\times
B_{1}\times\cdots\times B_{k}\rightarrow\left[  0,1\right]  $.
\end{enumerate}

The \textit{value} of the game, denoted $\omega\left(  G\right)  $, is the
maximum, over all tuples of response functions\ $\left(  b_{i}:Y_{i}%
\rightarrow B_{i}\right)  _{i\in\left[  k\right]  }$ , of%
\begin{equation}
\operatorname*{E}_{y_{1}\in Y_{1},\ldots,y_{k}\in Y_{k}}\left[  V\left(
y_{1},\ldots,y_{k},b_{1}\left(  y_{1}\right)  ,\ldots,b_{k}\left(
y_{k}\right)  \right)  \right]  .
\end{equation}

\end{definition}

Directly related to $k$-player free games is the complexity class
$\mathsf{AM}\left(  k\right)  $, which we now formally
define.\footnote{$\mathsf{AM}\left(  k\right)  $ should not be confused with
$\mathsf{AM}\left[  k\right]  $, which means $\mathsf{AM}$\ with a single
Merlin but $k$ \textit{rounds} of communication. \ A classic result of Babai
and Moran \cite{babaimoran}\ says that $\mathsf{AM}\left[  k\right]
=\mathsf{AM}\left[  2\right]  =\mathsf{AM}$\ for all constants $k\geq2$.
\ When $k=\operatorname*{poly}\left(  n\right)  $, by contrast, such a
collapse is not believed to happen, since $\mathsf{AM}\left[
\operatorname*{poly}\right]  =\mathsf{IP}=\mathsf{PSPACE}$.}

\begin{definition}
[$k$-Prover Arthur-Merlin]\label{amkdef}Let $k$\ be a positive integer. \ Then
$\mathsf{AM}\left(  k\right)  $ is the class of languages $L\subseteq\left\{
0,1\right\}  ^{\ast}$\ for which there exists a probabilistic polynomial-time
verifier $V$ such that for all $n$\ and all inputs $x\in\left\{  0,1\right\}
^{n}$:

\begin{itemize}
\item \textbf{(Completeness)} If $x\in L$, then there exist functions
$b_{1},\ldots,b_{k}:\left\{  0,1\right\}  ^{\operatorname*{poly}\left(
n\right)  }\rightarrow\left\{  0,1\right\}  ^{\operatorname*{poly}\left(
n\right)  }$, depending on $x$, such that%
\begin{equation}
\Pr_{y_{1},\ldots,y_{k}\in_{R}\left\{  0,1\right\}  ^{\operatorname*{poly}%
\left(  n\right)  }}\left[  V\left(  x,y_{1},\ldots,y_{k},b_{1}\left(
y_{1}\right)  ,\ldots,b_{k}\left(  y_{k}\right)  \right)  ~\text{accepts}%
\right]  \geq\frac{2}{3}.
\end{equation}

\item \textbf{(Soundness)} If $x\notin L$, then for all such functions
$b_{1},\ldots,b_{k}$,%
\begin{equation}
\Pr_{y_{1},\ldots,y_{k}\in_{R}\left\{  0,1\right\}  ^{\operatorname*{poly}%
\left(  n\right)  }}\left[  V\left(  x,y_{1},\ldots,y_{k},b_{1}\left(
y_{1}\right)  ,\ldots,b_{k}\left(  y_{k}\right)  \right)  ~\text{accepts}%
\right]  \leq\frac{1}{3}.
\end{equation}

\end{itemize}
\end{definition}

Clearly $\mathsf{AM}\left(  1\right)  =\mathsf{AM}$ and $\mathsf{AM}\left(
k\right)  \subseteq\mathsf{AM}\left(  k+1\right)  $ for all $k$. \ We also
have $\mathsf{AM}\left(  k\right)  \subseteq\mathsf{MIP}\left(  k\right)  $,
thereby giving the crude upper bound $\mathsf{AM}\left(  k\right)
\subseteq\mathsf{NEXP}$ (later we will do much better).

We can easily generalize the definition of $\mathsf{AM}\left(  k\right)  $\ to
$\mathsf{AM}\left(  k\left(  n\right)  \right)  $, for any growth rate
$k\left(  n\right)  =O\left(  \operatorname*{poly}\left(  n\right)  \right)
$. \ Also, let $\mathsf{AM}_{p\left(  n\right)  }\left(  k\right)  $ be the
variant of $\mathsf{AM}\left(  k\right)  $ where all messages (both the
$y_{i}$'s and the $b_{i}$'s) are constrained to be $p\left(  n\right)  $ bits long.

Given any probabilistic complexity class $\mathcal{C}$, one of the first
questions we can ask is whether $\mathcal{C}$\ admits amplification of success
probabilities---or equivalently, whether $\mathcal{C}$\ is robust under
changing its error parameters (such as $1/3$\ and $2/3$). \ At least for
$\mathsf{AM}\left(  2\right)  $, we are fortunate that a positive answer
follows from known results. \ In particular, building on the Parallel
Repetition Theorem (Theorem \ref{prt}), Rao \cite{rao}\ proved a useful
concentration bound for the parallel repetitions of two-prover games:

\begin{theorem}
[Rao's Concentration Theorem \cite{rao}]\label{raothm}For all $\delta>0$\ and
all two-prover games $G=\left(  X,Y,A,B,\mathcal{D},V\right)  $, if
Merlin$_{1}$\ and Merlin$_{2}$ play the parallel repeated version $G^{N}$,
then they can win more than a $\omega\left(  G\right)  +\delta$\ fraction of
the games with probability at most%
\begin{equation}
2\left(  1-\frac{\delta/2}{\omega\left(  G\right)  +3\delta/4}\right)
^{\Omega\left(  \frac{\delta^{2}N}{\log\left\vert A\right\vert \left\vert
B\right\vert -\log\left(  \omega\left(  G\right)  +\delta/4\right)  }\right)
}%
\end{equation}

\end{theorem}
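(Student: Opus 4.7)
The plan is to derive this concentration statement from the Parallel Repetition Theorem (Theorem \ref{prt}) via a binomial moment argument on the number of wins. Let $W$ denote the number of the $N$ subgames that the provers' strategy wins, and write $\alpha = \omega(G) + \delta$.

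First I would establish the following pair of bounds on the probability that the strategy wins every game indexed by a uniformly random subset $S \in \binom{[N]}{k}$. By linearity, this probability equals $\operatorname{E}[\binom{W}{k}/\binom{N}{k}]$, where the expectation is over the randomness of the strategy and of the $G^N$ questions. On one hand, if $\Pr[W \geq \alpha N] \geq p$, then since $\binom{W}{k} \geq \binom{\alpha N}{k}$ on that event,
$$\operatorname{E}\!\left[\frac{\binom{W}{k}}{\binom{N}{k}}\right] \;\geq\; p \cdot \frac{\binom{\alpha N}{k}}{\binom{N}{k}}.$$
On the other hand, the $G^N$ strategy restricted to coordinates in $S$ (with fresh dummy inputs drawn from $\mathcal{D}$ in the remaining $N-k$ coordinates) is a legitimate strategy for $G^k$, so the same expectation is at most $\omega(G^k)$. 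Combining these two inequalities and applying Theorem \ref{prt} gives
$$p \;\leq\; \frac{\omega(G^k)}{\binom{\alpha N}{k}/\binom{N}{k}} \;\leq\; \left(1-(1-\omega(G))^3\right)^{\Omega(k/\log(|A||B|))} \cdot \left(\frac{N}{\alpha N - k}\right)^{k},$$
leaving only the task of optimizing $k$.

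Taking $k \asymp \delta^2 N$ and estimating both factors carefully---the parallel-repetition factor, and the ratio of binomials, which behaves like $\alpha^{-k}$ up to $e^{O(k^2/N)}$ corrections---would, after some algebra, yield a concentration bound of the same qualitative shape as the stated one. The main obstacle is matching the precise exponent: the statement has deviation rate $\delta^2 N$ and base $1-\delta/(2\omega(G)+3\delta/2)$, whereas the black-box combination of Theorem \ref{prt} with the binomial-moment trick above seems to yield only a weaker $\delta^3 N$ dependence (because the $(1-\omega)^3$ in parallel repetition gets multiplied, rather than composed, with a single factor of $\delta$). Squeezing out the sharper $\delta^2$ almost certainly requires opening up the Holenstein-style information-theoretic proof of parallel repetition and tracking a soft version of the running win-count as coordinates are revealed, rather than invoking Theorem \ref{prt} as a black box; this is the refinement supplied by Rao \cite{rao}, and the step I would expect to dominate the technical effort.
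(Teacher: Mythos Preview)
The paper does not prove Theorem~\ref{raothm}; it is quoted from Rao~\cite{rao} and used only as a black box (in Proposition~\ref{ampcor}) to amplify $\mathsf{AM}(2)$ protocols. So there is no ``paper's own proof'' to compare against.

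That said, your assessment of what a proof entails is accurate. The binomial-moment reduction you describe---bounding $\operatorname{E}\bigl[\binom{W}{k}/\binom{N}{k}\bigr]$ above by $\omega(G^k)$ and below by $p\cdot\binom{\alpha N}{k}/\binom{N}{k}$, then invoking parallel repetition---is exactly the standard route from a parallel-repetition bound to a concentration bound. You are also right that plugging in Theorem~\ref{prt} as a black box loses a factor of $\delta$ in the exponent (roughly $\delta^3 N$ instead of $\delta^2 N$), because the $\varepsilon^3$ in the Raz--Holenstein bound composes additively with the $\delta$ coming from the binomial ratio. Rao's actual argument in~\cite{rao} does precisely what you anticipate: it reopens the information-theoretic proof of parallel repetition and threads the concentration analysis through it, which is how the sharper $\delta^2$ rate and the specific base $1-\frac{\delta/2}{\omega(G)+3\delta/4}$ emerge. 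So your proposal is not a proof of the stated theorem, but it is a correct diagnosis of where the work lies.
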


Theorem \ref{raothm} implies that \textquotedblleft
amplification\ works\textquotedblright\ for $\mathsf{AM}\left(  2\right)  $ protocols:

\begin{proposition}
\label{ampcor}In the definition of $\mathsf{AM}\left(  2\right)  $, replacing
the constants $\left(  1/3,2/3\right)  $\ by $\left(  a,b\right)  $\ for any
constants $0<a<b<1$, or indeed by $\left(  2^{-p\left(  n\right)
},1-2^{-p\left(  n\right)  }\right)  $\ or $\left(  1/2-1/p\left(  n\right)
,1/2+1/p\left(  n\right)  \right)  $\ for any polynomial $p$, gives rise to
the same complexity class.
\end{proposition}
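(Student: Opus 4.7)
The plan is to prove amplification for $\mathsf{AM}(2)$ by showing the class is closed under parallel repetition, and then applying a Chernoff bound for completeness together with Rao's Concentration Theorem (Theorem \ref{raothm}) for soundness.

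First I would observe the key structural fact: if $G$ is a free game, then its $N$-fold parallel repetition $G^N$ is again a free game, because the uniform distribution on $X^N\times Y^N$ is itself a product distribution, so independent challenges remain independent under repetition. Consequently, starting from any base $\mathsf{AM}(2)$ protocol with completeness $b$ and soundness $a$ (with $a<b$), Arthur can simulate $N$ independent copies in parallel and accept iff the fraction of accepting copies is at least the threshold $\tau := (a+b)/2$. So long as $N$ is polynomial in $n$, the total message length $N\cdot\operatorname*{poly}(n)$ remains polynomial, so the amplified protocol is still in $\mathsf{AM}(2)$.

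For completeness, if $x\in L$ the two Merlins can play the optimal base strategy independently in each of the $N$ copies, so the number of accepting copies is a sum of i.i.d.\ Bernoullis with mean at least $b$; a standard Chernoff bound then gives completeness error $\exp(-\Omega((b-a)^2 N))$. For soundness, $\omega(G)\le a$ but the cheating Merlins could in principle correlate their responses across copies, so Chernoff does not apply directly. Here I invoke Rao's theorem with $\delta=(b-a)/2$: the probability that the Merlins win more than an $a+\delta=\tau$ fraction of the copies is at most
\begin{equation}
2\bigl(1-c\bigr)^{\Omega(\delta^2 N/\log|A||B|)},
\end{equation}
where $c=c(a,b)>0$. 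Since the base protocol's answer sets satisfy $\log|A||B|=\operatorname*{poly}(n)$, the soundness error shrinks as $\exp(-\Omega(\delta^2 N/\operatorname*{poly}(n)))$.

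Choosing $N$ appropriately then covers each regime in the proposition: a constant $N$ converts the canonical $(1/3,2/3)$ gap to an arbitrary constant pair $(a,b)$; taking $N=\operatorname*{poly}(n)\cdot p(n)$ drives both errors down to $2^{-p(n)}$; and if the base gap is only $2/p(n)$, then $N=\operatorname*{poly}(n)\cdot p(n)^2$ restores a constant gap. The main obstacle—and the reason Chernoff alone cannot suffice—is soundness under correlated cheating strategies across parallel copies; this is precisely what Rao's theorem (and the Parallel Repetition Theorem beneath it) was designed to handle. A secondary check is that the $\log|A||B|$ factor in the exponent of Rao's bound only introduces polynomial overhead, which is automatic because the base messages are polynomial in $n$.
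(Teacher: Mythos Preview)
Your proposal is correct and follows essentially the same approach as the paper: parallel-repeat the underlying free game, accept by a threshold on the fraction of wins, use Chernoff for completeness and Rao's concentration theorem for soundness, and absorb the $\log|A||B|$ factor into a polynomial choice of $N$. The paper only spells out the $(1/3,2/3)\to(2^{-p(n)},1-2^{-p(n)})$ direction and declares the others analogous, whereas you treat all three regimes explicitly and also note that parallel repetition preserves freeness; otherwise the arguments coincide.
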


\begin{proof}
Suppose, for example, that we want to amplify $\left(  1/3,2/3\right)  $\ to
$\left(  2^{-p\left(  n\right)  },1-2^{-p\left(  n\right)  }\right)  $; the
other cases are analogous. \ Given a language $L\in\mathsf{AM}\left(
2\right)  $ and an input $x\in\left\{  0,1\right\}  ^{n}$, the $\mathsf{AM}%
\left(  2\right)  $\ protocol for checking whether $x\in L$\ can be
represented as a free game $G=\left(  X,Y,A,B,V\right)  $, where
$X=Y=A=B=\left\{  0,1\right\}  ^{q\left(  n\right)  }$ for some polynomial
$q$. \ We have $\omega\left(  G\right)  \geq2/3$\ if $x\in L$, and
$\omega\left(  G\right)  \leq1/3$\ if $x\notin L$. \ Now let $G_{1/2}^{N}$\ be
the game where the Merlins\ play $N$ parallel instances of the original game
$G$, and they \textquotedblleft win\textquotedblright\ if and only if they win
on at least $N/2$\ instances. \ If $\omega\left(  G\right)  \geq2/3$, then
clearly $\omega(G_{1/2}^{N})\geq1-2^{-\Omega\left(  N\right)  }$---since if
the Merlins just play their optimal strategy for $G$ on each of the $N$
instances separately, then the number that they win will be concentrated
around $\omega\left(  G\right)  N\geq2N/3$\ by a standard Chernoff bound. \ On
the other hand, if $\omega\left(  G\right)  \leq1/3$, then Theorem
\ref{raothm}\ implies that $\omega(G_{1/2}^{N})\leq2^{-\Omega\left(
N/q\left(  n\right)  \right)  }$. \ So, by simply choosing $N\gg p\left(
n\right)  q\left(  n\right)  $ to be a suitably large polynomial, we can
ensure that $\omega(G_{1/2}^{N})\geq1-2^{-p\left(  n\right)  }$\ if $x\in
L$\ while $\omega(G_{1/2}^{N})\leq2^{-p\left(  n\right)  }$\ if $x\notin L$.
\end{proof}

Note that Proposition \ref{ampcor} can blow up the communication cost by a
polynomial factor, because of the dependence of $N$ on $q\left(  n\right)  $
(which derives from the $1/\log\left\vert A\right\vert \left\vert B\right\vert
$\ factor in the exponent from Theorem \ref{raothm}). \ For this reason,
Proposition \ref{ampcor} doesn't directly imply any useful amplification for
our $\widetilde{O}(\sqrt{n})$-communication protocol for \textsc{3Sat}. \ See
Section \ref{LOWERROR}\ for further discussion of this issue, and for our best
current results for the low-error case.

Rao (personal communication) believes that it would be straightforward to
generalize Theorem \ref{raothm}\ to games with $k\geq3$\ players, as long as
the games are free.\footnote{By contrast, for \textit{general} games with
$k\geq3$\ players, even proving a \textquotedblleft standard\textquotedblright%
\ parallel repetition theorem is a notorious open problem.} \ If so, then we
would also obtain an amplification theorem for $\mathsf{AM}\left(  k\right)
$, for all $k=\operatorname*{poly}\left(  n\right)  $. \ However, this
generalization has not yet been worked out explicitly.

One last remark: a classic result\ about \textquotedblleft
ordinary\textquotedblright\ $\mathsf{AM}$\ (see \cite{fgmsz}) states that any
$\mathsf{AM}$\ protocol can be made to have \textit{perfect completeness}.
\ In other words, the condition $x\in L\Rightarrow\Pr\left[  V\text{
accepts}\right]  \geq2/3$\ can be strengthened to $x\in L\Rightarrow\Pr\left[
V\text{ accepts}\right]  =1$ without loss of generality. \ Another classic
result \cite{gs} states that any $\mathsf{AM}$\ protocol can be made
\textit{public-coin}, meaning that any random bits generated by Arthur are
immediately shared with Merlin. \ In terms of games, the public-coin property
would imply in particular that Arthur's verification function was
deterministic: that is, $V\left(  x,y,a,b\right)  \in\left\{  0,1\right\}
$\ for all $x,y,a,b$.

Thus, one might wonder whether any $\mathsf{AM}\left(  k\right)  $ protocol
can be made perfect-completeness and public-coin as well. \ Ultimately,
affirmative answers to these questions will follow from our result that
$\mathsf{AM}\left(  k\right)  =\mathsf{AM}$, which works regardless of whether
the original $\mathsf{AM}\left(  k\right)  $\ protocol had perfect
completeness or was public-coin. \ But it would be interesting to find
\textit{direct} proofs of these properties for $\mathsf{AM}\left(  k\right)
$. \ (It would also be interesting to find a direct proof that $\mathsf{AM}%
\left(  k\right)  =\mathsf{AM}\left(  2\right)  $ for all $k>2$, rather
than\ deducing this as a consequence of $\mathsf{AM}\left(  k\right)
=\mathsf{AM}$.)

\section{Analysis of the Birthday Game\label{3SAT}}

Our goal in this section is to prove Theorem \ref{mainthm1}: informally, that
$\mathsf{AM}\left(  2\right)  $\ protocols for \textsc{3Sat}\ can achieve
nearly a quadratic savings in communication over the \textquotedblleft
na\"{\i}ve\textquotedblright\ bound of $n$ bits. \ The section is organized as
follows. \ First, in Section \ref{BASIC3SAT}, we give a \textquotedblleft
basic\textquotedblright\ protocol\ with a $1$\ vs.\ $1-\epsilon$%
\ completeness/soundness gap (for some fixed $\epsilon>0$) and $\widetilde{O}%
(\sqrt{n})$\ communication cost. \ The protocol is based on the birthday
repetition already discussed in Section \ref{3SATINT}; for concreteness, we
initially implement the idea using Dinur's PCP Theorem and the clause/variable
game. \ Next, in Section \ref{HIGHERROR}, we study the high-error case,
showing how a more refined analysis leads to a protocol with a $1$%
\ vs.\ $1-\varepsilon$\ completeness/soundness gap and $O(\sqrt{\varepsilon
n}\operatorname*{polylog}n)$\ communication cost. \ Then in Section
\ref{LOWERROR}, we switch to the low-error case, using the PCP Theorem of
Moshkovitz and Raz \cite{mr}\ to obtain an $\mathsf{AM}\left(  2\right)
$\ protocol for \textsc{3Sat}\ with a $1$\ vs.\ $\delta$%
\ completeness/soundness gap and $n^{1/2+o\left(  1\right)  }%
\operatorname*{poly}\left(  1/\delta\right)  $\ communication cost. \ Finally,
in Section \ref{CC}, we give the implication $\mathsf{NTIME}\left[  n\right]
\subseteq\mathsf{AM}_{n^{1/2+o\left(  1\right)  }}\left(  2\right)  $\ and
show that this implication is nonrelativizing.

\subsection{The Basic Result\label{BASIC3SAT}}

The first step is to state a variant of the PCP Theorem that is strong enough
for our purposes.

\begin{theorem}
[PCP Theorem, Dinur's Version \cite{dinur}]\label{pcpthm}Given a
\textsc{3Sat}\ instance $\varphi$\ of size $n$, it is possible in
$\operatorname*{poly}\left(  n\right)  $\ time to produce a new \textsc{3Sat}%
\ instance $\phi$, of size $n\operatorname*{polylog}n$, such that:

\begin{itemize}
\item \textbf{(Completeness)} If $\operatorname*{SAT}\left(  \varphi\right)
=1$\ then $\operatorname*{SAT}\left(  \phi\right)  =1$.

\item \textbf{(Soundness)} If $\operatorname*{SAT}\left(  \varphi\right)
<1$\ then $\operatorname*{SAT}\left(  \phi\right)  <1-\epsilon$, for some
constant $0<\epsilon<1/8$.

\item \textbf{(Balance)} Every clause of $\phi$\ involves exactly $3$
variables, and every variable of $\phi$\ appears in exactly $d$ clauses, for
some constant $d$.\footnote{It is known that we can assume this balance
condition without loss of generality.}
\end{itemize}
\end{theorem}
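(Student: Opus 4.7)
The plan is to follow Dinur's gap amplification strategy, which iteratively converts a trivially-small gap into a constant gap while blowing up the instance by only a polylogarithmic factor. First I would use the standard Cook--Levin-style reduction to express the given \textsc{3Sat} instance $\varphi$ as a constraint-graph problem: vertices are variables, each edge carries a constraint over an alphabet $\Sigma_0$, and an unsatisfiable instance fails at least one edge (a gap of $1/n$). At this starting point the gap is tiny, so the work is all in amplifying it.

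The heart of the proof is an inner loop consisting of three operations, repeated $O(\log n)$ times. (i) \emph{Preprocessing}: replace each variable by several copies connected by an expander of equality constraints, making the constraint graph $d$-regular for a constant $d$, and then union with a constant-degree expander so that the resulting graph is itself an expander. This changes the size by a constant factor and at most a constant factor in the gap. (ii) \emph{Gap amplification via powering}: replace the constraint graph by its $t$-th power, where vertices see the labels of all neighbors within distance $t$ and edges check consistency along walks of length $2t$. Using spectral expansion of the preprocessed graph, one shows that taking $t$ to be a large enough constant multiplies the UNSAT fraction by a constant factor (e.g., at least $2$), while the alphabet grows to $\Sigma_0^{d^t}$ and the size grows by only a constant factor. (iii) \emph{Alphabet reduction}: compose each powered constraint with a fixed assignment tester / PCP of proximity to collapse the alphabet back to a constant $\Sigma_0$, at the cost of losing only a small constant factor in the gap. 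Choosing the constants so that the net gap strictly increases in each iteration and the size blows up by only $O(1)$ per iteration yields, after $O(\log n)$ iterations, an instance of size $n\operatorname*{polylog}n$ with constant UNSAT gap $\epsilon$.

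To obtain the balance condition, I would first reduce the final constant-alphabet constraint graph to \textsc{3Sat} in the obvious clause-by-clause way (constant blowup), and then enforce that every variable appears in exactly $d$ clauses by the standard expander trick: replace each variable $x$ that appears $d_x$ times by $d_x$ copies $x^{(1)},\dots,x^{(d_x)}$, one per occurrence, and add equality clauses $x^{(i)}=x^{(j)}$ along the edges of a constant-degree expander on $\{x^{(1)},\dots,x^{(d_x)}\}$. Any assignment that violates $\alpha$-fraction of the expander equalities can be rounded to a consistent assignment differing on at most $O(\alpha)$-fraction of copies (using the expander mixing lemma), so the soundness loss is a constant factor we can absorb. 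Completeness and the $3$-literal-per-clause structure are preserved by straightforward padding.

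The main obstacle, as in Dinur's original argument, is step (ii) together with step (iii): proving that powering amplifies the gap by a constant factor requires a careful random-walk/expander-mixing argument on the preprocessed graph, and then one must have in hand a constant-size assignment tester (itself a small PCP) to carry out alphabet reduction without killing the gains from powering. Balancing the constants across the three suboperations so that the net gap strictly grows while the size grows by only $O(1)$ per round is what makes the whole iteration terminate in $O(\log n)$ rounds with the claimed $n\operatorname*{polylog}n$ size bound.
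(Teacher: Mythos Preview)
Your sketch is a faithful outline of Dinur's gap-amplification proof, but note that the paper itself does not prove this theorem at all: it is stated as a cited black-box result from \cite{dinur}, with the balance condition justified only by the footnote ``It is known that we can assume this balance condition without loss of generality.'' So there is no proof in the paper to compare your proposal against; the authors simply invoke the theorem and move on to use it in analyzing the birthday game.

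That said, as a summary of Dinur's argument your proposal is essentially correct: the preprocessing/powering/alphabet-reduction loop, iterated $O(\log n)$ times, is exactly the structure of the proof, and the expander-replacement trick for regularizing variable degrees is the standard way to obtain the balance condition. If you were asked to supply a proof where the paper gives none, this would be the right plan.
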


The reason why, for now, we use Dinur's version of the PCP Theorem is that it
produces instances of size $n\operatorname*{polylog}n$. \ Later, in Section
\ref{LOWERROR}, we will switch over to the PCP Theorem of Moshkovitz and Raz
\cite{mr}, which produces instances of the slightly larger size $n\cdot
2^{\left(  \log n\right)  ^{1-\Delta}}=n^{1+o\left(  1\right)  }$\ (for some
constant $\Delta>0$) but achieves sub-constant error. \ Were we willing to
accept a protocol with $\sqrt{n}2^{\left(  \log n\right)  ^{1-\Delta}}%
$\ communication, we could have used the Moshkovitz-Raz version from the
start, but we will try to keep the communication cost down to $\sqrt
{n}\operatorname*{polylog}n$\ for as long as we can.

Let the \textsc{3Sat}\ instance $\phi$\ produced by Theorem \ref{pcpthm}\ have
$N$ variables $x_{1},\ldots,x_{N}$\ and $M$ clauses $C_{1},\ldots,C_{M}$.
\ Also, let $G_{\phi}$\ be the clause/variable game for $\phi$, as defined in
Section \ref{3SATINT}. \ Then combining Theorem \ref{pcpthm} with Proposition
\ref{cvsound}\ yields the following corollary.

\begin{corollary}
\label{phicor}If $\phi$\ is unsatisfiable, then $\omega\left(  G_{\phi
}\right)  <1-\epsilon/3$.
\end{corollary}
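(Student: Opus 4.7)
The proof is essentially a one-line chaining of Theorem \ref{pcpthm} with Proposition \ref{cvsound}, so my plan has two short steps. First, I would upgrade the hypothesis ``$\phi$ is unsatisfiable'' to the quantitative statement $\operatorname*{SAT}(\phi) < 1-\epsilon$. The route is via the original instance $\varphi$: by the contrapositive of the completeness clause of Theorem \ref{pcpthm}, $\operatorname*{SAT}(\phi) < 1$ forces $\operatorname*{SAT}(\varphi) < 1$. Then the soundness clause of the same theorem immediately gives $\operatorname*{SAT}(\phi) < 1 - \epsilon$. So $\phi$ is not merely unsatisfiable but far from satisfiable, which is the form of hypothesis that Proposition \ref{cvsound} can actually consume.

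Second, I would invoke Proposition \ref{cvsound} applied to $\phi$ in place of $\varphi$, with parameter $\varepsilon := \epsilon$. This instantly yields $\omega(G_\phi) \leq 1 - \epsilon/3$, which is the desired conclusion. Note that applying Proposition \ref{cvsound} requires $\phi$ to be a genuine \textsc{3Sat} instance in which each violated clause has exactly $3$ literals (so that Merlin$_2$ is queried about a disagreeing variable with probability at least $1/3$); the balance clause of Theorem \ref{pcpthm} precisely guarantees this.

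There is no real obstacle: all the work is already done in the two ingredients, and the PCP gap constant $\epsilon$ is chosen (in the statement of Theorem \ref{pcpthm}) to be small enough, $\epsilon < 1/8$, that the resulting gap $\epsilon/3$ for the clause/variable game is a fixed positive constant, as needed downstream when this clause/variable game is fed into the birthday repetition construction of Section \ref{3SAT}. The only minor care required is bookkeeping about whether the hypothesis refers to the original \textsc{3Sat} instance $\varphi$ or its PCP image $\phi$; either reading leads to the same conclusion via the completeness/soundness dichotomy above.
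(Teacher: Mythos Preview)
Your proposal is correct and matches the paper's approach exactly: the paper simply states that the corollary follows by ``combining Theorem~\ref{pcpthm} with Proposition~\ref{cvsound}'' without giving any further proof, and your two-step chaining (upgrade unsatisfiability of $\phi$ to $\operatorname*{SAT}(\phi)<1-\epsilon$ via the PCP completeness/soundness dichotomy, then feed this into Proposition~\ref{cvsound}) is precisely that combination spelled out. The only cosmetic point is that Proposition~\ref{cvsound} literally gives $\omega(G_\phi)\le 1-\epsilon/3$ while the corollary asserts strict inequality; since $\operatorname*{SAT}(\phi)$ is \emph{strictly} less than $1-\epsilon$, this is immediate.
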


Next, given positive integers $k$\ and $\ell$, let $G_{\phi}^{k\times\ell}%
$\ be the birthday repetition of $\phi$, also defined in Section
\ref{3SATINT}. \ Then to prove Theorem \ref{mainthm1}, it suffices to show
that $\omega(G_{\phi}^{k\times\ell})$ is bounded away from $1$, assuming that
$\phi$\ is unsatisfiable and that $k\ell=\Omega\left(  N\right)  $.

Our strategy for upper-bounding $\omega(G_{\phi}^{k\times\ell})$\ will be to
relate it to $\omega\left(  G_{\phi}\right)  $, which we already know is
bounded away from $1$. \ More concretely:

\begin{theorem}
\label{gbdthm}For all $k\in\left[  M\right]  $ and $\ell\in\left[  N\right]
$,%
\begin{equation}
\omega\left(  G_{\phi}\right)  \geq\omega(G_{\phi}^{k\times\ell})-O\left(
\sqrt{\frac{N}{k\ell}}\right)  .
\end{equation}

\end{theorem}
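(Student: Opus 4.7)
The plan is to produce single-shot strategies for $G_\phi$ from any strategies $(a^k, b^\ell)$ for the birthday game by a random-restriction rounding argument. Define randomized strategies $(a^*, b^*)$ for $G_\phi$ as follows: on clause query $i$, Merlin$_1$ samples $S \in \binom{[M]}{k}$ uniformly conditioned on $i \in S$ and returns $a^k(S)$'s assignment to $C_i$; on variable query $j$, Merlin$_2$ samples $T \in \binom{[N]}{\ell}$ uniformly conditioned on $j \in T$ and returns $b^\ell(T)(j)$. Averaging over the Merlins' random tapes produces deterministic strategies with value at least the randomized win probability $W^*$, so it suffices to prove $W^* \geq \omega(G_\phi^{k\times\ell}) - O(\sqrt{N/(k\ell)})$.

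Let $Z = \{(i,j) : x_j \in C_i\}$, so $|Z|=3M$; set $I_{S,T} = |(S \times T) \cap Z|$; and let $q(S,T,i,j) \in \{0,1\}$ indicate whether the birthday-game constraint at $(i,j)$ is satisfied by $(a^k(S), b^\ell(T))$. A short double-counting using $\binom{M}{k}/\binom{M-1}{k-1}=M/k$ and $|Z|=3M$ gives
\begin{equation}
W^* \;=\; \frac{\operatorname*{E}_{\mathrm{unif}(S,T)}\!\left[\sum_{(i,j)\in(S \times T)\cap Z} q(S,T,i,j)\right]}{\operatorname*{E}_{\mathrm{unif}}[I_{S,T}]} \;=\; \frac{\operatorname*{E}[I\bar q]}{\operatorname*{E}[I]},
\end{equation}
where $\bar q = \tfrac{1}{I}\sum q$ (the summand above is taken to be $0$ when $I=0$), with $\operatorname*{E}_{\mathrm{unif}}[I] = 3k\ell/N$. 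Winning the birthday game forces $\bar q = 1$ whenever $I > 0$ and is vacuous when $I = 0$, so $\omega(G_\phi^{k\times\ell}) \leq \Pr[I=0] + \operatorname*{E}[\mathbf{1}[I>0]\,\bar q]$. Subtracting and using $0 \leq \bar q \leq 1$, a brief calculation in which the $\Pr[I=0]$ terms cancel yields
\begin{equation}
\omega(G_\phi^{k\times\ell}) - W^* \;\leq\; \operatorname*{E}\!\left[\left(1 - I/\operatorname*{E}[I]\right)_+\right] \;=\; \tfrac{1}{2}\operatorname*{E}\!\left[\,\left|I/\operatorname*{E}[I] - 1\right|\,\right] \;\leq\; \frac{\sqrt{\operatorname{Var}(I)}}{2\operatorname*{E}[I]},
\end{equation}
using that $I/\operatorname*{E}[I]$ has mean $1$, then Cauchy--Schwarz.

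Everything thus reduces to the second-moment estimate $\operatorname{Var}(I) = O(k\ell/N)$, which I expect to be the main technical step. Writing $I = \sum_{(i,j)\in Z} \mathbf{1}[i \in S]\mathbf{1}[j \in T]$, the diagonal contributes $|Z|(k\ell/MN) = 3k\ell/N$, and only off-diagonal pairs sharing a clause or sharing a variable have nonzero covariance. The balance property of Dinur's PCP (every clause on $3$ variables, every variable in exactly $d = 3M/N$ clauses) enumerates $6M$ shared-clause ordered pairs and $\approx Nd^2 = 9M^2/N$ shared-variable ordered pairs, each contributing per-pair covariance of order $(k/M)(\ell/N)^2$ and $(k/M)^2(\ell/N)$ respectively, for cross-term totals of $O(k\ell^2/N^2)$ and $O(k^2\ell/N^2)$, both at most $O(k\ell/N)$. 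Plugging into the inequality above yields $\sqrt{\operatorname{Var}(I)}/\operatorname*{E}[I] = O(\sqrt{N/(k\ell)})$, as required. The main obstacle is thus this variance estimate: balance is essential, since a single high-degree clause or variable would inflate the cross-terms past $k\ell/N$ and break the $\sqrt{N/(k\ell)}$ bound.
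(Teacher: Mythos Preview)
Your proposal is correct and is essentially the same argument as the paper's. You use the identical random-restriction rounding, and your quantity $\tfrac{1}{2}\operatorname*{E}[|I/\operatorname*{E}[I]-1|]$ is exactly the paper's $\|\mathcal{D}-\mathcal{U}\|$ (since $\Pr_{\mathcal{D}}[(S,T)]=\Pr_{\mathcal{U}}[(S,T)]\cdot I_{S,T}/\operatorname*{E}[I]$), bounded by the same Cauchy--Schwarz and second-moment computation on $I_{S,T}$ exploiting the balance of $\phi$. The only cosmetic difference is that the paper lower-bounds the rounded success probability by $\operatorname*{E}_{\mathcal{D}}[V_{BD}]$ and compares to $\operatorname*{E}_{\mathcal{U}}[V_{BD}]$ via total variation, whereas you compute the rounded success probability $W^*$ exactly and compare it to the upper bound $\Pr[I=0]+\operatorname*{E}[\mathbf{1}[I>0]\bar q]$ on $\omega(G_\phi^{k\times\ell})$; both routes land on the same variance inequality.
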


So in particular, by choosing $k=\ell=c\sqrt{N}$, where $c$\ is some
sufficiently large constant, we can ensure (say) $\omega(G_{\phi}^{k\times
\ell})\leq\omega\left(  G_{\phi}\right)  +0.01$.

Let $\mathcal{U}$\ be the uniform distribution over all input pairs%
\begin{equation}
\left(  I,J\right)  \in\binom{\left[  M\right]  }{k}\times\binom{\left[
N\right]  }{\ell},
\end{equation}
and let $V_{BD}$\ be Arthur's verification function in $G_{\phi}^{k\times\ell
}$. \ To prove Theorem \ref{gbdthm}, we consider an arbitrary cheating
strategy for Merlin$_{1}$\ and Merlin$_{2}$\ in the birthday game:%
\begin{equation}
a:\binom{\left[  M\right]  }{k}\rightarrow\left(  \left\{  0,1\right\}
^{3}\right)  ^{k},~~~~~b:\binom{\left[  N\right]  }{\ell}\rightarrow\left\{
0,1\right\}  ^{\ell}.
\end{equation}
Let $p$ be the success probability of that cheating strategy: that is,%
\begin{equation}
p=\operatorname*{E}_{\left(  I,J\right)  \sim\mathcal{U}}\left[  V_{BD}\left(
I,J,a\left(  I\right)  ,b\left(  J\right)  \right)  \right]  .
\end{equation}

Using $a$ and $b$, our task is to construct a cheating strategy for the
original clause/variable game $G_{\phi}$, which succeeds with probability at
least $p-O(\sqrt{N/k\ell})$. \ That strategy will be the \textquotedblleft
natural\textquotedblright\ one: namely, given as input a clause index
$i\in\left[  M\right]  $, Merlin$_{1}$ first chooses a subset $\left\{
i_{1},\ldots,i_{k-1}\right\}  $ uniformly at random from $\binom{\left[
M\right]  -\left\{  i\right\}  }{k-1}$, and sets $I:=\left\{  i,i_{1}%
,\ldots,i_{k-1}\right\}  $. \ (Crucially, $I$ is a set, so if its elements
were listed in some canonical way---for example, in order---$i$ would
generally be somewhere in the middle, and would not be particularly
conspicuous!) \ Merlin$_{1}$ then computes $a\left(  I\right)  \in\left(
\left\{  0,1\right\}  ^{3}\right)  ^{k}$, and sends Arthur the restriction of
$a\left(  I\right)  $\ to the index $i$. \ Likewise, given as input a variable
index $j\in\left[  N\right]  $, Merlin$_{2}$ first chooses a subset $\left\{
j_{1},\ldots,j_{\ell-1}\right\}  $ uniformly at random from $\binom{\left[
N\right]  -\left\{  j\right\}  }{\ell-1}$, and sets $J:=\left\{
j,j_{1},\ldots,j_{\ell-1}\right\}  $. \ He then computes $b\left(  J\right)
\in\left\{  0,1\right\}  ^{\ell}$, and sends Arthur the restriction of
$b\left(  J\right)  $\ to the index $j$. \ Of course, the resulting strategy
is randomized, but we can convert it to an equally-good deterministic strategy
using convexity.

Let $\mathcal{D}$\ be the probability distribution over $\left(  I,J\right)
$\ pairs induced by the cheating strategy above, if we average over all valid
inputs $\left(  i,j\right)  $\ to the original clause/variable game. \ Then
let $q$ be the Merlins' success probability in the birthday game, if they use
their same cheating strategy $\left(  a,b\right)  $, but for $\left(
I,J\right)  $\ pairs drawn from $\mathcal{D}$, rather than from the uniform
distribution $\mathcal{U}$:%
\begin{equation}
q=\operatorname*{E}_{\left(  I,J\right)  \sim\mathcal{D}}\left[  V_{BD}\left(
I,J,a\left(  I\right)  ,b\left(  J\right)  \right)  \right]  .
\end{equation}
Clearly the Merlins' success probability in the clause/variable game is at
least\ $q$, since any time they win $G_{\phi}^{k\times\ell}$, they also win
its restriction to $G_{\phi}$. \ Therefore, to prove Theorem \ref{gbdthm}, it
suffices to prove that $q\geq p-O(\sqrt{N/k\ell})$. \ And to do \textit{that},
it in turn suffices to show that $\mathcal{D}$ is close in variation
distance\ to the uniform distribution $\mathcal{U}$, since%
\begin{equation}
\left\vert \operatorname*{E}_{\mathcal{D}}\left[  Z\right]  -\operatorname*{E}%
_{\mathcal{U}}\left[  Z\right]  \right\vert \leq\left\Vert \mathcal{D}%
-\mathcal{U}\right\Vert
\end{equation}
for any $\left[  0,1\right]  $\ random variable $Z$. \ We upper-bound
$\left\Vert \mathcal{D}-\mathcal{U}\right\Vert $ in the following lemma.

\begin{lemma}
\label{vardist}$\left\Vert \mathcal{D}-\mathcal{U}\right\Vert =O\left(
\sqrt{\frac{N}{k\ell}}\right)  .$
\end{lemma}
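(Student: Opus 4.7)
The plan is to write the Radon--Nikodym derivative $\mathcal{D}/\mathcal{U}$ explicitly and reduce the total variation distance to a variance computation. First I would observe that for any fixed $(I,J)$, the density under $\mathcal{D}$ is just the chance that the uniformly random valid pair $(i,j)$ happens to land inside $I\times J$: writing $E\subseteq [M]\times [N]$ for the set of clause/variable incidences and $X(I,J) := |E\cap(I\times J)|$, a direct count gives
\begin{equation}
\frac{\mathcal{D}(I,J)}{\mathcal{U}(I,J)} \;=\; \frac{|E(I,J)|}{|E|}\cdot\frac{MN}{k\ell} \;=\; \frac{X(I,J)}{\operatorname*{E}_{\mathcal{U}}[X]},
\end{equation}
since $\operatorname*{E}_{\mathcal{U}}[X]=|E|\cdot k\ell/(MN)$. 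Hence
\begin{equation}
\|\mathcal{D}-\mathcal{U}\| \;=\; \tfrac{1}{2}\operatorname*{E}_{\mathcal{U}}\!\left[\left|\tfrac{X}{\operatorname*{E}X}-1\right|\right] \;\le\; \tfrac{1}{2}\,\frac{\sqrt{\mathrm{Var}_{\mathcal{U}}(X)}}{\operatorname*{E}_{\mathcal{U}}[X]}
\end{equation}
by Cauchy--Schwarz/Jensen.

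Next I would estimate $\mathrm{Var}(X)$ by expanding $X=\sum_{e\in E} Z_e$ with $Z_{(i,j)}=\mathbf{1}[i\in I]\mathbf{1}[j\in J]$, and splitting $\operatorname*{Cov}(Z_e,Z_{e'})$ into four cases according to how the two edges $e=(i,j)$ and $e'=(i',j')$ share endpoints. Writing $a=k/M$, $a_2=k(k-1)/(M(M-1))$, $b=\ell/N$, $b_2=\ell(\ell-1)/(N(N-1))$, the key observation is that sampling without replacement is negatively correlated, so $a_2\le a^2$ and $b_2\le b^2$; this immediately kills the contribution of all pairs with $i\neq i'$ and $j\neq j'$, since their covariance $a_2b_2-(ab)^2\le 0$. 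The remaining contributions are: (i) $e=e'$, of which there are $|E|=3M$, each contributing at most $ab$; (ii) same-$i$ pairs, of which there are $M\cdot 3\cdot 2$ by the $3$-uniformity of clauses, each bounded by $ab^2$; and (iii) same-$j$ pairs, of which there are $N\cdot d(d-1)$ by $d$-regularity on the variable side, each bounded by $a^2 b$.

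Plugging in and using the balance identity $dN=3M$ from Theorem~\ref{pcpthm} would yield
\begin{equation}
\mathrm{Var}(X) \;\le\; \frac{3k\ell}{N} + O\!\left(\frac{k\ell^{2}}{N^{2}}\right) + O\!\left(\frac{k^{2}\ell}{N^{2}}\right),
\end{equation}
while $\operatorname*{E}[X]=3k\ell/N$. Dividing and simplifying,
\begin{equation}
\frac{\sqrt{\mathrm{Var}(X)}}{\operatorname*{E}[X]} \;=\; O\!\left(\sqrt{\tfrac{N}{k\ell}} + \tfrac{1}{\sqrt{k}} + \tfrac{1}{\sqrt{\ell}}\right) \;=\; O\!\left(\sqrt{\tfrac{N}{k\ell}}\right),
\end{equation}
where the last step uses $k\le M$ and $\ell\le N$ so that the $\sqrt{N/(k\ell)}$ term dominates the other two.

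The main obstacle is the variance calculation: one must be careful to separate the four types of edge pairs and to notice that only two of them (same clause or same variable) have positive covariance, and then invoke the balance condition to count those pairs tightly. Everything else is mechanical. One subtle point worth double-checking is that the Merlins' ``random embedding'' strategy can be derandomized by convexity before the variation-distance bound is applied, which is already flagged in the text preceding the lemma.
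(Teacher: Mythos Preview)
Your proposal is correct and follows essentially the same approach as the paper's proof: both identify $\mathcal{D}(I,J)/\mathcal{U}(I,J)=S_{IJ}/\operatorname{E}[S_{IJ}]$ (your $X$ is the paper's $S_{IJ}$), apply Cauchy--Schwarz to reduce to a second-moment bound, and split the resulting sum into the same four cases according to whether $i=i'$ and/or $j=j'$. The only cosmetic difference is that you phrase the computation in terms of $\mathrm{Var}(X)$ and invoke negative correlation of sampling without replacement to drop the $(i\neq i',\,j\neq j')$ case immediately, whereas the paper computes $\operatorname{E}[S_{IJ}^{2}]$ term by term and only subtracts the squared mean at the end; the arithmetic and the final bound are identical.
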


\begin{proof}
Let $A=\left(  a_{ij}\right)  \in\left\{  0,1\right\}  ^{M\times N}$\ be the
incidence matrix of the \textsc{3Sat}\ instance $\phi$. \ That is, set
$a_{ij}:=1$\ if the clause $C_{i}$\ involves the variable $x_{j}$, and
$a_{ij}:=0$\ otherwise. \ Note that, by the balance condition, we must have
$\sum_{ij}a_{ij}=3M=dN$\ (where $d$ is the number of clauses that each
variable appears in), and%
\begin{equation}
\sum_{j\in\left[  N\right]  }a_{ij}=3~~\forall i,~~~~~\sum_{i\in\left[
M\right]  }a_{ij}=d~~\forall j.
\end{equation}
Given any $I\subseteq\left[  M\right]  $\ and $J\subseteq\left[  N\right]  $,
define%
\begin{equation}
S_{IJ}:=\sum_{i\in I,j\in J}a_{ij},
\end{equation}
and observe that%
\begin{equation}
\operatorname*{E}_{\left\vert I\right\vert =k,\left\vert J\right\vert =\ell
}\left[  S_{IJ}\right]  =\frac{3k\ell}{N}.
\end{equation}

In the clause/variable game $G_{\phi}$, Arthur chooses his input $\left(
i,j\right)  \in\left[  M\right]  \times\left[  N\right]  $\ uniformly at
random subject to $a_{ij}=1$. \ Now consider an $\left(  I,J\right)  $\ drawn
from $\mathcal{D}$. \ By symmetry, $\left(  I,J\right)  $\ is equally likely
to have been formed starting from any input $\left(  i,j\right)  \in I\times
J$\ such that $a_{ij}=1$. \ This means that $\Pr_{\mathcal{D}}\left[  \left(
I,J\right)  \right]  $ must simply be proportional to $S_{IJ}$, and
normalization gives us the rest:%
\begin{equation}
\Pr_{\mathcal{D}}\left[  \left(  I,J\right)  \right]  =\Pr_{\mathcal{U}%
}\left[  \left(  I,J\right)  \right]  \cdot\frac{S_{IJ}}{3k\ell/N}.
\end{equation}
Thus,%
\begin{align}
\left\Vert \mathcal{D}-\mathcal{U}\right\Vert  &  =\frac{1}{2}\sum_{\left\vert
I\right\vert =k,\left\vert J\right\vert =\ell}\left\vert \Pr_{\mathcal{D}%
}\left[  \left(  I,J\right)  \right]  -\Pr_{\mathcal{U}}\left[  \left(
I,J\right)  \right]  \right\vert \\
&  =\frac{1}{2}\operatorname*{E}_{\left\vert I\right\vert =k,\left\vert
J\right\vert =\ell}\left[  \left\vert \frac{S_{IJ}}{3k\ell/N}-1\right\vert
\right] \\
&  \leq\frac{1}{2}\sqrt{\operatorname*{E}_{\left\vert I\right\vert
=k,\left\vert J\right\vert =\ell}\left[  \left(  \frac{S_{IJ}}{3k\ell
/N}-1\right)  ^{2}\right]  }\label{csi}\\
&  =\frac{1}{2}\sqrt{\frac{\operatorname*{E}_{\left\vert I\right\vert
=k,\left\vert J\right\vert =\ell}\left[  S_{IJ}^{2}\right]  }{\left(
3k\ell/N\right)  ^{2}}-1}%
\end{align}
where line (\ref{csi}) used Cauchy-Schwarz. \ Now,%
\begin{align}
\operatorname*{E}_{\left\vert I\right\vert =k,\left\vert J\right\vert =\ell
}\left[  S_{IJ}^{2}\right]   &  =\operatorname*{E}_{\left\vert I\right\vert
=k,\left\vert J\right\vert =\ell}\left[  \sum_{i,i^{\prime}\in I,~~j,j^{\prime
}\in J}a_{ij}a_{i^{\prime}j^{\prime}}\right] \\
&  =\sum_{i,i^{\prime}\in\left[  M\right]  ,~~j,j^{\prime}\in\left[  N\right]
}a_{ij}a_{i^{\prime}j^{\prime}}\Pr_{\left\vert I\right\vert =k,\left\vert
J\right\vert =\ell}\left[  i,i^{\prime}\in I,~~j,j^{\prime}\in J\right]  .
\end{align}
Here it is convenient to divide the sum into four cases: the case
$i=i^{\prime}$\ and $j=j^{\prime}$, the case $i=i^{\prime}$\ but $j\neq
j^{\prime}$, the case $j=j^{\prime}$\ but $i\neq i^{\prime}$, and the case
$i\neq i^{\prime}$\ and $j=j^{\prime}$. \ These cases give us respectively:%
\begin{align}
\sum_{i\in\left[  M\right]  ,~~j\in\left[  N\right]  }a_{ij}\Pr_{\left\vert
I\right\vert =k,\left\vert J\right\vert =\ell}\left[  i,i^{\prime}\in
I,~~j,j^{\prime}\in J\right]   &  \leq3M\frac{k\ell}{MN},\\
\sum_{i\in\left[  M\right]  ,~~j\neq j^{\prime}\in\left[  N\right]  }%
a_{ij}a_{ij^{\prime}}\Pr_{\left\vert I\right\vert =k,\left\vert J\right\vert
=\ell}\left[  i,i^{\prime}\in I,~~j,j^{\prime}\in J\right]   &  \leq
6M\frac{k\ell\left(  \ell-1\right)  }{MN\left(  N-1\right)  },\\
\sum_{i\neq i^{\prime}\in\left[  M\right]  ,~~j\in\left[  N\right]  }%
a_{ij}a_{i^{\prime}j}\Pr_{\left\vert I\right\vert =k,\left\vert J\right\vert
=\ell}\left[  i,i^{\prime}\in I,~~j,j^{\prime}\in J\right]   &  \leq d\left(
d-1\right)  N\frac{k\left(  k-1\right)  \ell}{M\left(  M-1\right)  N},\\
\sum_{i\neq i^{\prime}\in\left[  M\right]  ,~~j\neq j^{\prime}\in\left[
N\right]  }a_{ij}a_{i^{\prime}j^{\prime}}\Pr_{\left\vert I\right\vert
=k,\left\vert J\right\vert =\ell}\left[  i,i^{\prime}\in I,~~j,j^{\prime}\in
J\right]   &  \leq\left(  3M\right)  ^{2}\frac{k\left(  k-1\right)
\ell\left(  \ell-1\right)  }{M\left(  M-1\right)  N\left(  N-1\right)  }.
\end{align}
Hence%
\begin{align}
&  \operatorname*{E}_{\left\vert I\right\vert =k,\left\vert J\right\vert
=\ell}\left[  S_{IJ}^{2}\right] \\
&  \leq3M\frac{k\ell}{MN}+6M\frac{k\ell\left(  \ell-1\right)  }{MN\left(
N-1\right)  }+d\left(  d-1\right)  N\frac{k\left(  k-1\right)  \ell}{M\left(
M-1\right)  N}+\left(  3M\right)  ^{2}\frac{k\left(  k-1\right)  \ell\left(
\ell-1\right)  }{M\left(  M-1\right)  N\left(  N-1\right)  }\\
&  =\left(  \frac{3k\ell}{N}\right)  ^{2}\left(  O\left(  \frac{N}{k\ell
}\right)  +O\left(  \frac{1}{k}\right)  +O\left(  \frac{1}{\ell}\right)
+1\right) \\
&  =\left(  \frac{3k\ell}{N}\right)  ^{2}\left(  1+O\left(  \frac{N}{k\ell
}\right)  \right)  ,
\end{align}
where we treated $d$ as a constant. \ Therefore%
\begin{equation}
\left\Vert \mathcal{D}-\mathcal{U}\right\Vert \leq\frac{1}{2}\sqrt
{\frac{\operatorname*{E}_{\left\vert I\right\vert =k,\left\vert J\right\vert
=\ell}\left[  S_{IJ}^{2}\right]  }{\left(  3k\ell/N\right)  ^{2}}-1}=O\left(
\sqrt{\frac{N}{k\ell}}\right)  .
\end{equation}

\end{proof}

This completes the proof of Theorem \ref{gbdthm}---showing that if $\phi$\ is
unsatisfiable, then%
\begin{equation}
\omega(G_{\phi}^{k\times\ell})\leq\omega\left(  G_{\phi}\right)  +O\left(
\sqrt{\frac{N}{k\ell}}\right)  \leq1-\Omega\left(  1\right)  ,
\end{equation}
provided we set $k=\ell=c\sqrt{N}$\ for a sufficiently large constant $c$.
\ Theorem \ref{gbdthm}, in turn, gives us the following corollary.

\begin{corollary}
\label{firstcor}There exists an $\mathsf{AM}\left(  2\right)  $\ protocol for
\textsc{3Sat}\ that uses $\widetilde{O}(\sqrt{n})$\ communication, and that
has a $1$\ vs.\ $1-\epsilon$\ completeness/soundness gap for some constant
$\epsilon>0$.
\end{corollary}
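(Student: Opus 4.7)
The plan is to combine the pieces already developed in this section. Given a \textsc{3Sat} instance $\varphi$ of size $n$, I would first apply Dinur's PCP theorem (Theorem \ref{pcpthm}) to produce in polynomial time a balanced \textsc{3Sat} instance $\phi$ of size $N = n\operatorname*{polylog}n$ with $\operatorname*{SAT}(\varphi)=1 \Rightarrow \operatorname*{SAT}(\phi)=1$ and $\operatorname*{SAT}(\varphi)<1 \Rightarrow \operatorname*{SAT}(\phi)<1-\epsilon$. Then Corollary \ref{phicor} gives $\omega(G_\phi) < 1-\epsilon/3$ in the unsatisfiable case.

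The actual $\mathsf{AM}(2)$ protocol is simply the birthday repetition $G_\phi^{k\times\ell}$ with $k=\ell=c\sqrt{N}$ for a sufficiently large constant $c$. Concretely, Arthur chooses uniformly random $I \in \binom{[M]}{k}$ and $J \in \binom{[N]}{\ell}$ independently, sends $I$ to Merlin$_1$ and $J$ to Merlin$_2$, receives back an assignment to the variables appearing in $\{C_i : i \in I\}$ and a value for each $x_j$ with $j \in J$, and accepts iff on every $(i,j) \in I\times J$ with $x_j$ or $\neg x_j$ appearing in $C_i$ the two assignments are consistent and Merlin$_1$'s assignment satisfies $C_i$. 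Since $k,\ell = O(\sqrt{N})$ and each index and answer symbol is $O(\log N)$ bits, Arthur's questions and the Merlins' answers are all of length $\widetilde{O}(\sqrt{N}) = \widetilde{O}(\sqrt{n})$.

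Completeness is immediate: if $\varphi$ is satisfiable, then so is $\phi$, and both Merlins simply answer according to a common global satisfying assignment, winning with probability $1$. For soundness, if $\varphi$ is unsatisfiable, then applying Theorem \ref{gbdthm} with $k=\ell=c\sqrt{N}$ yields
\begin{equation}
\omega(G_\phi^{k\times\ell}) \leq \omega(G_\phi) + O\!\left(\sqrt{N/(k\ell)}\right) \leq 1 - \epsilon/3 + O(1/c),
\end{equation}
which is at most $1-\epsilon/4$ once $c$ is chosen large enough in terms of the absolute constant hidden in the $O(\cdot)$ from Theorem \ref{gbdthm}. This gives the desired $1$ vs.\ $1-\epsilon'$ gap for the constant $\epsilon' := \epsilon/4 > 0$.

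There is essentially no obstacle left: the nontrivial content has already been absorbed into Theorem \ref{pcpthm}, Proposition \ref{cvsound}, and Theorem \ref{gbdthm}. The only thing to double-check in writing up the proof is bookkeeping, namely that $N = n\operatorname*{polylog}n$ is preserved through the reduction so that $\sqrt{N}\operatorname*{polylog}N = \widetilde{O}(\sqrt{n})$, and that the $O(\sqrt{N/(k\ell)})$ error term from Theorem \ref{gbdthm} can indeed be made smaller than $\epsilon/12$ by an appropriate constant choice of $c$, independent of $n$.
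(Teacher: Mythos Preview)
Your proposal is correct and follows essentially the same approach as the paper's own proof: apply Theorem~\ref{pcpthm} to get $\phi$ of size $N=n\operatorname*{polylog}n$, run the birthday game $G_\phi^{k\times k}$ with $k=c\sqrt{N}$, and invoke Theorem~\ref{gbdthm} for soundness. The only point the paper mentions that you leave implicit is that Arthur's verification function $V_{BD}$ is computable in time $\operatorname*{poly}(n)$, but this is routine.
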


\begin{proof}
Given a \textsc{3Sat}\ instance $\varphi$\ of size $n$, we apply Theorem
\ref{pcpthm}\ to get a PCP $\phi$\ of size $N=n\operatorname*{polylog}n$. \ We
then consider the birthday game $G_{\phi}^{k\times k}$, where $k=c\sqrt{N}%
$\ for some large constant $c$. \ Clearly, if $\varphi$\ is satisfiable then
$\omega(G_{\phi}^{k\times k})=1$, while if $\varphi$\ is unsatisfiable then
$\omega(G_{\phi}^{k\times k})\leq1-\epsilon$ for some constant $\epsilon>0$.
\ The only further observation we need to make is that Arthur can apply his
verification function $V_{BD}$\ in time polynomial in $n$.
\end{proof}

Of course, one way to state our $\mathsf{AM}\left(  2\right)  $\ protocol is
as a \textit{reduction}: starting with a \textsc{3Sat}\ instance $\varphi$\ of
size $n$, we produce a free game $G$\ of size $2^{\widetilde{O}\left(
\sqrt{n}\right)  }$ in $2^{\widetilde{O}\left(  \sqrt{n}\right)  }$\ time,
such that $\omega\left(  G\right)  =1$\ if $\varphi$\ is satisfiable and
$\omega\left(  G\right)  \leq1-\epsilon$\ if $\varphi$\ is unsatisfiable.
\ This immediately implies that, assuming the Exponential Time Hypothesis,
there must be some constant $\epsilon>0$\ such that the \textsc{FreeGame}%
$_{\varepsilon}$\ problem requires $n^{\widetilde{\Omega}\left(  \log
n\right)  }$ time for all $\varepsilon\leq\epsilon$.

However, we would like to do better than that, and also understand how the
complexity of \textsc{FreeGame}$_{\varepsilon}$\ depends on the error
$\varepsilon=\varepsilon(n)$. \ Unfortunately, our previous analysis was
deficient in two ways: one that becomes relevant when $\varepsilon$\ is very
small, and another that becomes relevant when $\varepsilon$\ is large. \ The
first deficiency is that, while we showed that the distributions $\mathcal{D}$
and $\mathcal{U}$\ had variation distance $O(\sqrt{N/k\ell})$, that bound
gives nothing if $k,\ell\ll\sqrt{N}$, which is the relevant situation for
small $\varepsilon$. \ And this prevents us from showing that, if
$\varepsilon=o\left(  1\right)  $, then \textsc{FreeGame}$_{\varepsilon}%
$\ requires $n^{\widetilde{\Omega}(\varepsilon^{-1}\log n)}$\ time assuming
the ETH. \ The second deficiency is that, because of our reliance on the
clause/variable game, we were unable to prove \textit{anything} when
$\varepsilon$\ was greater than some small, fixed constant $\epsilon$. \ This
is particularly inconvenient, since it prevents us from saying that we have an
\textquotedblleft$\mathsf{AM}\left(  2\right)  $\ protocol,\textquotedblright%
\ if $\mathsf{AM}\left(  2\right)  $\ is defined with the conventional
completeness/soundness gap of $2/3$\ vs.\ $1/3$. \ The next two subsections
will remedy these deficiencies.

\subsection{\label{HIGHERROR}The High-Error Case}

Our goal, in this subsection, is to show that if $\varepsilon=o\left(
1\right)  $, then deciding whether $\omega\left(  G\right)  =1$\ or
$\omega\left(  G\right)  \leq1-\varepsilon$\ for a given free game
$G$\ requires $n^{\widetilde{\Omega}(\varepsilon^{-1}\log n)}$\ time assuming
the ETH. \ (Later, Theorem \ref{improvedalg}\ will give an algorithm that
nearly achieves this lower bound.) \ To prove the $\varepsilon$-dependent
hardness result, we first need a simple combinatorial lemma, which can be seen
as a generalization of the Birthday Paradox to regular bipartite graphs.

\begin{lemma}
\label{bipartitelem}Consider a bipartite graph, with $M$ left-vertices each
having degree $c$, and $N$ right-vertices each having degree $d$. \ Choose $k$
left-vertices and $\ell$\ right-vertices uniformly at random, and let $H$ be
the induced subgraph that they form. \ Then%
\begin{equation}
\Pr\left[  H\text{ contains an edge}\right]  \geq\frac{ck\ell}{N}\left(
1-\frac{c^{2}k^{2}}{N}-\frac{ck\ell}{N}\right)  .
\end{equation}

\end{lemma}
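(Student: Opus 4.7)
The plan is to reduce the problem to a hitting question: writing $N(I) \subseteq [N]$ for the set of right-vertices adjacent to at least one $u \in I$, the event that $H$ contains an edge is the same as $J \cap N(I) \neq \emptyset$. I would attack this by applying a truncated inclusion-exclusion (Bonferroni) inequality twice, once to lower-bound $|N(I)|$ and once to lower-bound the conditional hitting probability.

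For the first step, Bonferroni applied to $N(I) = \bigcup_{u \in I} N(u)$ gives
\begin{equation}
|N(I)| \geq \sum_{u \in I} |N(u)| - \sum_{\{u,u'\} \subseteq I} |N(u)\cap N(u')| = ck - \sum_{\{u,u'\} \subseteq I} |N(u)\cap N(u')|.
\end{equation}
The total $\sum_{\{u,u'\}} |N(u)\cap N(u')|$ over all left-pairs counts, for each right-vertex $v$, the $\binom{d}{2}$ pairs of left-neighbors of $v$, giving $N\binom{d}{2}$ overall. A given pair lies in $I$ with probability $\binom{k}{2}/\binom{M}{2}$, so after using the regularity identity $d = cM/N$ to eliminate $d$ and $M$, I would obtain $\mathbb{E}[|N(I)|] \geq ck - \frac{c^2 k^2}{2N}$.

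For the second step, condition on $I$ and apply Bonferroni to the events $\{v \in J\}$ for $v \in N(I)$:
\begin{equation}
\Pr[J \cap N(I) \neq \emptyset \mid I] \geq \frac{|N(I)|\, \ell}{N} - \binom{|N(I)|}{2}\frac{\ell(\ell-1)}{N(N-1)} \geq \frac{|N(I)|\, \ell}{N} - \frac{|N(I)|^2\, \ell^2}{2N^2}.
\end{equation}
Taking expectation over $I$ and using the deterministic inequality $|N(I)| \leq ck$ to dominate $\mathbb{E}[|N(I)|^2]$ by $ck \cdot \mathbb{E}[|N(I)|] \leq c^2 k^2$ yields
\begin{equation}
\Pr[H \text{ has an edge}] \geq \frac{ck\ell}{N} - \frac{c^2 k^2 \ell}{2N^2} - \frac{c^2 k^2 \ell^2}{2N^2},
\end{equation}
which is at least $\frac{ck\ell}{N}\bigl(1 - \frac{c^2 k^2}{N} - \frac{ck\ell}{N}\bigr)$ since $ck \geq 1$ (as $c$ and $k$ are positive integers), matching the claimed bound.

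The main obstacle will be keeping the final expression free of $M$ and $d$ as the lemma demands: this forces the use of the crude worst-case bound $|N(I)| \leq ck$ in place of any finer second-moment control, since a sharper bound would depend on the codegree distribution of the bipartite graph and produce a messier expression. The regularity identity $cM = dN$ is exactly what makes $Nd^2/M^2$ collapse to $c^2/N$, so that $M$ and $d$ vanish from the final answer.
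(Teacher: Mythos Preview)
Your proof is correct and reaches the same bound, but the decomposition differs from the paper's. The paper conditions on the event $E$ that two of the $k$ chosen left-vertices share a right-neighbor: it bounds $\Pr[E]\le c^{2}k^{2}/N$ by a union bound, observes that on $\overline{E}$ the neighborhood $N(I)$ has size exactly $ck$, and then bounds the miss probability $\Pr[J\cap N(I)=\emptyset\mid\overline{E}]\le(1-ck/N)^{\ell}\le 1-\tfrac{ck\ell}{N}+\bigl(\tfrac{ck\ell}{N}\bigr)^{2}$. Multiplying gives the stated bound. You instead work entirely in expectation: one Bonferroni to lower-bound $\mathbb{E}[|N(I)|]$, a second Bonferroni (conditional on $I$) to lower-bound the hitting probability, and then the crude deterministic bound $|N(I)|\le ck$ to control $\mathbb{E}[|N(I)|^{2}]$. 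Your route avoids the conditioning step and is arguably more streamlined; the paper's route makes the two error sources (left-side collisions vs.\ right-side misses) appear as separate multiplicative factors, which is conceptually a bit cleaner. Both rely on the same regularity identity $cM=dN$ to eliminate $M$ and $d$, and both arrive at the same final expression.
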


\begin{proof}
Given a left-vertex $v\in\left[  M\right]  $, let $\mathcal{N}\left(
v\right)  \subseteq\left[  N\right]  $\ be the set of right-neighbors of $v$;
thus $\left\vert \mathcal{N}\left(  v\right)  \right\vert =c$\ for all $v$.
\ Then by regularity, for any fixed $w\in\left[  N\right]  $\ we have%
\begin{equation}
\Pr_{v\in\left[  M\right]  }\left[  w\in\mathcal{N}\left(  v\right)  \right]
=\frac{c}{N}%
\end{equation}
and%
\begin{equation}
\Pr_{v,v^{\prime}\in\left[  M\right]  ~:~v\neq v^{\prime}}\left[
w\in\mathcal{N}\left(  v\right)  \cap\mathcal{N}\left(  v^{\prime}\right)
\right]  \leq\left(  \frac{c}{N}\right)  ^{2}.
\end{equation}
Now let $A$ be the set of left-vertices in $H$ (thus $\left\vert A\right\vert
=k$), and let $E$\ denote the event that there exist two vertices
$v,v^{\prime}\in A$\ with a common neighbor. \ Then by the union bound,%
\begin{align}
\Pr\left[  E\right]   &  \leq\binom{k}{2}\sum_{w\in\left[  N\right]  }%
\Pr_{v,v^{\prime}\in\left[  M\right]  ~:~v\neq v^{\prime}}\left[
w\in\mathcal{N}\left(  v\right)  \cap\mathcal{N}\left(  v^{\prime}\right)
\right] \\
&  \leq\binom{k}{2}\cdot N\left(  \frac{c}{N}\right)  ^{2}\\
&  \leq\frac{c^{2}k^{2}}{N}.
\end{align}
Furthermore, if $E$ fails, then the left-vertices in $H$ have $ck$\ distinct
neighbors. \ So by the Bonferroni inequality, which states (as a special case)
that%
\begin{equation}
\left(  1-\varepsilon\right)  ^{n}\leq1-\varepsilon n+\left(  \varepsilon
n\right)  ^{2},
\end{equation}
we have%
\begin{equation}
\Pr\left[  H\text{ contains no edge
%TCIMACRO{\TEXTsymbol{\vert} }%
%BeginExpansion
$\vert$
%EndExpansion
}\overline{E}\right]  \leq\left(  1-\frac{ck}{N}\right)  ^{\ell}\leq
1-\frac{ck\ell}{N}+\left(  \frac{ck\ell}{N}\right)  ^{2}.
\end{equation}
Hence%
\begin{align}
\Pr\left[  H\text{ contains an edge}\right]   &  \geq\left(  1-\frac
{c^{2}k^{2}}{N}\right)  \left(  \frac{ck\ell}{N}-\left(  \frac{ck\ell}%
{N}\right)  ^{2}\right) \\
&  \geq\frac{ck\ell}{N}\left(  1-\frac{c^{2}k^{2}}{N}-\frac{ck\ell}{N}\right)
.
\end{align}

\end{proof}

We can now prove a more refined upper bound on $\omega(G_{\phi}^{k\times\ell
})$, the success probability in the birthday game, in the case where $k$ and
$\ell$\ are small and $\omega(G_{\phi})$\ is bounded away from $1$.

\begin{lemma}
\label{smallkl}Suppose that $\omega(G_{\phi})\leq1-\epsilon$\ and $k,\ell
\leq\sqrt{\epsilon N}/4$\ for some absolute constant $\epsilon>0$. \ Then%
\begin{equation}
\omega(G_{\phi}^{k\times\ell})\leq1-\Omega\left(  \frac{k\ell}{N}\right)  .
\end{equation}

\end{lemma}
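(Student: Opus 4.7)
The plan is to follow the same template as in the proof of Theorem~\ref{gbdthm}---starting from an arbitrary cheating strategy $(a,b)$ for $G_\phi^{k\times\ell}$, derive the ``natural'' induced strategy for the original clause/variable game $G_\phi$, and exploit $\omega(G_\phi)\leq 1-\epsilon$---but to replace the variation-distance argument of Lemma~\ref{vardist} with a second-moment / Paley--Zygmund argument, since the bound $O(\sqrt{N/(k\ell)})$ is vacuous in the sparse regime $k\ell\ll N$ we are now working in.

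Let $F_{IJ}$ denote the number of edges $(i,j)\in I\times J$ (i.e.\ with $a_{ij}=1$) on which $(a(I),b(J))$ \emph{fails}, meaning either $a(I)$ violates $C_i$ or $a(I)$ and $b(J)$ disagree on $x_j$. Unfolding the natural strategy and using $\Pr_\mathcal{U}[(i,j)\in I\times J]=k\ell/(MN)$ for every fixed edge $(i,j)$ gives the identity
\begin{equation}
\Pr[\text{natural strategy loses in }G_\phi]\;=\;\frac{N}{3k\ell}\operatorname*{E}_{\mathcal{U}}[F_{IJ}].
\end{equation}
By hypothesis the left-hand side is at least $\epsilon$, so $\operatorname*{E}_\mathcal{U}[F_{IJ}]\geq 3\epsilon k\ell/N$. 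Since $1-\omega(G_\phi^{k\times\ell})=\Pr_\mathcal{U}[F_{IJ}\geq 1]$, Cauchy--Schwarz together with the obvious bound $F_{IJ}\leq S_{IJ}$ yields
\begin{equation}
1-\omega(G_\phi^{k\times\ell})\;\geq\;\frac{\operatorname*{E}_\mathcal{U}[F_{IJ}]^{2}}{\operatorname*{E}_\mathcal{U}[S_{IJ}^{2}]}.
\end{equation}

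To finish, I would revisit the four-case decomposition of $\operatorname*{E}_\mathcal{U}[S_{IJ}^{2}]$ from the proof of Lemma~\ref{vardist}, \emph{without} factoring out $(3k\ell/N)^2$ (which is the rescaling relevant to the dense regime). The four cases contribute $O(k\ell/N)$, $O(k\ell^{2}/N^{2})$, $O(k^{2}\ell/N^{2})$, and $O(k^{2}\ell^{2}/N^{2})$ respectively; the hypothesis $k,\ell\leq\sqrt{\epsilon N}/4$ makes each of $k^{2}/N$, $k\ell/N$, $\ell^{2}/N$ be $O(1)$, so the leading diagonal term $O(k\ell/N)$ dominates and $\operatorname*{E}_\mathcal{U}[S_{IJ}^{2}]=O(k\ell/N)$. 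Plugging in gives $1-\omega(G_\phi^{k\times\ell})\geq \Omega(\epsilon^{2} k\ell/N)=\Omega(k\ell/N)$, as required.

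The main obstacle is extracting a useful lower bound on $\Pr[F_{IJ}\geq 1]$ from one on $\operatorname*{E}[F_{IJ}]$: a Markov-style bound using the trivial upper bound $F_{IJ}\leq \min(3k,d\ell)$ loses a factor of $k$ (or $\ell$), and Lemma~\ref{bipartitelem} only controls $\Pr[S_{IJ}\geq 1]$, not $\Pr[F_{IJ}\geq 1]$. Paley--Zygmund works here precisely because the range assumption $k,\ell\leq\sqrt{\epsilon N}/4$ forces $S_{IJ}$ (and hence $F_{IJ}$) to be essentially a $\{0,1\}$-valued random variable, so its second moment is comparable to its first moment and no factor of $k$ or $\ell$ is lost.
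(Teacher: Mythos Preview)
Your argument is correct, but the paper takes a different route. Rather than applying Paley--Zygmund to the failure count $F_{IJ}$, the paper splits $\operatorname*{E}_{\mathcal{U}}[V_{BD}]$ into the events $S_{IJ}=0$ and $S_{IJ}\ge 1$, and then uses the change-of-measure identity $\Pr_{\mathcal{U}}[(I,J)]=\Pr_{\mathcal{D}}[(I,J)]\cdot\frac{3k\ell/N}{S_{IJ}}$ (from the proof of Lemma~\ref{vardist}) on the $S_{IJ}\ge 1$ part. Since $1/S_{IJ}\le 1$ there, this gives $\operatorname*{E}_{\mathcal{U}}[V_{BD}]\le\Pr_{\mathcal{U}}[S_{IJ}=0]+\frac{3k\ell}{N}\operatorname*{E}_{\mathcal{D}}[V_{BD}]\le\Pr_{\mathcal{U}}[S_{IJ}=0]+\frac{3k\ell}{N}\,\omega(G_\phi)$, and then Lemma~\ref{bipartitelem} is invoked to lower-bound $\Pr_{\mathcal{U}}[S_{IJ}\ge 1]$. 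So the paper does use Lemma~\ref{bipartitelem} after all---it just never needs to convert a bound on $\Pr[S_{IJ}\ge 1]$ into one on $\Pr[F_{IJ}\ge 1]$, because the change-of-measure step already absorbs the $\omega(G_\phi)\le 1-\epsilon$ hypothesis.

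The trade-offs: your second-moment approach is self-contained (it recycles the computation of $\operatorname*{E}[S_{IJ}^2]$ from Lemma~\ref{vardist} and dispenses with Lemma~\ref{bipartitelem} entirely), whereas the paper's argument is a first-moment calculation and yields the sharper dependence $1-\omega(G_\phi^{k\times\ell})\ge\Omega(\epsilon\, k\ell/N)$ rather than your $\Omega(\epsilon^2 k\ell/N)$. Since $\epsilon$ is a fixed constant here the distinction is immaterial for the stated lemma, but the linear-in-$\epsilon$ bound would matter if one later wanted to let $\epsilon\to 0$.
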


\begin{proof}
Reusing notation from Section \ref{BASIC3SAT}\ (and in particular, from the
proof of Lemma \ref{vardist}), we have%
\begin{align}
\omega(G_{\phi}^{k\times\ell})  &  \leq\operatorname*{E}_{\mathcal{U}}\left[
V_{BD}\right] \\
&  =\sum_{I,J}\Pr_{\mathcal{U}}\left[  I,J\right]  \cdot V_{BD}\left(
I,J,a\left(  I\right)  ,b\left(  J\right)  \right) \\
&  \leq\Pr_{\mathcal{U}}\left[  S_{IJ}=0\right]  +\sum_{I,J:S_{IJ}\geq1}%
\Pr_{\mathcal{U}}\left[  I,J\right]  \cdot V_{BD}\left(  I,J,a\left(
I\right)  ,b\left(  J\right)  \right) \\
&  =\Pr_{\mathcal{U}}\left[  S_{IJ}=0\right]  +\sum_{I,J:S_{IJ}\geq1}%
\Pr_{\mathcal{D}}\left[  \left(  I,J\right)  \right]  \frac{3k\ell/N}{S_{IJ}%
}\cdot V_{BD}\left(  I,J,a\left(  I\right)  ,b\left(  J\right)  \right) \\
&  \leq\Pr_{\mathcal{U}}\left[  S_{IJ}=0\right]  +\frac{3k\ell}{N}%
\sum_{I,J:S_{IJ}\geq1}\Pr_{\mathcal{D}}\left[  \left(  I,J\right)  \right]
\cdot V_{BD}\left(  I,J,a\left(  I\right)  ,b\left(  J\right)  \right) \\
&  =\Pr_{\mathcal{U}}\left[  S_{IJ}=0\right]  +\frac{3k\ell}{N}%
\operatorname*{E}_{\mathcal{D}}\left[  V_{BD}\right] \\
&  \leq\Pr_{\mathcal{U}}\left[  S_{IJ}=0\right]  +\frac{3k\ell}{N}%
\omega(G_{\phi})\\
&  \leq\left(  1-\frac{3k\ell}{N}\left(  1-\frac{9k^{2}}{N}-\frac{3k\ell}%
{N}\right)  \right)  +\frac{3k\ell}{N}\left(  1-\epsilon\right) \label{sij0}\\
&  =1-\frac{3k\ell}{N}\left(  \epsilon-\frac{9k^{2}}{N}-\frac{3k\ell}%
{N}\right) \\
&  =1-\Omega\left(  \frac{k\ell}{N}\right)  , \label{last}%
\end{align}
where line (\ref{sij0}) used Lemma \ref{bipartitelem}, and line (\ref{last})
used the assumption that $k,\ell\leq\sqrt{\epsilon N}/4$.
\end{proof}

Lemma \ref{smallkl} has the following corollary, which gives a counterpart of
Theorem \ref{mainthm1}\ for $\mathsf{AM}\left[  2\right]  $\ protocols with
less than $\sqrt{n}$\ communication.

\begin{corollary}
\label{am2higherr}For all $\varepsilon>0$, there exists an $\mathsf{AM}\left(
2\right)  $\ protocol for \textsc{3Sat}\ instances of size $n$\ which uses
$O(\sqrt{\varepsilon n}\operatorname*{polylog}n)$ bits of communication, and
which has a $1$\ vs.\ $1-\varepsilon$\ completeness/soundness gap.
\end{corollary}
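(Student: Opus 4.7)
The plan is to rerun the basic construction from Section~\ref{BASIC3SAT}, but with the subset sizes $k,\ell$ scaled to $\varepsilon$ and with the sharper soundness bound of Lemma~\ref{smallkl} replacing the coarser one from Theorem~\ref{gbdthm}. So first I would apply Dinur's PCP theorem (Theorem~\ref{pcpthm}) to $\varphi$, producing a balanced 3SAT instance $\phi$ of size $N = n\operatorname*{polylog}n$ with $\operatorname{SAT}(\phi)\in\{1\}\cup[0,1-\epsilon)$ for an absolute constant $\epsilon>0$, so that by Proposition~\ref{cvsound} the clause/variable game $G_\phi$ has $\omega(G_\phi)\le 1-\epsilon/3$ whenever $\varphi$ is unsatisfiable.

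Next, I would consider the birthday game $G_\phi^{k\times\ell}$ with $k=\ell=c\sqrt{\varepsilon N}$ for a constant $c>0$ chosen so that (a) $k,\ell\le\sqrt{(\epsilon/3)N}/4$, which lets us invoke Lemma~\ref{smallkl} against the gap $\epsilon/3$, and (b) $c$ is large enough that the hidden constant in the Lemma~\ref{smallkl} conclusion $\omega(G_\phi^{k\times\ell})\le 1-\Omega(k\ell/N)=1-\Omega(c^2\varepsilon)$ absorbs into a clean $1-\varepsilon$ bound. Perfect completeness is automatic since a satisfying assignment to $\phi$ wins $G_\phi^{k\times\ell}$ with probability $1$. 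Each message is a $k$- or $\ell$-subset of $[M]$ or $[N]$ together with its assignment, so the total communication is $O((k+\ell)\log N)=O(\sqrt{\varepsilon n}\operatorname*{polylog}n)$, as required.

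The main subtlety is that the two requirements on $c$ --- small enough to respect the $\sqrt{(\epsilon/3)N}/4$ bound and large enough to push the $\Omega(k\ell/N)$ past $\varepsilon$ --- can be met simultaneously only when $\varepsilon$ lies below some fixed absolute constant $\varepsilon_0$ depending on $\epsilon$. For $\varepsilon\ge\varepsilon_0$, however, the communication target $O(\sqrt{\varepsilon n}\operatorname*{polylog}n)$ is just $\widetilde{O}(\sqrt n)$, which is already delivered by Corollary~\ref{firstcor} (whose soundness $1-\epsilon$ is at least as strong as the required $1-\varepsilon$ whenever $\varepsilon\le\epsilon$, and otherwise the trade-off is handled by enlarging $k$ within the $\widetilde O(\sqrt n)$ budget and reapplying Theorem~\ref{gbdthm}). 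Thus the two regimes together cover all $\varepsilon>0$.

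The real work has already been done in Lemma~\ref{smallkl}: the extra mileage beyond Theorem~\ref{gbdthm} comes from splitting Arthur's acceptance probability into the contribution from $(I,J)$ with $S_{IJ}=0$ (controlled by the bipartite birthday estimate of Lemma~\ref{bipartitelem}) and the contribution from $(I,J)$ with $S_{IJ}\ge 1$ (controlled via the reweighting $d\mathcal{D}/d\mathcal{U}\propto S_{IJ}$). So once that lemma is in hand, the present corollary is essentially a parameter-setting exercise, and the only genuine obstacle --- the compatibility of the two conditions on $c$ --- is sidestepped by reducing to Corollary~\ref{firstcor} above the threshold.
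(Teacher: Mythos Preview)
Your proposal is correct and follows essentially the same approach as the paper, which states Corollary~\ref{am2higherr} as an immediate consequence of Lemma~\ref{smallkl} without an explicit proof: apply Dinur's PCP, run the birthday game with $k=\ell=\Theta(\sqrt{\varepsilon N})$, and read off the $1-\Omega(k\ell/N)=1-\Omega(\varepsilon)$ soundness from Lemma~\ref{smallkl}. Your careful treatment of the constant-$\varepsilon$ regime is more than the paper spells out; note only that your fallback to Theorem~\ref{gbdthm} for $\varepsilon>\epsilon$ cannot actually push soundness below $\omega(G_\phi)\approx 1-\epsilon/3$, but this is immaterial since the corollary (and the entire subsection) is aimed at the $\varepsilon=o(1)$ regime, and the paper itself implicitly restricts to $\varepsilon$ below an absolute constant.
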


We also get the desired hardness result for \textsc{FreeGame}.

\begin{theorem}
\label{freegamecor2}Assuming the ETH, there exists a constant $\Delta>0$\ such
that \textsc{FreeGame}$_{\varepsilon}$\ requires $n^{\widetilde{\Omega
}(\varepsilon^{-1}\log n)}$\ deterministic time, for all $\varepsilon
\in\left[  1/n,\Delta\right]  $. \ (Likewise, \textsc{FreeGame}$_{\varepsilon
}$\ requires $n^{\widetilde{\Omega}(\varepsilon^{-1}\log n)}$\ randomized time
assuming the Randomized ETH.)
\end{theorem}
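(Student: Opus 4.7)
The plan is to derive the hardness bound directly from the reduction implicit in Corollary \ref{am2higherr} (equivalently, from Lemma \ref{smallkl}), using the birthday game as a size-efficient encoding of \textsc{3Sat} into \textsc{FreeGame}$_{\varepsilon}$. First I would run Dinur's PCP (Theorem \ref{pcpthm}) on a given \textsc{3Sat} instance $\varphi$ of size $n$ to obtain a balanced PCP $\phi$ of size $N = n\operatorname{polylog} n$ with a constant completeness/soundness gap $\epsilon$. Then I would form the birthday game $G_{\phi}^{k\times k}$ and choose $k$ so that the resulting soundness slack matches the target error $\varepsilon$: by Lemma \ref{smallkl}, $\omega(G_{\phi}^{k\times k}) \leq 1 - \Omega(k^2/N)$ whenever $k \leq \sqrt{\epsilon N}/4$, so taking $k = \Theta(\sqrt{\varepsilon N})$ (for $\varepsilon$ below the absolute constant $\Delta$ corresponding to $\epsilon$) produces a free game with perfect completeness and soundness at most $1-\varepsilon$.

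Next I would bound the size of the resulting free game. Arthur sends a $k$-subset of $[M]$ and a $k$-subset of $[N]$, and each Merlin replies with a string of length $O(k)$; the verification function is computable in $\operatorname{poly}(n)$ time. So the instance of \textsc{FreeGame}$_{\varepsilon}$ we produce has size $M = 2^{O(k\log N)} = 2^{\widetilde{O}(\sqrt{\varepsilon n})}$ and is constructible in time polynomial in $M$. Consequently, any deterministic algorithm solving \textsc{FreeGame}$_{\varepsilon}$ in time $T(M)$ yields a deterministic algorithm for \textsc{3Sat} running in time $T\!\left(2^{\widetilde{O}(\sqrt{\varepsilon n})}\right)+\operatorname{poly}(n)$.

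Invoking the ETH, the latter must be at least $2^{\Omega(n)}$. Substituting the relation $\log M = \widetilde{\Theta}(\sqrt{\varepsilon n})$, i.e.\ $n = \widetilde{\Theta}((\log M)^2/\varepsilon)$, we get
\begin{equation}
T(M) \;\geq\; 2^{\,\widetilde{\Omega}\!\left((\log M)^2/\varepsilon\right)} \;=\; M^{\widetilde{\Omega}(\varepsilon^{-1}\log M)},
\end{equation}
which is the claimed lower bound. The randomized version follows by replacing the ETH with the Randomized ETH at the final step, since the reduction itself is deterministic.

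The main obstacle is making sure the parameter regime in Lemma \ref{smallkl} is respected across the entire range $\varepsilon\in[1/n,\Delta]$: the constraint $k \leq \sqrt{\epsilon N}/4$ forces $\varepsilon$ to lie below a fixed constant $\Delta$ depending on Dinur's PCP soundness gap $\epsilon$, and the lower bound $\varepsilon \geq 1/n$ is what keeps $k \geq 1$ so that the birthday game is nontrivial. Beyond that, everything is bookkeeping: tracking the $\operatorname{polylog} n$ factors from Dinur's PCP through the $\widetilde{\Omega}$ in the exponent, and checking that the reduction time $\operatorname{poly}(M) = 2^{\widetilde{O}(\sqrt{\varepsilon n})}$ is dominated by the ETH lower bound $2^{\Omega(n)}$ for all $\varepsilon \leq \Delta$.
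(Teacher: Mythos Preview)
Your proposal is correct and follows essentially the same route as the paper: apply Dinur's PCP, form the birthday game $G_{\phi}^{k\times k}$ with $k=\Theta(\sqrt{\varepsilon N})$, invoke Lemma~\ref{smallkl} to get soundness $1-\Omega(\varepsilon)$, and then contrapose against the ETH using $\log M=\widetilde{\Theta}(\sqrt{\varepsilon n})$. The paper's proof differs only in that it spells out the polylog bookkeeping more explicitly---observing that either $\log k$ or $\log(N/k^{2})$ is $\Omega(\log N)$, so the $\operatorname{polylog}$ in the $\widetilde{\Omega}$ exponent really does absorb the stray $\log N$ factors uniformly over the whole range of $k$---whereas you defer this to ``bookkeeping''; that step is where the $\widetilde{\Omega}$ rather than $\Omega$ is actually earned, so it is worth writing out.
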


\begin{proof}
Set $\Delta:=\epsilon/16$, where $\epsilon$\ is the constant from Lemma
\ref{smallkl}. \ Fix a function $\varepsilon=\varepsilon\left(  M\right)
\in\left[  1/M,\Delta\right]  $, and suppose that \textsc{FreeGame}%
$_{\varepsilon}$\ instances of size $M$\ were solvable in time%
\begin{equation}
M^{o\left(  \frac{\varepsilon^{-1}\log M}{\operatorname*{polylog}%
(\varepsilon^{-1}\log M)}\right)  }.
\end{equation}
We need to show how, using that, we could decide a \textsc{3Sat}\ instance
$\varphi$\ of size $n$\ in\ time $2^{o\left(  n\right)  }$, thereby violating
the ETH. \ The first step is to convert $\varphi$\ into a PCP $\phi$\ of size
$N=n\operatorname*{polylog}n$. \ Next, we generate the birthday repetition
$G_{\phi}^{k\times k}$, where $k=\sqrt{\varepsilon N}$. \ (Here we use the
assumption $\varepsilon\geq1/n$\ to ensure that $k\geq1$, and we use the
assumption $\varepsilon\leq\epsilon/16$\ to ensure that $k\leq\sqrt{\epsilon
N}/4$.) \ Note that the sizes of $G_{\phi}^{k\times k}$'s question and answer
sets are $M=2^{k\log N}=N^{k}$.

If $\phi$\ is satisfiable then $\omega(G_{\phi}^{k\times k})=1$, while by
Lemma \ref{smallkl}, if $\phi$\ is unsatisfiable then%
\begin{equation}
\omega(G_{\phi}^{k\times k})\leq1-\Omega\left(  \frac{k^{2}}{N}\right)
<1-2\varepsilon.
\end{equation}
So by approximating $\omega(G_{\phi}^{k\times k})$\ to within $\pm\varepsilon
$, we can distinguish these cases and thereby decide whether $\varphi$\ was
satisfiable. \ Using our hypothesized algorithm for \textsc{FreeGame}%
$_{\varepsilon}$, this takes time%
\begin{align}
\exp\left(  o\left(  \frac{\varepsilon^{-1}\log^{2}M}{\operatorname*{polylog}%
(\varepsilon^{-1}\log M)}\right)  \right)   &  =\exp\left(  o\left(
\frac{\left(  k^{2}/N\right)  ^{-1}\cdot k^{2}\log^{2}N}%
{\operatorname*{polylog}(\left(  k^{2}/N\right)  ^{-1}\log N^{k})}\right)
\right) \\
&  =\exp\left(  o\left(  \frac{N\log^{2}N}{\left(  \log\left(  N/k^{2}\right)
+\log k+\log\log N\right)  ^{R}}\right)  \right)
\end{align}
for some constant $R$. \ Note that if $k$ is large, then $\log k$\ is
$\Omega\left(  \log N\right)  $, while if $k$\ is small, then $\log\left(
N/k^{2}\right)  $\ is $\Omega\left(  \log N\right)  $. \ Therefore, provided
$R$\ is large enough, the denominator will contain enough factors of $\log
N$\ to clear all the $\log N$\ factors in the numerator, and our algorithm
will have running time $\exp\left(  o\left(  n\right)  \right)  $, giving the
desired violation of the ETH. \ This reduction produces a deterministic
algorithm if the \textsc{FreeGame}\ algorithm was deterministic, or randomized
if the \textsc{FreeGame}\ algorithm was randomized.
\end{proof}

We conjecture that the bound of Theorem \ref{freegamecor2}\ could be improved
to $n^{\widetilde{\Omega}(\varepsilon^{-2}\log n)}$, by considering free games
$G$\ with $\omega\left(  G\right)  \approx1/2$\ rather than $\omega\left(
G\right)  \approx1$. \ This is a problem that we leave to future work.

\subsection{The Low-Error Case\label{LOWERROR}}

There is one obvious question that we haven't yet addressed: can we give an
$\mathsf{AM}\left(  2\right)  $\ protocol for \textsc{3Sat}\ with
\textit{near-perfect} soundness? \ Or equivalently, given a free game $G$ and
some\ tiny $\delta>0$, can we show that (assuming the ETH) there is no
polynomial-time algorithm even to decide whether $\omega\left(  G\right)
=1$\ or $\omega\left(  G\right)  <\delta$? \ In this section we show that,
using high-powered PCP machinery, we can indeed do this, although the result
we get is probably not optimal.

One's first idea would be to apply ordinary parallel repetition to the
birthday game---i.e., to consider $(G_{\phi}^{k\times\ell})^{m}$ for some
$m>1$. \ Alas, this fails to work for an interesting reason. \ Namely, in the
statement of the Parallel Repetition Theorem (Theorem \ref{prt}), there is a
$1/\log\left\vert A\right\vert \left\vert B\right\vert $ factor in the
exponent, which is known to be necessary in general by a result of Feige and
Verbitsky \cite{feigeverbitsky}. \ That factor immediately pushes the running
time of our putative \textsc{3Sat} algorithm above $2^{O\left(  n\right)  }$,
preventing a contradiction with the ETH.

Note that Rao \cite{rao}\ proved that, for the special case of
\textit{projection games}, one can dramatically improve the Parallel
Repetition Theorem, to show that $\omega\left(  G^{t}\right)  \leq
\omega\left(  G\right)  ^{\Omega\left(  t\right)  }$\ with no dependence on
$\log\left\vert A\right\vert \left\vert B\right\vert $. \ Here a projection
game is a two-prover game $G=\left(  X,Y,A,B,\mathcal{D},V\right)  $\ (not
necessarily free)\ with $V\in\left\{  0,1\right\}  $\ such that, for every
$\left(  x,y\right)  \in X\times Y$\ in the support of $\mathcal{D}$\ and
every $a\in A$, there\ is a unique $b\in B$\ such that $V\left(
x,y,a,b\right)  =1$. \ Unfortunately, while the clause/variable game itself is
a projection game, its birthday repetition is not.

Recently, Shaltiel \cite{shaltiel}\ proved a \textquotedblleft
derandomized\textquotedblright\ version of the Parallel Repetition Theorem for
the special case of free games. \ In particular, given a free game $G=\left(
X,Y,A,B,V\right)  $ with $V\in\left\{  0,1\right\}  $, Shaltiel constructs a
new free game $G_{t}=\left(  X_{t},Y_{t},A_{t},B_{t},V_{t}\right)  $, which
satisfies $\omega\left(  G_{t}\right)  \leq\omega\left(  G\right)  ^{t}$, as
well as\ $\omega\left(  G_{t}\right)  =1$\ whenever $\omega\left(  G\right)
=1$. \ Furthermore, the question sets $X_{t}$\ and $Y_{t}$\ in Shaltiel's game
have size at most $\left(  \left\vert X\right\vert \left\vert Y\right\vert
\left\vert A\right\vert \left\vert B\right\vert \right)  ^{O\left(  t\right)
}$, which is perfect for our application. \ Unfortunately, the \textit{answer}
sets $A_{t}$\ and $B_{t}$\ have size $\exp(\left(  t\log\left\vert
X\right\vert \left\vert Y\right\vert \left\vert A\right\vert \left\vert
B\right\vert \right)  ^{C})$\ for some large constant $C$, and this once again
prevents the desired contradiction with the ETH.

Finally, if we try to apply parallel repetition to the clause/variable game
\textit{before} applying birthday repetition, then the situation is even
worse. \ For even one or two rounds of parallel repetition will blow up the
question sets $X$\ and $Y$\ so that they no longer have size $n^{1+o\left(
1\right)  }$, meaning that we no longer have any hope of finding a collision
using $n^{1/2+o\left(  1\right)  }$\ rounds of birthday repetition.

Currently, then, the best approach we know to the low-error case is simply to
choose a PCP that \textit{already} has low error, and then ensure that
birthday repetition does not increase its error much further. \ In particular,
rather than Theorem \ref{pcpthm}, we can start with the following result of
Moshkovitz and Raz \cite{mr}:

\begin{theorem}
[PCP Theorem, Moshkovitz-Raz Version \cite{mr}]\label{mrthm}Given a
\textsc{3Sat}\ instance $\varphi$\ of size $n$ as well as $\delta>0$, it is
possible in $\operatorname*{poly}\left(  n\right)  $\ time to produce a
$2$-CSP instance $\phi$, with $n^{1+o\left(  1\right)  }\operatorname*{poly}%
\left(  1/\delta\right)  $\ variables and constraints, and over an alphabet
$\Sigma$\ of size $\left\vert \Sigma\right\vert \leq2^{\operatorname*{poly}%
\left(  1/\delta\right)  }$, such that:

\begin{itemize}
\item \textbf{(Completeness)} If $\operatorname*{SAT}\left(  \varphi\right)
=1$\ then $\operatorname*{SAT}\left(  \phi\right)  =1$.

\item \textbf{(Soundness)} If $\operatorname*{SAT}\left(  \varphi\right)
<1$\ then $\operatorname*{SAT}\left(  \phi\right)  <\delta$.

\item \textbf{(Balance)} The constraint graph of $\phi$\ is bipartite, and
every variable appears in exactly $d$ constraints, for some
$d=\operatorname*{poly}\left(  1/\delta\right)  $.
\end{itemize}
\end{theorem}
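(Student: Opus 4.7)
The plan is to build the desired two-query PCP by combining three ingredients: (i) a near-linear-size starting PCP with only constant soundness, (ii) an algebraic soundness-amplification step based on low-degree testing on an extremely low-dimensional cube, and (iii) a composition step with a small inner PCP-of-proximity to collapse the alphabet back down to $2^{\operatorname{poly}(1/\delta)}$. The parameters will be arranged so that the overall blowup is $n^{1+o(1)}\operatorname{poly}(1/\delta)$ in size, and the balance/bipartiteness is enforced by standard post-processing (a constant-degree expander-based splitting trick).

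First I would invoke Dinur's PCP theorem to convert $\varphi$ into a constraint satisfaction instance $\psi$ of size $N_{0}=n\,\operatorname{polylog}n$ over a constant alphabet with constant soundness $1-\epsilon_{0}$. Then I would encode an assignment to $\psi$ by its low-degree extension over a finite field $\mathbb{F}$ of size $|\mathbb{F}|=\operatorname{poly}(1/\delta)$, viewing the variables of $\psi$ as points of a dimension-$m$ grid $H^{m}\subseteq\mathbb{F}^{m}$, where $m$ is chosen as $m=\Theta(\log N_{0}/\log\log N_{0})$ so that $|\mathbb{F}|^{m}=N_{0}^{1+o(1)}$. The outer verifier in the new PCP picks a random affine subspace $s$ of constant dimension (a plane, say) and a random point $p\in s$, asks prover $1$ for the restriction of the low-degree extension to $s$ and prover $2$ for the value at $p$, and accepts iff the restriction is a low-degree polynomial, agrees with the point at $p$, and is consistent with every constraint of $\psi$ that lies inside $s$. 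Each ``plane message'' is a low-degree polynomial of description size $\operatorname{poly}(|\mathbb{F}|)=\operatorname{poly}(1/\delta)$, and correctness (perfect completeness) is immediate.

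For the alphabet to end up at $2^{\operatorname{poly}(1/\delta)}$ rather than $|\mathbb{F}|^{\operatorname{poly}(m)}$, I would then compose the outer verifier with an inner PCP of proximity of size $2^{\operatorname{poly}(1/\delta)}$ that tests the outer predicate on its encoded answers. Using a robust outer verifier (obtained, as in Ben-Sasson et al., by duplicating the plane query across a small expander), the composition lemma guarantees that the soundness of the composed verifier is the product of the outer and inner soundnesses, and the size blowup is only multiplicative in the inner-verifier size. A final expander-based ``regularization'' step makes every variable appear in exactly $d=\operatorname{poly}(1/\delta)$ constraints and makes the constraint graph bipartite, without affecting size or soundness asymptotically.

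The hard part, and the technical heart of Moshkovitz--Raz, is the sub-constant-error soundness analysis of the plane-vs-point low-degree test. A cheating pair of provers might supply plane assignments that agree locally with some low-degree polynomial on many planes without any single global low-degree function explaining their behaviour; at soundness $\delta\ll 1$ the naive union-bound arguments of the constant-error low-degree testing literature (Arora--Sudan, Raz--Safra) break down. The plan here is to use the ``list-decoding'' form of the low-degree test: show that if the provers pass with probability $\geq\delta$, then there is a short list of genuine low-degree polynomials such that the point-prover's answer agrees with one of them on a $\Omega(\delta)$-fraction of points, and then decode to a global assignment satisfying $\geq\delta$ of the constraints of $\psi$, contradicting the soundness of $\psi$. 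Making this work with $m=\Theta(\log N_{0}/\log\log N_{0})$ rather than constant $m$ is the source of the $n^{1+o(1)}$ rather than $n\cdot\operatorname{polylog}n$ size bound, and controlling the error accumulation across composition levels when $m$ grows is the main obstacle I would expect to spend most of my time on.
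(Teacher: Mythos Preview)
The paper does not prove this theorem at all: it is quoted as a black-box result from Moshkovitz and Raz \cite{mr}, exactly like Theorem~\ref{pcpthm} quotes Dinur's PCP theorem. In Section~\ref{LOWERROR} the authors simply state Theorem~\ref{mrthm} and then apply it, replacing the clause/variable game $G_{\phi}$ by the two-prover game $H_{\phi}$ that $\phi$ directly induces, and then running birthday repetition on $H_{\phi}$. No argument for completeness, soundness, or balance of $\phi$ appears anywhere in the paper.

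So your proposal is not wrong so much as misplaced: you have sketched (at a high level, and reasonably accurately) the architecture of the Moshkovitz--Raz construction itself---near-linear starting PCP, low-degree extension over a field of size $\operatorname{poly}(1/\delta)$, a sub-constant-error low-degree test, composition with an inner PCP of proximity, and expander regularization. That is indeed roughly how \cite{mr} proceeds, and you correctly identify the sub-constant-error analysis of the low-degree test as the technical heart. But none of this belongs in the present paper; the intended ``proof'' is simply a citation.
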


Since Theorem \ref{mrthm}\ outputs a $2$-CSP $\phi$, we do not even need to
consider the clause/variable game. \ Rather, $\phi$\ directly gives rise to a
two-prover game $H_{\phi}$, in which Arthur chooses a constraint $C$ of $\phi
$\ uniformly at random, sends one of $C$'s variables to Merlin$_{1}$ and the
other to Merlin$_{2}$, gets back their values, and accepts if and only if the
values satisfy $C$. \ Clearly, if $\operatorname*{SAT}\left(  \varphi\right)
=1$\ then $\omega(H_{\phi})=1$, while if $\operatorname*{SAT}\left(
\varphi\right)  <1$\ then $\omega(H_{\phi})<\delta$.

Now let $N=n^{1+o\left(  1\right)  }\operatorname*{poly}\left(  1/\delta
\right)  $\ be the number of variables in $\phi$, let $k,\ell\in\left[
N\right]  $, and consider the birthday repetition $H_{\phi}^{k\times\ell}$.
\ Observe that, in the variation distance argument from Section
\ref{BASIC3SAT}, the only special property of $G_{\phi}$\ that we used was the
\textit{regularity} of the constraint graph, and that property also holds for
$H_{\phi}$. \ For this reason, we immediately get the following counterpart of
Theorem \ref{gbdthm}:

\begin{theorem}
\label{gbdthm2}For all $k,\ell\in\left[  N\right]  $,%
\begin{equation}
\omega\left(  H_{\phi}\right)  \geq\omega(H_{\phi}^{k\times\ell})-O\left(
\sqrt{\frac{N}{k\ell}}\right)  .
\end{equation}

\end{theorem}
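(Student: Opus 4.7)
The plan is to mirror the proof of Theorem \ref{gbdthm} essentially verbatim, substituting the bipartite constraint graph of $\phi$ for the clause/variable incidence matrix. As the excerpt itself notes, the only property of $G_\phi$ used in that argument was the regularity of the underlying bipartite graph, and Theorem \ref{mrthm} gives us exactly such regularity for $\phi$: the constraint graph is bipartite, and each variable appears in exactly $d = \operatorname{poly}(1/\delta)$ constraints.

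Concretely, I would fix an arbitrary cheating strategy $(a,b)$ for $H_\phi^{k\times\ell}$ with winning probability $p = \operatorname{E}_{(I,J)\sim\mathcal{U}}[V_{BD}(I,J,a(I),b(J))]$. From it, I build the ``natural'' cheating strategy for $H_\phi$: on input variable $x$, Merlin$_1$ chooses $k-1$ additional variables uniformly at random from the appropriate side of the bipartition, forms the set $I$ of size $k$ containing $x$, computes $a(I)$, and returns its $x$-coordinate; Merlin$_2$ acts analogously on his input $y$ with $\ell-1$ extra variables. By convexity we may treat the resulting random strategy as deterministic, and its success probability in $H_\phi$ is at least $q = \operatorname{E}_{(I,J)\sim\mathcal{D}}[V_{BD}(I,J,a(I),b(J))]$, where $\mathcal{D}$ is the distribution on $(I,J)$ pairs induced when $(x,y)$ ranges over the edges of the constraint graph. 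So it suffices to show $\|\mathcal{D}-\mathcal{U}\| = O(\sqrt{N/(k\ell)})$.

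For this variation-distance bound, I would re-run the Cauchy-Schwarz / second-moment calculation from Lemma \ref{vardist}. Let $A = (a_{xy})$ be the bipartite adjacency matrix of the constraint graph, with $\sum_{y} a_{xy} = d$ for every $x$ and $\sum_{x} a_{xy} = d$ for every $y$, and set $S_{IJ} = \sum_{x\in I,\, y\in J} a_{xy}$. Exactly as before, symmetry implies
\begin{equation}
\Pr_{\mathcal{D}}[(I,J)] = \Pr_{\mathcal{U}}[(I,J)] \cdot \frac{S_{IJ}}{\operatorname{E}_{\mathcal{U}}[S_{IJ}]},
\end{equation}
so by Cauchy-Schwarz $\|\mathcal{D}-\mathcal{U}\| \leq \tfrac{1}{2}\sqrt{\operatorname{E}_{\mathcal{U}}[S_{IJ}^2]/(\operatorname{E}_{\mathcal{U}}[S_{IJ}])^2 - 1}$. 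Splitting $\operatorname{E}[S_{IJ}^2]$ into the four cases ($x=x'$ vs. $x\neq x'$ and $y=y'$ vs. $y\neq y'$) and using $d = O(1)$-regularity gives the same $(1+O(N/(k\ell)))$ factor as in Lemma \ref{vardist}, yielding the claimed $O(\sqrt{N/(k\ell)})$ bound. Combining this with $|\operatorname{E}_{\mathcal{D}}[V_{BD}] - \operatorname{E}_{\mathcal{U}}[V_{BD}]| \leq \|\mathcal{D}-\mathcal{U}\|$ gives $q \geq p - O(\sqrt{N/(k\ell)})$, completing the proof.

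There is no real obstacle, since Theorem \ref{mrthm} is tailored to deliver exactly the regularity hypothesis we need. The only mild subtlety worth checking is that $d$ may grow like $\operatorname{poly}(1/\delta)$ rather than being an absolute constant; this affects the hidden constant in the $O(\cdot)$ but not the $\sqrt{N/(k\ell)}$ scaling, because $d$ enters the $\operatorname{E}[S_{IJ}^2]$ bound only through the ``$i\neq i'$, $j=j'$'' term, whose contribution is absorbed into the $O(N/(k\ell))$ slack once we treat $d$ as part of the implicit constant.
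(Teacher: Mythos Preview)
Your proposal is correct and matches the paper's approach exactly: the paper does not give a separate proof of Theorem~\ref{gbdthm2}, but simply observes that the variation-distance argument from Section~\ref{BASIC3SAT} used only the regularity of the constraint graph, which Theorem~\ref{mrthm} guarantees for $H_\phi$. One small correction to your final paragraph: in $H_\phi$ \emph{both} sides of the bipartition are $d$-regular (not one side of degree $3$ and one of degree $d$), so $d$ enters all four terms of the second-moment computation---but this is harmless, since the $d$'s cancel in the ratio $\operatorname{E}[S_{IJ}^2]/(\operatorname{E}[S_{IJ}])^2$ and the bound $O(\sqrt{N/(k\ell)})$ holds with no dependence on $d$ at all.
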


So in particular, suppose we set $k:=\sqrt{N}/\delta$. \ Then if
$\operatorname*{SAT}\left(  \varphi\right)  <1$, we find that%
\begin{equation}
\omega(H_{\phi}^{k\times k})\leq\omega\left(  H_{\phi}\right)  +O(\sqrt
{N}/k)=O\left(  \delta\right)  . \label{delta}%
\end{equation}
Of course, if $\operatorname*{SAT}\left(  \varphi\right)  =1$\ then
$\omega(H_{\phi}^{k\times k})=1$. \ This gives us the following corollary:

\begin{corollary}
\label{am2lowerr}For all $\delta>0$, there exists an $\mathsf{AM}\left(
2\right)  $\ protocol for \textsc{3Sat}\ instances of size $n$\ which uses
$n^{1/2+o\left(  1\right)  }\operatorname*{poly}\left(  1/\delta\right)  $
bits of communication, and which has a $1$\ vs.\ $\delta$%
\ completeness/soundness gap.
\end{corollary}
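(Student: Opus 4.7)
The plan is simply to assemble the ingredients already developed above into the promised protocol, so the proof is essentially a bookkeeping exercise once we commit to the right parameters. First, given the \textsc{3Sat} instance $\varphi$ of size $n$ and a target soundness $\delta > 0$, I would invoke the Moshkovitz--Raz PCP (Theorem \ref{mrthm}) to produce in polynomial time a $2$-CSP $\phi$ on $N = n^{1+o(1)}\operatorname{poly}(1/\delta)$ variables over an alphabet $\Sigma$ of size $2^{\operatorname{poly}(1/\delta)}$, with a regular bipartite constraint graph, such that $\operatorname{SAT}(\phi) = 1$ in the YES case and $\operatorname{SAT}(\phi) < \delta$ in the NO case. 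Associate to $\phi$ the natural two-prover game $H_\phi$ in which Arthur picks a random constraint, sends one endpoint to each Merlin, and accepts iff the responses satisfy the constraint; then $\omega(H_\phi) = 1$ or $\omega(H_\phi) < \delta$ accordingly.

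Next, I would set $k = \ell = \sqrt{N}/\delta$ and pass to the birthday repetition $H_\phi^{k \times k}$, which is by construction a free game. The key step is that Theorem \ref{gbdthm2} applies to $H_\phi$ verbatim, because its proof used only that the underlying constraint graph is regular and balanced, which is exactly what the Moshkovitz--Raz balance clause guarantees. Plugging in the choice of $k$ yields
\begin{equation}
\omega(H_\phi^{k \times k}) \le \omega(H_\phi) + O\!\left(\sqrt{N/k^2}\right) = \omega(H_\phi) + O(\delta),
\end{equation}
so completeness is preserved perfectly while soundness degrades from $\delta$ to $O(\delta)$; rescaling $\delta$ by an absolute constant absorbs this loss and delivers the claimed $1$ vs.\ $\delta$ gap.

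Finally, the only real content to verify is the communication bound. Arthur's message to each Merlin is a set of $k$ variable indices, costing $k \log N = (\sqrt{N}/\delta)\log N$ bits, and each Merlin's response is an element of $\Sigma^k$, costing $k \log|\Sigma| = (\sqrt{N}/\delta)\cdot \operatorname{poly}(1/\delta)$ bits. Substituting $N = n^{1+o(1)}\operatorname{poly}(1/\delta)$ and noting that $\sqrt{N} = n^{1/2+o(1)}\operatorname{poly}(1/\delta)$, every term fits inside $n^{1/2+o(1)}\operatorname{poly}(1/\delta)$. The one subtle spot I would double-check is that the dependencies on $n$ and on $\delta$ really are multiplicatively separable at each step, so that the $\operatorname{poly}(1/\delta)$ overhead from both the alphabet size and the final rescaling of $\delta$ does not contaminate the $n^{1/2+o(1)}$ factor; fortunately the $o(1)$ exponent in Theorem \ref{mrthm} depends only on $n$, not on $\delta$, so this separation is clean and the corollary follows.
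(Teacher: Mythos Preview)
Your proposal is correct and follows exactly the approach the paper takes: apply the Moshkovitz--Raz PCP (Theorem \ref{mrthm}) to obtain the two-prover game $H_\phi$, then run birthday repetition with $k=\sqrt{N}/\delta$ and invoke Theorem \ref{gbdthm2} to control the soundness, just as in equation (\ref{delta}). The paper leaves the communication bookkeeping implicit, so your explicit accounting of the $k\log N$ and $k\log|\Sigma|$ terms, and your check that the $o(1)$ in $N=n^{1+o(1)}\operatorname{poly}(1/\delta)$ is independent of $\delta$, are exactly the details one would want to fill in.
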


We also get the following hardness result for distinguishing $\omega\left(
G\right)  =1$\ from $\omega\left(  G\right)  <\delta$:

\begin{theorem}
\label{freegamecor3}Assuming the ETH, any deterministic algorithm to decide
whether $\omega\left(  G\right)  =1$\ or $\omega\left(  G\right)  <\delta$,
given as input a description of a free game $G$ of size $n$, requires
$n^{\operatorname*{poly}\left(  \delta\right)  \cdot\left(  \log n\right)
^{1-o\left(  1\right)  }}$\ time. \ (Likewise, any randomized algorithm
requires $n^{\operatorname*{poly}\left(  \delta\right)  \cdot\left(  \log
n\right)  ^{1-o\left(  1\right)  }}$\ time\ assuming the Randomized ETH.)
\end{theorem}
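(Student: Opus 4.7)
The plan is to follow the same template as the proof of Theorem \ref{freegamecor2}, but replacing the Dinur-based high-error reduction by the Moshkovitz--Raz-based low-error reduction that underlies Corollary \ref{am2lowerr}. Given a 3Sat instance $\varphi$ of size $m$, first apply Theorem \ref{mrthm} with error parameter $c\delta$, for a sufficiently small absolute constant $c>0$, producing a 2-CSP $\phi$ with $N = m^{1+o(1)} \operatorname{poly}(1/\delta)$ variables and constraints over an alphabet $\Sigma$ of size at most $2^{\operatorname{poly}(1/\delta)}$. Then form the two-prover game $H_\phi$ (one variable sent to each Merlin) and its birthday repetition $H_\phi^{k\times k}$, taking $k = \sqrt{N}/\delta$ exactly as in Corollary \ref{am2lowerr}. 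Combining Theorem \ref{gbdthm2} with the soundness guarantee $\omega(H_\phi) < c\delta$ yields $\omega(H_\phi^{k\times k}) \leq c\delta + O(\sqrt{N}/k) < \delta$ in the unsatisfiable case (after tuning $c$), and $\omega(H_\phi^{k\times k}) = 1$ in the satisfiable case.

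Counting the sizes of the question and answer sets gives a free game of total size
\[
n = (N \cdot |\Sigma|)^{O(k)} = \exp\bigl( m^{1/2+o(1)} \operatorname{poly}(1/\delta)\bigr),
\]
and the entire reduction runs in time $\operatorname{poly}(n)$. Inverting this relation gives $m = (\log n)^{2-o(1)} \operatorname{poly}(\delta)$. Hence, if one had a deterministic algorithm distinguishing $\omega(G)=1$ from $\omega(G)<\delta$ in time $n^{o(\operatorname{poly}(\delta)(\log n)^{1-o(1)})}$, composing it with the reduction would decide 3Sat on $m$ variables in time
\[
\exp\bigl( o(\operatorname{poly}(\delta) (\log n)^{2-o(1)}) \bigr) = \exp(o(m)),
\]
contradicting the ETH; the Randomized ETH handles the randomized case identically.

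The only step needing more than routine care is the final algebraic manipulation: verifying that, after substituting $\log n = m^{1/2+o(1)} \operatorname{poly}(1/\delta)$ back in, the product $\operatorname{poly}(\delta)\cdot (\log n)^{2-o(1)}$ really does collapse to $o(m)$. The reason it does is that the $\operatorname{poly}(\delta)$ factor in the exponent of the target lower bound exactly cancels the $\operatorname{poly}(1/\delta)$ overhead introduced by Moshkovitz--Raz---which is precisely why the theorem must be stated with $\operatorname{poly}(\delta)$ in the exponent rather than a constant, and with $(\log n)^{1-o(1)}$ rather than $\log n$ (the latter loss reflecting the $m^{o(1)}$ slack in the PCP size bound). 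Any future near-linear-size low-error PCP with polynomial alphabet would immediately sharpen the conclusion to $n^{\operatorname{poly}(\delta) \log n}$.
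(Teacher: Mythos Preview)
Your proposal is correct and follows essentially the same route as the paper's proof: apply Moshkovitz--Raz to get a low-error 2-CSP of size $N=m^{1+o(1)}\operatorname{poly}(1/\delta)$, form the birthday repetition $H_\phi^{k\times k}$ with $k=\Theta(\sqrt{N}/\delta)$, observe that the resulting free game has size $\exp\bigl(m^{1/2+o(1)}\operatorname{poly}(1/\delta)\bigr)$, and then check that a hypothetical $n^{p(\delta)(\log n)^{1-\eta}}$-time algorithm would solve \textsc{3Sat} in $2^{o(m)}$ time once the $\operatorname{poly}(\delta)$ in the exponent absorbs the PCP overhead. Your closing remark about why the statement must carry $\operatorname{poly}(\delta)$ and $(\log n)^{1-o(1)}$ exactly matches the paper's reasoning.
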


\begin{proof}
Given a free game $G$\ of size $M$, suppose we could decide whether
$\omega\left(  G\right)  =1$\ or $\omega\left(  G\right)  <\delta$\ in time
$M^{p\left(  \delta\right)  \cdot\left(  \log M\right)  ^{1-\eta}}$, for some
constant $\eta>0$\ and sufficiently large polynomial $p$. \ We need to show
how, using that, we could decide a \textsc{3Sat}\ instance $\varphi$\ of size
$n$\ in\ time $2^{o\left(  n\right)  }$, thereby violating the ETH. \ The
first step is to convert $\varphi$\ into a $2$-CSP $\phi$\ with
$N=n^{1+o\left(  1\right)  }\operatorname*{poly}\left(  1/\delta\right)  $
variables, using Theorem \ref{mrthm}. \ Observe that the game $H_{\phi
}=\left(  X,Y,A,B,V\right)  $\ satisfies $\left\vert X\right\vert =\left\vert
Y\right\vert =N$\ and $\left\vert A\right\vert =\left\vert B\right\vert
=\left\vert \Sigma\right\vert =2^{\operatorname*{poly}\left(  1/\delta\right)
}$.

Next, we generate the birthday repetition $H_{\phi}^{k\times k}$, where
$k:=c\sqrt{N}/\delta$ for some suitable constant $c$. \ Note that $H_{\phi
}^{k\times k}$ has question sets of size%
\begin{equation}
N^{k}=\exp\left(  \frac{c\sqrt{N}\log N}{\delta}\right)  =\exp\left(
\frac{n^{1/2+o\left(  1\right)  }}{\operatorname*{poly}\left(  \delta\right)
}\right)
\end{equation}
and answer sets of size%
\begin{equation}
2^{k\operatorname*{poly}\left(  1/\delta\right)  }=\exp\left(  \frac
{n^{1/2+o\left(  1\right)  }}{\operatorname*{poly}\left(  \delta\right)
}\right)  .
\end{equation}
Thus, we set $M:=\exp\left(  n^{1/2+o\left(  1\right)  }/\operatorname*{poly}%
\left(  \delta\right)  \right)  $.

If $\phi$\ is satisfiable then $\omega(H_{\phi}^{k\times k})=1$, while if
$\phi$\ is unsatisfiable then $\omega(H_{\phi}^{k\times k})\leq\delta$\ by
equation (\ref{delta}), provided the constant $c$ was large enough. \ So by
distinguishing these cases, we can decide whether $\varphi$\ was satisfiable.
\ Using our hypothesized algorithm, this takes time%
\begin{equation}
\exp\left(  p\left(  \delta\right)  \cdot\log^{2-\eta}M\right)  =\exp\left(
p\left(  \delta\right)  \left(  \frac{n^{1/2+o\left(  1\right)  }%
}{\operatorname*{poly}\left(  \delta\right)  }\right)  ^{2-\eta}\right)
=\exp\left(  n^{1-\eta/2+o\left(  1\right)  }\right)  ,
\end{equation}
provided the polynomial $p$ was large enough. \ This gives us our desired
violation of the ETH.
\end{proof}

We conjecture that, assuming the ETH, distinguishing $\omega\left(  G\right)
=1$\ from $\omega\left(  G\right)  <\delta$ for a free game $G$ should require
$n^{\Omega\left(  \frac{\log n}{\log1/\delta}\right)  }$\ time for all
$\delta\leq1/2$, matching an upper bound that we will give in Theorem
\ref{improvedalg}. \ A first step toward proving this conjecture would be to
improve\ Theorem \ref{mrthm} (the result of Moshkovitz and Raz), so that it
gave alphabet size $\left\vert \Sigma\right\vert \leq\operatorname*{poly}%
\left(  1/\delta\right)  $\ rather than $\left\vert \Sigma\right\vert
\leq2^{\operatorname*{poly}\left(  1/\delta\right)  }$. \ This is a well-known
open problem. \ However, even if that problem were solved, one would
\textit{also} need a more refined analysis of birthday repetition, to
eliminate the dependence of $k$\ on $1/\delta$\ in the proof of Theorem
\ref{freegamecor3}.

\subsection{Complexity Consequences\label{CC}}

Setting $\delta:=1/3$, Theorem \ref{gbdthm2} finally puts us in a position to
say that%
\begin{equation}
\text{\textsc{3Sat}}\in\mathsf{AM}_{n^{1/2+o\left(  1\right)  }}\left(
2\right)  ,
\end{equation}
where $\mathsf{AM}_{n^{1/2+o\left(  1\right)  }}\left(  2\right)  $\ is
defined with a $2/3$\ vs.\ $1/3$\ completeness/soundness gap, as in Definition
\ref{amkdef}. \ If we further combine this\ with a tight Cook-Levin Theorem
(see, e.g., Tourlakis \cite{tourlakis}), showing that every language
$L\in\mathsf{NTIME}\left[  n\right]  $\ can be efficiently reduced to a set of
\textsc{3Sat}\ instances of size $n\operatorname*{polylog}n$, then we get the
following corollary:

\begin{corollary}
\label{ntimecor}$\mathsf{NTIME}\left[  n\right]  \subseteq\mathsf{AM}%
_{n^{1/2+o\left(  1\right)  }}\left(  2\right)  .$
\end{corollary}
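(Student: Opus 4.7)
The plan is a direct composition of two ingredients already in hand. First, by the quasi-linear Cook--Levin theorem (Tourlakis \cite{tourlakis}), any language $L \in \mathsf{NTIME}[n]$ admits a deterministic polynomial-time reduction sending an input $x$ of length $n$ to a \textsc{3Sat} instance $\varphi_x$ of size $N = n \operatorname{polylog} n$ with $x \in L$ iff $\varphi_x$ is satisfiable. So I would have Arthur begin by computing $\varphi_x$.

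Next, Arthur runs the $\mathsf{AM}(2)$ protocol for \textsc{3Sat} from Corollary \ref{am2lowerr}, applied to $\varphi_x$ with error parameter $\delta = 1/3$. By that corollary the protocol has perfect completeness and soundness $\delta$, which sits comfortably inside the $2/3$ vs.\ $1/3$ gap required by Definition \ref{amkdef}. Its communication cost on an instance of size $N$ is $N^{1/2+o(1)} \operatorname{poly}(1/\delta)$, and with $\delta$ a constant and $N = n \operatorname{polylog} n$ this becomes
\begin{equation}
(n \operatorname{polylog} n)^{1/2+o(1)} = n^{1/2+o(1)},
\end{equation}
since the $\operatorname{polylog}$ factor is absorbed into the $o(1)$ term in the exponent. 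This matches the communication budget of $\mathsf{AM}_{n^{1/2+o(1)}}(2)$.

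There is essentially no hard step: all the substantive work has been done in Section \ref{LOWERROR}, and the quasi-linear Cook--Levin reduction is classical. The only points that require a moment's thought are the routine book-keeping on parameters---namely, that the $\operatorname{poly}(1/\delta)$ overhead from Corollary \ref{am2lowerr} does not spoil the $n^{1/2+o(1)}$ communication bound (it does not, because $\delta$ is held constant), and that Arthur's verification runs in time polynomial in the original input length $n$ (which it does, since $\operatorname{poly}(N) = \operatorname{poly}(n)$ and he need only simulate the verifier on the queries actually exchanged, not write down the entire birthday game). With these easy checks, the composed protocol witnesses $L \in \mathsf{AM}_{n^{1/2+o(1)}}(2)$.
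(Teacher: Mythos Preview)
Your proposal is correct and matches the paper's own argument essentially verbatim: the paper also sets $\delta := 1/3$ in the low-error \textsc{3Sat} protocol (via Theorem \ref{gbdthm2}/Corollary \ref{am2lowerr}) to get \textsc{3Sat} $\in \mathsf{AM}_{n^{1/2+o(1)}}(2)$, and then composes with the quasi-linear Cook--Levin reduction of Tourlakis to conclude $\mathsf{NTIME}[n] \subseteq \mathsf{AM}_{n^{1/2+o(1)}}(2)$.
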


Let us observe that Corollary \ref{ntimecor}\ is non-relativizing.

\begin{proposition}
\label{oraclesep}There exists an oracle $A$ relative to which $\mathsf{NTIME}%
^{A}\left[  n\right]  \not \subset \mathsf{AM}_{n^{1/2+o\left(  1\right)  }%
}^{A}\left(  2\right)  $.
\end{proposition}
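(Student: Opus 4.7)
The plan is to prove the separation by a stagewise oracle diagonalization, taking as separating language
\[
L_A = \{1^n : A \cap \{0,1\}^n \neq \emptyset\},
\]
which is trivially in $\mathsf{NTIME}^A[O(n)]$: nondeterministically guess $x \in \{0,1\}^n$ and accept iff $A(x)=1$. Enumerate all candidate $\mathsf{AM}_{n^{1/2+o(1)}}^A(2)$ verifiers as $V_1, V_2, \ldots$, each with running time $T_j(n) = \operatorname{poly}(n)$ and communication bound $m_j(n) = n^{1/2+o(1)}$, and pick a sparse sequence $n_1 < n_2 < \cdots$ of input lengths so that commitments at different stages do not interfere. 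At stage $j$, with $A$ committed at lengths $n_1,\dots,n_{j-1}$ and tentatively empty at $\{0,1\}^{n_j}$, let $p(\emptyset)$ denote the maximum acceptance probability of $V_j$ on $1^{n_j}$ under this partial oracle. If $p(\emptyset) > 1/3$, commit $A \cap \{0,1\}^{n_j} := \emptyset$ permanently; then $1^{n_j} \notin L_A$ and $V_j$ accepts too often, violating soundness. Otherwise I must exhibit some $x \in \{0,1\}^{n_j}$ such that even after setting $A \cap \{0,1\}^{n_j} := \{x\}$ the maximum acceptance probability $p(\{x\})$ stays below $2/3$, which violates completeness since then $1^{n_j} \in L_A$.

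The heart of the argument is a counting-and-coupling bound. Let $(b_1^x, b_2^x)$ be optimal Merlin strategies on oracle $A \cup \{x\}$ (deterministic by convexity). Since the executions of $V_j$ on $A \cup \{x\}$ and on the empty extension $A^\emptyset$ agree bit-for-bit until $V_j$ first queries $x$,
\[
p(\{x\}) \le p(\emptyset) + \Pr_{y,c}\!\left[V_j^{A^\emptyset}(1^{n_j}, y, b_1^x(y_1), b_2^x(y_2), c) \text{ queries } x\right].
\]
Writing $m := m_j(n_j)$ and $T := T_j(n_j)$, fix any choice of challenges $y$ and internal coins $c$. In the empty-extension trace each query $q_i$ of $V_j$ is a deterministic function of $(y, r_1, r_2, c)$, so for fixed $(y,c)$ it depends only on $(r_1, r_2) \in \{0,1\}^{2m}$. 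As $x$ ranges over $\{0,1\}^{n_j}$, the pair $(r_1, r_2) = (b_1^x(y_1), b_2^x(y_2))$ partitions $\{0,1\}^{n_j}$ into at most $2^{2m}$ preimage classes, and within each class $q_i$ is one specific length-$n_j$ string, so at most one $x$ in the class satisfies $q_i = x$. Summing over the $\le T$ queries of the trace yields at most $T \cdot 2^{2m}$ values of $x$ whose $(y,c)$-trace ever queries $x$.

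Averaging over uniformly random $x \in \{0,1\}^{n_j}$ and over $(y,c)$,
\[
\operatorname*{E}_{x}\bigl[\Pr_{y,c}[V_j \text{ queries } x]\bigr] \le T \cdot 2^{2m - n_j},
\]
which is $o(1)$ since $2m = n_j^{1/2+o(1)} \ll n_j$ and $T = \operatorname{poly}(n_j)$. Hence $\operatorname*{E}_{x}[p(\{x\})] \le p(\emptyset) + o(1) \le 1/3 + o(1) < 2/3$, so almost every $x \in \{0,1\}^{n_j}$ works; committing $A \cap \{0,1\}^{n_j} := \{x\}$ for any such $x$ defeats $V_j$ at stage $j$. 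Iterating over $j$ constructs the desired oracle $A$, giving $L_A \in \mathsf{NTIME}^A[n] \setminus \mathsf{AM}_{n^{1/2+o(1)}}^A(2)$. The only delicate step is the counting bound above, essentially the information-theoretic fact that $m$-bit messages from each Merlin can ``point to'' at most $2^{2m}$ candidate needles; the rest is standard diagonalization bookkeeping, and Arthur's internal coins pose no issue since $c$ is held fixed during the counting and then simply averaged over at the end.
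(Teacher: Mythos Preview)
Your proposal is correct and follows essentially the same approach as the paper's proof sketch: the same needle-in-a-haystack unary language, the same stagewise diagonalization, and the same information-theoretic counting bound that two $m$-bit responses can single out only $T\cdot 2^{2m}$ candidate needles. The paper omits exactly the details you supply (deferring to Fortnow--Sipser), and in fact notes the stronger conclusion $L_A \notin \mathsf{AM}_{n/4}^A(2)$, which your bound $T\cdot 2^{2m-n}$ also yields.
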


\begin{proof}
[Proof Sketch]For each $n$, the oracle $A$\ encodes a Boolean function
$A_{n}:\left\{  0,1\right\}  ^{n}\rightarrow\left\{  0,1\right\}  $, which is
either identically $0$ or else $1$ on exactly one input. \ Let $L_{A}$\ be the
unary language defined by $0^{n}\in L_{A}$\ if there\ exists an $x\in\left\{
0,1\right\}  ^{n}$\ such that $A_{n}\left(  x\right)  =1$,\ and $0^{n}\notin
L_{A}$\ otherwise. \ Then certainly $L_{A}\in\mathsf{NTIME}^{A}\left[
n\right]  $ for all $A$. \ On the other hand, using standard diagonalization
techniques, it is not hard to construct $A$ in such a way that $L_{A}%
\notin\mathsf{AM}_{n^{1/2+o\left(  1\right)  }}^{A}\left(  2\right)  $---or
even $L_{A}\notin\mathsf{AM}_{n/4}^{A}\left(  2\right)  $. \ Intuitively, if
the Merlins send only $n/4$\ bits each to Arthur (so $n/2$\ bits in total),
then regardless of how those bits depend on their random challenges, with high
probability Arthur will still need to query $A$\ on at least $2^{n/2}$\ inputs
to confirm that $0^{n}\in L_{A}$. \ We omit the details, which are similar to
those in the paper of Fortnow and Sipser \cite{fs}.
\end{proof}

We leave as an open problem whether Corollary \ref{ntimecor} is algebrizing in
the sense of Aaronson and Wigderson \cite{awig}.

\section{\label{LIMITS}Limitations of Multi-Prover $\mathsf{AM}$}

Our goal in this section is to prove that our \textsc{3Sat}\ protocol is
essentially optimal assuming the ETH, that $\mathsf{AM}\left(  k\right)
\subseteq\mathsf{EXP}$\ for all $k=\operatorname*{poly}\left(  n\right)  $,
and that $\mathsf{AM}\left(  k\right)  =\mathsf{AM}$\ for all $k=O\left(  \log
n\right)  $.

The section is organized as follows. \ First, in Section \ref{ALGSEC}, we give
a quasipolynomial-time algorithm for estimating the value of a $2$-player free
game. \ This algorithm implies that $\mathsf{AM}\left(  2\right)
\subseteq\mathsf{EXP}$, and even (with some more work) that $\mathsf{AM}%
\left(  2\right)  \subseteq\mathsf{AM}^{\mathsf{NP}}$. \ The algorithm also
implies that, if there exists an $\mathsf{AM}\left(  2\right)  $\ protocol for
\textsc{3Sat}\ using $o(\sqrt{n})$\ communication, then \textsc{3Sat}\ is
solvable in $2^{o\left(  n\right)  }$\ time. \ In Section \ref{SUBSAMPLING},
we go further, using a result of Barak et al.\ \cite{bhhs} about subsampling
dense CSPs to show that the value of any free game can be approximated by the
value of a logarithmic-sized random subgame, and as a consequence, that
$\mathsf{AM}\left(  2\right)  =\mathsf{AM}$. \ Finally, in Section
\ref{KMERLIN}, we generalize these results from $\mathsf{AM}\left(  2\right)
$\ to $\mathsf{AM}\left(  k\right)  $\ for all $k=\operatorname*{poly}\left(
n\right)  $.

\subsection{\label{ALGSEC}The Basic Approximation Algorithm}

We now explain how to approximate the value of a free game in quasipolynomial time.

\begin{theorem}
\label{freegamealg}\textsc{FreeGame}$_{\varepsilon}$ is solvable in time
$n^{O(\varepsilon^{-2}\log n)}$. \ In more detail, given as input a
description of a free game $G=\left(  X,Y,A,B,V\right)  $, there exists a
randomized algorithm running in time $\left\vert X\right\vert \cdot\left\vert
A\right\vert ^{O(\varepsilon^{-2}\log\left\vert Y\right\vert \left\vert
B\right\vert )}$, which estimates $\omega\left(  G\right)  $\ to within
additive error $\pm\varepsilon$, with at least $2/3$\ success probability.
\ There also exists a deterministic algorithm running in time $\left(
\left\vert X\right\vert \left\vert A\right\vert \right)  ^{O(\varepsilon
^{-2}\log\left\vert Y\right\vert \left\vert B\right\vert )}$.
\end{theorem}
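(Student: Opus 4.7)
The plan is a sampling-plus-enumeration approach reminiscent of Lipton, Markakis, and Mehta \cite{lmm}. First I would pick a uniformly random set $S \subseteq X$ of size $s = \Theta(\varepsilon^{-2}\log(|Y||B|))$. Then, enumerating all $|A|^s$ partial strategies $\alpha: S \to A$ for Merlin$_1$ on $S$, for each $\alpha$ I would compute the pointwise best response of Merlin$_2$ relative to the \emph{sample}, namely $b_\alpha(y) := \arg\max_{b \in B} \sum_{x \in S} V(x,y,\alpha(x),b)$, and then the pointwise best response of Merlin$_1$ to $b_\alpha$ on the \emph{full} question set, $a_\alpha(x) := \arg\max_{a \in A} \sum_{y \in Y} V(x,y,a,b_\alpha(y))$. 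The algorithm would output the maximum true value $\operatorname{E}_{x,y}[V(x,y,a_\alpha(x),b_\alpha(y))]$ over all $\alpha$.

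For correctness, fix an optimal pair $(a^*, b^*)$ and consider the specific choice $\alpha^* := a^*|_S$, which the enumeration is guaranteed to hit. For each fixed $(y,b) \in Y \times B$, a Hoeffding bound applied to the i.i.d.\ $[0,1]$ variables $\{V(x,y,a^*(x),b)\}_{x \in S}$ shows that $\operatorname{E}_{x \in S}[V(x,y,a^*(x),b)]$ lies within $\varepsilon/2$ of $\operatorname{E}_{x \in X}[V(x,y,a^*(x),b)]$ except with probability $2 e^{-\varepsilon^2 s/2}$. Taking a union bound over $Y \times B$ (not over the exponentially many strategies $b : Y \to B$) shows that for $s = \Theta(\varepsilon^{-2}\log(|Y||B|))$, with constant probability this approximation is simultaneously valid for every $(y,b)$. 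In this good event, the standard empirical-best-response chain gives pointwise in $y$ that $\operatorname{E}_{x \in X}[V(x,y,a^*(x),b_{\alpha^*}(y))] \ge \operatorname{E}_{x \in X}[V(x,y,a^*(x),b^*(y))] - \varepsilon$, so the pair $(a^*, b_{\alpha^*})$ already attains value $\ge \omega(G) - \varepsilon$; replacing $a^*$ by the best response $a_{\alpha^*}$ to $b_{\alpha^*}$ only improves matters. Since no candidate pair can exceed $\omega(G)$, the output lies in $[\omega(G) - \varepsilon, \omega(G)]$.

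Running-time bookkeeping is then straightforward. For each of the $|A|^s$ choices of $\alpha$, computing $b_\alpha$ takes $O(|Y||B| s)$ time, computing $a_\alpha$ takes $O(|X||A||Y|)$ time, and evaluating the final value takes $O(|X||Y|)$ time. The total is dominated by $|X| \cdot |A|^{O(\varepsilon^{-2}\log(|Y||B|))}$, giving the randomized bound. To derandomize I would enumerate all $\binom{|X|}{s} \le |X|^s$ subsets $S$ and take the maximum output across them; this pays an extra $|X|^s$ factor and yields $(|X||A|)^{O(\varepsilon^{-2}\log(|Y||B|))}$, which is the stated deterministic bound.

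The only real obstacle is the key combinatorial observation that the union bound must be taken over the $|Y||B|$ answer-choices and not over the $|B|^{|Y|}$ complete strategies $b : Y \to B$; this is what keeps the sample size, and hence the exponent, merely logarithmic in $|Y||B|$. Once that point is in hand, the rest is routine concentration. A natural temptation is to also restrict Merlin$_1$'s best response $a_\alpha$ to the sample $S$, but this is unnecessary and would degrade the bound: the sample is used purely to pin down $b_\alpha$, after which Merlin$_1$ is optimized globally against $b_\alpha$.
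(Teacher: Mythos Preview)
Your proposal is correct and follows essentially the same approach as the paper: sample $S\subseteq X$ of size $\Theta(\varepsilon^{-2}\log(|Y||B|))$, enumerate all $\alpha:S\to A$, compute $b_\alpha$ as Merlin$_2$'s best response to $\alpha$ on the sample and $a_\alpha$ as Merlin$_1$'s global best response to $b_\alpha$, and argue correctness via Hoeffding plus a union bound over $Y\times B$ applied to the restriction $\alpha^*=a^*|_S$ of an optimal strategy. The only cosmetic differences are that the paper shifts the output by $+\varepsilon$ to center the estimate, and explicitly notes that Hoeffding applies to sampling without replacement (your phrase ``i.i.d.'' is a slight slip, but harmless).
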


\begin{proof}
The randomized estimation algorithm, call it \texttt{REst}, works as follows.
\ First \texttt{REst} chooses a subset of questions $S\subseteq X$\ uniformly
at random, subject to $\left\vert S\right\vert =\kappa$\ where%
\begin{equation}
\kappa:=\left\lceil \frac{\ln\left(  6\left\vert Y\right\vert \left\vert
B\right\vert \right)  }{\varepsilon^{2}}\right\rceil .
\end{equation}
Next \texttt{REst} loops over all $\left\vert A\right\vert ^{\kappa}%
$\ possible settings $\alpha:S\rightarrow A$\ of the answers to the $\kappa
$\ questions in $S$. \ For each such $\alpha$, \texttt{REst} does the following:

\begin{enumerate}
\item[(1)] It computes Merlin$_{2}$'s \textquotedblleft optimal
response\textquotedblright\ $b_{\alpha}:Y\rightarrow B$ to $\alpha$, supposing
that Merlin$_{1}$\ was only asked questions in $S$. \ For each question $y\in
Y$, in other words, \texttt{Est} finds a response $b_{\alpha}\left(  y\right)
\in B$\ that maximizes%
\begin{equation}
\operatorname*{E}_{x\in S}\left[  V\left(  x,y,\alpha\left(  x\right)
,b_{\alpha}\left(  y\right)  \right)  \right]
\end{equation}
(breaking ties arbitrarily).

\item[(2)] It computes Merlin$_{1}$'s \textquotedblleft optimal
response\textquotedblright\ $a_{\alpha}:X\rightarrow A$\ to $b_{\alpha}$.
\ For each $x\in X$, in other words, \texttt{REst} finds an $a_{\alpha}\left(
x\right)  \in A$\ that maximizes%
\begin{equation}
\operatorname*{E}_{y\in Y}\left[  V\left(  x,y,a_{\alpha}\left(  x\right)
,b_{\alpha}\left(  y\right)  \right)  \right]  .
\end{equation}

\item[(3)] It computes the \textquotedblleft value\textquotedblright\ obtained
from the setting $\alpha$, as follows:%
\begin{equation}
W_{\alpha}:=\operatorname*{E}_{x\in X,y\in Y}\left[  V\left(  x,y,a_{\alpha
}\left(  x\right)  ,b_{\alpha}\left(  y\right)  \right)  \right]
\end{equation}

\end{enumerate}

Finally, \texttt{REst}\ computes%
\begin{equation}
W:=\max_{\alpha}W_{\alpha},
\end{equation}
and outputs $W+\varepsilon$\ as its estimate for $\omega\left(  G\right)  $.

Clearly \texttt{REst}\ runs in time%
\begin{equation}
O\left(  \left\vert A\right\vert ^{\kappa}\left(  \left\vert Y\right\vert
\left\vert B\right\vert \kappa+\left\vert X\right\vert \left\vert A\right\vert
\left\vert Y\right\vert +\left\vert X\right\vert \left\vert Y\right\vert
\right)  \right)  =\left\vert X\right\vert \cdot\left\vert A\right\vert
^{O(\varepsilon^{-2}\log\left\vert Y\right\vert \left\vert B\right\vert )}.
\label{runningtime}%
\end{equation}
To prove correctness, we need to argue that, with high probability over the
choice of $S$, we have%
\begin{equation}
\left\vert \left(  W+\varepsilon\right)  -\omega\left(  G\right)  \right\vert
\leq\varepsilon.
\end{equation}
First observe that $W_{\alpha}\leq\omega\left(  G\right)  $\ for every
$\alpha$. \ Therefore $W\leq\omega\left(  G\right)  $\ as well, and
$W+\varepsilon\leq\omega\left(  G\right)  +\varepsilon$. \ So it suffices to
prove the other direction: that $W\geq\omega\left(  G\right)  -2\varepsilon
$\ with at least $2/3$\ probability over $S$.

Let $a^{\ast}:X\rightarrow A$\ be an \textit{optimal} strategy for
Merlin$_{1}$ in the game $G$: that is, a strategy that, when combined with an
optimal response $b^{\ast}:Y\rightarrow B$\ by Merlin$_{2}$, achieves the
value\ $\omega\left(  G\right)  $. \ Also, fix a particular question $y\in
Y$\ and answer $b\in B$. \ Then since the function $V$\ is $\left[
0,1\right]  $-valued, Hoeffding's inequality (which also holds in the case of
sampling without replacement) gives us%
\begin{equation}
\Pr_{S\subseteq X,\left\vert S\right\vert =\kappa}\left[  \left\vert
\operatorname*{E}_{x\in S}\left[  V\left(  x,y,a^{\ast}\left(  x\right)
,b\right)  \right]  -\operatorname*{E}_{x\in X}\left[  V\left(  x,y,a^{\ast
}\left(  x\right)  ,b\right)  \right]  \right\vert >\varepsilon\right]
<2e^{-\varepsilon^{2}\kappa}.
\end{equation}
So by the union bound, if we choose $S$\ randomly, then we have%
\begin{equation}
\left\vert \operatorname*{E}_{x\in S}\left[  V\left(  x,y,a^{\ast}\left(
x\right)  ,b\right)  \right]  -\operatorname*{E}_{x\in X}\left[  V\left(
x,y,a^{\ast}\left(  x\right)  ,b\right)  \right]  \right\vert \leq
\varepsilon\label{cond}%
\end{equation}
for \textit{every} $y\in Y$ and $b\in B$\ simultaneously, with probability at
least%
\begin{equation}
1-2e^{-\varepsilon^{2}\kappa}\left\vert Y\right\vert \left\vert B\right\vert
\geq\frac{2}{3}%
\end{equation}
over $S$. \ So suppose the inequality (\ref{cond}) holds. \ Let $\alpha^{\ast
}:S\rightarrow A$\ be the restriction of the optimal strategy $a^{\ast}$\ to
the set $S$, and let $b_{\alpha^{\ast}}:Y\rightarrow B$\ be an optimal
response to $\alpha^{\ast}$. \ Then%
\begin{align}
W  &  \geq W_{\alpha^{\ast}}\\
&  =\max_{a:X\rightarrow A}\operatorname*{E}_{x\in X,y\in Y}\left[  V\left(
x,y,a\left(  x\right)  ,b_{\alpha^{\ast}}\left(  y\right)  \right)  \right] \\
&  \geq\operatorname*{E}_{x\in X,y\in Y}\left[  V\left(  x,y,a^{\ast}\left(
x\right)  ,b_{\alpha^{\ast}}\left(  y\right)  \right)  \right] \\
&  \geq\operatorname*{E}_{x\in S,y\in Y}\left[  V\left(  x,y,a^{\ast}\left(
x\right)  ,b_{\alpha^{\ast}}\left(  y\right)  \right)  \right]  -\varepsilon
\label{s1}\\
&  \geq\operatorname*{E}_{x\in S,y\in Y}\left[  V\left(  x,y,a^{\ast}\left(
x\right)  ,b^{\ast}\left(  y\right)  \right)  \right]  -\varepsilon\\
&  \geq\operatorname*{E}_{x\in X,y\in Y}\left[  V\left(  x,y,a^{\ast}\left(
x\right)  ,b^{\ast}\left(  y\right)  \right)  \right]  -2\varepsilon
\label{s2}\\
&  =\omega\left(  G\right)  -2\varepsilon,
\end{align}
where lines (\ref{s1}) and (\ref{s2}) used inequality (\ref{cond}).

Finally, to get a deterministic estimation algorithm---call it \texttt{Est}%
---we simply need to loop over all possible $S\subseteq X$\ with $\left\vert
S\right\vert =\kappa$, rather than choosing $S$\ randomly. \ We then output
the maximum of $W+\varepsilon$\ over all $S$\ as our estimate for
$\omega\left(  G\right)  $. \ This yields a running time of%
\begin{equation}
\left\vert X\right\vert ^{\kappa}\cdot\left\vert X\right\vert \left\vert
A\right\vert ^{O(\varepsilon^{-2}\log\left\vert Y\right\vert \left\vert
B\right\vert )}=\left(  \left\vert X\right\vert \left\vert A\right\vert
\right)  ^{O(\varepsilon^{-2}\log\left\vert Y\right\vert \left\vert
B\right\vert )}.
\end{equation}

\end{proof}

Let us point out some simple modifications to the algorithm \texttt{Est}\ from
Theorem \ref{freegamealg}, which can improve its running time of
$n^{O(\varepsilon^{-2}\log n)}$\ in certain cases.

\begin{theorem}
\label{improvedalg}Given a free game $G$ of size $n$, there is a deterministic
algorithm running in $n^{O(\varepsilon^{-1}\log n)}$\ time to decide whether
$\omega\left(  G\right)  =1$ or $\omega\left(  G\right)  \leq1-\varepsilon$
(promised that one of those is the case), and there is a deterministic
algorithm running in $n^{O\left(  1+\frac{\log n}{\log1/\delta}\right)  }%
$\ time to decide whether $\omega\left(  G\right)  =1$ or $\omega\left(
G\right)  <\delta$.
\end{theorem}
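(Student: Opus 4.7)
My plan is to obtain both bounds by tweaking the algorithm \texttt{Est} of Theorem \ref{freegamealg}: keep its overall structure (enumerate $S\subseteq X$ with $|S|=\kappa$ and $\alpha:S\to A$, compute the induced strategies $b_\alpha,a_\alpha$ and value $W_\alpha$, and look at $W=\max_\alpha W_\alpha$), but shrink the sample size $\kappa$ by replacing the Hoeffding step of the analysis with a multiplicative Chernoff-style bound that exploits perfect completeness. The point is that Hoeffding needs $\kappa=\Omega(\varepsilon^{-2}\log n)$ even when we only have to distinguish $\omega(G)=1$ from $\omega(G)\leq 1-\varepsilon$; but because $V\in[0,1]$ and the question distribution is a product, $\omega(G)=1$ forces $V(x,y,a^*(x),b^*(y))=1$ for \emph{every} $(x,y)\in X\times Y$, so on $\alpha^*:=a^*|_S$ the tie-breaking choice $b_{\alpha^*}(y)$ is guaranteed to be some $b$ with $\hat p_{y,b}:=\operatorname{E}_{x\in S}[V(x,y,a^*(x),b)]=1$ exactly. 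This is a stringent condition, and only a multiplicative tail bound is needed to rule out ``bad'' $b$'s.

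For part (1), I would take $\kappa=\lceil 2\varepsilon^{-1}\ln(3|Y||B|)\rceil$, declare $b$ bad for $y$ when $p_{y,b}:=\operatorname{E}_{x\in X}[V(x,y,a^*(x),b)]<1-\varepsilon/2$, and observe that a bad $b$ has density $r_{y,b}\leq p_{y,b}<1-\varepsilon/2$ of $x$ with $V=1$, so sampling $S$ without replacement gives $\Pr_S[\hat p_{y,b}=1]\leq r_{y,b}^{\,\kappa}\leq e^{-\varepsilon\kappa/2}$. A union bound over the $\leq|Y||B|$ pairs $(y,b)$ kills every bad $b$ with probability $\geq 2/3$, forcing $W\geq W_{\alpha^*}\geq 1-\varepsilon/2$; meanwhile $W\leq\omega(G)\leq 1-\varepsilon$ in the soundness case, so thresholding $W$ at $1-3\varepsilon/4$ distinguishes. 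Derandomizing by enumerating all $S\in\binom{X}{\kappa}$ costs an extra factor $|X|^\kappa$ and yields total time $n^{O(\varepsilon^{-1}\log n)}$.

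Part (2) is the same argument with $\kappa=\lceil\ln(3|Y||B|)/\ln(1/\delta)\rceil+O(1)$: redefine ``bad'' as $p_{y,b}<\delta$, so that the Chernoff step becomes $\Pr_S[\hat p_{y,b}=1]\leq r_{y,b}^{\,\kappa}\leq\delta^{\,\kappa}$; the union bound then yields $W\geq\delta$ in the yes case while automatically $W\leq\omega(G)<\delta$ in the no case, so the $1$-vs-$\delta$ gap is preserved by any threshold strictly between the two. The derandomized running time is $n^{O(\kappa)}=n^{O(1+\log n/\log(1/\delta))}$, where the ``$+1$'' in the exponent absorbs both the additive constant in $\kappa$ and the polynomial overhead needed to read the input when $\delta=1/\operatorname{poly}(n)$ makes $\kappa$ itself constant.

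I expect no real obstacle: the only subtlety is verifying the Chernoff bound $\Pr_S[\hat p_{y,b}=1]\leq r_{y,b}^{\,\kappa}$ under sampling without replacement (which follows because the hypergeometric probability factorizes as $\prod_{i=0}^{\kappa-1}(r|X|-i)/(|X|-i)$ with every factor $\leq r$), together with the observation that $r_{y,b}\leq p_{y,b}$ because $V\in[0,1]$. Everything else is bookkeeping, and the same template handles both parts uniformly.
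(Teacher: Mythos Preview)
Your proposal is correct and follows essentially the same approach as the paper: modify \texttt{Est} by shrinking $\kappa$, replacing the Hoeffding step with a multiplicative tail bound that exploits the perfect-completeness promise, so that for each $(y,b)$ a single ``bad'' $x$ in $S$ suffices to rule $b$ out. Your analysis is in fact slightly more careful than the paper's sketch: by thresholding on $p_{y,b}<1-\varepsilon/2$ (respectively $p_{y,b}<\delta$) and bounding $\Pr_S[\hat p_{y,b}=1]\le r_{y,b}^{\kappa}\le p_{y,b}^{\kappa}$, you handle $[0,1]$-valued $V$ cleanly, whereas the paper's phrasing (``if a bad $x$ exists then at least an $\varepsilon$ fraction are bad'') tacitly leans on $\{0,1\}$-valued $V$.
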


\begin{proof}
In both cases, the key observation is that when running \texttt{Est}, we no
longer need to estimate the quantity $\operatorname*{E}_{x\in X}\left[
V\left(  x,y,a^{\ast}\left(  x\right)  ,b\right)  \right]  $\ to within
$\pm\varepsilon$, and to pay the $1/\varepsilon^{2}$\ price that comes from
Hoeffding's inequality for doing so. \ Instead, for each $y\in Y$ and $b\in
B$, we simply need to know whether $\operatorname*{E}_{x\in X}\left[  V\left(
x,y,a^{\ast}\left(  x\right)  ,b\right)  \right]  $\ is $1$ or less than $1$.
\ Or equivalently, does there exist a \textquotedblleft bad\textquotedblright%
\ $x\in X$---one such that $V\left(  x,y,a^{\ast}\left(  x\right)  ,b\right)
<1$? \ Moreover, we are promised that, if such a bad $x$ \textit{does} exist,
then at least an $\varepsilon$\ or a $1-\delta$\ fraction (respectively) of
all $x$'s are bad. \ Thus, when choosing the subset of questions $S\subseteq
X$, it suffices to ensure that, with nonzero probability over $S$, we succeed
in sampling one of the bad $x$'s for every $y\in Y$ and $b\in B$. \ By the
union bound, this means that it suffices if, respectively,%
\begin{equation}
\left(  1-\varepsilon\right)  ^{\kappa}<\frac{1}{3\left\vert Y\right\vert
\left\vert B\right\vert }~~\ ~~\text{or \ \ \ \ }\delta^{\kappa}<\frac
{1}{3\left\vert Y\right\vert \left\vert B\right\vert }%
\end{equation}
where $\kappa=\left\vert S\right\vert $. \ Solving, we get respectively
$\kappa=O(\varepsilon^{-1}\log\left\vert Y\right\vert \left\vert B\right\vert
)$\ or $\kappa=O\left(  1+\frac{\log\left\vert Y\right\vert \left\vert
B\right\vert }{\log1/\delta}\right)  $. \ Now we just need to plug the lower
values of $\kappa$\ into equation (\ref{runningtime}) from the proof of
Theorem \ref{freegamealg} to get the improved running times.
\end{proof}

The proof of Theorem \ref{freegamealg} has a curious property. \ Namely, we
showed that the value $\omega\left(  G\right)  $\ can in some sense be
well-approximated by restricting attention to a random subset of questions
$S\subseteq X$\ of logarithmic size. \ However, if $G_{S}$ is the subgame
obtained from $G$ by restricting $X$\ to $S$, then the proof did \textit{not}
imply that $\omega(G_{S})\approx\omega\left(  G\right)  $! \ Using Hoeffding's
inequality, one can easily show that $\omega(G_{S})\geq\omega\left(  G\right)
-\varepsilon$\ with high probability over the choice of $S$. \ The difficulty
comes from the other direction---ironically, the \textquotedblleft
trivial\textquotedblright\ direction in the proof of Theorem \ref{freegamealg}%
. \ To get that $W_{\alpha}\leq\omega\left(  G\right)  $, we implicitly used
the fact that $W_{\alpha}$\ was the value of a strategy pair $\left(
a_{\alpha},b_{\alpha}\right)  $ for the \textit{whole} game $G$, not merely
for the subgame $G_{S}$. \ Therefore, nothing we said implies that
$\omega(G_{S})\leq\omega\left(  G\right)  $, or even $\omega(G_{S})\leq
\omega\left(  G\right)  +\varepsilon$. \ And this makes intuitive sense: if
the Merlins know that Merlin$_{1}$'s question $x$\ will be restricted to a set
of logarithmic size, then how do we know they can't exploit that knowledge to
win with greater probability? \ As we'll discuss in Section \ref{SUBSAMPLING},
it turns out that one \textit{can} prove the stronger result that
$\omega(G_{S})\leq\omega\left(  G\right)  +\varepsilon$\ with high probability
over $S$---and this, in turn, lets one prove that $\mathsf{AM}\left(
2\right)  =\mathsf{AM}$. \ But more work (in particular, that of Alon et
al.\ \cite{avkk}\ and Barak et al.\ \cite{bhhs}) is needed.

Before we discuss that, let us point out some simple corollaries of Theorems
\ref{freegamealg} and \ref{improvedalg}.

\begin{corollary}
\label{amexpcor}$\mathsf{AM}\left(  2\right)  \subseteq\mathsf{EXP}$. \ (In
more detail, we can simulate any $\mathsf{AM}\left(  2\right)  $\ protocol
that uses $p\left(  n\right)  $\ communication and $r\left(  n\right)
=\operatorname*{poly}\left(  n\right)  $ auxiliary randomness in
$2^{O(p\left(  n\right)  ^{2})+r(n)}\operatorname*{poly}\left(  n\right)  $
deterministic time, or $2^{O(p\left(  n\right)  ^{2})}\operatorname*{poly}%
\left(  n\right)  $\ randomized time.)
\end{corollary}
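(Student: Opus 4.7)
The plan is to reduce any $\mathsf{AM}(2)$ protocol to a black-box call to the free-game approximation algorithm of Theorem \ref{freegamealg}. Given a protocol for language $L$ using $p = p(n)$ total bits of communication and $r = r(n)$ bits of auxiliary randomness, I would associate to each input $x$ the free game $G_x = (X,Y,A,B,V)$ where $X,Y,A,B$ are the bit-strings of length at most $p$ naming Arthur's challenges and the Merlins' answers, and
\[
V(x_1,y_1,a_1,b_1) \;:=\; \Pr_{\rho \in \{0,1\}^{r}}\bigl[\text{Arthur accepts }(x_1,y_1,a_1,b_1,\rho)\bigr].
\]
By Proposition \ref{ampcor} I may assume completeness $\geq 2/3$ and soundness $\leq 1/3$, so it suffices to approximate $\omega(G_x)$ to additive error $\pm 1/6$; note $|X||Y||A||B| \leq 2^{O(p)}$.

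For the deterministic bound, I would run the deterministic algorithm of Theorem \ref{freegamealg} on $G_x$ with $\varepsilon = 1/6$. That algorithm itself runs in time $(|X||A|)^{O(\log(|Y||B|))} = 2^{O(p^{2})}$ up to the cost of a single $V$-evaluation, and each $V$-evaluation can be done exactly by enumerating all $2^{r}$ settings of Arthur's coins and running his polynomial-time verification procedure on each, at cost $2^{r}\cdot\textrm{poly}(n)$. Multiplying gives $2^{O(p^{2})+r}\,\textrm{poly}(n)$ total deterministic time. For $p,r = \textrm{poly}(n)$ this is $2^{\textrm{poly}(n)}$, yielding $\mathsf{AM}(2)\subseteq \mathsf{EXP}$ as a corollary.

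For the randomized bound, I would replace exact computation of $V$ by empirical estimates. Run \texttt{REst} from Theorem \ref{freegamealg}, but every time the algorithm queries $V(x_1,y_1,a_1,b_1)$, draw a small number of independent samples $\rho$ and use the empirical mean of Arthur's acceptance indicator. Since the algorithm makes $|A|^{\kappa}\cdot\textrm{poly}(|X||Y||B|) = 2^{O(p^{2})}$ such queries in total, a union bound demands accuracy $\varepsilon'$ with failure probability $\delta' = 2^{-\Theta(p^{2})}$ per query, which by Hoeffding requires $O(\log(1/\delta')/\varepsilon'^{2}) = \textrm{poly}(p)$ samples apiece. Since each sample costs $\textrm{poly}(n)$, the per-query overhead is only $\textrm{poly}(n)$ and the total randomized running time is $2^{O(p^{2})}\,\textrm{poly}(n)$.

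The main obstacle is showing that the empirical estimates do not destroy the correctness guarantee of \texttt{REst}, which hinges on finding exact maximizers $b_\alpha(y)$ and $a_\alpha(x)$. The natural fix is to observe that approximate maximizers are enough: if every estimate is within $\varepsilon'$ of the true $V$-expectation, then any $b$ that maximizes the estimate of $\operatorname{E}_{x\in S}[V(x,y,\alpha(x),b)]$ achieves a true value within $2\varepsilon'$ of the true maximum, and the same holds for $a_\alpha(x)$. Propagating a constant-factor $\varepsilon'$ through the three averaging steps and adjusting the final output by a matching additive shift preserves the distinction between $\omega(G_x)\geq 2/3$ and $\omega(G_x)\leq 1/3$ provided $\varepsilon' = \Theta(1)$, so the required per-sample accuracy is absorbed into constants and the advertised running time is met.
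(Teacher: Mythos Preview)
Your proposal is correct and follows essentially the same approach as the paper: represent the protocol as a free game of size $2^{O(p(n))}$, invoke Theorem~\ref{freegamealg} with constant $\varepsilon$ to approximate $\omega(G_x)$ in $2^{O(p(n)^2)}$ evaluations of $V$, and handle each $V$-evaluation either by enumerating Arthur's $2^{r(n)}$ coins (deterministic) or by sampling with amplification to exponentially small error (randomized). The paper's proof is terser---it simply notes that each $V$-evaluation takes $\mathrm{poly}(n)$ randomized time ``including the time needed for amplification to exponentially-small error probability''---whereas you spell out the approximate-maximizer issue explicitly, but the substance is the same.
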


\begin{proof}
Let $L\in\mathsf{AM}\left(  2\right)  $. \ Then given an input $x\in\left\{
0,1\right\}  ^{n}$, the $\mathsf{AM}\left(  2\right)  $\ protocol for checking
whether $x\in L$\ can be represented as a free game $G=\left(
X,Y,A,B,V\right)  $, where $X=Y=A=B=\left\{  0,1\right\}  ^{p\left(  n\right)
}$, and where Arthur's verification function $V$ is computable in randomized
$\operatorname*{poly}\left(  n\right)  $\ time using $r\left(  n\right)
$\ random bits. \ Now by Theorem \ref{freegamealg}, we can estimate
$\omega\left(  G\right)  $\ to additive error (say) $\pm1/10$, using $\left(
\left\vert X\right\vert \left\vert A\right\vert \right)  ^{O\left(
\log\left\vert Y\right\vert \left\vert B\right\vert \right)  }=2^{O(p\left(
n\right)  ^{2})}$\ deterministically-chosen evaluations of $V$. \ Furthermore,
each of these $V$ evaluations can be performed in $\operatorname*{poly}\left(
n\right)  $\ steps by a randomized algorithm (including the time needed for
amplification to exponentially-small error probability), or in $2^{r(n)}%
\operatorname*{poly}\left(  n\right)  $\ steps by a deterministic algorithm.
\ Finally, estimating $\omega\left(  G\right)  $\ lets us decide whether
$\omega\left(  G\right)  \geq2/3$\ or $\omega\left(  G\right)  \leq1/3$, and
hence whether $x\in L$.
\end{proof}

Corollary \ref{amexpcor}\ (and Theorem \ref{improvedalg}) have the following
further consequence:

\begin{corollary}
\label{3satcor}If \textsc{3Sat{}{}~}$\in\mathsf{AM}_{p(n)}\left(  2\right)  $,
then \textsc{3Sat{}{}~}$\in\mathsf{TIME}[2^{O(p\left(  n\right)  ^{2}%
)}\operatorname*{poly}\left(  n\right)  ]$. \ So in particular, assuming the
Randomized ETH, any $\mathsf{AM}\left(  2\right)  $\ protocol for
\textsc{3Sat{}{}}\ with a $1$\ vs.\ $1-\varepsilon$\ completeness/soundness
gap must use $\Omega(\sqrt{\varepsilon n})$\ communication. \ Likewise,
assuming the Randomized ETH, any protocol with a $1$\ vs.\ $\delta$\ gap must
use $\Omega(\sqrt{n\log1/\delta})$\ communication provided $\delta\geq2^{-n}$.
\ (If, moreover, Arthur's verification procedure is deterministic, then it
suffices to assume the ordinary ETH.)
\end{corollary}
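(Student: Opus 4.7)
The plan is to derive the corollary as a direct consequence of the approximation algorithms already in hand, namely Corollary \ref{amexpcor} for the first implication and Theorem \ref{improvedalg} for the sharper gap-dependent lower bounds. Given an $\mathsf{AM}_{p(n)}(2)$ protocol for \textsc{3Sat} and an input $x\in\{0,1\}^n$, I would write the protocol as a free game $G=(X,Y,A,B,V)$ with $|X|=|Y|=|A|=|B|=2^{p(n)}$, whose verification function $V$ is evaluable in polynomial randomized time (or $2^{r(n)}\operatorname{poly}(n)$ deterministic time). Estimating $\omega(G)$ to within $\pm 1/10$ then decides whether $x$ is a \textsc{3Sat} YES-instance.

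For the first implication, I would simply quote Corollary \ref{amexpcor}: the induced free game has total size $N=2^{4p(n)}$, so Theorem \ref{freegamealg} with constant error gives a running time of $(|X||A|)^{O(\log |Y||B|)}=2^{O(p(n)^2)}$ evaluations of $V$, each costing $\operatorname{poly}(n)$ time. The overall cost is $2^{O(p(n)^2)}\operatorname{poly}(n)$ as claimed. For the sharper ETH-style lower bounds, I would argue by contradiction using the refined runtimes of Theorem \ref{improvedalg}, which apply because the hypothesised protocols have perfect completeness. Suppose an $\mathsf{AM}(2)$ protocol for \textsc{3Sat} with a $1$ vs.\ $1-\varepsilon$ gap uses $p(n)=o(\sqrt{\varepsilon n})$ communication. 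Then Theorem \ref{improvedalg} decides the resulting free game in time $N^{O(\varepsilon^{-1}\log N)}=2^{O(\varepsilon^{-1}p(n)^2)}=2^{o(n)}$, contradicting the (Randomized) ETH. Similarly, assuming a $1$ vs.\ $\delta$ protocol with $p(n)=o(\sqrt{n\log(1/\delta)})$, the same theorem yields runtime $N^{O(1+\log N/\log(1/\delta))}=2^{O(p(n)^2/\log(1/\delta))}=2^{o(n)}$, again contradicting the (Randomized) ETH; the hypothesis $\delta\geq 2^{-n}$ is what keeps $\log(1/\delta)$ from being astronomically large and invalidating the substitution.

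To resolve the deterministic vs.\ randomized distinction, I would observe that the only randomness in the overall simulation comes from evaluating $V$: the algorithms of Theorems \ref{freegamealg} and \ref{improvedalg} are themselves deterministic. Hence, if Arthur's verification procedure is deterministic, every $V$-evaluation is deterministic and the entire simulation is deterministic, giving a contradiction with the ordinary ETH. If Arthur is randomized, one amplifies $V$ to exponentially small error (cheaply, by Chernoff), yielding a randomized simulation that only contradicts the Randomized ETH. No real obstacle is expected here; the whole corollary is essentially a bookkeeping exercise on top of Theorem \ref{improvedalg}. The only subtlety to flag is making sure the factors of $p(n)$ coming from the question set and the answer set both get into the exponent correctly, and ensuring that the perfect-completeness hypothesis genuinely matches the hypothesis of Theorem \ref{improvedalg}; otherwise one would be forced back to the weaker $\varepsilon^{-2}$ bound of Theorem \ref{freegamealg} and lose a factor of $\sqrt{\varepsilon}$ in the communication lower bound.
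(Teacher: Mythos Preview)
Your proposal is correct and follows exactly the approach the paper intends: the paper simply declares that Corollary~\ref{3satcor} is a ``further consequence'' of Corollary~\ref{amexpcor} and Theorem~\ref{improvedalg}, and your write-up fills in precisely that bookkeeping. One small wrinkle: in the $1$ vs.\ $\delta$ case, your simplification $N^{O(1+\log N/\log(1/\delta))}=2^{O(p(n)^2/\log(1/\delta))}$ silently drops the $O(p(n))$ term coming from the ``$1$'' in the exponent; that term is still $o(n)$ precisely because $\delta\geq 2^{-n}$ forces $\log(1/\delta)\leq n$ and hence $p(n)=o(\sqrt{n\log(1/\delta)})\leq o(n)$, which is the real role of the hypothesis (rather than preventing $\log(1/\delta)$ from being ``astronomically large'').
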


Also, a closer examination of the proof of Theorem \ref{freegamealg}\ yields a
better upper bound on $\mathsf{AM}\left(  2\right)  $ than $\mathsf{EXP}$.

\begin{theorem}
\label{amnpcor}$\mathsf{AM}\left(  2\right)  \subseteq\mathsf{AM}%
^{\mathsf{NP}}$.
\end{theorem}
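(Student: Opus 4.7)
I will prove $\mathsf{AM}(2) \subseteq \mathsf{AM}^{\mathsf{NP}}$ by recasting the quasipolynomial-time algorithm of Theorem \ref{freegamealg} as an interactive protocol: Arthur's public coins will sample the random subset $S \subseteq X$ already used in that algorithm, Merlin will supply the best consistent restriction $\alpha : S \to A$, and an $\mathsf{NP}$ oracle will handle the inner maxima over the exponential-size answer sets. This is the ``property-testing'' recasting alluded to in the excerpt.

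Start with an arbitrary $L \in \mathsf{AM}(2)$ and use Proposition \ref{ampcor} to amplify to a free-game formulation $G = (X,Y,A,B,V)$ with $|X|, |Y|, |A|, |B| \leq 2^{\mathrm{poly}(n)}$ and a completeness-vs.-soundness gap of $1 - 1/n$ vs.\ $1/n$. The protocol then runs as follows. Arthur's public coins sample a subset $S \subseteq X$ of size $\kappa = \Theta(\log(|Y||B|)) = \mathrm{poly}(n)$ together with independent test-pairs $(x_1, y_1), \ldots, (x_T, y_T) \in X \times Y$ for $T = \mathrm{poly}(n)$, and he sends everything to Merlin. Merlin replies with (i) a restriction $\alpha : S \to A$, (ii) an answer $a_i \in A$ for each $x_i$, and (iii) an answer $b_j \in B$ for each $y_j$; every component fits in $\mathrm{poly}(n)$ bits. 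Arthur then uses one $\mathsf{NP}$ query per $j$ to verify that no $b' \in B$ strictly improves on $b_j$ under $\operatorname*{E}_{x \in S}[V(x, y_j, \alpha(x), \cdot)]$, which forces $b_j = b_\alpha(y_j)$ in the notation of Theorem \ref{freegamealg}. Finally he computes $\hat{W} := \tfrac{1}{T^2}\sum_{i,j} V(x_i, y_j, a_i, b_j)$ and accepts iff $\hat{W} \geq 1/2$.

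For completeness, an honest Merlin takes $\alpha = a^*|_S$ for a globally near-optimal Merlin$_1$-strategy $a^*$, and sets $a_i := a^*(x_i)$ and $b_j := b_\alpha(y_j)$. The calculation already performed inside the proof of Theorem \ref{freegamealg} then shows $\operatorname*{E}_{x,y}[V(x, y, a^*(x), b_\alpha(y))] \geq \omega(G) - 2\varepsilon$ with high probability over $S$, and $\hat{W}$ concentrates around this mean by Hoeffding. For soundness, once Arthur's $\mathsf{NP}$ checks pin $b_j$ to $b_\alpha(y_j)$, Merlin's only remaining freedom is in his choice of the $a_i$; for each sampled $x_i$ his optimal pick is $a_i^\star := \arg\max_a \tfrac{1}{T}\sum_j V(x_i, y_j, a, b_\alpha(y_j))$. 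Hoeffding plus a union bound over $A$---which costs only $\log|A| = \mathrm{poly}(n)$ in the exponent---shows that for $T$ a sufficiently large polynomial, with high probability over the $y_j$'s this sample-max agrees with $\max_a \operatorname*{E}_{y \in Y}[V(x_i, y, a, b_\alpha(y))] = \operatorname*{E}_y[V(x_i, y, a_\alpha(x_i), b_\alpha(y))]$ up to $\varepsilon$, simultaneously for every sampled $x_i$. Averaging over the $x_i$'s and applying Hoeffding once more gives $\hat{W} \leq W_\alpha + O(\varepsilon) \leq \omega(G) + O(\varepsilon)$, safely below the threshold in the soundness case.

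The main obstacle is precisely this soundness analysis: because Arthur is public-coin, Merlin sees the sampled $y_j$'s before committing to the $a_i$'s, so a naive analysis would allow him to ``overfit'' and drive $\hat{W}$ above $W_\alpha$. What defeats the overfitting is the same scaling that makes Theorem \ref{freegamealg} run in quasipolynomial rather than exponential time: $\log|A| = \mathrm{poly}(n)$, so the union bound over $A$ only demands polynomially many samples, which Arthur can afford to flip and to transmit to Merlin within the bounds of an $\mathsf{AM}^{\mathsf{NP}}$ protocol.
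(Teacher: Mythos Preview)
Your approach is essentially the same as the paper's: recast the algorithm of Theorem~\ref{freegamealg} as an $\mathsf{AM}^{\mathsf{NP}}$ protocol, with Arthur's public coins doing the sampling and the $\mathsf{NP}$ oracle handling the inner maximizations. The paper in fact describes Arthur as using the $\mathsf{NP}$ oracle to \emph{find} the best-response functions $\beta:T\to B$ and then $\gamma:U\to A$ (via the Goldreich--Goldwasser--Ron property-testing paradigm), whereas you have Merlin supply candidate answers $b_j,a_i$ and Arthur merely verify optimality of the $b_j$'s. That structural difference is exactly where your soundness analysis has a gap.

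In your single-round public-coin protocol, Merlin sees all of $S$, the $x_i$'s, and the $y_j$'s \emph{before} committing to $\alpha$. Hence $b_\alpha$ is not a fixed function at the time the $y_j$'s are drawn; it is determined only after Merlin's adaptive choice of $\alpha$. Your Hoeffding-plus-union-bound step unions only over $a\in A$ (and the $T$ sampled $x_i$'s), implicitly treating $b_\alpha$ as fixed---but that is precisely the overfitting you flagged as the main obstacle, just on the $\alpha$ side rather than the $a_i$ side. The fix is to union over all $|A|^{\kappa}$ choices of $\alpha$ as well; since $\kappa\log|A|=\mathrm{poly}(n)$, this still only forces $T=\mathrm{poly}(n)$. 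A second, subtler issue is that your $\mathsf{NP}$ check only certifies that $b_j$ is \emph{some} maximizer of $\operatorname{E}_{x\in S}[V(x,y_j,\alpha(x),\cdot)]$; when there are ties, Merlin may break them adaptively using knowledge of the $x_i$'s, and your analysis (which identifies $b_j$ with a single canonical $b_\alpha(y_j)$) does not cover this. The simplest repair---and the one the paper takes---is to have Arthur use the $\mathsf{NP}$ oracle to \emph{compute} a canonical maximizer (say, the lexicographically first), removing Merlin's control over $b_j$ entirely.
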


\begin{proof}
[Proof Sketch]We only sketch the proof, since in any case this result will be
superseded by the later result that $\mathsf{AM}\left(  2\right)
=\mathsf{AM}$.

In the algorithm \texttt{REst}\ from Theorem \ref{freegamealg}, the first step
has the form of an $\mathsf{AM}$\ protocol. \ That is, following Corollary
\ref{amexpcor}, let $G=\left(  X,Y,A,B,V\right)  $ be the free game associated
to the $\mathsf{AM}\left(  2\right)  $\ protocol we want to simulate, with
$X=Y=A=B=\left\{  0,1\right\}  ^{\operatorname*{poly}\left(  n\right)  }$.
\ Then in our $\mathsf{AM}^{\mathsf{NP}}$\ simulation, we can have Arthur
first choose a subset $S\subseteq X$\ of size $\kappa=\operatorname*{poly}%
\left(  n\right)  $\ uniformly at random and send $S$ to Merlin. \ Next, using
$\kappa\log\left\vert A\right\vert =\operatorname*{poly}\left(  n\right)
$\ bits, Merlin can send back a complete description of a response function
$\alpha:S\rightarrow A$\ that is claimed to achieve (say) $W_{\alpha}\geq2/3$.
\ The question is how to implement the rest of the algorithm---or
equivalently, how Arthur can verify that $W_{\alpha}$\ is large using an
$\mathsf{NP}$\ oracle.

Here the key idea is to use the property-testing paradigm of Goldreich,
Goldwasser, and Ron \cite{ggr}. \ As it stands, the inner loop of
\texttt{REst}\ requires\ first computing Merlin$_{2}$'s optimal
response\ $b_{\alpha}:Y\rightarrow B$ to $\alpha$, then computing Merlin$_{1}%
$'s optimal response\ $a_{\alpha}:X\rightarrow A$ to $b_{\alpha}$, and finally
computing the value $W_{\alpha}$\ achieved by the pair $\left(  a_{\alpha
},b_{\alpha}\right)  $. \ In our case, all three of these steps would operate
on $2^{p\left(  n\right)  }$-sized objects and require $2^{O\left(  p\left(
n\right)  \right)  }$\ time.

However, by using a GGR-like approach, we can replace all three of these
exponential-time computations by polynomial-time random sampling combined with
$\mathsf{NP}$\ oracle calls. \ In more detail: given $\alpha$, Arthur first
chooses a subset $T\subseteq Y$\ of size $\ell=\operatorname*{poly}\left(
n\right)  $\ uniformly at random. \ He then uses his $\mathsf{NP}$\ oracle\ to
find a response function $\beta:T\rightarrow B$\ that maximizes%
\begin{equation}
\operatorname*{E}_{x\in S,y\in T}\left[  V\left(  x,y,\alpha\left(  x\right)
,\beta\left(  y\right)  \right)  \right]  .
\end{equation}
Next, Arthur chooses \textit{another} subset $U\subseteq X$\ of size
$m=\operatorname*{poly}\left(  n\right)  $\ uniformly at random, and again
uses his $\mathsf{NP}$\ oracle\ to find a response function $\gamma
:U\rightarrow A$\ that maximizes%
\begin{equation}
W_{\gamma}:=\operatorname*{E}_{x\in U,y\in T}\left[  V\left(  x,y,\gamma
\left(  x\right)  ,\beta\left(  y\right)  \right)  \right]  .
\end{equation}
Finally, Arthur accepts if and only if $\max_{\gamma}W_{\gamma}\geq1/2$.

If $\omega\left(  G\right)  \geq2/3$, then certainly Merlin can cause Arthur
to accept with high probability in this protocol---for example, by sending
Arthur $\alpha=\alpha^{\ast}$, the restriction of the globally optimal
strategy $a^{\ast}:X\rightarrow A$ to the subset $S$. \ As we saw in the proof
of Theorem \ref{freegamealg}, the Hoeffding inequality and union bound ensure
that $\alpha^{\ast}$\ \textquotedblleft induces\textquotedblright\ responses
$\beta:T\rightarrow B$\ by Merlin$_{2}$ that are close to the best possible
responses in the full game $G$. \ Furthermore, even if $T$\ is only an
$O(\frac{1}{\varepsilon^{2}}\log\left(  \left\vert X\right\vert \left\vert
A\right\vert \right)  )$-sized subset of the full set $Y$, a \textit{second}
application of the Hoeffding inequality and union bound ensure that $\beta$,
in turn, induces responses $\gamma:U\rightarrow A$\ by Merlin$_{1}$\ that are
close to the best possible responses. \ So with high probability over the
choices of $S$, $T$, and $U$, the optimal response functions $\beta
:T\rightarrow B$\ and $\gamma:U\rightarrow A$\ will achieve a value of
$W_{\gamma}$\ close to $\omega\left(  G\right)  $.

As usual, the more interesting part is soundness: if $\omega\left(  G\right)
\leq1/3$, then why can Merlin \textit{not} cause Arthur to accept with high
probability? \ The basic answer is that Merlin has to provide $\alpha
$\ without knowing $T$\ or $U$ (which Arthur will only choose later), and
without being able to control $\beta$\ or $\gamma$\ (which are both just
solutions to maximization problems, obtained using the $\mathsf{NP}$\ oracle).
\ As a consequence, one can show that, if $\max_{\gamma}W_{\gamma}\geq
1/2$\ with high probability over $S$, $T$, and $U$, then one can construct a
\textit{global} strategy pair\ $a:X\rightarrow A$, $b:Y\rightarrow B$ that
achieves value close to $1/2$. \ We omit the details, which closely follow
those in the correctness proofs for property-testing algorithms due to
Goldreich, Goldwasser, and Ron \cite{ggr}.
\end{proof}

\subsection{Subsampling for Free Games and $\mathsf{AM}\left(  2\right)
=\mathsf{AM}$\label{SUBSAMPLING}}

In this section, we wish to go further than $\mathsf{AM}\left(  2\right)
\subseteq\mathsf{EXP}$\ or\ $\mathsf{AM}\left(  2\right)  \subseteq
\mathsf{AM}^{\mathsf{NP}}$, and prove that actually $\mathsf{AM}\left(
2\right)  =\mathsf{AM}$. \ For this purpose, given a free game $G=\left(
X,Y,A,B,V\right)  $, we need to show not merely that a near-optimal pair of
strategies for $G$ can be \textquotedblleft induced\textquotedblright\ by
examining a small random subgame $G_{S}$, but that $\omega\left(
G_{S}\right)  $\ \textit{itself} gives a good approximation to $\omega\left(
G\right)  $, with high probability over $S$. \ As we explained in Section
\ref{ALGSEC}, it is easy to see that $\operatorname{E}_{S}\left[
\omega\left(  G_{S}\right)  \right]  \geq\omega\left(  G\right)  $, since we
can start with optimal strategies for $G$ and then restrict them to $G_{S}$.
\ The hard part is to prove the other direction, that $\operatorname{E}%
_{S}\left[  \omega\left(  G_{S}\right)  \right]  \leq\omega\left(  G\right)
+\varepsilon$.

Here it is convenient to appeal to a powerful recent result of Barak et
al.\ \cite{bhhs}, which shows that \textit{any} dense CSP over a finite
alphabet $\Sigma$ can be \textquotedblleft subsampled,\textquotedblright%
\ generalizing an earlier subsampling result for the Boolean case by Alon et
al.\ \cite{avkk}.

\begin{theorem}
[Subsampling of Dense CSPs\ \cite{bhhs}]\label{bhhsthm}Let $\varphi$ be a
$k$-CSP, involving $n$ variables $X=\left(  x_{1},\ldots,x_{n}\right)  $\ over
the finite alphabet $\Sigma$. \ Suppose that $\varphi$\ has \textquotedblleft
density\textquotedblright\ $\alpha$, in the following sense: for every
collection $Y\subset X$ of $k-1$\ variables, $\varphi$ contains a $\left[
0,1\right]  $-valued constraint $C$\ involving the variables $Y\cup\left\{
x_{i}\right\}  $, for at least an $\alpha$\ fraction of the remaining
variables $x_{i}\in X\setminus Y$. \ Let $\operatorname*{SAT}\left(
\varphi\right)  \in\left[  0,1\right]  $ be the value of $\varphi$; that is,
the maximum of $\operatorname*{E}_{C\in\varphi}\left[  C\left(  X\right)
\right]  $\ over all $X\in\Sigma^{n}$. \ Also, given a subset of variable
indices $I\subseteq\left[  n\right]  $, let $\varphi_{I}$\ be the restriction
of $\varphi$\ to the variables in $I$\ (and to those constraints that only
involve $I$ variables). \ Then provided we choose $I$ uniformly at random
subject to $\left\vert I\right\vert \geq\frac{\log\left\vert \Sigma\right\vert
}{\alpha\varepsilon^{\Lambda}}$ for some suitable constant $\Lambda$, we have%
\begin{equation}
\operatorname*{E}_{I}\left[  \operatorname*{SAT}\left(  \varphi_{I}\right)
\right]  \leq\operatorname*{SAT}\left(  \varphi\right)  +\varepsilon.
\end{equation}

\end{theorem}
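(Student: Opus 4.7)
The plan is to show the nontrivial inequality $\mathbb{E}_I[\operatorname*{SAT}(\varphi_I)] \leq \operatorname*{SAT}(\varphi) + \varepsilon$; the reverse inequality is easy, since fixing a globally optimal assignment $\tau^\ast$ and restricting it to $I$ gives an assignment whose expected value on $\varphi_I$ is exactly $\operatorname*{SAT}(\varphi, \tau^\ast|_I)$, which is at least $\operatorname*{SAT}(\varphi)$ minus an $O(k/|I|)$ correction from constraints lost at the boundary. The heart of the argument is a rounding/lifting procedure: given any near-optimal local assignment $\sigma: I \to \Sigma$ achieving value $\operatorname*{SAT}(\varphi_I)$, I would produce a global assignment $\hat\sigma : [n] \to \Sigma$ whose value on $\varphi$ is at least $\operatorname*{SAT}(\varphi_I) - \varepsilon$ with high probability over $I$. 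This would immediately yield the desired bound, since it would imply that $\operatorname*{SAT}(\varphi) \geq \operatorname*{SAT}(\varphi_I) - \varepsilon$ with high probability over $I$.

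For the rounding, I would follow the Alon et al.\ paradigm: split $I = I_1 \cup I_2$ into two independent random halves. Fix the values of variables in $I_1$ according to $\sigma|_{I_1}$; then for each variable $j \notin I_1$ independently, set $\hat\sigma(j)$ to be the symbol $s \in \Sigma$ that maximizes the empirical expectation, over random $(k-1)$-tuples of variables drawn from $I_1$, of the constraint value when $j$ takes value $s$ and those variables take values from $\sigma$. The density hypothesis $\alpha$ guarantees that enough constraints of the required form exist for this empirical estimate to be meaningful. The second sample $I_2$ is used only for analysis: one uses $\operatorname*{SAT}(\varphi_{I_2}, \hat\sigma|_{I_2})$ as a proxy for $\operatorname*{SAT}(\varphi, \hat\sigma)$ and shows the two are close.

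The concentration step is where the size bound $|I| \gtrsim \log|\Sigma|/(\alpha \varepsilon^\Lambda)$ enters. For any fixed $\tau : [n] \to \Sigma$, a Hoeffding bound gives $\Pr_I[|\operatorname*{SAT}(\varphi_I,\tau|_I) - \operatorname*{SAT}(\varphi,\tau)| > \varepsilon] \leq \exp(-\Omega(\alpha\varepsilon^2 |I|))$, using the density $\alpha$ to ensure that the sampled constraints are a sufficiently representative subsample. One then wishes to union-bound over the $|\Sigma|^{|I_1|}$ possible choices of $\sigma|_{I_1}$ that parameterize the rounded assignment $\hat\sigma$, which forces $|I| = \Omega(|I_1| \log|\Sigma|/(\alpha\varepsilon^2))$ and, after accounting for the boosting needed to make the rounding succeed at every unfixed coordinate, gives the $\varepsilon^{-\Lambda}$ dependence for some absolute constant $\Lambda$.

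The main obstacle will be that $\hat\sigma$ is not a fixed assignment but depends intricately on the sample $I_1$ and on the (adversarially chosen) $\sigma$, so a naive union bound over all $\tau : [n] \to \Sigma$ is hopelessly large. The two-sample trick handles this, but verifying that the plurality rounding really achieves value close to $\operatorname*{SAT}(\varphi_I)$ on $\varphi$ requires a careful hybrid argument (fix $I_1$, randomize over $I_2$, then randomize over $I_1$) together with a concentration-of-measure statement for assignments produced by this adaptive rule. Extending the Boolean case of Alon et al.\ \cite{avkk} to arbitrary finite $\Sigma$, as Barak et al.\ \cite{bhhs} do, is precisely the step where the $\log|\Sigma|$ factor and the unspecified constant $\Lambda$ arise, and I would expect to absorb most of the proof's difficulty in this extension rather than in the Hoeffding or union bound steps themselves.
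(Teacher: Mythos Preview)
The paper does not prove this theorem; it is quoted as a black-box result of Barak et al.\ \cite{bhhs} (generalizing the Boolean case of Alon et al.\ \cite{avkk}) and is invoked without proof to derive the free-game subsampling theorem (Theorem~\ref{subfg}) and ultimately $\mathsf{AM}(2)=\mathsf{AM}$. So there is no ``paper's own proof'' to compare your proposal against.

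That said, your sketch is in the spirit of the Alon et al.\ approach that the paper explicitly identifies as the precursor to \cite{bhhs}: split the sample, use one half to define a greedy/plurality extension to all of $[n]$, and use the other half for analysis, with the crucial observation that the rounded global assignment is parameterized by only $|\Sigma|^{|I_1|}$ choices so that a union bound of that size suffices. One point to be careful about: your description of the rounding (``maximize the empirical expectation over random $(k-1)$-tuples drawn from $I_1$'') glosses over exactly the place where the argument is delicate for $k\ge 3$ and general $\Sigma$. The issue is that fixing a single coordinate $j$ and averaging over $(k-1)$-tuples from $I_1$ gives you an estimate of $j$'s marginal contribution \emph{given the values on $I_1$}, not given the values on the rest of the rounded assignment $\hat\sigma$; showing these are close is the substantive step, and is where the iterated-conditioning / correlation-decay machinery of \cite{bhhs} (as opposed to the simpler Boolean argument of \cite{avkk}) enters. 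Your last paragraph acknowledges this as ``the main obstacle,'' which is accurate, but the proposal as written does not actually indicate how to overcome it---so as a proof it is incomplete at precisely the point where the real work lies.
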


As a side note, if the alphabet size $\left\vert \Sigma\right\vert $\ is
constant, and if one does not care about the dependence of $\left\vert
I\right\vert $\ on $\varepsilon$, then a version of Theorem \ref{bhhsthm}%
\ follows almost immediately from the Szemer\'{e}di Regularity Lemma, in its
many-colored variant (see for example \cite[Theorem 1.18]{komlossimon}).
\ However, this is of limited relevance to us, since in our case $\left\vert
\Sigma\right\vert =\operatorname*{poly}\left(  n\right)  $.

We now use Theorem \ref{bhhsthm} to deduce an analogous subsampling theorem
for free games.

\begin{theorem}
[Subsampling of Free Games]\label{subfg}Given a free game $G=\left(
X,Y,A,B,V\right)  $ and $\varepsilon>0$, let $\kappa:=2\varepsilon^{-\Lambda
}\log\left(  \left\vert A\right\vert +\left\vert B\right\vert \right)  $\ (for
some suitable constant $\Lambda$), and assume $\kappa\leq\left\vert
X\right\vert $. \ Choose a subset $S\subseteq X$ of Merlin$_{1}$ questions
uniformly at random subject to $\left\vert S\right\vert =\kappa$, and let
$G_{S}$\ be the subgame of $G$ with Merlin$_{1}$'s questions restricted to
$S$. \ Then%
\begin{equation}
\operatorname*{E}_{S}\left[  \omega\left(  G_{S}\right)  \right]  \leq
\omega\left(  G\right)  +\varepsilon.
\end{equation}

\end{theorem}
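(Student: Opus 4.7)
The plan is to reduce Theorem \ref{subfg} to the dense-CSP subsampling theorem of Barak et al.\ (Theorem \ref{bhhsthm}) via a bipartite CSP encoding of the free game, and then pass from two-sided to one-sided subsampling using a simple expectation inequality.

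First, I would encode $G = (X, Y, A, B, V)$ as a 2-CSP $\varphi_G$ on variable set $\{u_x\}_{x \in X} \cup \{v_y\}_{y \in Y}$, all over the common alphabet $\Sigma := A \sqcup B$ of size $|A| + |B|$. For each $(x, y) \in X \times Y$, there is one $[0,1]$-valued constraint $C_{x, y}(u_x, v_y)$ defined to equal $V(x, y, u_x, v_y)$ when $u_x \in A$ and $v_y \in B$, and $0$ otherwise. Two facts need checking: first, $\mathrm{SAT}(\varphi_G) = \omega(G)$, since any optimal assignment respects types by putting $u_x \in A$ and $v_y \in B$; second, for any nonempty $S \subseteq X, T \subseteq Y$, the restricted CSP $\varphi_{S \cup T}$ satisfies $\mathrm{SAT}(\varphi_{S \cup T}) = \omega(G_{S, T})$, where $G_{S, T}$ denotes the subgame on questions $S$ and $T$. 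After WLOG padding the smaller of $X, Y$ by duplicated questions (which leaves $\omega(G)$ unchanged), both sides have comparable size and $\varphi_G$ has density $\alpha = \Omega(1)$.

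Second, I would apply (a bipartite product-sampling variant of) Theorem \ref{bhhsthm} to $\varphi_G$ with error parameter $\varepsilon/2$, obtaining that for $S \subseteq X$ uniform of size $\kappa$ and an independent $T \subseteq Y$ uniform of size $t = \Theta(\kappa)$,
\[ \operatorname{E}_{S, T}\bigl[\omega(G_{S, T})\bigr] \leq \omega(G) + \varepsilon/2. \]
I would then combine this with the elementary inequality $\omega(G_S) \leq \operatorname{E}_T[\omega(G_{S, T})]$, which holds for any fixed $S \subseteq X$ and uniformly random $T \subseteq Y$ of any fixed size $t \geq 1$. The proof is immediate: restrict an optimal strategy pair $(a^*, b^*)$ for $G_S$ to $T$; since a uniformly random element of a uniform size-$t$ random subset of $Y$ is uniform on $Y$, the expected value of this restricted pair on $G_{S, T}$, averaged over $T$, equals exactly $\omega(G_S)$, and $\omega(G_{S, T})$ always upper-bounds the value of any fixed strategy. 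Chaining the two bounds gives $\operatorname{E}_S[\omega(G_S)] \leq \operatorname{E}_{S, T}[\omega(G_{S, T})] \leq \omega(G) + \varepsilon/2 < \omega(G) + \varepsilon$.

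The main technical obstacle is justifying the bipartite/product-sampling variant of Theorem \ref{bhhsthm}. Barak et al.'s proof argues that a random subset of variables captures the CSP's aggregate constraint structure, and the argument adapts naturally to sampling the two sides of a bipartite CSP independently. An alternative route, avoiding any strengthening of Theorem \ref{bhhsthm}, is to invoke it directly with joint sampling of $I \subseteq X \cup Y$ and exploit that, conditional on $|I \cap X|$, the sets $I \cap X$ and $I \cap Y$ are independent uniform subsets of $X$ and $Y$ respectively; hypergeometric concentration then lets one argue that $|I \cap X|$ is effectively fixed at $\kappa$, with any lower-order deviation absorbed into the $\varepsilon$ slack.
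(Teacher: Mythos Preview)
Your approach is essentially the paper's: encode $G$ as a dense $2$-CSP over alphabet $\Sigma = A \cup B$ (padding so the two sides have equal size and density $\alpha = 1/2$), invoke Theorem~\ref{bhhsthm}, and bridge to one-sided subsampling via a restriction-monotonicity argument. Your inequality $\omega(G_S) \le \operatorname{E}_T[\omega(G_{S,T})]$ is exactly the paper's ``general principle'' (restrict an optimal assignment on the larger set, average).

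The one difference is the order in which you apply monotonicity, and your ordering manufactures the very obstacle you flag. You want product-sampled two-sided subsampling $\operatorname{E}_{S,T}[\omega(G_{S,T})] \le \omega(G)+\varepsilon$ first, then pass to one-sided; this forces you to justify a bipartite/product variant of Theorem~\ref{bhhsthm}, or to do the hypergeometric patch. The paper avoids this entirely by running monotonicity in the other direction. It applies Theorem~\ref{bhhsthm} as a black box to a uniform size-$\kappa$ subset $I$ of the \emph{full} variable set $X' \cup Y'$, and then compares $I$ directly to $J := (\text{$\kappa$ random $X'$-variables}) \cup Y'$. Since $|I \cap X'| \le \kappa$ always, one can couple so that $I \subseteq J$; each constraint of $\varphi_J$ survives to $\varphi_I$ with the same probability, so restricting an optimal assignment for $\varphi_J$ to $\varphi_I$ preserves expected value, yielding
\[
\operatorname{E}_S[\omega(G_S)] = \operatorname{E}_J[\operatorname{SAT}(\varphi_J)] \le \operatorname{E}_I[\operatorname{SAT}(\varphi_I)] \le \omega(G) + \varepsilon.
\]
No bipartite variant and no hypergeometric concentration are needed: the one-sided set $J$ already sits \emph{above} the uniform $I$ in the discarding chain, so a single application of monotonicity suffices. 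Your ``alternative route'' is headed toward this, but the detour through two-sided sampling is unnecessary.
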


\begin{proof}
We define a $2$-CSP $\varphi$\ as follows. \ Let $X^{\prime}:=X\times R_{1}%
$\ and $Y^{\prime}:=Y\times R_{2}$, where $R_{1}$\ and $R_{2}$\ are finite
sets chosen to ensure that $\left\vert X\right\vert \left\vert R_{1}%
\right\vert =\left\vert Y\right\vert \left\vert R_{2}\right\vert $. \ We think
of $X^{\prime}$\ and $Y^{\prime}$\ as \textquotedblleft
augmented\textquotedblright\ versions of $X$ and $Y$ respectively, obtained by
duplicating variables. \ Then $\varphi$\ will have a variable set consisting
of $a\left(  x,r_{1}\right)  $\ for all $\left(  x,r_{1}\right)  \in
X^{\prime}$\ and $b\left(  y,r_{2}\right)  $\ for all $\left(  y,r_{2}\right)
\in Y^{\prime}$,\ and alphabet $\Sigma:=A\cup B$. \ For each $\left(
x,r_{1}\right)  \in X^{\prime}$ and $\left(  y,r_{2}\right)  \in Y^{\prime}$,
we will add a $\left[  0,1\right]  $-valued constraint $C$\ between $a\left(
x,r_{1}\right)  $\ and $b\left(  y,r_{2}\right)  $, which behaves as follows:

\begin{itemize}
\item If $a\in A$ and $b\in B$, then $C\left(  a,b\right)  =V\left(
x,y,a,b\right)  $.

\item If $a\notin A$\ or $b\notin B$, then $C\left(  a,b\right)  =0$.
\end{itemize}

In this way, we ensure the following two properties:

\begin{enumerate}
\item[(1)] $\operatorname*{SAT}\left(  \varphi\right)  =\omega\left(
G\right)  $. \ Indeed, from any strategy pair $\left(  a,b\right)  $\ that
achieves value $\omega$\ in $G$,\ we can construct an assignment to $\varphi
$\ with value at least $\omega$, and conversely. \ (To see the converse, note
that by convexity, if some $a\left(  x,r_{1}\right)  $\ takes on more than one
value as we range over $r_{1}\in R_{1}$, then there must be a single value
$a\left(  x,r_{1}\right)  =a\left(  x\right)  $ that does at least as well as
the mean, and likewise for $b\left(  y,r_{2}\right)  $.)

\item[(2)] $\varphi$ has density $\alpha=1/2$\ in the sense of Theorem
\ref{bhhsthm}. \ For it includes a constraint between every $a$-variable\ and
every $b$-variable\ (although no constraints relating two $a$-variables or two
$b$-variables), and the numbers of $a$-variables\ and $b$-variables\ are equal.
\end{enumerate}

Now suppose we choose a subset $I$\ of the variables of $\varphi$\ uniformly
at random, subject to $\left\vert I\right\vert =\kappa$\ where $\kappa
=2\varepsilon^{-\Lambda}\log\left(  \left\vert A\right\vert +\left\vert
B\right\vert \right)  $. \ Then by Theorem \ref{bhhsthm}, we have%
\begin{equation}
\operatorname*{E}_{I}\left[  \operatorname*{SAT}\left(  \varphi_{I}\right)
\right]  \leq\operatorname*{SAT}\left(  \varphi\right)  +\varepsilon
=\omega\left(  G\right)  +\varepsilon.
\end{equation}
To complete the proof, we need to show that%
\begin{equation}
\operatorname*{E}_{S}\left[  \omega\left(  G_{S}\right)  \right]
\leq\operatorname*{E}_{I}\left[  \operatorname*{SAT}\left(  \varphi
_{I}\right)  \right]  .
\end{equation}
We will do so by appealing to the following general principle. \ Suppose we
were to draw $I$ by some random process in which we started with the set of
all variables in $\varphi$, then repeatedly discarded variables until we were
left with a uniformly-random subset $I$ of size $\kappa$. \ Suppose, further,
that at any point in this process, the distribution over constraints between
remaining variables remained uniform over the set of \textit{all} constraints
$C\in\varphi$. \ Let $J$\ be a subset of variables obtained by stopping such a
process at any intermediate point. \ Then we must have%
\begin{equation}
\operatorname*{E}_{J}\left[  \operatorname*{SAT}\left(  \varphi_{J}\right)
\right]  \leq\operatorname*{E}_{I}\left[  \operatorname*{SAT}\left(
\varphi_{I}\right)  \right]  .
\end{equation}
The reason is simply that, if we had a collection of partial assignments to
the $\varphi_{J}$'s that achieved expected value $\omega$, then restricting
those assignments to the $\varphi_{I}$'s\ would also achieve expected value
$\omega$, by linearity of expectation.

So in particular, suppose we form $J$ by choosing discarding all but $\kappa
$\ variables of the form $a\left(  x,r_{1}\right)  $, (while keeping all
variables of the form $b\left(  y,r_{2}\right)  $). \ Then since the
distribution over constraints remains uniform for this $J$, and since a
sequence of further discardings could produce a uniformly-random subset
$I$\ with $\left\vert I\right\vert =\kappa$, we have%
\begin{equation}
\operatorname*{E}_{J}\left[  \operatorname*{SAT}\left(  \varphi_{J}\right)
\right]  \leq\operatorname*{E}_{I}\left[  \operatorname*{SAT}\left(
\varphi_{I}\right)  \right]  \leq\omega\left(  G\right)  +\varepsilon.
\end{equation}
But $\operatorname*{E}_{J}\left[  \operatorname*{SAT}\left(  \varphi
_{J}\right)  \right]  $\ is simply $\operatorname*{E}_{S}\left[  \omega\left(
G_{S}\right)  \right]  $, where $S\subseteq X$\ is a uniformly-random subset
of Merlin$_{1}$\ questions of size $\kappa$. \ This completes the proof.
\end{proof}

Theorem \ref{subfg} has the following easy corollary, which removes the
\textquotedblleft asymmetry\textquotedblright\ between the two Merlins.

\begin{corollary}
\label{subfgcor}Given a free game $G=\left(  X,Y,A,B,V\right)  $ and
$\varepsilon>0$, let $\kappa:=2\varepsilon^{-\Lambda}\log\left(  \left\vert
A\right\vert +\left\vert B\right\vert \right)  $, and assume $\kappa\leq
\min\left\{  \left\vert X\right\vert ,\left\vert Y\right\vert \right\}  $.
\ Choose $S\subseteq X$ and $T\subseteq Y$\ uniformly at random and
independently, subject to $\left\vert S\right\vert =\left\vert T\right\vert
=\kappa$. \ Also, let $G_{S,T}$\ be the subgame of $G$ with Merlin$_{1}$'s
questions restricted to $S$ and Merlin$_{2}$'s restricted to $T$. \ Then%
\begin{equation}
\operatorname*{E}_{S,T}\left[  \omega\left(  G_{S,T}\right)  \right]
\leq\omega\left(  G\right)  +2\varepsilon.
\end{equation}

\end{corollary}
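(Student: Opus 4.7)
The plan is to deduce Corollary \ref{subfgcor} by applying Theorem \ref{subfg} twice, first to subsample Merlin$_{1}$'s questions and then to subsample Merlin$_{2}$'s, paying an additive error of $\varepsilon$ each time.

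First, I would observe that Theorem \ref{subfg} is symmetric in the two provers even though it is stated asymmetrically: the roles of $(X,A)$ and $(Y,B)$ in a free game $G=(X,Y,A,B,V)$ are interchangeable, since we can simply relabel which player is called Merlin$_{1}$. So the conclusion of Theorem \ref{subfg} holds equally well if we instead restrict Merlin$_{2}$'s questions to a uniformly random subset $T\subseteq Y$ of size $\kappa$, yielding a subgame $G^{T}$ satisfying $\operatorname*{E}_{T}[\omega(G^{T})]\leq\omega(G)+\varepsilon$.

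Next, I would apply Theorem \ref{subfg} to $G$ itself to obtain, after choosing $S\subseteq X$ uniformly at random with $|S|=\kappa$,
\begin{equation}
\operatorname*{E}_{S}\left[\omega(G_{S})\right]\leq\omega(G)+\varepsilon.
\end{equation}
Then, for each fixed $S$, the game $G_{S}=(S,Y,A,B,V|_{S\times Y})$ is itself a free game, whose Merlin$_{1}$ answer set $A$ and Merlin$_{2}$ answer set $B$ are unchanged, and whose Merlin$_{2}$ question set is still $Y$ with $|Y|\geq\kappa$ by hypothesis. Applying the symmetric form of Theorem \ref{subfg} to $G_{S}$ (subsampling Merlin$_{2}$'s questions via $T\subseteq Y$ with $|T|=\kappa$) gives
\begin{equation}
\operatorname*{E}_{T}\left[\omega(G_{S,T})\right]\leq\omega(G_{S})+\varepsilon
\end{equation}
for every $S$, since the threshold $2\varepsilon^{-\Lambda}\log(|A|+|B|)$ required by Theorem \ref{subfg} is exactly our $\kappa$.

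Taking expectations over $S$ and using the tower property together with independence of $S$ and $T$ then yields
\begin{equation}
\operatorname*{E}_{S,T}\left[\omega(G_{S,T})\right]=\operatorname*{E}_{S}\left[\operatorname*{E}_{T}\left[\omega(G_{S,T})\right]\right]\leq\operatorname*{E}_{S}\left[\omega(G_{S})+\varepsilon\right]\leq\omega(G)+2\varepsilon,
\end{equation}
which is the desired bound. There is no real obstacle here; the only subtlety worth checking is the symmetric invocation of Theorem \ref{subfg} on the intermediate game $G_{S}$, which requires confirming that the hypothesis $\kappa\leq|Y|$ is preserved (it is, by the assumption $\kappa\leq\min\{|X|,|Y|\}$) and that the parameter $2\varepsilon^{-\Lambda}\log(|A|+|B|)$ is unchanged after the first restriction (which it is, since neither $A$ nor $B$ shrinks when we pass from $G$ to $G_{S}$).
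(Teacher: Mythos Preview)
Your proposal is correct and follows essentially the same approach as the paper: apply Theorem~\ref{subfg} once to restrict Merlin$_1$'s questions, then apply it again (by symmetry) to the resulting game to restrict Merlin$_2$'s questions, and combine via linearity of expectation. The paper's proof is a two-sentence sketch of exactly this; you have simply filled in the details (symmetry of the roles, preservation of the hypothesis $\kappa\leq|Y|$, and invariance of $A,B$ under the first restriction).
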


\begin{proof}
We simply need to apply Theorem \ref{subfg}\ twice in succession, once to
reduce Merlin$_{1}$'s question set, and then a second time to reduce
Merlin$_{2}$'s. \ The result follows by linearity of expectation.
\end{proof}

Using Corollary \ref{subfgcor}, we now prove that $\mathsf{AM}\left(
2\right)  =\mathsf{AM}$.

\begin{theorem}
\label{am2am}$\mathsf{AM}\left(  2\right)  =\mathsf{AM}$.
\end{theorem}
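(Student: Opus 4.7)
The plan is to use the free-game subsampling theorem (Corollary \ref{subfgcor}) to replace the two non-communicating Merlins in any $\mathsf{AM}(2)$ protocol by a single Merlin who answers a \emph{random subset} of Arthur's possible challenges to both parties. The point is that, under Corollary \ref{subfgcor}, a single Merlin collaborating with himself over such a random subgame cannot beat $\omega(G)$ by more than an additive $\varepsilon$, so the $\mathsf{AM}(2)$ soundness is preserved even though we have removed the non-communication constraint.

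In more detail, given $L \in \mathsf{AM}(2)$, I would first invoke Proposition \ref{ampcor} to amplify its completeness/soundness gap to $(1-\varepsilon, \varepsilon)$ for some small constant $\varepsilon$. For a given input $x$, this gives a free game $G=(X,Y,A,B,V)$ with $|X|,|Y|,|A|,|B| = 2^{\operatorname{poly}(n)}$ such that $\omega(G) \geq 1-\varepsilon$ if $x \in L$ and $\omega(G) \leq \varepsilon$ otherwise. The new one-Merlin protocol does the following: Arthur uses his public coins to choose uniformly random subsets $S \subseteq X$, $T \subseteq Y$ of size $\kappa := 2\varepsilon^{-\Lambda}\log(|A|+|B|) = \operatorname{poly}(n)$ (where $\Lambda$ is the constant from Corollary \ref{subfgcor}) and sends $(S,T)$ to Merlin. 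Merlin replies with response functions $\alpha:S\to A$ and $\beta:T\to B$, a total of $\kappa \cdot (\log|A|+\log|B|) = \operatorname{poly}(n)$ bits. Arthur accepts iff $\mathbb{E}_{x\in S,\, y\in T}\bigl[V(x,y,\alpha(x),\beta(y))\bigr] \geq 1/2$.

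For completeness, if $x \in L$, take optimal strategies $(a^{*},b^{*})$ for $G$ and have Merlin send their restrictions $\alpha=a^{*}|_S$, $\beta=b^{*}|_T$. Since the $|S|\cdot|T|$ pairs in $S\times T$ are drawn uniformly and independently from $X\times Y$, two applications of Hoeffding's inequality (first over $y$, then over $x$, conditional on the fixed optimal strategies) show that $\mathbb{E}_{x\in S,y\in T}[V(x,y,a^{*}(x),b^{*}(y))] \geq 1 - 2\varepsilon$ with overwhelming probability. For soundness, if $x \notin L$, the best Arthur-acceptance probability Merlin can induce is precisely $\omega(G_{S,T})$, by definition of the subgame's value. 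Corollary \ref{subfgcor} gives $\mathbb{E}_{S,T}[\omega(G_{S,T})] \leq \omega(G) + 2\varepsilon \leq 3\varepsilon$, and Markov's inequality then says $\omega(G_{S,T}) \geq 1/2$ with probability at most $6\varepsilon$. Choosing $\varepsilon$ small enough separates the two cases, and a further round of standard $\mathsf{AM}$ amplification yields the usual $(2/3,1/3)$ gap.

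The main conceptual step is the soundness argument, and the only reason it works is the nontrivial bound $\omega(G_{S,T}) \leq \omega(G)+2\varepsilon$ in expectation: this is exactly what rules out the possibility that a single Merlin, who sees \emph{both} $S$ and $T$ and can correlate his two answer functions, gains an advantage over the two non-communicating Merlins of the original protocol. All other ingredients—amplifying the original $\mathsf{AM}(2)$ protocol, concentrating the honest strategy via Hoeffding, bounding the message lengths by $\operatorname{poly}(n)$, and converting an expectation bound into a high-probability bound via Markov—are routine. I do not expect any additional obstacle beyond verifying that $\kappa \leq \min\{|X|,|Y|\}$ (trivially true for large $n$ since $\kappa=\operatorname{poly}(n)$ and $|X|,|Y|=2^{\operatorname{poly}(n)}$) and that Arthur's verification $V$, even if probabilistic, can be handled by taking $V(x,y,a,b)$ to be the acceptance probability, which already lies in $[0,1]$ as required by Corollary \ref{subfgcor}.
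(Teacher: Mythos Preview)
Your proposal is correct and follows essentially the same route as the paper: Arthur subsamples $S\subseteq X$, $T\subseteq Y$ of size $\kappa=\operatorname{poly}(n)$, asks a single Merlin for strategies on $S$ and $T$, and checks the average; completeness comes from Hoeffding/Chernoff on the restricted optimal strategies, and soundness from Corollary~\ref{subfgcor} plus Markov. The only cosmetic difference is that you first amplify via Proposition~\ref{ampcor} to a $(1-\varepsilon,\varepsilon)$ gap, whereas the paper works directly with the $(2/3,1/3)$ gap and chooses $\varepsilon=1/24$ in the subsampling step; both are fine.
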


\begin{proof}
Let $L\in\mathsf{AM}\left(  2\right)  $. \ Then just like in Corollary
\ref{amexpcor}, an $\mathsf{AM}\left(  2\right)  $\ protocol for checking
whether a string is in $L$\ can be represented as a free game $G=\left(
X,Y,A,B,V\right)  $, where $X=Y=A=B=\left\{  0,1\right\}  ^{p\left(  n\right)
}$ for some polynomial $p$, and $V$\ is computable in randomized
$\operatorname*{poly}\left(  n\right)  $\ time.

Let $\varepsilon:=1/24$\ and $\kappa:=2\varepsilon^{-\Lambda}\left(  p\left(
n\right)  +1\right)  $, and suppose we choose $S\subseteq X$ and $T\subseteq
Y$ uniformly at random subject to $\left\vert S\right\vert =\left\vert
T\right\vert =\kappa$.\ \ Then by Corollary \ref{subfgcor},%
\begin{equation}
\operatorname*{E}_{S,T}\left[  \omega\left(  G_{S,T}\right)  \right]
\leq\omega\left(  G\right)  +\frac{1}{12}. \label{satsat}%
\end{equation}
But this immediately gives us our $\mathsf{AM}$\ simulation, as follows.
\ First Arthur chooses $S,T\subseteq\left\{  0,1\right\}  ^{p\left(  n\right)
}$ uniformly at random, subject to $\left\vert S\right\vert =\left\vert
T\right\vert =\kappa$\ as above. \ He then sends $S$ and $T$ to Merlin, using
$2\kappa\cdot p\left(  n\right)  =O(p\left(  n\right)  ^{2})$ bits. \ Next
Merlin replies with a pair of strategies $a:S\rightarrow A$\ and
$b:T\rightarrow B$, which again takes $O(p\left(  n\right)  ^{2})$\ bits.
\ Let%
\begin{equation}
\omega_{S,T}:=\operatorname*{E}_{x\in S,y\in T}\left[  V\left(  x,y,a\left(
x\right)  ,b\left(  y\right)  \right)  \right]
\end{equation}
be the subsampled success probability; notice that $\omega_{S,T}\leq
\omega\left(  G_{S,T}\right)  $ for all $S,T$. \ Then finally, if $V$ is
deterministic, then Arthur simply computes $\omega_{S,T}$\ and accepts if and
only if $\omega_{S,T}\geq1/2$. \ If $V$ is randomized, then Arthur instead
computes an estimate $\widetilde{\omega}_{S,T}$\ such that%
\begin{equation}
\Pr\left[  \left\vert \widetilde{\omega}_{S,T}-\omega_{S,T}\right\vert
>0.01\right]  \leq\exp\left(  -\kappa\right)  ,
\end{equation}
and accepts if and only if $\widetilde{\omega}_{S,T}\geq0.51$.

We claim, first, that this protocol has completeness error at most
$\exp\left(  -\kappa\right)  $. \ For we can always consider an optimal pair
of strategies $a:X\rightarrow A$\ and $b:Y\rightarrow B$\ for the full
protocol, which achieve value%
\begin{equation}
\omega:=\operatorname*{E}_{x\in X,y\in Y}\left[  V\left(  x,y,a\left(
x\right)  ,b\left(  y\right)  \right)  \right]  =\operatorname*{E}%
_{S,T}\left[  \omega_{S,T}\right]  \geq\frac{2}{3}%
\end{equation}
by assumption. \ Then a standard Chernoff bound implies that $\omega_{S,T}%
\geq0.52$, and hence $\widetilde{\omega}_{S,T}\geq0.51$, with at least
$1-\exp\left(  -\kappa\right)  $\ probability over the choice of $S$ and $T$.

We next upper-bound the soundness error. \ Suppose $\omega\left(  G\right)
\leq1/3$; then by equation (\ref{satsat}),%
\begin{equation}
\operatorname*{E}_{S,T}\left[  \omega_{S,T}\right]  \leq\operatorname*{E}%
_{S,T}\left[  \omega\left(  G_{S,T}\right)  \right]  \leq\omega\left(
G\right)  +\frac{1}{12}\leq\frac{5}{12}.
\end{equation}
So by Markov's inequality,%
\begin{equation}
\Pr_{S,T}\left[  \omega_{S,T}\geq\frac{1}{2}\right]  \leq\frac{5/12}%
{1/2}=\frac{5}{6},
\end{equation}
and hence%
\begin{equation}
\Pr\left[  \widetilde{\omega}_{S,T}\geq0.51\right]  \leq\frac{5}{6}%
+\exp\left(  -\kappa\right)
\end{equation}
as well. \ So Arthur rejects with constant probability. \ Of course, we can
amplify the completeness/soundness gap further by repeating the protocol.
\end{proof}

\subsection{The $k$-Merlin Case\label{KMERLIN}}

In this section, we generalize our results from $\mathsf{AM}\left(  2\right)
$\ to $\mathsf{AM}\left(  k\right)  $ for larger $k$. \ The first step is to
generalize Theorem \ref{freegamealg}, to obtain a nontrivial approximation
algorithm for $k$-player free games.

\begin{theorem}
\label{kmerlinalg}Let $G$ be a $k$-player free game, with question sets
$Y_{1},\ldots,Y_{k}$\ and answer sets $B_{1},\ldots,B_{k}$ (assume $\left\vert
B_{i}\right\vert \geq2$\ for all $i\in\left[  k\right]  $). \ There exists a
deterministic algorithm that approximates $\omega\left(  G\right)  $\ to
within additive error $\pm\varepsilon$, in time%
\begin{equation}
\exp\left(  \frac{k^{2}}{\varepsilon^{2}}\sum_{i<j}\log\left(  \left\vert
Y_{i}\right\vert \left\vert B_{i}\right\vert \right)  \cdot\log\left(
\left\vert Y_{j}\right\vert \left\vert B_{j}\right\vert \right)  \right)
=n^{O(\varepsilon^{-2}k^{2}\log n)},
\end{equation}
where $n=\left\vert Y_{1}\right\vert \left\vert B_{1}\right\vert
\cdots\left\vert Y_{k}\right\vert \left\vert B_{k}\right\vert $\ is the input size.
\end{theorem}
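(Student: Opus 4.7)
The plan is to generalize the $2$-player algorithm \texttt{Est} of Theorem \ref{freegamealg} by recursion on $k$, peeling off one Merlin at a time. At the outermost level, we deterministically enumerate all size-$\kappa_1$ subsets $S_1 \subseteq Y_1$ together with all local strategies $\alpha_1 : S_1 \rightarrow B_1$. For each pair $(S_1,\alpha_1)$, we reduce to the $(k-1)$-player free game $G_{\alpha_1}$ on question sets $Y_2,\ldots,Y_k$ and answer sets $B_2,\ldots,B_k$, whose (now real-valued) verification function is
\begin{equation}
V_{\alpha_1}(y_2,\ldots,y_k,b_2,\ldots,b_k) := \operatorname*{E}_{y_1 \in S_1}\bigl[V(y_1,y_2,\ldots,y_k,\alpha_1(y_1),b_2,\ldots,b_k)\bigr].
\end{equation}
We recursively approximate $\omega(G_{\alpha_1})$, extracting induced response functions $b_{\alpha_1,j} : Y_j \rightarrow B_j$ for $j \geq 2$. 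We then compute player $1$'s best response $b_{\alpha_1,1} : Y_1 \rightarrow B_1$ to $(b_{\alpha_1,j})_{j \geq 2}$ on the \emph{full} set $Y_1$, and report the value $W_{\alpha_1}$ that the resulting tuple achieves in the full game $G$. The final output is $\max_{S_1,\alpha_1} W_{\alpha_1}$ plus a small additive correction.

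For correctness I would prove by induction on $k$ that the algorithm approximates $\omega(G)$ to additive error $\pm\varepsilon$. The ``easy'' direction is immediate, since each $W_{\alpha_1}$ is the value in $G$ of an honest strategy tuple, so the output is at most $\omega(G)+o(\varepsilon)$. For the other direction, I fix an optimal tuple $(b_1^\ast,\ldots,b_k^\ast)$ for $G$ and let $\alpha_1^\ast$ be its restriction of $b_1^\ast$ to $S_1$. Exactly as in the proof of Theorem \ref{freegamealg}, Hoeffding plus a union bound over the $\prod_{j \geq 2}|Y_j||B_j|$ possible question--answer pairs of the other players shows that, provided
\begin{equation}
\kappa_1 \;=\; \Theta\!\left(\frac{k^2}{\varepsilon^2}\sum_{j>1}\log(|Y_j||B_j|)\right),
\end{equation}
with positive probability over $S_1$ we have $\omega(G_{\alpha_1^\ast}) \geq \omega(G) - \varepsilon/(2k)$; combining this with the inductive hypothesis applied at accuracy $\varepsilon/k$ yields a near-optimal recovered tuple. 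Setting $\kappa_i := \Theta\bigl(\tfrac{k^2}{\varepsilon^2}\sum_{j>i}\log(|Y_j||B_j|)\bigr)$ at recursion level $i$ and allocating error budget $\varepsilon/k$ per level ensures the $k$ layers of approximation error telescope to at most $\varepsilon$.

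For the running time, at level $i$ we enumerate $\binom{|Y_i|}{\kappa_i}\cdot|B_i|^{\kappa_i} \leq (|Y_i||B_i|)^{\kappa_i}$ pairs $(S_i,\alpha_i)$, and the per-pair post-processing (averaging $V$ and finding best responses) is polynomial in $n$. The total cost is therefore
\begin{equation}
\operatorname*{poly}(n)\cdot\prod_{i=1}^{k}(|Y_i||B_i|)^{\kappa_i} \;=\; \operatorname*{poly}(n)\cdot\exp\!\left(\sum_{i}\kappa_i\log(|Y_i||B_i|)\right),
\end{equation}
which, after substituting the value of $\kappa_i$ and using the identity $\sum_i \log(|Y_i||B_i|)\sum_{j>i}\log(|Y_j||B_j|) = \sum_{i<j}\log(|Y_i||B_i|)\log(|Y_j||B_j|)$, simplifies to the bound stated in the theorem.

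The main obstacle I expect is bookkeeping the recursive invariant cleanly: the subgame $G_{\alpha_1}$ has real-valued payoffs even when $G$ is $\{0,1\}$-valued, so the recursion must be stated for $[0,1]$-valued verification functions from the start, and the Hoeffding/union-bound argument must be applied in a form that is uniform over \emph{all} strategy tuples of the remaining $k-1$ players---not merely over globally optimal ones---so that the concentration bound survives being combined with the recursive call's error guarantee. A secondary subtlety is choosing the per-level error budget consistently, which is what forces the $k^2/\varepsilon^2$ (rather than $1/\varepsilon^2$) factor in the exponent.
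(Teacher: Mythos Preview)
Your proposal is correct and takes essentially the same approach as the paper: a recursive algorithm that peels off one Merlin at a time, using Hoeffding plus a union bound over all question--answer tuples of the remaining players to show some subset $S_i$ is ``good,'' with per-level error budget $\delta=\varepsilon/k$ (hence the $k^2/\varepsilon^2$ in the exponent). The only cosmetic difference is that you peel off Merlin$_1$ first while the paper peels off Merlin$_k$ first, which is just a relabeling.
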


\begin{proof}
The basic idea is to use a recursive generalization, call it \texttt{Est}%
$_{k}$, of the (deterministic) approximation algorithm \texttt{Est}\ from
Theorem \ref{freegamealg}. \ The recursive version will \textquotedblleft peel
off the Merlins one at a time.\textquotedblright\ \ That is, given a
description of a $k$-player free game $G$ as input, \texttt{Est}$_{k}$\ will
reduce the estimation of $\omega\left(  G\right)  $\ to the estimation of
$\omega(G^{\prime})$, for a quasipolynomial number of $\left(  k-1\right)
$-player free games $G^{\prime}$, each one involving Merlin$_{1}$\ through
Merlin$_{k-1}$ only (Merlin$_{k}$'s behavior having already been fixed).
\ Each $\omega(G^{\prime})$\ will in turn be estimated by calling
\texttt{Est}$_{k-1}$, and so on until $k=1$, at which point we can just do a
straightforward maximization.

In more detail, let $\delta:=\varepsilon/k$. \ Then for each $\ell\in\left\{
2,\ldots,k\right\}  $, let%
\begin{equation}
\kappa_{\ell}:=\frac{C}{\delta^{2}}\sum_{i=1}^{\ell-1}\log\left(  \left\vert
Y_{i}\right\vert \left\vert B_{i}\right\vert \right)  ,
\end{equation}
for some suitable constant $C$. \ Then \texttt{Est}$_{k}$\ loops over all
$\binom{\left\vert Y_{k}\right\vert }{\kappa_{k}}$ subsets of questions
$S_{k}\subseteq Y_{k}$ such that $\left\vert S_{k}\right\vert =\kappa_{k}%
$,\ as well as all $\left\vert B_{k}\right\vert ^{\kappa_{k}}$\ possible
settings $\alpha_{k}:S_{k}\rightarrow B_{k}$\ of the answers to the
$\kappa_{k}$\ questions in $S_{k}$. \ For each such pair $P=\left(
S_{k},\alpha_{k}\right)  $, we define a $\left(  k-1\right)  $-player\ subgame
$G_{P}$, which is played by Merlin$_{1}$ through Merlin$_{k-1}$, and which has
question sets $Y_{1},\ldots,Y_{k-1}$\ and answer sets $B_{1},\ldots,B_{k-1}$.
\ The verification function of $G_{P}$\ is defined as follows:%
\begin{equation}
V_{P}\left(  y_{1},\ldots,y_{k-1},b_{1},\ldots,b_{k-1}\right)
:=\operatorname*{E}_{y_{k}\in S_{k}}\left[  V\left(  y_{1},\ldots,y_{k}%
,b_{1},\ldots,b_{k-1},\alpha_{k}\left(  y_{k}\right)  \right)  \right]  .
\end{equation}
In other words, $G_{P}$\ is the same game as $G$, except that we assume that
Merlin$_{k}$ is only asked questions $y_{k}\in S_{k}$, and that he responds to
each with $\alpha_{k}\left(  y_{k}\right)  $.

Now, for each $P$, the algorithm \texttt{Est}$_{k}$\ does the following:

\begin{enumerate}
\item[(1)] If $k\geq3$, then it calls \texttt{Est}$_{k-1}$\ recursively, in
order to find approximately optimal strategies $\left(  b_{P,i}:Y_{i}%
\rightarrow B_{i}\right)  _{i\in\left[  k-1\right]  }$\ for Merlin$_{1}$
through Merlin$_{k-1}$ in $G_{P}$. \ Here \textquotedblleft approximately
optimal\textquotedblright\ means achieving value at least $\omega
(G_{P})-\delta$. \ Of course, when $k=2$, the algorithm can simply compute
Merlin$_{1}$'s exactly-optimal response $b_{P,1}:Y_{1}\rightarrow B_{1}$\ by
maximizing%
\begin{equation}
\operatorname*{E}_{y_{2}\in S_{2}}\left[  V\left(  y_{1},y_{2},b_{P,1}\left(
y_{1}\right)  ,\alpha_{2}\left(  y_{2}\right)  \right)  \right]
\end{equation}
for each $y_{1}\in Y_{1}$\ separately, just like in the two-player algorithm
\texttt{Est}.

\item[(2)] Given the responses $b_{P,1},\ldots,b_{P,k-1}$ of Merlin$_{1}$
through Merlin$_{k-1}$, the algorithm computes Merlin$_{k}$'s best response
$b_{P,k}:Y_{k}\rightarrow B_{k}$\ on the full set $Y_{k}$\ by maximizing%
\begin{equation}
\operatorname*{E}_{y_{1}\in Y_{1},\ldots,y_{k-1}\in Y_{k-1}}\left[  V\left(
y_{1},\ldots,y_{k},b_{P,1}\left(  y_{1}\right)  ,\ldots,b_{P,k}\left(
y_{k}\right)  \right)  \right]
\end{equation}
for each $y_{k}\in Y_{k}$\ separately. \ It then lets%
\begin{equation}
W_{P}:=\operatorname*{E}_{y_{1}\in Y_{1},\ldots,y_{k}\in Y_{k}}\left[
V\left(  y_{1},\ldots,y_{k},b_{P,1}\left(  y_{1}\right)  ,\ldots
,b_{P,k}\left(  y_{k}\right)  \right)  \right]
\end{equation}
be the value of the $k$-tuple of strategies induced by $P$.
\end{enumerate}

Finally, \texttt{Est}$_{k}$\ outputs $W:=\max_{P}W_{P}$\ as its estimate for
$\omega\left(  G\right)  $. \ Note that, in addition to $W$, the algorithm
also outputs a strategy $k$-tuple $\left(  b_{P,1},\ldots,b_{P,k}\right)  $
that achieves value $W$.

Let $T\left(  \ell\right)  $\ be the number of evaluations of the
\textquotedblleft original\textquotedblright\ verification function $V\left(
y_{1},\ldots,y_{k},b_{1},\ldots,b_{k}\right)  $\ that \texttt{Est}$_{\ell}$
needs to make, when it's called on an $\ell$-player game involving
Merlin$_{1}$\ through Merlin$_{\ell}$. \ Then we have the following recurrence
relation:%
\begin{equation}
T\left(  \ell\right)  \leq\binom{\left\vert Y_{\ell}\right\vert }{\kappa
_{\ell}}\left\vert B_{\ell}\right\vert ^{\kappa_{\ell}}\left(  T\left(
\ell-1\right)  +\left\vert Y_{1}\right\vert \cdots\left\vert Y_{\ell
-1}\right\vert \cdot\left\vert Y_{\ell}\right\vert \left\vert B_{\ell
}\right\vert \cdot\kappa_{\ell+1}\cdots\kappa_{k}\right)  , \label{tl}%
\end{equation}
with base case $T\left(  1\right)  =\left\vert Y_{1}\right\vert \left\vert
B_{1}\right\vert \cdot\kappa_{2}\cdots\kappa_{k}$. \ (The reason for the
factor of $\kappa_{\ell+1}\cdots\kappa_{k}$\ is that, just to compute $V$ for
a game involving Merlin$_{1}$ through Merlin$_{\ell}$, one needs to take an
expectation over all $y_{\ell+1}\in S_{\ell+1},\ldots,y_{k}\in S_{k}$.) \ Now,
it is not hard to see that the%
\begin{equation}
\left\vert Y_{1}\right\vert \cdots\left\vert Y_{\ell-1}\right\vert
\cdot\left\vert Y_{\ell}\right\vert \left\vert B_{\ell}\right\vert \cdot
\kappa_{\ell+1}\cdots\kappa_{k}%
\end{equation}
terms all get absorbed by asymptotically larger terms. \ Asymptotically, then,%
\begin{align}
T\left(  k\right)   &  \leq\left(  \left\vert Y_{k}\right\vert \left\vert
B_{k}\right\vert \right)  ^{\kappa_{k}}T\left(  k-1\right) \\
&  =\exp\left(  \log\left(  \left\vert Y_{k}\right\vert \left\vert
B_{k}\right\vert \right)  \cdot\frac{C}{\delta^{2}}\sum_{i=1}^{k-1}\log\left(
\left\vert Y_{i}\right\vert \left\vert B_{i}\right\vert \right)  \right)
\cdot T\left(  k-1\right) \\
&  =\exp\left(  \frac{k^{2}}{\varepsilon^{2}}\sum_{i<j}\log\left(  \left\vert
Y_{i}\right\vert \left\vert B_{i}\right\vert \right)  \cdot\log\left(
\left\vert Y_{j}\right\vert \left\vert B_{j}\right\vert \right)  \right)  .
\end{align}
Since the running time is dominated by evaluations of $V$ (each of which takes
constant time), this also gives the asymptotic running time.

The proof of correctness for \texttt{Est}$_{k}$ follows the same general
outline as the proof of the correctness for \texttt{Est}. \ Once again, since
each $W_{P}$\ is the value achieved by some actual $k$-tuple of strategies
$b_{P,1},\ldots,b_{P,k}$ in the full game $G$, it is clear that $W_{P}%
\leq\omega\left(  G\right)  $ for all $P$. \ The nontrivial part is to show
that $W_{P}\geq\omega\left(  G\right)  -\varepsilon$\ for \textit{some}
$P=\left(  S_{k},\alpha_{k}\right)  $.

We will prove this claim by induction on $\ell$. \ That is, suppose by
induction that, for every $\left(  \ell-1\right)  $-player game $G_{P}%
$\ played by Merlin$_{1}$ through Merlin$_{\ell-1}$, the algorithm
\texttt{Est}$_{\ell-1}$\ finds an $\left(  \ell-1\right)  $-tuple of
strategies that achieve a value at least $\omega\left(  G_{P}\right)
-\epsilon$. \ We will show that this implies that, for every $\ell$-player
game $G_{Q}$\ played by Merlin$_{1}$ through Merlin$_{\ell}$, the algorithm
\texttt{Est}$_{\ell}$\ achieves a value at least $\omega\left(  G_{Q}\right)
-\epsilon-\delta$. \ Since $\delta=\varepsilon/k$, clearly this suffices to
prove the claim.

Let $G_{Q}$\ be the $\ell$-player game defined by the tuple $Q=\left(
S_{\ell+1},\ldots,S_{k},\alpha_{\ell+1},\ldots,\alpha_{k}\right)  $. \ Then
$G_{Q}$\ has the verification function%
\begin{equation}
V_{Q}\left(  y_{1},\ldots,y_{\ell},b_{1},\ldots,b_{\ell}\right)
:=\operatorname*{E}_{y_{\ell+1}\in S_{\ell+1},\ldots,y_{k}\in S_{k}}\left[
V\left(  y_{1},\ldots,y_{k},b_{1},\ldots,b_{\ell},\alpha_{\ell+1}\left(
y_{\ell+1}\right)  ,\ldots,\alpha_{k}\left(  y_{k}\right)  \right)  \right]  .
\end{equation}
By definition, there exists an $\ell$-tuple of strategies $\left(  b_{i}%
^{\ast}:Y_{i}\rightarrow B_{i}\right)  _{i\in\left[  \ell\right]  }$\ such
that%
\begin{equation}
\operatorname*{E}_{y_{1}\in Y_{1},\ldots,y_{\ell}\in S_{\ell}}\left[
V_{Q}\left(  y_{1},\ldots,y_{\ell},b_{1}^{\ast}\left(  y_{1}\right)
,\ldots,b_{\ell}^{\ast}\left(  y_{\ell}\right)  \right)  \right]
=\omega\left(  G_{Q}\right)  .
\end{equation}
Given a subset $S_{\ell}\subseteq Y_{\ell}$ with $\left\vert S_{\ell
}\right\vert =\kappa_{\ell}$, call $S_{\ell}$\ \textquotedblleft
good\textquotedblright\ if it has the property that%
\begin{equation}
\left\vert \operatorname*{E}_{y_{\ell}\in S_{\ell}}\left[  V_{Q}\left(
y_{1},\ldots,y_{\ell},b_{1},\ldots,b_{\ell-1},b_{\ell}^{\ast}\left(  y_{\ell
}\right)  \right)  \right]  -\operatorname*{E}_{y_{\ell}\in Y_{\ell}}\left[
V_{Q}\left(  y_{1},\ldots,y_{\ell},b_{1},\ldots,b_{\ell-1},b_{\ell}^{\ast
}\left(  y_{\ell}\right)  \right)  \right]  \right\vert \leq\frac{\delta}{2}%
\end{equation}
for \textit{every} $\left(  \ell-1\right)  $-tuple of questions $\left(
y_{1},\ldots,y_{\ell-1}\right)  \in Y_{1}\times\cdots\times Y_{\ell-1}$\ and
answers $\left(  b_{1},\ldots,b_{\ell-1}\right)  \in B_{1}\times\cdots\times
B_{\ell-1}$. \ Then a straightforward application of the Hoeffding inequality
and union bound shows that the fraction of $S_{\ell}$'s that are good is at
least%
\begin{equation}
1-2e^{-\delta^{2}\kappa_{\ell}}\left\vert Y_{1}\right\vert \left\vert
B_{1}\right\vert \cdots\left\vert Y_{\ell-1}\right\vert \left\vert B_{\ell
-1}\right\vert =1-2\exp\left(  -C\sum_{i=1}^{\ell-1}\log\left(  \left\vert
Y_{i}\right\vert \left\vert B_{i}\right\vert \right)  \right)  \left\vert
Y_{1}\right\vert \left\vert B_{1}\right\vert \cdots\left\vert Y_{\ell
-1}\right\vert \left\vert B_{\ell-1}\right\vert \geq\frac{2}{3}%
\end{equation}
for suitable $C$. \ Thus, certainly there \textit{exists} a good $S_{\ell}$,
and \texttt{Est}$_{\ell}$\ will find one when it loops over all possibilities.
\ Fix a good $S_{\ell}$\ in what follows.

Let $G_{P}$\ be the $\left(  \ell-1\right)  $-player game played by
Merlin$_{1}$ through Merlin$_{\ell-1}$, which is obtained from $G_{Q}$\ by
restricting Merlin$_{\ell}$'s\ question set to $S_{\ell}$, \textit{and}
restricting Merlin$_{\ell}$'s\ strategy to $b_{\ell}^{\ast}$. \ Then notice
that $S_{\ell}$\ being good\ has the following two consequences:

\begin{enumerate}
\item[(i)] We can achieve value at least $\omega\left(  G_{Q}\right)
-\delta/2$\ in $G_{P}$, by simply starting with $b_{1}^{\ast},\ldots,b_{\ell
}^{\ast}$\ and then restricting $b_{\ell}^{\ast}$\ to $S_{\ell}$.

\item[(ii)] Any time we find strategies $b_{1},\ldots,b_{\ell-1}$\ that
achieve value at least $W$ in\ $G_{P}$, we have also found strategies that
achieve value at least $W-\delta/2$\ in $G_{Q}$: we simply need to fix
Merlin$_{\ell}$'s\ strategy to be $b_{\ell}^{\ast}$.
\end{enumerate}

Combining facts (i) and (ii), we find that, \textit{if} \texttt{Est}$_{\ell
-1}$ can achieve value at least $\omega\left(  G_{P}\right)  -\epsilon$\ in
$G_{P}$, then \texttt{Est}$_{\ell}$ can achieve value at least $\omega\left(
G_{Q}\right)  -\epsilon-\delta$\ in $G_{Q}$. \ Intuitively, this is because
the errors build up linearly: we incur an error of $\delta/2$\ when switching
from $G_{Q}$\ to $G_{P}$, then an error of $\epsilon$ (by hypothesis) when
running \texttt{Est}$_{\ell-1}$\ to find strategies for $G_{P}$, and finally
another error of\ $\delta/2$\ when switching from $G_{P}$\ back to $G_{Q}$.
\ This completes the induction, and hence the proof that $W\geq\omega\left(
G\right)  -\varepsilon$.
\end{proof}

Just like in the $k=2$\ case, we can modify the algorithm \texttt{Est}$_{k}%
$\ so that it chooses the sets $S$\ uniformly at random, rather than looping
over all possible $S$'s. \ By doing so, we can get a randomized algorithm that
approximates $\omega\left(  G\right)  $\ to within additive error
$\pm\varepsilon$ in the slightly better running time%
\begin{equation}
\left\vert Y_{k}\right\vert \cdot\exp\left(  \frac{k^{2}}{\varepsilon^{2}}%
\sum_{i<j}\log\left(  \left\vert Y_{i}\right\vert \left\vert B_{i}\right\vert
\right)  \cdot\log\left(  \left\vert B_{j}\right\vert \right)  \right)  .
\end{equation}
We omit the details.

Analogously to Theorem \ref{improvedalg}, we can also improve the running time
of \texttt{Est}$_{k}$\ in the case of perfect completeness.

\begin{theorem}
\label{improvedkmerlin}Given a $k$-player free game $G=\left(  Y_{1}%
,\ldots,Y_{k},B_{1},\ldots,B_{k},V\right)  $, we can decide whether
$\omega\left(  G\right)  =1$\ or $\omega\left(  G\right)  <1-\varepsilon
$\ (promised that one of those is the case) using a deterministic algorithm
that runs in time $n^{O(\varepsilon^{-1}k^{2}\log n)}$, where $n=\left\vert
Y_{1}\right\vert \left\vert B_{1}\right\vert \cdots\left\vert Y_{k}\right\vert
\left\vert B_{k}\right\vert $\ is the input size. \ (In more detail, in both
running time bounds of Theorem \ref{kmerlinalg}, we can improve the factor of
$k^{2}/\varepsilon^{2}$\ in the exponent to $k^{2}/\varepsilon$.)
\end{theorem}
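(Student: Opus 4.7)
The plan is to adapt the recursive peel-off algorithm \texttt{Est}$_k$ from the proof of Theorem~\ref{kmerlinalg}, modifying only the size of each subsample so as to exploit the perfect-completeness promise, in the same spirit in which Theorem~\ref{improvedalg} improved Theorem~\ref{freegamealg} in the two-player case. Concretely, I would keep the recursive structure---at level~$\ell$, loop over pairs $P=(S_\ell,\alpha_\ell)$ with $S_\ell\subseteq Y_\ell$ and $\alpha_\ell:S_\ell\to B_\ell$, call \texttt{Est}$_{\ell-1}$ on the induced $(\ell-1)$-player subgame, and then optimize Merlin$_\ell$'s response on the full $Y_\ell$---but shrink $\kappa_\ell=|S_\ell|$ from $(C/\delta^2)\sum_{i<\ell}\log(|Y_i||B_i|)$ down to $(C/\delta)\sum_{i<\ell}\log(|Y_i||B_i|)$, with $\delta:=\varepsilon/k$. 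Substituting this into the running-time recurrence (\ref{tl}) and carrying through the same telescoping converts the $k^2/\varepsilon^2$ in the exponent into $k^2/\varepsilon$, as claimed.

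For correctness, soundness is essentially free: each $W_P$ is the value of an honestly-constructed $k$-tuple of strategies in~$G$, so $W\leq\omega(G)<1-\varepsilon$ whenever $\omega(G)<1-\varepsilon$. The interesting direction is completeness. I would argue inductively that when $\omega(G)=1$, the algorithm finds, at level~$\ell$, a strategy tuple achieving value at least $1-\ell\delta$ in the current game. The inductive step uses the ``correct'' choice $P^{*}=(S_\ell^{*},b_\ell^{*}|_{S_\ell^{*}})$, which preserves perfect completeness of the induced subgame $G_{P^{*}}$, so that the recursive call returns strategies with value $\geq 1-(\ell-1)\delta$ in $G_{P^{*}}$. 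It then remains to bound the loss incurred in passing from the empirical expectation over $y_\ell\in S_\ell^{*}$ (what the recursion guarantees) back to the true expectation over $y_\ell\in Y_\ell$ (what $W_{P^{*}}$ actually measures).

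This last step is where the improvement over Theorem~\ref{kmerlinalg} must be extracted, and it is where I expect the main obstacle to lie. In the original proof, a Hoeffding bound plus a union bound over the $\prod_{i<\ell}|Y_i||B_i|$ tuples $(y_1,\ldots,y_{\ell-1},b_1,\ldots,b_{\ell-1})$ forces $\kappa_\ell=\Omega(\delta^{-2}\sum_{i<\ell}\log(|Y_i||B_i|))$. To save a factor of $\delta$, I would mimic the trick from the proof of Theorem~\ref{improvedalg}: for each tuple, either the true mean of $V^{(\ell)}(\cdot,b_\ell^{*}(\cdot))$ over $y_\ell\in Y_\ell$ already equals its maximum value of~$1$ (in which case the empirical trivially matches), or there is a positive ``defect'' which, once it is at least a $\delta$ fraction, is detected by a random $S_\ell^{*}$ with failure probability at most $(1-\delta)^{\kappa_\ell}$. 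A union bound then yields the desired $\kappa_\ell=O(\delta^{-1}\sum_{i<\ell}\log(|Y_i||B_i|))$.

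The subtlety to beware of is that at inner recursion levels the effective verification $V^{(\ell)}$ is a $[0,1]$-valued expectation over the previously fixed restrictions, rather than a $\{0,1\}$-valued predicate, so the clean ``bad $y_\ell$ exists'' dichotomy of Theorem~\ref{improvedalg} does not literally apply. I would handle this by combining a Bernstein-type inequality (which is sharp precisely when the true mean is close to~$1$, the only regime we actually care about under perfect completeness) with a Markov-type argument ruling out tuples whose true mean is already far enough from~$1$ that any reasonable YES/NO threshold rejects them anyway. Setting the per-level loss to $\eta_\ell=\ell\delta$ and accepting YES whenever $W\geq 1-3\varepsilon/4$ should then let the whole induction close, yielding the claimed running time of $n^{O(\varepsilon^{-1}k^2\log n)}$.
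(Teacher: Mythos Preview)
Your overall plan is the same as the paper's: keep the recursive structure of \texttt{Est}$_k$ and exploit the perfect-completeness promise to replace the plain Hoeffding bound at each level by a variance-sensitive one. The paper phrases this as the fact that a $[0,1]$ random variable with expectation at most $\varepsilon$ can be estimated to within $\pm\delta$ using only $O(\varepsilon/\delta^{2})$ samples, and sets $\kappa_\ell = \Theta\!\bigl((\varepsilon/\delta^{2})\sum_{i<\ell}\log(|Y_i||B_i|)\bigr)$ with $\delta=\varepsilon/k$, i.e.\ $\kappa_\ell=\Theta\!\bigl((k^{2}/\varepsilon)\sum_{i<\ell}\log(|Y_i||B_i|)\bigr)$.

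There is a quantitative slip in your bookkeeping. You propose shrinking $\kappa_\ell$ all the way down to $(C/\delta)\sum_{i<\ell}\log(|Y_i||B_i|)$. That would be justified by the clean ``bad $y_\ell$ exists'' dichotomy of Theorem~\ref{improvedalg}, but---as you yourself observe---at inner levels $V^{(\ell)}$ is $[0,1]$-valued, and the Bernstein inequality you invoke as a fix only delivers $O(\varepsilon/\delta^{2})$ samples, not $O(1/\delta)$. With $\delta=\varepsilon/k$ those two quantities differ by a factor of $k$. In fact your $\kappa_\ell=C/\delta$ substituted into the recurrence would yield running time $n^{O(\varepsilon^{-1}k\log n)}$, strictly stronger than what the theorem claims; you arrive at the stated $n^{O(\varepsilon^{-1}k^{2}\log n)}$ only through a second, compensating arithmetic error when you telescope. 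Once you replace your $\kappa_\ell$ by the Bernstein-justified $\Theta(\varepsilon/\delta^{2})\cdot\sum_{i<\ell}\log(|Y_i||B_i|)$, the argument closes and coincides with the paper's.
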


\begin{proof}
[Proof Sketch]As in Theorem \ref{improvedalg}, the key observation is that, if
we only care about distinguishing $\omega\left(  G\right)  =1$\ from
$\omega\left(  G\right)  <1-\varepsilon$, then it suffices to set%
\begin{equation}
\kappa_{\ell}:=\frac{C}{\varepsilon/k^{2}}\sum_{i=1}^{\ell-1}\log\left(
\left\vert Y_{i}\right\vert \left\vert B_{i}\right\vert \right)  .
\end{equation}
The reason is this: we still need to limit the new error introduced at each
level of the recursion to $\delta=\varepsilon/k$. \ However, if $\omega\left(
G\right)  =1$, then the total error will \textit{never} exceed $k\left(
\varepsilon/k\right)  =\varepsilon$, given optimal responses to the question
sets $S_{2},\ldots,S_{k}$ chosen at each level of the recursion, assuming that
$S_{2},\ldots,S_{k}$ are good. \ And it is known that, if a $\left[
0,1\right]  $\ random variable has expectation at most $\varepsilon$, then we
can estimate it to within additive error $\pm\delta$\ with high probability
using only $O(\varepsilon/\delta^{2})$\ samples (see for example
\cite[Appendix 6]{aa:struc}). \ The improved running time bounds follow
directly from the improvement to $\kappa_{\ell}$.
\end{proof}

Theorem \ref{kmerlinalg}\ readily implies an upper bound on $\mathsf{AM}%
\left(  k\right)  $.

\begin{corollary}
\label{amkexpcor}$\mathsf{AM}\left(  k\right)  \subseteq\mathsf{EXP}$ for all
polynomials $k=k\left(  n\right)  $.
\end{corollary}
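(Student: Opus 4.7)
The plan is to deduce this almost immediately from Theorem \ref{kmerlinalg}, by representing any $\mathsf{AM}(k)$ protocol as a $k$-player free game of singly-exponential total size and then checking that the quasipolynomial-in-$\log N$ running time stays inside $\mathsf{EXP}$.

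First I would fix $L \in \mathsf{AM}(k)$ with $k = k(n)$ polynomial, and let $p(n)$ be a polynomial bound on the length of each $y_i$ and each $b_i(y_i)$ in Definition \ref{amkdef}. For an input $x$ of length $n$, the protocol gives rise to a $k$-player free game $G_x = (Y_1,\ldots,Y_k,B_1,\ldots,B_k,V)$ with $Y_i = B_i = \{0,1\}^{p(n)}$ for every $i$, and with $\omega(G_x) \geq 2/3$ or $\omega(G_x) \leq 1/3$ according as $x \in L$ or $x \notin L$. The total input size fed to the algorithm of Theorem \ref{kmerlinalg} is
\begin{equation}
N \;=\; |Y_1||B_1|\cdots|Y_k||B_k| \;=\; 2^{O(k\,p(n))},
\end{equation}
so in particular $\log N = O(k\,p(n))$.

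Next I would invoke Theorem \ref{kmerlinalg} with $\varepsilon := 1/6$, which approximates $\omega(G_x)$ to within $\pm 1/6$ in deterministic time $N^{O(k^2 \log N)} = 2^{O(k^2 \log^2 N)} = 2^{O(k^4\,p(n)^2)}$. Since $k$ and $p$ are both polynomials in $n$, this is $2^{\operatorname{poly}(n)}$, placing the decision problem in $\mathsf{EXP}$. The resulting estimate distinguishes $\omega(G_x) \geq 2/3$ from $\omega(G_x) \leq 1/3$, which decides $L$.

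The one subtlety is that Arthur's verifier $V$ in Definition \ref{amkdef} is randomized, whereas Theorem \ref{kmerlinalg} treats $V$ as a deterministic oracle; this is the only step that requires even a small amount of care. I would handle it by first amplifying the verifier to have error probability at most $2^{-q(n)}$ for a sufficiently large polynomial $q$, using independent repetition and a majority vote (standard $\mathsf{BPP}$-style amplification, which keeps the message lengths polynomial). Then, over the $2^{O(k^4 p(n)^2)}$ calls to $V$ made by the approximation algorithm, a union bound guarantees that every call returns the correct deterministic value with overwhelming probability, and the total running time remains $2^{\operatorname{poly}(n)}$. Since each $V$-evaluation itself costs $\operatorname{poly}(n)$ time, the deterministic simulation of the whole algorithm still fits inside $\mathsf{EXP}$, yielding $\mathsf{AM}(k) \subseteq \mathsf{EXP}$.
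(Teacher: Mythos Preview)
Your overall approach matches the paper's: represent the $\mathsf{AM}(k)$ protocol as a $k$-player free game with $Y_i=B_i=\{0,1\}^{p(n)}$, invoke Theorem \ref{kmerlinalg} with constant $\varepsilon$, and check that the resulting bound $2^{O(k^4 p(n)^2)}$ is singly exponential. That part, including the running-time arithmetic, is correct and is exactly what the paper does.

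The gap is in your treatment of the randomized verifier. The object that Theorem \ref{kmerlinalg} needs to evaluate is the real number $V(y_1,\ldots,y_k,b_1,\ldots,b_k)\in[0,1]$, namely the verifier's acceptance \emph{probability} on that tuple. ``$\mathsf{BPP}$-style amplification'' (independent repetition plus majority) does not drive such a probability toward a ``correct deterministic value'': if the true probability is, say, $0.37$, no majority vote pushes it to $0$ or to $1$. What your union-bound argument actually describes is \emph{estimating} each probability by sampling; that yields a randomized exponential-time algorithm, which only places $L$ in something like $\mathsf{BPEXP}$, not $\mathsf{EXP}$. Derandomizing by enumerating the sample bits does not work either, since each of the $2^{O(k^4 p(n)^2)}$ calls would use fresh polynomial randomness, and iterating over all of it is doubly exponential.

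The paper's fix is both simpler and correct: since the verifier uses only $r(n)=\operatorname{poly}(n)$ auxiliary random bits, one computes each value $V(y_1,\ldots,y_k,b_1,\ldots,b_k)$ \emph{exactly} by averaging over all $2^{r(n)}$ random strings, in deterministic time $2^{r(n)}\operatorname{poly}(n)$ per call. This keeps the total running time at $2^{O(k^4 p(n)^2)+r(n)}\operatorname{poly}(n)=2^{\operatorname{poly}(n)}$ and gives the deterministic containment $\mathsf{AM}(k)\subseteq\mathsf{EXP}$.
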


\begin{proof}
Let $L\in\mathsf{AM}\left(  k\right)  $. \ Then given an input $x\in\left\{
0,1\right\}  ^{n}$, the $\mathsf{AM}\left(  k\right)  $\ protocol for checking
whether $x\in L$\ can be represented as a $k$-player free game $G=(\left(
Y_{i}\right)  _{i\in\left[  k\right]  },\left(  B_{i}\right)  _{i\in\left[
k\right]  },V)$, where $Y_{i}=B_{i}=\left\{  0,1\right\}  ^{p\left(  n\right)
}$ for all $i$ (for some polynomial $p$), and where Arthur's verification
function $V$ is computable in $\operatorname*{poly}\left(  n\right)  $\ time
using $r\left(  n\right)  =\operatorname*{poly}\left(  n\right)  $\ bits of
randomness. \ Now by Theorem \ref{kmerlinalg}, we can estimate $\omega\left(
G\right)  $\ to additive error (say) $\varepsilon=1/10$\ by a deterministic
algorithm that makes%
\begin{equation}
\exp\left(  \frac{k^{2}}{\varepsilon^{2}}\sum_{i<j}p\left(  n\right)
^{2}\right)  =\exp\left(  k^{4}p\left(  n\right)  ^{2}\right)
\end{equation}
evaluations of $V$. \ Furthermore, each $V$\ evaluation can be performed in
deterministic time $2^{r\left(  n\right)  }\operatorname*{poly}\left(
n\right)  $\ (or in randomized time $\operatorname*{poly}\left(  n\right)  $,
even allowing for amplification to exponentially small error probability).
\ But this lets us decide whether $\omega\left(  G\right)  \geq2/3$\ or
$\omega\left(  G\right)  \leq1/3$, and hence whether $x\in L$.
\end{proof}

A second corollary of Theorem \ref{kmerlinalg} is that, assuming the ETH,
there is a hard $\Omega(n^{1/4})$ limit on the amount of communication needed
in any constant-soundness $\mathsf{AM}\left(  k\right)  $\ protocol for
\textsc{3Sat}, regardless of $k=k\left(  n\right)  $. \ Furthermore, if
$k=n^{o\left(  1\right)  }$,\ then $n^{1/2-o\left(  1\right)  }$%
\ communication is needed. \ (Later, in Section \ref{KSUBSAMP},\ we will
improve this to show that $\Omega(\sqrt{n})$\ communication is needed
regardless of $k$.)

\begin{corollary}
\label{3satcor2}Assuming the Randomized ETH, any $\mathsf{AM}\left(  k\right)
$\ protocol for \textsc{3Sat{}{}}\ with a $1$ vs.\ $1-\varepsilon$
completeness/soundness gap\ must use $\Omega(k+\sqrt{\varepsilon n}%
/k)=\Omega(\left(  \varepsilon n\right)  ^{1/4})$\ bits of communication in
total. \ (Also, if Arthur's verification procedure is deterministic, then it
suffices to assume the ordinary ETH.)
\end{corollary}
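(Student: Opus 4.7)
The plan is to derive this lower bound by turning a hypothetical low-communication $\mathsf{AM}(k)$ protocol into a fast algorithm for \textsc{3Sat}, via the perfect-completeness variant of our approximation algorithm (Theorem \ref{improvedkmerlin}). Suppose there is an $\mathsf{AM}(k)$ protocol for a size-$n$ \textsc{3Sat} instance with a $1$ vs.\ $1-\varepsilon$ gap that uses $c=c(n)$ bits of communication in total. I would encode this protocol as a $k$-player free game $G$ with question and answer sets $Y_i,B_i$; writing $c_i := \log(|Y_i||B_i|)$, we have $\sum_{i=1}^{k} c_i \leq c$. Without loss of generality $c_i \geq 1$ for every $i$ (any Merlin with $c_i=0$ is a silent participant that can be eliminated), so in particular $c \geq k$.

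Next I would invoke Theorem \ref{improvedkmerlin} to distinguish $\omega(G)=1$ from $\omega(G)<1-\varepsilon$ in deterministic time
\[
\exp\!\left(O\!\left(\frac{k^2}{\varepsilon}\sum_{i<j} c_i c_j\right)\right).
\]
The elementary identity $2\sum_{i<j} c_i c_j = \bigl(\sum_i c_i\bigr)^{2} - \sum_i c_i^{2} \leq c^{2}$ collapses this to $\exp(O(k^{2}c^{2}/\varepsilon))$. For this algorithm not to contradict the ETH we must have $k^{2}c^{2}/\varepsilon = \Omega(n)$, i.e.\ $c = \Omega(\sqrt{\varepsilon n}/k)$. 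Combined with the trivial bound $c \geq k$ noted above, we conclude $c = \Omega\bigl(\max\{k,\sqrt{\varepsilon n}/k\}\bigr)$, which is minimized at $k = \Theta((\varepsilon n)^{1/4})$ with minimum value $\Theta((\varepsilon n)^{1/4})$. When Arthur's verification procedure is randomized, I would instead use the randomized version of the same approximation algorithm (with a $\operatorname*{poly}(n)$ amplification overhead per evaluation of $V$) and deduce the same lower bound under the Randomized ETH.

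The only non-routine step is the observation that the cross-term sum $\sum_{i<j} c_i c_j$ appearing in the exponent of the algorithm's running time can be as large as $\Theta(c^{2})$, and is tightly controlled by the identity above; everything else is bookkeeping around Theorem \ref{improvedkmerlin}. The substantive obstacle this argument leaves unresolved, and which will motivate the subsampling machinery developed in Section \ref{KSUBSAMP}, is removing the factor of $k$ from the $\sqrt{\varepsilon n}/k$ bound: the present proof is too weak to rule out, say, an $\mathsf{AM}(\sqrt{n})$ protocol for \textsc{3Sat} using only $O(n^{1/4})$ bits, and sharpening the lower bound to the nearly tight $\Omega(\sqrt{\varepsilon n})$ regardless of $k$ requires the stronger $k$-player subsampling theorem rather than merely the recursive approximation algorithm used here.
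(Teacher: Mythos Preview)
Your proof is correct and follows essentially the same approach as the paper: encode the protocol as a $k$-player free game, invoke the perfect-completeness algorithm (Theorem \ref{improvedkmerlin}) to get running time $\exp\bigl(O(k^{2}\varepsilon^{-1}\sum_{i<j}c_ic_j)\bigr)$, bound $\sum_{i<j}c_ic_j\le c^{2}/2$, and combine the resulting $c=\Omega(\sqrt{\varepsilon n}/k)$ with the trivial $c\ge k$. The only cosmetic difference is that the paper first works out the constant-$\varepsilon$ case via Theorem \ref{kmerlinalg} and then remarks that Theorem \ref{improvedkmerlin} handles general $\varepsilon$, whereas you go straight to Theorem \ref{improvedkmerlin}.
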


\begin{proof}
Assume for simplicity that $\varepsilon=1/2$. \ Consider an $\mathsf{AM}%
\left(  k\right)  $\ protocol that uses $q\left(  n\right)  $\ bits of
communication in total. \ We can assume $q\left(  n\right)  \geq k$, since
otherwise we could eliminate one of the Merlins and reduce to the
$\mathsf{AM}\left(  k-1\right)  $\ case. \ Now suppose that for all
$i\in\left[  k\right]  $, Arthur sends an $s_{i}$-bit message to Merlin$_{i}$
and receives a $t_{i}$-bit response. \ Then by Theorem \ref{kmerlinalg}, we
can simulate the protocol to within constant error by an algorithm that makes%
\begin{equation}
\exp\left(  k^{2}\sum_{i<j}\left(  s_{i}+t_{i}\right)  \left(  s_{j}%
+t_{j}\right)  \right)  \leq\exp\left(  \frac{k^{2}}{2}\left(  \sum_{i=1}%
^{k}\left(  s_{i}+t_{i}\right)  \right)  ^{2}\right)  \leq\exp\left(
k^{2}q(n)^{2}\right)
\end{equation}
evaluations of Arthur's verification procedure $V$. \ Furthermore, each
$V$\ evaluation can be performed in randomized $\operatorname*{poly}\left(
n\right)  $\ time (even allowing for amplification to exponentially small
error probability). \ So if \textsc{3Sat{}{}}\ requires $2^{\Omega\left(
n\right)  }$\ randomized time, then $k^{2}q(n)^{2}=\Omega(n)$\ and $q\left(
n\right)  =\Omega(\sqrt{n}/k)$. \ Combining with $q\left(  n\right)  \geq
k$\ then yields $q\left(  n\right)  =\Omega(n^{1/4})$.

For general $\varepsilon>0$, we simply need to use Theorem
\ref{improvedkmerlin}\ rather than Theorem \ref{kmerlinalg}. \ For the last
part, we note that if $V$ is deterministic then so is our \textsc{3Sat{}{}}\ algorithm.
\end{proof}

\subsection{\label{KSUBSAMP}Subsampling with $k$ Merlins}

Finally, let us show that $\mathsf{AM}\left(  k\right)  =\mathsf{AM}$\ for all
$k=\operatorname*{poly}\left(  n\right)  $. \ The first step is to generalize
Theorem \ref{subfg}, the subsampling theorem for $2$-player free games,\ to
$k$ players for arbitrary $k$. \ For technical reasons---related to the
definition of \textquotedblleft denseness\textquotedblright\ in the statement
of Theorem\ \ref{bhhsthm}---doing this will require reducing a free game to a
$k$-CSP\ in a different way than we did in the proof of Theorem \ref{subfg}%
.\footnote{In more detail, suppose we tried to encode a $k$-player free game
$G$\ as a $k$-CSP in the \textquotedblleft obvious\textquotedblright\ way.
Then among all possible $k$-tuples of variables, the fraction that were
related by a nontrivial constraint would decrease like $k!/k^{k}\approx
e^{-k}$, simply because any such $k$-tuple must involve exactly one variable
for each of the $k$ players, with no \textquotedblleft
collisions.\textquotedblright\ \ But this, in turn, would mean that we could
only get the conclusion $\mathsf{AM}\left(  k\right)  =\mathsf{AM}$\ when
$k=O\left(  \log n\right)  $: for larger $k$, our $k$-CSP simply wouldn't be
\textquotedblleft dense\textquotedblright\ enough for Theorem \ref{bhhsthm}%
\ to give what we want. \ To get around this problem, we use a different
encoding of $G$ as a $k$-CSP: one in which every variable, individually,
involves questions to all $k$ of the players.}

\begin{theorem}
[Subsampling of $k$-Player Free Games]\label{subk}Given a $k$-player free game%
\begin{equation}
G=\left(  Y_{1},\ldots,Y_{k},B_{1},\ldots,B_{k},V\right)
\end{equation}
and $\varepsilon>0$, let $\kappa:=\varepsilon^{-\Lambda}\log\left(  \left\vert
B_{1}\right\vert \cdots\left\vert B_{k}\right\vert \right)  $\ (for some
suitable constant $\Lambda$), and assume $\kappa\leq\min\left\{  \left\vert
Y_{1}\right\vert ,\ldots,\left\vert Y_{k}\right\vert \right\}  $. \ For each
$i\in\left[  k\right]  $, choose a subset $S_{i}\subseteq Y_{i}$ of
Merlin$_{i}$ questions uniformly at random subject to $\left\vert
S_{i}\right\vert =\kappa$, let $S:=S_{1}\times\cdots\times S_{k}$, and let
$G_{S}$\ be the subgame of $G$ with Merlin$_{i}$'s questions restricted to
$S_{i}$. \ Then%
\begin{equation}
\operatorname*{E}_{S}\left[  \omega\left(  G_{S}\right)  \right]  \leq
\omega\left(  G\right)  +\varepsilon.
\end{equation}

\end{theorem}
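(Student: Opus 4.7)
The plan is to follow the two-player strategy of Theorem~\ref{subfg}: encode $G$ as a dense $k$-CSP $\varphi$ with $\operatorname*{SAT}(\varphi) \approx \omega(G)$, apply the subsampling theorem of Barak et al.\ (Theorem~\ref{bhhsthm}) to obtain $\mathbb{E}_I[\operatorname*{SAT}(\varphi_I)] \leq \operatorname*{SAT}(\varphi) + O(\varepsilon)$ for a uniformly random variable-subset $I$, and translate this back to $\mathbb{E}_S[\omega(G_S)] \leq \omega(G) + \varepsilon$ on the game side.  The translation step is routine: draw $I$ of size roughly $k\kappa$, observe by a Chernoff bound plus a union bound over the $k$ players that $I$ contains approximately $\kappa$ variables per player with constant probability, and use linearity of expectation together with the player-symmetry of the encoding to pass between a random $I$ and independent random subsets $(S_i \subseteq Y_i)_{i \in [k]}$ of size $\kappa$ each.

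The main obstacle---as the footnote preceding the theorem points out---is the choice of encoding.  The naive encoding, which takes a variable $v_{i,y_i}$ for each pair $(i,y_i)$ with alphabet $\bigcup_i B_i$ and puts a $V$-constraint on each $k$-subset consisting of exactly one variable per player, has density only
\begin{equation}
\alpha \;=\; \frac{\prod_i |Y_i|}{\binom{N}{k}} \;=\; \Theta\!\left(\frac{k!}{k^k}\right) \;=\; e^{-\Theta(k)}
\end{equation}
(for balanced $|Y_i|$, with $N = \sum_i |Y_i|$), and in fact is identically $0$ with respect to any $(k-1)$-subset $Y$ in the density hypothesis of Theorem~\ref{bhhsthm} that happens to repeat a player.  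Plugging such an $\alpha$ into Theorem~\ref{bhhsthm} would force $\kappa$ to grow exponentially in $k$, which is useless whenever $k \gg \log n$---this is exactly the breakdown at $k = \omega(\log n)$ alluded to in the paper.

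To bypass this, I would design a different $k$-CSP on the same variable set in which \emph{every} $k$-subset $U = \{v_{i_s, y_s}\}_{s=1}^k$---including those with duplicated or missing players---carries a nontrivial constraint.  The constraint on such a $U$ is obtained by symmetrizing $V$ over all ways of matching the $k$ elements of $U$ to the $k$ player-slots, uniformly averaging over fresh random fill-in questions for any player absent from $U$, and collapsing duplicated players by convex combination of their values.  Under this rule the density becomes $\Omega(1)$ independent of $k$, because every $(k-1)$-subset $Y$ can be completed by any of the remaining $N - k + 1$ variables.  On the other hand, because every new constraint is itself a convex combination of evaluations of $V$ with some coordinates drawn uniformly at random, one can check that $\operatorname*{SAT}(\varphi) = \omega(G)$ up to an additive $o(\varepsilon)$ term: the alphabet $\Sigma = \bigcup_i B_i$ still has $\log|\Sigma| = O(\log(|B_1|\cdots|B_k|))$, matching the target $\kappa = \varepsilon^{-\Lambda}\log(|B_1|\cdots|B_k|)$ in Theorem~\ref{bhhsthm}.

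The hardest step will be verifying that this symmetrized encoding actually satisfies $\operatorname*{SAT}(\varphi) \leq \omega(G) + o(\varepsilon)$---i.e., that the fill-in averaging and the duplicate-collapsing do not give the CSP solver any spurious leverage beyond what a genuine product strategy $(b_i : Y_i \to B_i)_{i \in [k]}$ could achieve in $G$.  The natural argument is convexity, precisely as in the $k = 2$ proof of Theorem~\ref{subfg}: any CSP assignment can be averaged coordinate-wise into a product strategy without loss, provided one sets up the fill-in distribution to match the uniform-question distribution that defines $\omega(G)$.  Once both the density bound and the SAT-value bound are in hand, applying Theorem~\ref{bhhsthm} with $\kappa = \varepsilon^{-\Lambda}\log|\Sigma|$ and coupling the resulting variable-subset $I$ back to independent $(S_1, \ldots, S_k)$ as described in the first paragraph completes the proof.
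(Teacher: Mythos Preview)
Your overall plan---encode $G$ as a dense $k$-CSP, apply Theorem~\ref{bhhsthm}, and translate back---is the paper's plan, and you correctly identify the obstacle: the naive encoding with one variable per (player, question) pair has density $e^{-\Theta(k)}$. But your proposed fix does not work, and the paper's fix is quite different. Your fix keeps the naive variable set and tries to place a constraint on \emph{every} $k$-subset $U$ by symmetrizing over matchings, filling in missing players with random questions, and ``collapsing'' duplicated players. You never specify the fill-in precisely, and there is no natural choice that makes $\operatorname*{SAT}(\varphi)\leq\omega(G)+o(\varepsilon)$ hold. If an absent player $j$ is filled in with a random (or fixed) answer, then the constraint value on a ``bad'' $k$-subset is a quantity unrelated to $\omega(G)$; since all but an $e^{-\Theta(k)}$ fraction of $k$-subsets are bad, these constraints dominate $\operatorname*{SAT}(\varphi)$ and you lose control of it. If instead you fill in with the \emph{optimal} answer (a $\max$ over $b_j$), the CSP solver gets free adaptive help on every bad subset and $\operatorname*{SAT}(\varphi)$ can exceed $\omega(G)$. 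The convexity argument you cite from Theorem~\ref{subfg} does not rescue this: in that proof there are no constraints on bad pairs at all---the two-player encoding simply accepts density $1/2$ and never has to define a constraint between two variables of the same player.

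The paper bypasses the problem by changing the variable set rather than the constraint set. A CSP variable is not a single question but an entire $k$-tuple $\mathbf{y}\in Y_1\times\cdots\times Y_k$, with alphabet $\mathbf{B}=B_1\times\cdots\times B_k$ (so $\log|\Sigma|=\log(|B_1|\cdots|B_k|)$, exactly matching the target $\kappa$). The constraint on a $k$-subset $\{\mathbf{y}_1,\ldots,\mathbf{y}_k\}$ averages over permutations $\sigma\in S_k$ and evaluates $V$ ``along the diagonal,'' using the $\sigma(i)$-th coordinate of both $\mathbf{y}_i$ and its assigned value $\mathbf{b}(\mathbf{y}_i)$. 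Now every $k$-subset covers all $k$ players automatically, so density is exactly $1$, and a clean convexity argument (for each $i$, randomize over the other $k-1$ coordinates of $\mathbf{y}$ to induce a distribution over maps $Y_i\to B_i$) gives $\operatorname*{SAT}(\varphi)=\omega(G)$ exactly, with no $o(\varepsilon)$ slack. The coupling from a random $\mathbf{I}\subseteq\mathbf{Y}$ of size $\kappa$ back to independent $S_i\subseteq Y_i$ of size $\kappa$ is then immediate: randomly order each $S_i$ and let the $j$-th element of $\mathbf{I}$ be the tuple of $j$-th elements---no Chernoff bound is needed.
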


\begin{proof}
We define a $k$-CSP $\varphi$\ as follows. \ Let%
\begin{align}
\mathbf{Y}  &  :=Y_{1}\times\cdots\times Y_{k},\\
\mathbf{B}  &  :=B_{1}\times\cdots\times B_{k}.
\end{align}
Then there is one variable, of the form $\mathbf{b}\left(  \mathbf{y}\right)
\in\mathbf{B}$, for every $k$-tuple $\mathbf{y}\in\mathbf{Y}$. \ (Thus, an
assignment $\mathbf{b}:\mathbf{Y}\rightarrow\mathbf{B}$\ to $\varphi$\ will be
a fairly large object, mapping $k$-tuples of questions to $k$-tuples of
answers.) \ There is also a $\left[  0,1\right]  $-valued constraint,
$C_{\mathbf{R}}$, for every subset $\mathbf{R}=\left\{  \mathbf{y}_{1}%
,\ldots,\mathbf{y}_{k}\right\}  \subseteq\mathbf{Y}$ of size $k$. \ Let
$\left(  \mathbf{y}\right)  _{i}\in Y_{i}$\ denote the $i^{th}$ component of
the $k$-tuple $\mathbf{y}\in\mathbf{Y}$, and likewise let $\left(
\mathbf{b}\right)  _{i}\in B_{i}$\ denote the $i^{th}$\ component of
$\mathbf{b}\in\mathbf{B}$. \ Then the constraint $C_{\mathbf{R}}$\ has the
following satisfaction value:%
\begin{equation}
C_{\mathbf{R}}\left(  \mathbf{b}\left(  \mathbf{y}_{1}\right)  ,\ldots
,\mathbf{b}\left(  \mathbf{y}_{k}\right)  \right)  :=\operatorname*{E}%
_{\sigma\in S_{k}}\left[  V\left(  \left(  \mathbf{y}_{1}\right)
_{\sigma\left(  1\right)  },\ldots,\left(  \mathbf{y}_{k}\right)
_{\sigma\left(  k\right)  },\left(  \mathbf{b}\left(  \mathbf{y}_{1}\right)
\right)  _{\sigma\left(  1\right)  },\ldots,\left(  \mathbf{b}\left(
\mathbf{y}_{k}\right)  \right)  _{\sigma\left(  k\right)  }\right)  \right]  ,
\end{equation}
where we fix some ordering of the $\mathbf{y}_{i}$'s, like $\mathbf{y}%
_{1}<\cdots<\mathbf{y}_{k}$. \ In words, we can think of $C_{\mathbf{R}}$\ as
an algorithm that first randomly permutes the $k$-tuples\ $\mathbf{y}%
_{1},\ldots,\mathbf{y}_{k}$ and $\mathbf{b}\left(  \mathbf{y}_{1}\right)
,\ldots,\mathbf{b}\left(  \mathbf{y}_{k}\right)  $, and that then checks
\textquotedblleft satisfaction of $V$ along the diagonal\textquotedblright:
i.e., does Arthur accept if, for each $i\in\left[  k\right]  $, Merlin$_{i}$
is asked the $i^{th}$ question in $\mathbf{y}_{i}$\ and responds with the
$i^{th}$\ answer in $\mathbf{b}\left(  \mathbf{y}_{i}\right)  $?

In this way, we ensure the following four properties:

\begin{enumerate}
\item[(1)] $\varphi$ has density $\alpha=1$\ in the sense of Theorem
\ref{bhhsthm}, since it includes a constraint for every possible subset of $k$ variables.

\item[(2)] $\varphi$ has alphabet size $\left\vert \Sigma\right\vert
=\left\vert \mathbf{B}\right\vert =\left\vert B_{1}\right\vert \cdots
\left\vert B_{k}\right\vert $.

\item[(3)] $\operatorname*{SAT}\left(  \varphi\right)  \geq\omega\left(
G\right)  $. \ To see this: given any strategy $\left(  b_{i}:Y_{i}\rightarrow
B_{i}\right)  _{i\in\left[  k\right]  }$\ for $G$ that achieves value $\omega
$, we can easily construct an assignment $\mathbf{b}:\mathbf{Y}\rightarrow
\mathbf{B}$\ to $\varphi$\ that achieves value $\omega$, by setting%
\begin{equation}
\mathbf{b}\left(  \mathbf{y}\right)  :=\left(  b_{1}\left(  \left(
\mathbf{y}\right)  _{1}\right)  ,\ldots,b_{k}\left(  \left(  \mathbf{y}%
\right)  _{k}\right)  \right)
\end{equation}
for all $\mathbf{y}\in\mathbf{Y}$.

\item[(4)] $\operatorname*{SAT}\left(  \varphi\right)  \leq\omega\left(
G\right)  $ (so in fact $\operatorname*{SAT}\left(  \varphi\right)
=\omega\left(  G\right)  $). \ To see this: fix any assignment $\mathbf{b}%
:\mathbf{Y}\rightarrow\mathbf{B}$. \ Then for each $i\in\left[  k\right]  $,
let $\mathcal{D}_{i}$\ be the probability distribution over functions
$b_{i}:Y_{i}\rightarrow B_{i}$\ obtained by first choosing $y_{j}\in Y_{j}%
$\ uniformly at random for all $j\neq i$, and then considering the
function\ $b_{i}\left(  y_{i}\right)  :=\left(  \mathbf{b}\left(  y_{1}%
,\ldots,y_{k}\right)  \right)  _{i}$. \ Then%
\begin{align}
\operatorname*{SAT}\left(  \varphi\right)   &  =\operatorname*{E}%
_{\mathbf{y}_{1},\ldots,\mathbf{y}_{k}\in\mathbf{Y}}\left[  V\left(  \left(
\mathbf{y}_{1}\right)  _{1},\ldots,\left(  \mathbf{y}_{k}\right)  _{k},\left(
\mathbf{b}\left(  \mathbf{y}_{1}\right)  \right)  _{1},\ldots,\left(
\mathbf{b}\left(  \mathbf{y}_{k}\right)  \right)  _{k}\right)  \right] \\
&  =\operatorname*{E}_{y_{1}\in Y_{1},\ldots,y_{k}\in Y_{k},b_{1}%
\thicksim\mathcal{D}_{1},\ldots,b_{k}\thicksim\mathcal{D}_{k}}\left[  V\left(
y_{1},\ldots,y_{k},b_{1}\left(  y_{1}\right)  ,\ldots,b_{k}\left(
y_{k}\right)  \right)  \right] \\
&  \leq\omega\left(  G\right)  ,
\end{align}
where the last line used convexity.
\end{enumerate}

Now suppose we choose a random subset $\mathbf{I}\subseteq\mathbf{Y}$ of size%
\begin{equation}
\kappa=\varepsilon^{-\Lambda}\log\left\vert \Sigma\right\vert =\varepsilon
^{-\Lambda}\log\left(  \left\vert B_{1}\right\vert \cdots\left\vert
B_{k}\right\vert \right)  ,
\end{equation}
and consider a restriction $\varphi_{\mathbf{I}}$\ of $\varphi$\ to the subset
of variables $\left\{  \mathbf{b}\left(  \mathbf{y}\right)  \right\}
_{\mathbf{y}\in\mathbf{I}}$. \ Then by Theorem \ref{bhhsthm}, together with
properties (1) and (2) above, we have%
\begin{equation}
\operatorname*{E}_{\mathbf{I}}\left[  \operatorname*{SAT}\left(
\varphi_{\mathbf{I}}\right)  \right]  \leq\operatorname*{SAT}\left(
\varphi\right)  +\varepsilon.
\end{equation}
Furthermore, for each $i\in\left[  k\right]  $, let $S_{i}\subseteq Y_{i}$ be
chosen uniformly at random subject to $\left\vert S_{i}\right\vert =\kappa$,
and let $S_{i}=\left\{  y_{i1},\ldots,y_{i\kappa}\right\}  $, fixing a
uniformly-random ordering of $y_{i1},\ldots,y_{i\kappa}$. \ Also let
$S:=S_{1}\times\cdots\times S_{k}$. \ Then for each $j\in\left[
\kappa\right]  $, let $\mathbf{y}_{j}:=\left(  y_{1j},\ldots,y_{kj}\right)  $,
and let $\mathbf{I}:=\left\{  \mathbf{y}_{1},\ldots,\mathbf{y}_{\kappa
}\right\}  $. \ Then reusing the same argument from property (3) above, we
have $\omega\left(  G_{S}\right)  \leq\operatorname*{SAT}\left(
\varphi_{\mathbf{I}}\right)  $ for every $S$. \ But the uniform distribution
over $S$'s (and over the orderings of the elements in each $S_{i}$) induces
the uniform distribution over $\mathbf{I}$'s. \ It follows that%
\begin{equation}
\operatorname*{E}_{S}\left[  \omega\left(  G_{S}\right)  \right]
\leq\operatorname*{E}_{\mathbf{I}}\left[  \operatorname*{SAT}\left(
\varphi_{\mathbf{I}}\right)  \right]  .
\end{equation}
Finally, by property (4) we have $\operatorname*{SAT}\left(  \varphi\right)
\leq\omega\left(  G\right)  $. \ Combining, we get%
\begin{equation}
\operatorname*{E}_{S}\left[  \omega\left(  G_{S}\right)  \right]
\leq\operatorname*{E}_{\mathbf{I}}\left[  \operatorname*{SAT}\left(
\varphi_{\mathbf{I}}\right)  \right]  \leq\operatorname*{SAT}\left(
\varphi\right)  +\varepsilon\leq\omega\left(  G\right)  +\varepsilon,
\end{equation}
which is what we wanted to show.
\end{proof}

We are now ready to prove that $\mathsf{AM}\left(  k\right)  =\mathsf{AM}$.

\begin{theorem}
\label{amkam}$\mathsf{AM}\left(  k\right)  =\mathsf{AM}$ for all
$k=\operatorname*{poly}\left(  n\right)  $.
\end{theorem}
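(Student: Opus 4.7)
The plan is to mirror the proof of Theorem~\ref{am2am}, replacing Corollary~\ref{subfgcor} with the $k$-player subsampling theorem (Theorem~\ref{subk}), with one extra twist to deal with the fact that the subsampled game now has super-polynomially many question tuples. Given $L \in \mathsf{AM}(k)$ with $k = \operatorname{poly}(n)$, the protocol for checking $x \in L$ is encoded as a $k$-player free game $G = (Y_1,\ldots,Y_k,B_1,\ldots,B_k,V)$ with $Y_i = B_i = \{0,1\}^{p(n)}$ for some polynomial $p$, where $V$ is computable in probabilistic polynomial time. Fix $\varepsilon := 1/24$ and set $\kappa := \varepsilon^{-\Lambda} \cdot k p(n)$ as in Theorem~\ref{subk}; since $k = \operatorname{poly}(n)$, this $\kappa$ is still polynomial in $n$.

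The $\mathsf{AM}$ simulation is the natural one: Arthur picks $S_i \subseteq Y_i$ uniformly at random with $|S_i| = \kappa$ for each $i \in [k]$, using $k \kappa p(n) = \operatorname{poly}(n)$ random bits, and sends the tuple $(S_1,\ldots,S_k)$ to the single Merlin. Merlin responds with strategies $b_i : S_i \to B_i$ for every $i$, at a communication cost of $k \kappa p(n) = \operatorname{poly}(n)$ bits in total. Arthur must then determine, to constant accuracy, whether
\begin{equation}
\omega_S := \operatorname*{E}_{y_1 \in S_1,\ldots,y_k \in S_k}\!\left[V(y_1,\ldots,y_k,b_1(y_1),\ldots,b_k(y_k))\right]
\end{equation}
is at least (say) $0.51$. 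Unlike in the two-Merlin case, the subgame $G_S$ contains $\kappa^k$ question tuples, which may be super-polynomial; so Arthur cannot enumerate them. Instead, he draws $O(1)$ random tuples $(y_1,\ldots,y_k) \in S_1 \times \cdots \times S_k$, evaluates $V$ on each (using his own randomness and standard amplification when $V$ is probabilistic), and averages. By a Hoeffding bound, this yields an estimate $\widetilde{\omega}_S$ within $\pm 0.01$ of $\omega_S$ with overwhelming probability.

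Correctness then follows the Theorem~\ref{am2am} template. For completeness, take an optimal strategy tuple $(b_1^\ast,\ldots,b_k^\ast)$ for $G$ with value $\omega(G) \geq 2/3$; restricting to $S$ and applying a Chernoff bound to $\omega_S$ shows $\omega_S \geq 0.6$ with high probability over $S$, after which the sampling error $\leq 0.01$ puts $\widetilde{\omega}_S$ above the acceptance threshold. For soundness, suppose $\omega(G) \leq 1/3$; then Theorem~\ref{subk} gives $\operatorname{E}_S[\omega(G_S)] \leq 1/3 + 1/24$, and since $\omega_S \leq \omega(G_S)$ for any Merlin strategy, a Markov step shows $\Pr_S[\omega_S \geq 1/2]$ is bounded away from $1$, so Arthur rejects with constant probability. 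Standard amplification (or Proposition~\ref{ampcor}) boosts the gap back to the usual $2/3$ vs.\ $1/3$.

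The main obstacle is really Theorem~\ref{subk}, which has already been proved above and was engineered precisely so that the required $\kappa$ grows only linearly in $k$ and logarithmically in the answer-set sizes (thanks to the alternative encoding as a dense $k$-CSP in which every variable is itself a full question tuple). Given that theorem, the only subtlety in the present proof is recognizing that for $k = \omega(\log n / \log \kappa)$ the verifier must estimate $\omega_S$ by random sampling rather than computing it exactly; but since we only need constant-accuracy estimation, $\operatorname{poly}(n)$ samples suffice and the entire simulation runs in probabilistic polynomial time.
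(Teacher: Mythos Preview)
Your proposal is correct and follows essentially the same approach as the paper: reduce to Theorem~\ref{subk}, have Arthur send polynomially many random question subsets $S_i$, receive Merlin's strategies on those subsets, and then---precisely because $\kappa^k$ is super-polynomial for $k=\omega(1)$---estimate $\omega_S$ by random sampling rather than enumeration, with completeness via Chernoff on the restricted optimal strategy and soundness via Theorem~\ref{subk} plus Markov. The paper makes exactly the same observation about the need to sample rather than compute $\omega_S$ exactly (this being the one new wrinkle compared to the $k=2$ case), and the only cosmetic differences are your choice of $\varepsilon=1/24$ versus the paper's $1/12$, and that you should take $\operatorname{poly}(n)$ samples (as you note at the end) rather than the ``$O(1)$'' mentioned earlier, to drive the estimation error below $\exp(-\kappa)$.
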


\begin{proof}
Let $L\in\mathsf{AM}\left(  k\right)  $. \ Then just like in Theorem
\ref{am2am}, an $\mathsf{AM}\left(  k\right)  $\ protocol for checking whether
a string is in $L$\ can be represented as a $k$-player\ free game $G=\left(
Y_{1},\ldots,Y_{k},B_{1},\ldots,B_{k},V\right)  $, where $Y_{i}=B_{i}=\left\{
0,1\right\}  ^{p\left(  n\right)  }$ for all $i\in\left[  k\right]  $ and some
polynomial $p$, and $V$ is computable in randomized $\operatorname*{poly}%
\left(  n\right)  $\ time.

Let $\varepsilon:=\frac{1}{12}$\ and $\kappa:=\varepsilon^{-\Lambda}p\left(
n\right)  $. \ Suppose we choose $S_{i}\subseteq Y_{i}$ uniformly at random
subject to $\left\vert S_{i}\right\vert =\kappa$ for all $i\in\left[
k\right]  $, then set $S:=S_{1}\times\cdots\times S_{k}$.\ \ Then by Theorem
\ref{subk},%
\begin{equation}
\operatorname*{E}_{S}\left[  \omega\left(  G_{S}\right)  \right]  \leq
\omega\left(  G\right)  +\frac{1}{12}.
\end{equation}
But this immediately gives us our $\mathsf{AM}$\ simulation, as follows.
\ First Arthur chooses $S_{1},\ldots,S_{k}\subseteq\left\{  0,1\right\}
^{p\left(  n\right)  }$ uniformly at random, subject as above to $\left\vert
S_{1}\right\vert =\cdots=\left\vert S_{k}\right\vert =\kappa$, and lets
$S=S_{1}\times\cdots\times S_{k}$. \ He then sends descriptions of
$S_{1},\ldots,S_{k}$ to Merlin, using $k\kappa\cdot p\left(  n\right)
=O(k\cdot p\left(  n\right)  ^{2})$ bits. \ Next Merlin replies with a
$k$-tuple of strategies $\left(  b_{i}:S_{i}\rightarrow B_{i}\right)
_{i\in\left[  k\right]  }$, which again takes $O(k\cdot p\left(  n\right)
^{2})$\ bits. \ Let%
\begin{equation}
\omega_{S}:=\operatorname*{E}_{y_{1}\in S_{1},\ldots,y_{k}\in S_{k}}\left[
V\left(  y_{1},\ldots,y_{k},b_{1}\left(  y_{1}\right)  ,\ldots,b_{k}\left(
y_{k}\right)  \right)  \right]
\end{equation}
be the subsampled success probability; notice that $\omega_{S}\leq
\omega\left(  G_{S}\right)  $ for all $S$. \ Then finally, Arthur computes an
estimate $\widetilde{\omega}_{S}$\ such that%
\begin{equation}
\Pr\left[  \left\vert \widetilde{\omega}_{S}-\omega_{S}\right\vert
>0.01\right]  \leq\exp\left(  -\kappa\right)
\end{equation}
which he can do in randomized $\operatorname*{poly}\left(  n\right)  $ time,
and accepts if and only if $\widetilde{\omega}_{S}\geq0.51$. \ (One small
difference from Theorem \ref{am2am} is that, even if $V$ is deterministic, in
general Arthur will still need to estimate $\omega_{S}$\ rather than computing
it exactly. \ The reason is that $\omega_{S}$\ is\ an average of $\left\vert
S_{1}\right\vert \cdots\left\vert S_{k}\right\vert =\kappa^{k}$ terms, and
$\kappa^{k}$ is more than polynomial whenever $k$ is more than constant.)

The completeness and soundness arguments are precisely the same as in Theorem
\ref{am2am}.
\end{proof}

Let us also show how, by using Theorem \ref{subk}, we can go back and tighten
Corollaries \ref{amkexpcor} and \ref{3satcor2}\ from Section \ref{KMERLIN}.

\begin{corollary}
\label{betteramkexp}Let $G$ be a $k$-player free game, with question sets
$Y_{1},\ldots,Y_{k}$\ and answer sets $B_{1},\ldots,B_{k}$ (assume $\left\vert
B_{i}\right\vert \geq2$\ for all $i\in\left[  k\right]  $). \ There exists a
deterministic algorithm that approximates $\omega\left(  G\right)  $\ to
within additive error $\pm\varepsilon$, in time%
\begin{equation}
\exp\left(  \varepsilon^{-O\left(  1\right)  }\log\left(  \left\vert
Y_{1}\right\vert \cdots\left\vert Y_{k}\right\vert \right)  \log\left(
\left\vert B_{1}\right\vert \cdots\left\vert B_{k}\right\vert \right)
\right)  =n^{\varepsilon^{-O\left(  1\right)  }\log n},
\end{equation}
where $n=\left\vert Y_{1}\right\vert \left\vert B_{1}\right\vert
\cdots\left\vert Y_{k}\right\vert \left\vert B_{k}\right\vert $\ is the input size.
\end{corollary}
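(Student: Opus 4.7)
The plan is to combine Theorem \ref{subk} with explicit derandomization by brute-force enumeration. Set $N_Y := \left\vert Y_1\right\vert\cdots\left\vert Y_k\right\vert$, $N_B := \left\vert B_1\right\vert\cdots\left\vert B_k\right\vert$, and $\kappa := \varepsilon^{-\Lambda}\log N_B$ where $\Lambda$ is the constant supplied by Theorem \ref{subk}.

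The first step is to observe that if $S = S_1 \times \cdots \times S_k$ is obtained by drawing each $S_i \subseteq Y_i$ uniformly at random subject to $\left\vert S_i\right\vert = \kappa$, then $\operatorname*{E}_S[\omega(G_S)]$ is already an $\varepsilon$-approximation of $\omega(G)$. The upper inequality $\operatorname*{E}_S[\omega(G_S)] \leq \omega(G) + \varepsilon$ is precisely Theorem \ref{subk}. The lower inequality $\operatorname*{E}_S[\omega(G_S)] \geq \omega(G)$ follows by fixing an optimal strategy tuple $(b_i^{\ast} : Y_i \to B_i)_{i\in[k]}$ for $G$: its restriction to $S$ achieves value $v_S := \operatorname*{E}_{y_i \in S_i}[V(y_1,\ldots,y_k,b_1^{\ast}(y_1),\ldots,b_k^{\ast}(y_k))]$ in $G_S$, with $\omega(G_S) \geq v_S$, and averaging over $S$ gives $\operatorname*{E}_S[v_S] = \omega(G)$ by linearity of expectation.

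Next, I would compute this expectation deterministically by looping over every tuple $S = (S_1,\ldots,S_k)$ with $\left\vert S_i\right\vert = \kappa$ and, for each one, evaluating $\omega(G_S)$ exactly by brute force: enumerate all $\prod_i \left\vert B_i\right\vert^{\kappa} = N_B^{\kappa}$ strategy tuples $(b_i : S_i \to B_i)$, and for each compute its value as an average of $V$ over the $\kappa^k$ question tuples in $S_1 \times \cdots \times S_k$. The final output is the average of the resulting $\omega(G_S)$ values over all $S$, shifted by $-\varepsilon/2$ to center the error window at $0$ after running the construction with error parameter $\varepsilon/2$ inside Theorem \ref{subk}.

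The running time is bounded by
\[
\prod_{i=1}^k \binom{\left\vert Y_i\right\vert}{\kappa} \cdot N_B^{\kappa} \cdot \kappa^k \cdot \mathrm{poly}(n) \;\leq\; N_Y^{\kappa} \cdot N_B^{\kappa} \cdot \kappa^k \cdot \mathrm{poly}(n).
\]
Using $\kappa = \varepsilon^{-\Lambda}\log N_B$ together with the fact that $k \leq \log N_B$ (which follows from $\left\vert B_i\right\vert \geq 2$), the logarithm of this expression is at most $\varepsilon^{-O(1)}\log N_B \cdot (\log N_Y + \log N_B) \leq \varepsilon^{-O(1)}\log^2 n$, yielding the advertised bound $n^{\varepsilon^{-O(1)}\log n}$. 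The only real subtlety is controlling the $\kappa^k$ blow-up from evaluating $V$ on every joint question tuple; this is exactly where the hypothesis $k \leq \log N_B$ enters. All of the heavy lifting has already been done inside Theorem \ref{subk}, which guarantees that a subsample of size $\varepsilon^{-O(1)}\log N_B$ suffices independently of $k$ and thereby eliminates the $k^2$ factor present in Theorem \ref{kmerlinalg}.
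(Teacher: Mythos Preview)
Your proposal is correct and follows essentially the same approach as the paper's own proof: loop over all subset tuples $S = S_1\times\cdots\times S_k$ with $|S_i| = \kappa = \varepsilon^{-\Lambda}\log N_B$, compute each $\omega(G_S)$ by brute force, and output the average, with correctness supplied by Theorem~\ref{subk}. Your treatment is in fact a bit more careful than the paper's in two places---you explicitly justify the lower bound $\operatorname*{E}_S[\omega(G_S)]\geq\omega(G)$ via restriction of an optimal strategy, and you explicitly account for the $\kappa^k$ cost of evaluating each candidate strategy and bound it using $k\leq\log N_B$---but these are elaborations of the same argument rather than a different route.
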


\begin{proof}
Let $\kappa:=\varepsilon^{-\Lambda}\log\left(  \left\vert B_{1}\right\vert
\cdots\left\vert B_{k}\right\vert \right)  $. \ Then we simply need to loop
over all possible subsets%
\begin{equation}
S=S_{1}\times\cdots\times S_{k}\subseteq Y_{1}\times\cdots\times Y_{k}%
\end{equation}
such that $\left\vert S_{i}\right\vert =\kappa$ for all $i\in\left[  k\right]
$. \ For each one, we compute the value $\omega\left(  G_{S}\right)  $\ via a
brute-force search over all possible strategy $k$-tuples $\left(  b_{i}%
:S_{i}\rightarrow B_{i}\right)  _{i\in\left[  k\right]  }$. \ Then we output
$\widetilde{\omega}:=\operatorname*{E}_{S}\left[  \omega\left(  G_{S}\right)
\right]  $\ as our estimate for $\omega\left(  G\right)  $.

The correctness of this algorithm---i.e., the fact that $\left\vert
\widetilde{\omega}-\omega\right\vert \leq\varepsilon$---follows from Theorem
\ref{subk}. \ For the running time, note that the number of possible subsets
$S$\ is%
\begin{equation}
\binom{\left\vert Y_{1}\right\vert }{\kappa}\cdots\binom{\left\vert
Y_{k}\right\vert }{\kappa}\leq\left(  \left\vert Y_{1}\right\vert
\cdots\left\vert Y_{k}\right\vert \right)  ^{\kappa}\leq n^{\varepsilon
^{-O\left(  1\right)  }\log n}.
\end{equation}
Also, for each $S$, the number of possible strategy $k$-tuples is $\left\vert
B_{1}\right\vert ^{\kappa}\cdots\left\vert B_{k}\right\vert ^{\kappa}\leq
n^{\varepsilon^{-O\left(  1\right)  }\log n}$. \ Hence the total running time
is $n^{\varepsilon^{-O\left(  1\right)  }\log n}$\ as well.
\end{proof}

Corollary \ref{betteramkexp}, in turn, has the following further corollary.

\begin{corollary}
\label{better3satcor}Assuming the Randomized ETH, any $\mathsf{AM}\left(
k\right)  $\ protocol for \textsc{3Sat{}{}}\ with a constant
completeness/soundness gap\ must use $\Omega(\sqrt{n})$\ bits of communication
in total. \ (Also, if Arthur's verification procedure is deterministic, then
it suffices to assume the ordinary ETH.)
\end{corollary}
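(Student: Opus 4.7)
The plan is to derive Corollary \ref{better3satcor} as an immediate consequence of the improved approximation algorithm from Corollary \ref{betteramkexp}, exactly as Corollary \ref{3satcor2} was derived from Theorem \ref{kmerlinalg}---but now the improved $n^{\varepsilon^{-O(1)}\log n}$ running time (which, crucially, has no $k$ in the exponent) will let us remove the $k$-dependence from the communication lower bound and push it up to $\Omega(\sqrt{n})$.

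First, I would fix a constant completeness/soundness gap (say $2/3$ vs.\ $1/3$, invoking amplification if necessary) and consider a hypothetical $\mathsf{AM}(k)$ protocol for \textsc{3Sat}\ that uses $q(n)$ bits of communication in total. Writing $s_i$ for the number of bits Arthur sends to Merlin$_i$ and $t_i$ for the number of bits Merlin$_i$ sends back, I have $\sum_{i=1}^k(s_i+t_i) = q(n)$. As in Corollary \ref{amkexpcor}, the protocol on a fixed input is encoded by a $k$-player free game $G=(Y_1,\ldots,Y_k,B_1,\ldots,B_k,V)$ with $|Y_i|=2^{s_i}$, $|B_i|=2^{t_i}$, and with $V$ computable in randomized $\operatorname{poly}(n)$ time. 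Deciding $x\in L$ reduces to distinguishing $\omega(G)\geq 2/3$ from $\omega(G)\leq 1/3$.

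Next, I would apply Corollary \ref{betteramkexp} with a constant $\varepsilon$ (say $1/10$). Its running time bound, in terms of $s:=\sum_i s_i$ and $t:=\sum_i t_i$, becomes
\begin{equation}
\exp\!\bigl(\varepsilon^{-O(1)}\log(|Y_1|\cdots|Y_k|)\cdot \log(|B_1|\cdots|B_k|)\bigr)
= \exp\!\bigl(O(s\cdot t)\bigr).
\end{equation}
By the AM--GM inequality, $s\cdot t\leq (s+t)^2/4 = q(n)^2/4$, so the overall running time for approximating $\omega(G)$ is $\exp(O(q(n)^2))$ evaluations of $V$, each of which costs randomized $\operatorname{poly}(n)$ time (including exponentially small amplification of the verification error). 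Hence \textsc{3Sat}\ would be decided in randomized time $\exp(O(q(n)^2))\operatorname{poly}(n)$. If $q(n) = o(\sqrt{n})$, this is $\exp(o(n))$, contradicting the Randomized ETH; when $V$ is deterministic the same calculation gives a deterministic $\exp(o(n))$ algorithm, contradicting the ordinary ETH.

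There is no real obstacle beyond bookkeeping: the ``hard part'' of the argument---namely, getting a single quasipolynomial-time algorithm whose exponent is independent of $k$---has already been handled by Corollary \ref{betteramkexp}, which itself rests on the subsampling theorem (Theorem \ref{subk}). The only point that requires mild care is the AM--GM step, which is what converts the product $s\cdot t$ into $q(n)^2/4$ and therefore yields the symmetric $\Omega(\sqrt{n})$ bound regardless of how the total communication $q(n)$ is split among the $k$ Merlins. In particular, unlike Corollary \ref{3satcor2}, no separate $\Omega(k)$ term is needed, since $k$ no longer appears in the exponent of the simulation.
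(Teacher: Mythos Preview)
Your proposal is correct and follows essentially the same approach as the paper: both apply Corollary~\ref{betteramkexp} with a constant $\varepsilon$ to the $k$-player free game encoding the protocol, obtain a $2^{O(q(n)^2)}\operatorname{poly}(n)$-time simulation of \textsc{3Sat}, and conclude $q(n)=\Omega(\sqrt{n})$ from the ETH. Your explicit use of the $\exp(\varepsilon^{-O(1)}s\cdot t)$ form together with AM--GM is slightly more careful than the paper's version, which plugs the game size $2^{q(n)}$ directly into the $n^{\varepsilon^{-O(1)}\log n}$ bound (and in fact contains a harmless typo, writing $2^{O(\log^2 q(n))}$ where $2^{O(q(n)^2)}$ is intended).
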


\begin{proof}
Suppose there existed an $\mathsf{AM}\left(  k\right)  $\ protocol for
\textsc{3Sat{}{}}, which used $q\left(  n\right)  =n^{O\left(  1\right)  }%
$\ bits of communication in total, and which had a completeness/soundness gap
of, say, $2/3$\ versus $1/3$\ (the exact constants will be irrelevant). \ Set
$\varepsilon:=1/10$. \ Then by Corollary \ref{betteramkexp}, we can
approximate the Merlins' maximum winning probability $\omega$\ to within
$\pm\varepsilon$\ by a deterministic algorithm that makes $q\left(  n\right)
^{\varepsilon^{-O\left(  1\right)  }\log q\left(  n\right)  }=2^{O(\log
^{2}q\left(  n\right)  )}$\ evaluations of Arthur's verification function $V$.
\ Furthermore, each $V$ evaluation takes $\operatorname*{poly}\left(
n\right)  $\ time by a randomized algorithm if $V$ is randomized (even
counting the time needed to amplify to $\exp(-q\left(  n\right)  ^{2})$\ error
probability), or $\operatorname*{poly}\left(  n\right)  $\ time by a
deterministic algorithm if $V$ is deterministic. \ Thus, the algorithm's total
running time is $2^{O(\log^{2}q\left(  n\right)  )}\operatorname*{poly}\left(
n\right)  $. \ Moreover, the algorithm lets us decide whether $\omega\geq
2/3$\ or $\omega\leq1/3$, and hence whether our original \textsc{3Sat{}{}%
}\ instance was satisfiable. \ On the other hand, \textsc{3Sat{}{}}\ must take
$2^{\Omega\left(  n\right)  }$\ time assuming the ETH. \ Combining, we obtain
$q\left(  n\right)  =\Omega(\sqrt{n})$.
\end{proof}

\section{Conclusions and Open Problems\label{OPEN}}

In this paper, we saw how a deceptively simple problem---understanding the
power of $\mathsf{AM}\left(  2\right)  $\ protocols, and the complexity of
approximating free games---hides a wealth of interesting phenomena. \ On the
one hand, the fact that a two-prover game $G$ is free leads to a
quasipolynomial-time approximation algorithm for $\omega\left(  G\right)  $,
and even a proof of $\mathsf{AM}\left(  2\right)  =\mathsf{AM}$. \ On the
other hand, the fact that the Merlins still can't communicate leads to
quasipolynomial-time \textit{hardness} (assuming the ETH), and to an
$\widetilde{O}(\sqrt{n})$-communication $\mathsf{AM}\left(  2\right)
$\ protocol for \textsc{3Sat}.

While we managed to give nearly-matching upper and lower bounds for the
complexity of \textsc{FreeGame}, numerous open problems remain, both about
free games themselves, and about the applicability of our techniques to other
problems. \ We now list twelve.

\begin{enumerate}
\item[(1)] Can we improve our result $\mathsf{NTIME}\left[  n\right]
\subseteq\mathsf{AM}_{n^{1/2+o\left(  1\right)  }}\left(  2\right)  $\ to
$\mathsf{NTIME}\left[  n\right]  \subseteq\mathsf{AM}_{\widetilde{O}(\sqrt
{n})}\left(  2\right)  $? \ This would follow if, for example, we could get
the \textquotedblleft best of both worlds\textquotedblright\ between the two
PCP theorems of Dinur \cite{dinur}\ and Moshkovitz and Raz \cite{mr}, and
achieve $n\operatorname*{polylog}n$\ size together with a $1$ vs.\ $\delta
$\ completeness/soundness gap.

\item[(2)] Assuming the ETH, can we completely close the gap between our
$n^{O(\varepsilon^{-2}\log n)}$\ upper bound and $n^{\widetilde{\Omega
}(\varepsilon^{-1}\log n)}$\ lower bound on the complexity of
\textsc{FreeGame}$_{\varepsilon}$? \ What is the right dependence on
$\varepsilon$? \ Also, given a PCP $\phi$\ of size $N$, is there an
$\mathsf{AM}\left(  2\right)  $\ protocol for verifying $\phi$'s
satisfiability that uses $O(\sqrt{N})$\ communication rather than $O(\sqrt
{N}\log N)$? \ (In other words, in our hardness result, can we at least
eliminate the $\log$\ factor that comes from the birthday game, if not the
$\log$\ or larger factors from the PCP reduction?)

\item[(3)] We gave two different algorithms for approximating the value of a
$k$-player free game with $k\geq3$: one that took $n^{O(\varepsilon^{-2}%
k^{2}\log n)}$\ time (using a recursive reduction to $\left(  k-1\right)
$-player games),\ and one that took $n^{\varepsilon^{-O\left(  1\right)  }\log
n}$\ time (using subsampling). \ Can we get the \textquotedblleft best of both
worlds,\textquotedblright\ and give an algorithm that takes $n^{O(\varepsilon
^{-2}\log n)}$\ time? \ If so, this would imply that, assuming the ETH, any
$\mathsf{AM}\left(  k\right)  $\ protocol for \textsc{3Sat}\ with a $1$\ vs.
$1-\varepsilon$\ completeness/soundness gap requires $\Omega(\sqrt{\varepsilon
n})$\ total communication, regardless of $k$.

\item[(4)] Can we prove a \textquotedblleft Birthday Repetition
Theorem\textquotedblright\ for the birthday game $G_{\phi}^{k\times\ell}$?
\ In other words, can we show that the Merlins' cheating probability
$\omega(G_{\phi}^{k\times\ell})$ continues to decrease as $\exp\left(
-k\ell/N\right)  $, if the product $k\ell$\ is larger than $N$? \ If not, then
can we give some other $\mathsf{AM}\left(  k\right)  $\ protocol for
\textsc{3Sat}\ that has a $1$ vs.\ $\delta$\ completeness/soundness gap for
arbitrary $\delta=\delta(n)>0$, and that uses $n^{1/2+o\left(  1\right)
}\operatorname*{polylog}\left(  1/\delta\right)  $\ communication, rather than
$n^{1/2+o\left(  1\right)  }\operatorname*{poly}\left(  1/\delta\right)  $?
\ Directly related to that, given a free game $G$, can we show that deciding
whether $\omega\left(  G\right)  =1$\ or $\omega\left(  G\right)  <\delta$
requires $n^{\widetilde{\Omega}\left(  \frac{\log n}{\log1/\delta}\right)  }%
$\ time, assuming the ETH? \ Recall that Theorem \ref{improvedalg} gave
an\ $n^{O\left(  1+\frac{\log n}{\log1/\delta}\right)  }$ algorithm for that
problem, while Theorem \ref{freegamecor3}\ gave an\ $n^{\operatorname*{poly}%
\left(  \delta\right)  \cdot\left(  \log n\right)  ^{1-o\left(  1\right)  }}%
$\ lower bound assuming the ETH. \ Between these, we conjecture that the upper
bound is tight, but the PCP and parallel-repetition\ machinery that currently
exists seems insufficient to show this.

\item[(5)] Given an \textit{arbitrary} two-prover game $G$ and positive
integers $k$ and $\ell$, what are the necessary and sufficient conditions on
$G,k,\ell$\ for us to have $\omega(G^{k\times\ell})\leq\omega(G^{1\times
1})^{\Omega\left(  k\ell\right)  }$? \ In other words, when exactly does
birthday repetition work? \ Recall from Section \ref{3SATINT}\ that, if
$\omega(G^{1\times1})=1-\varepsilon$, then we can only ever hope to do
birthday repetition when $k=O(\frac{1}{\varepsilon}\log\left\vert B\right\vert
)$ and $\ell=O(\frac{1}{\varepsilon}\log\left\vert A\right\vert )$. \ Can we
at least do birthday repetition up to that limit?

\item[(6)] Can we generalize the Parallel Repetition Theorem, as well as Rao's
concentration bound (Theorem \ref{raothm}), to $k$-player free games for
arbitrary $k$? \ This would let us amplify $\mathsf{AM}\left(  k\right)  $
protocols for $k>2$, though as usual with a polynomial blowup in communication cost.

\item[(7)] Is our result that $\mathsf{NTIME}\left[  n\right]  \subseteq
\mathsf{AM}_{n^{1/2+o\left(  1\right)  }}\left(  2\right)  $---that is, the
existence of our \textsc{3Sat}\ protocol---non-algebrizing in the sense of
Aaronson and Wigderson \cite{awig}? \ (Recall from Proposition \ref{oraclesep}%
\ that the result is non-relativizing.)

\item[(8)] Can we give \textquotedblleft direct\textquotedblright\ proofs that
$\mathsf{AM}\left(  k\right)  =\mathsf{AM}\left(  2\right)  $ for all $k>2$,
and that any $\mathsf{AM}\left(  k\right)  $\ protocol can be made public-coin
and perfect-completeness (where \textquotedblleft direct\textquotedblright%
\ means, without using the full power of $\mathsf{AM}\left(  k\right)
=\mathsf{AM}$)?

\item[(9)] How far can we improve our approximation algorithms for free games,
if we assume that the game is also a \textit{projection game} or a
\textit{unique game}? \ Conversely, what hardness results can we prove under
those restrictions?

\item[(10)] Let $\mathsf{AM}^{\ast}\left(  2\right)  $\ be defined the same
way as $\mathsf{AM}\left(  2\right)  $, except that now the Merlins can share
an unlimited amount of quantum entanglement. \ (Their communication with
Arthur is still classical.) \ What can we say about this class? \ Does our
\textsc{3Sat}\ protocol become unsound? \ If so, then can we somehow
\textquotedblleft immunize\textquotedblright\ it\ against entangled
provers---as the spectacular work of Ito and Vidick \cite{itovidick}\ (see
also Vidick \cite{vidick:egame}) recently managed to do for the original BFL
protocol? \ In the other direction, it's currently a notorious open problem to
prove \textit{any upper bound whatsoever} on the class $\mathsf{MIP}^{\ast}%
$\ (that is, $\mathsf{MIP}$ with entangled provers): even the set of
computable languages! \ The issue is that we don't have any \textit{a priori}
upper bound on the amount of entanglement the provers might need; and the more
entanglement they use, the longer it could take to simulate them. \ Does this
problem become more tractable if we restrict attention to $\mathsf{AM}^{\ast}$
protocols: that is, to protocols with uncorrelated questions?

\item[(11)] Can we use our hardness result for \textsc{FreeGame}---or more
generally, the idea of birthday repetition---as a starting point for proving
$n^{\Omega\left(  \log n\right)  }$ hardness results for \textit{other}
problems? \ One problem of particular interest is approximate Nash
equilibrium. \ For that problem, Lipton, Markakis, and Mehta \cite{lmm}\ gave
an $n^{O(\varepsilon^{-2}\log n)}$\ approximation algorithm---indeed, one
strikingly reminiscent of our algorithm from Theorem \ref{freegamealg}---while
Hazan and Krauthgamer \cite{hazank}\ recently showed\ $n^{\Omega\left(  \log
n\right)  }$ hardness, assuming $n^{\Omega\left(  \log n\right)  }$\ hardness
for the planted clique problem.\footnote{Similarly, while this reduction is
arguably weaker than the one we give, it is not hard to show that
\textsc{FreeGame}$_{\varepsilon}$\ requires $n^{\Omega\left(  \log n\right)
}$\ time for constant $\varepsilon$, under the assumption that the planted
clique problem requires $n^{\Omega\left(  \log n\right)  }$\ time. \ We thank
Oded Regev for this observation.} \ We conjecture that, using birthday
repetition of \textsc{3Sat}, one could show $n^{\widetilde{\Omega}%
(\varepsilon^{-1}\log n)}$\ hardness for approximate Nash equilibrium assuming
only the ETH. \ This would solve an open problem explicitly raised by Hazan
and Krauthgamer.\footnote{Technically, Hazan and Krauthgamer asked for a proof
that approximate Nash equilibrium is not in $\mathsf{P}$, assuming \textsc{Max
Clique} requires $2^{\omega\left(  \sqrt{n}\right)  }$\ time. \ But this is
extremely similar to assuming the ETH.}

\item[(12)] What can we say about $\mathsf{QMA}\left(  2\right)  $, the class
that originally motivated our study of $\mathsf{AM}\left(  2\right)  $? \ Is
$\mathsf{QMA}\left(  2\right)  \subseteq\mathsf{EXP}$? \ Are the
$\widetilde{O}(\sqrt{n})$-qubit protocols for\ \textsc{3Sat}, due to Aaronson
et al.\ \cite{abdfs}\ and Harrow and Montanaro \cite{harrowmontanaro}, optimal
assuming the ETH? \ Is the \textsc{BSS}$_{\varepsilon}$\ problem from Section
\ref{QUANTUM} solvable in $n^{O\left(  \varepsilon^{-2}\log n\right)  }$ time,
as \textsc{FreeGame}$_{\varepsilon}$\ is?
\end{enumerate}

\section{Acknowledgments}

We thank Boaz Barak, Oded Regev, Avi Wigderson, and other participants at the
2013 Banff Complexity Theory workshop for helpful discussions. \ We especially
thank Peter Shor for early discussions, Ryan O'Donnell for pointing us to
\cite{avkk}\ and \cite{bhhs}, Anup Rao for clarifications about parallel
repetition, and Aram Harrow for goading us to write this paper up after a
four-year delay.

\bibliographystyle{plain}
\bibliography{thesis}

\end{document}